\documentclass{article}
\usepackage{amsmath}
\usepackage{mathrsfs}
\usepackage{algorithmic}
\usepackage{algorithm}
\usepackage{amssymb,multirow,array,float,bm,bbm,pifont}
\usepackage{textcomp,subfigure}
\usepackage{stfloats}
\usepackage{url,multicol}
\usepackage{appendix}
\usepackage[authoryear]{natbib}
\usepackage{graphicx,enumerate}
\usepackage{appendix,epstopdf}
\usepackage[colorlinks,citecolor=blue,urlcolor=blue,linkcolor=red]{hyperref}
\usepackage[capitalize]{cleveref}

\newtheorem{definition}{Definition}[section]
\newtheorem{proposition}{Proposition}[section]
\newtheorem{assumption}{Assumption}[section]
\newtheorem{lemma}{Lemma}[section]
\newtheorem{theorem}{Theorem}[section]

\newtheorem{example}{Example}[section]

\newcommand{\bigO}{\mathcal{O}}
\newcommand{\dee}{\mathrm{d}}
\renewcommand{\Pr}{\mathbb{P}}

\def\EE{\mathbb{E}}

\DeclareMathOperator*{\cov}{Cov}

\title{Model Specification Test for Stationary Functional Time Series}
\date{}

\author{\small
Yan Cui\\
\small Department of Mathematics, Illinois State University
\and
\small Holger Dette\\
\small Department of Mathematics, Ruhr-University Bochum
\and
\small Zhou Zhou\\
\small Center for Data Science and School of Management, Zhejiang University
}

\begin{document}

\maketitle

\begin{abstract}
 We develop a general framework for model specification testing in stationary functional time series. The approach is based on an autoregressive approximation that represents a broad class of stationary functional processes through coefficient kernels whose dimension and autoregressive order may increase with the sample size. Different model assumptions induce different structural restrictions on these kernels, and our tests are constructed by measuring deviations from the corresponding restrictions. We illustrate this principle for three problems: testing a prescribed order of a functional autoregressive model, testing a functional autoregressive moving-average specification, and testing separability of autoregressive coefficient kernels. The resulting statistics are based on weighted $\mathcal{L}^2$-distances, and the critical values are obtained by a multiplier bootstrap. We establish a quantitative bootstrap approximation that is uniform over a class of weight functions and prove asymptotic validity and consistency of the proposed tests. The methodology allows for data-adaptive weighting and is illustrated  by simulations and a data example.
\end{abstract}
\noindent%
{\it Keywords:} Functional autoregressive approximation; Functional time series; Multiplier bootstrap; Specification testing.

\section{Introduction}
\label{sec1}
\def\theequation{1.\arabic{equation}}	
\setcounter{equation}{0}

Functional time series arise whenever a sequence of random curves is observed
over time, with applications including intraday financial records, electricity
demand profiles, environmental trajectories, and high-frequency measurements
aggregated into functional observations. Numerous modeling approaches from
classical time series analysis have been extended to the functional setting
to describe their temporal dependence structure.
 Functional autoregressive (FAR) models and their extensions provide
natural analogues of classical linear time series models and have become
important tools for prediction and inference; see, for example,
\cite{Bosq2000,Mas2007,horvath2012inference,Aue2015}. Functional autoregressive
moving-average (FARMA) models provide a more flexible class by allowing both
autoregressive and moving-average components. These models have been studied from theoretical, estimation, and prediction perspectives; see, for example,
\cite{Bosq2014,KlepschEtAl2017,Kuenzer2024, KuehnertRiceAue2026} and the references therein.
 As in classical time series analysis, however, the
usefulness of inferential and forecasting procedures based on such models
depends critically on whether the assumed dynamic structure provides an
adequate description of the data. This raises fundamental questions concerning
the order of a functional autoregression, the adequacy of more general FARMA
specifications, and possible structural restrictions on the
coefficient operators.

A substantial part of the literature on model diagnostics for functional time
series is based on tests for white noise or serial independence, which can in
particular be applied to residuals from fitted functional time series models.
Early contributions include the portmanteau test of
\cite{GabrysKokoszka2007}, while subsequent approaches have considered the
problem from both the time and frequency domains; see, for example,
\cite{zhang2016white,characiejus2020general,
HlavkaEtAl2021} and the review \cite{kim2023}.  More direct goodness-of-fit tests
for autoregressive models have recently been proposed by
\cite{kim2024projection}, \cite{alvarez2025goodness}, and
\cite{gonzalez2023testing}. Along a different direction,
\cite{MESTRE2021107108} introduced functional autocorrelation and partial
autocorrelation functions as graphical tools for the identification and
diagnosis of FARMA models.

Despite these developments, formal specification testing for functional time
series remains relatively less developed. While procedures for determining
the order of FAR models are available, see, for example,
\cite{KokoszkaReimherr2013}, existing approaches do not provide a common framework for testing FAR and FARMA specifications and structural restrictions on the autoregressive coefficient operators. The purpose of this paper is to develop such a framework.

The main ingredient of our approach is a general autoregressive approximation
for a broad class of stationary functional time series. Its construction is
based on a finite-dimensional representation of the functional observations,
whose dimension is allowed to increase with the sample size. Under suitable
smoothness, moment, and short-range dependence conditions, we show that the
resulting autoregressive representation approximates the original process
uniformly, with an explicit error bound accounting for both the truncation of
the functional observations and the autoregressive approximation. Thus, the
autoregressive representation is not imposed as a model assumption, but
provides a general approximation device for a substantially larger class of
stationary functional processes.

A key observation in our approach  is that particular functional time series models impose
specific structures on the coefficient kernels appearing in this
autoregressive representation. This suggests a general approach to model specification: rather than relying
on diagnostic properties of the fitted residuals, we directly test whether the
estimated autoregressive coefficient kernels satisfy the structural
restrictions implied by the model under consideration.
 In this way, different specification problems can be
treated within the same framework by identifying the corresponding restrictions
on the autoregressive coefficients and measuring deviations from these
restrictions.

We illustrate this principle for three testing problems. First, if the process
follows a FAR model of order $k$, the autoregressive coefficient kernels must
vanish for all lags larger than $k$. This leads to a test of a prescribed FAR
order based on the size of the estimated higher-order coefficient kernels.
Second, a causal and invertible FARMA$(p^*,q^*)$ process admits an
infinite-order autoregressive representation whose coefficients satisfy
restrictions determined by its AR and MA operators. We therefore test a
prescribed FARMA specification by comparing the coefficient kernels obtained
from the unrestricted autoregressive approximation with those implied by the
fitted FARMA model. Third, we  test whether  the  individual coefficient kernels are separable across their two functional arguments. In all
three cases, deviations from the null hypothesis are quantified by weighted $\mathcal{L}^2$-distances between estimated coefficient kernels and their structured
counterparts.

To obtain critical values for the
resulting statistics, we develop  a  multiplier bootstrap. The coefficient kernels are estimated through the finite-dimensional vector autoregressive approximation, and the bootstrap is
constructed from estimated residuals and score vectors. We establish a quantitative bootstrap approximation that is uniform over a class of weight functions and allows both the dimension of the finite-dimensional
approximation and the autoregressive order to increase with the sample size.
As a consequence, the proposed tests have asymptotically correct sizes under
the null hypotheses and are consistent against fixed alternatives.  In particular, the uniformity with respect to the weight functions permits
the use of data-adaptive weights that account for variation in estimation
uncertainty over the functional domain.

The remainder of the paper is organized as follows. In Section~\ref{sec2} we  develop the autoregressive approximation.  Section~\ref{sec3} introduces the specification tests, their estimation and bootstrap implementation.  Section~\ref{sec4} establishes the theoretical validity and consistency of the
procedures. The finite-sample properties of the proposed methodology are investigated by means of a simulation study in Section~\ref{sec5}, while Section~\ref{sec6} presents
an  empirical application. The technical arguments and auxiliary results are
collected in the supplementary material.

\section{Autoregressive approximation of functional time series}
\label{sec2}
\def\theequation{2.\arabic{equation}}	
\setcounter{equation}{0}

\textbf{Convention}: Throughout this paper, let $\mathcal{L}^2([0,1])$ be the separable Hilbert space of all square-integrable functions on $[0,1]$ with inner product $\langle x,y \rangle=\int_0^1 x(u)y(u)\dee u$  and corresponding $\mathcal{L}^2$-norm $|x|_{\mathcal{L}^2}^2=\int_0^1 x^2(u)\dee u$. A square-integrable random function $Y$ in $\mathcal{L}^2([0,1])$  has a second moment, if   $\EE|Y|_{\mathcal{L}^2}^2<\infty$. We also denote by $\mathcal{C}^d([0,1])$ the collection of functions that are $d$-times continuously differentiable with absolutely continuous $d$-th derivative on the interval $[0,1]$.  The notation $\Vert\cdot\Vert$ signifies the operator norm (i.e., largest singular value) when applied to matrices and the Euclidean norm when applied to vectors. We use $|\cdot|_F$ to denote the Frobenius norm of a matrix or  vector. Also, we use $\lambda_{\min}(\cdot)$ and $\lambda_{\max}(\cdot)$ to signify the smallest and largest eigenvalues of the matrices. Throughout this paper, the symbol $C$ denotes a generic finite constant that is independent of $n$ and may vary from place to place.

In this section we 
introduce a unified functional autoregressive (AR) approximation  for a rich class of stationary functional time series, which will be the fundamental basis for the statistical methodology developed in this paper.
 Our approach is  applicable to  a broad class of stationary  time series 
$\{Y_i \}_{i=1}^n$  
in $ \mathcal{L}^2([0,1])$,
which  satisfy   $\EE|Y_i|_{\mathcal{L}^2}^2<\infty$ (here $| \cdot |_{\mathcal{L}^2}$ denotes the common norm in $\mathcal{L}^2 ([0,1])$. To keep the notation simple, we also assume that $\{Y_i \}_{i=1}^n$ is centered, that is 
$\EE[Y_i(u)] = 0$ for all $u\in[0,1]$,  throughout this article. Let $\{\alpha_k\}_{k=1}^\infty$ be a set of pre-determined orthonormal basis functions in $\mathcal{L}^2([0,1])$, then each element of functional time series $\{Y_i\}_{i=1}^n$ admits a  Karhunen-Lo\`{e}ve type expansion of the form
\begin{equation}\label{infinite}
Y_i =\sum_{k=1}^\infty r_{i,k}\alpha_k =\sum_{k=1}^\infty x_{i,k}f_k\alpha_k,
\end{equation}
where, for any given $k\ge 1$, 
\begin{align} \label{det2}
r_{i,k}=\int_0^1 Y_i(u)\alpha_k(u)\dee u
\end{align}
is the $k$th  (random) basis coefficient  of $Y_i$ (with respect to the basis $\{\alpha_k\}_{k=1}^\infty$),  
 $f_k^2:={\rm Var}(r_{i,k})$ denotes the variance  of $r_{i,k}$  and $x_{i,k}:=r_{i,k}/f_k$ if $f_k\neq 0$ (we  set $x_{i,k}=0$ when $f_k=0$).
 The representation in \eqref{infinite} is widely   considered in the functional data analysis literature; see for instance \cite{horvath2012inference} and \cite{hsing2015theoretical}. Note that $f_k$ captures the decay rate of $r_{i,k}$ as $k$ increases and if $f_k\neq 0$, the random coefficients $x_{i,k}$
remain at the same magnitude with variance $1$ as $k$ increases.

Throughout this paper $\| X\|_p$ denotes the $L_p$-norm of a real-valued random variable $X$ ($p\geq 1$).
To facilitate the investigation of functional autoregressive approximation theory, we impose an assumption on the decay rate of the basis expansion coefficients $r_{i,k}$.
\begin{assumption}\label{ass_conti_u}
	The functional time series $\{Y_i\}_{i=1}^n$ is a sequence in  $\mathcal{C}^{d}([0,1])$ (a.s.), where $d \in \mathbb{N}$ is some integer, and  the random coefficients  in the expansion \eqref{infinite}  satisfy $\left\Vert r_{i,k}\right\Vert_2\le Ck^{-(d+1-\theta)}$ for $k=1,2, \ldots $ where the constant $\theta$ is either  $0$ or $1/2$.
\end{assumption}

For a function $f \in \mathcal{C}^d([0,1])$  with  $d \in \mathbb{N} $, the fastest  rate for its $k$th Fourier  coefficient is of order $\bigO(k^{-(d+1)})$ for a wide class of basis functions \citep[see, for example,][]{Chen07}. For example, the Fourier bases (for periodic functions), the weighted Chebyshev polynomials \citep{Trefethen2008} and the orthogonal wavelets with degree $m\ge d$ \citep{Meyer90} admit the latter decay rate under some additional but mild assumptions on the behavior of the function's $d$-th derivative. On the other hand, the basis expansion coefficients may decay at a slower rate for some other orthonormal bases. An example is the basis of normalized Legendre polynomials, where the corresponding basis expansion coefficients decay at a rate of order  $\bigO(k^{-(d+1/2)})$ \citep[see, for example,][]{WX11}. To accommodate a general decay rate of $r_{i,k}$ in the current framework, we therefore consider the order $k^{-(d+1-\theta)}$ with  $\theta=0$ or $1/2$ in Assumption \ref{ass_conti_u}.

To establish a rigorous   functional AR approximation, we will truncate the infinite representation \eqref{infinite} to a finite (but diverging)  linear combination  of basis functions, that is 
\begin{equation}\label{truncation}
\begin{split}
	Y_i(u)& = Y_i^{(p)}(u)+\bigO_\Pr(p^{-d+2\theta}) ,  \\
Y_i^{(p)}(u) &:= \sum_{k=1}^{p} x_{i,k}f_{k}\alpha_k(u), 
    \end{split}
\end{equation}
where $p=p_n$ is the truncation order and $\theta=0$ or $1/2$ is the constant in \cref{ass_conti_u}. This truncated expansion in \eqref{truncation} serves as a strategy of dimension reduction for our theoretical investigation, which is a common approach in functional time series analysis. For example, one could apply the initial dimension reduction by functional principal component analysis \citep{Shang14,KR17}, or explore the properties of linear regression estimators \citep{Hall06,LiHsing07}. Some existing work suggests projecting infinite dimensional objects onto a fixed dimensional subspace to facilitate statistical calculations \citep{Aue2015}, that is, the truncation number is a fixed constant. However, there is increasing interest in allowing the truncation number to grow to infinity with the sample size $n$ to make the truncation adaptive to the smoothness of functional observations \citep[see][]{Hall06,LiHsing07,chiou2016multivariate,chang2024autocovariance}. Throughout this paper, we assume that the truncation number diverges to infinity at a relatively slow speed, that is  $p\asymp n^{c_1} $ for some  $ c_1\in(0,1)$.

Since the functional time series is centered, we have $\EE[x_{i,k}]=0$ for all $i=1,...,n,~k\ge 1$. Ignoring the error in the truncated approximation  \eqref{truncation} for a moment, the problem reduces to studying the $p$-dimensional stationary time series $\{\bm{x}_i\}_{i=1}^n$, where $\bm{x}_i:=(x_{i,1},...,x_{i,p})^\top$ in the representation of $Y_i^{(p)}$. When $i \ge 2$, the best linear prediction (in terms of the mean squared prediction error) of $\bm{x}_i$ from the data  $\bm{x}_1,...,\bm{x}_{i-1}$, can be expressed in the form
\begin{equation}\label{ar_best_pre}
\widehat{\bm{x}}_i=\sum_{j=1}^{i-1}\bm{\Phi}_{j}\bm{x}_{i-j},
\end{equation}
where $\{\bm{\Phi}_{j}\}_{j=1}^{i-1}\in \mathbb{R}^{p\times p}$ are the prediction coefficient matrices. By construction, $\bm{\epsilon}_i:=\bm{x}_i-\widehat{\bm{x}}_i$ is a white noise process with mean $\bm{0}$ and some covariance matrix, say  $\bm{\Sigma}_{\epsilon}\in\mathbb{R}^{p\times p}$. Let 
\begin{align} \label{det27}
\bm{\Gamma}(j)
:=  \big (\gamma_{k,\ell}(j) \big ) _{k,\ell =1,\ldots, p} := \big (\EE[x_{i,k}x_{i+j,\ell}] \big ) _{k,\ell =1,\ldots, p} 
\in \mathbb{R}^{p\times p}
\end{align}
denote  the auto-covariance matrix of  the time series $\{ \bm{x}_i\}_{i=1}^n $ at some lag $j\in\mathbb{Z}$,  
then it follows from the  basis expansion in \eqref{infinite}
that the 
auto-covariance function  $\cov(Y_i(u), Y_{i+j}(v))$ of the stationary functional time series $\{Y_i\}_{i=1}^n$  satisfies 
$$
\int_0^1 \int_0^1 \cov(Y_i(u),Y_{i+j}(v))\alpha_k(u)\alpha_l(v)\dee u\dee v=\gamma_{k,l}(j)f_kf_l~.
$$
 The latter representation indicates that the covariance structure of the functional time series $\{Y_i \}_{i=1}^n$ can be determined by the auto-covariance of the scaled multivariate time series $\{\bm{x}_i\}_{i=1}^n$, provided  that $Y_i^{(p)}$ is a good approximation of $Y_i$ such that the error  in  \eqref{truncation} is small.

We now formulate  in an informal way  a functional AR approximation for  a stationary functional time series (with short-range dependence). A detailed justification of this approximation will be given in Section \ref{sec4}. For this purpose, let $b=b_n$ denote the order of the functional AR approximation. For theoretical and practical purposes, $b$ is required to be much smaller than the sample size $n$ to achieve a parsimonious approximating model. In \cref{approx_ar} of \cref{sec4} we will demonstrate that a wide class of stationary functional time series can be efficiently approximated by stationary functional AR($b$) processes with a slowly diverging order $b$.
More specifically, for a centered stationary  process $\{Y_i\}_{i \in \mathbb{Z}}$  in  $ \mathcal{C}^d([0,1])$ satisfying Assumption \ref{ass_conti_u} and $\EE|Y_i|_{\mathcal{L}^2}^2<\infty$,  we prove that for each fixed $i \geq 2$ and $d\ge 2$,
\begin{equation}\label{approx_far}
	Y_i(u)=\sum_{j=1}^{\min\{i-1,b\}}\int_0^1 
	\psi_j(u,v)Y_{i-j}(v)\dee v+\varepsilon_i(u)
	+\bigO_\Pr\left(p^{1/2}b^{-\tau+2}(\log b)^{\tau-1}+p^{-d+2\theta}\right)
	\end{equation}
	uniformly with respect to $u \in [0,1] $, where   $\psi_j \in\mathcal{L}^2([0,1]^2)$ is an autoregressive operator kernel defined by  the truncated basis expansion 
   \begin{align}
   \label{approx_farA}
    \psi_j(u,v):=
 \sum_{k,l=1}^p  \psi_{j,kl}\alpha_k(u)\alpha_l(v)
   \end{align}
 and the error process $\{\varepsilon_i\} _{i=1}^n$ is  a functional white noise process
 defined by  $\varepsilon_i(u):=\bm{\alpha}_{f+}^\top(u)\bm{\epsilon}_i$, where $\bm{\alpha}_{f+}(u)=(\alpha_1(u)f_1,...,\alpha_p(u)f_p)^\top$.

 Notice that the first error term on the right-hand side of \eqref{approx_far} describes the approximation error of the functional AR process truncated at the order 
$\min\{i-1,b\}$, and the last term reflects the truncation error in the basis expansion \eqref{truncation} (this is a consequence of the proof of Theorem \ref{thm_gaussian} in the \cref{app_proof}). The approximation result in \eqref{approx_far} also reveals that the error bound is adaptive to the smoothness of the functional curves ($d$), and the strength of the temporal dependence in functional observations  ($\tau$). 

\section{Model specification tests}
\label{sec3}
\def\theequation{3.\arabic{equation}}	
\setcounter{equation}{0}

Assessing the adequacy of
a functional linear time series model is of vital importance and is an integral part of model building
and our  AR approximation \eqref{approx_far} result for stationary functional time series  will serve as a vital tool to perform model specification tests for stationary functional time series. More specifically, we  illustrate the power of the AR-approximation \eqref{approx_far} developing 
  multiplier bootstrap tests for 
three testing problems.  First, we use the results to determine  the order $j^\ast$ of a functional AR model by testing whether the coefficient kernels $\psi_j$ in the functional AR approximation representation vanish  for lags $j>j^\ast$. 
Second, we use the results to develop specification tests for  the adequacy of some classic functional linear models such as FARMA models by comparing  the coefficient operator kernels of a causal representation of  the underlying FARMA-model  with those of the approximating  functional AR process. Third, we develop  a test for the hypothesis that  the functional coefficient kernels  $\psi_j$ in the expansion \eqref{approx_far} can be expressed as a separable product of univariate functions. Such a restriction substantially reduces the complexity of the coefficient operator and yields a more parsimonious description of the temporal dynamics. Our purpose in this section is to provide  $\mathcal{L}^2$-type statistics, which can be used to measure deviations from these hypotheses. 
We present our approach on a methodological level and discuss the  statistical validity of the proposed procedures in Section \ref{sec4}.

\subsection{Null hypotheses and test statistics} \label{sec30}

We develop three different types of model specification tests based on estimates of the coefficient functions $\psi_1, \ldots , \psi_b$ in the stochastic representation \eqref{approx_far}. For  this purpose we define for  a given vector  of weight functions $\bm{w}_n =(w_{n,1}, \ldots , w_{n,b})^\top : [0,1]^2 \to \mathbb{R}^b$ by  
$$
|\bm{f}|_{\mathcal{L}^2,\bm{w}_{n}}={1 \over b} \sum_{j=1}^b|f_j/w_{n,j}|_{\mathcal{L}^2}
$$ 
a weighted $\mathcal{L}^2$-norm of a vector-valued function  $\bm{f} =(f_1, \ldots , f_b)^\top : [0,1]^2 \to \mathbb{R}^b$. The weight functions are used to stabilize the contribution of each component in $\bm{f}$ and we assume that 
\begin{align*}
\bm{w}_n \in 
\mathcal{W}&=\Big  \{\bm{g} =(g_1,\ldots,g_b)^\top: [0,1]^2 \to \mathbb{R}^b \Big | ~g_j \in  C([0,1]^2);~
 \inf_{1\le j\le b}\inf_{u,v\in[0,1]}g_j(u,v)\ge \kappa_2 >0\Big  \}.
\end{align*}
(here $\kappa_2$ is a universal positive  constant). A data-adaptive choice of the weight functions will be discussed in Section \ref{sec33} below.
\smallskip

First, we consider the null hypothesis 
\begin{align}\label{h0far}
    &H_0^{\rm FAR}: \quad 
    \{Y_i\}_{i=1}^n \text{~follows~a~FAR}(k) ~\text{process}
    \end{align}
    for a fixed index $k$. For this purpose, we  denote by $\widehat{\psi}_j$ an estimator  for the $j$th  coefficient function $\psi_j$   ($j=1, \ldots , b$) in the functional AR approximation \eqref{approx_far}, which will be precisely defined in Section \ref{sec31}. We  propose to reject $H_0^{\rm FAR}$ for large values of the statistic  
\begin{align} \label{testfar} 
T_1= \sum_{j= k+1}^b \big|\widehat{\psi}_j 
\big|_{\mathcal{L}^2, {w}_{n,j}} 
\end{align}
(later we will use data-adaptive weights in the definition of the norm).
Note that under the null hypothesis we  expect $T_1$ to be small. 
An appropriate  critical value for this decision rule  will be determined by a multiplier bootstrap procedure which will be developed in Section \ref{sec32} (see also Algorithm \ref{alg:FAR_test} for details). 
\smallskip

Next we consider the hypothesis 
\begin{align} \label{h0farma}
    &H_0^{\rm FARMA}: \quad 
    \{Y_i\}_{i=1}^n \text{~follows~a~FARMA}(p^\ast,q^\ast) ~\text{process}.
    \end{align}
If the null hypothesis holds,
and the process is causal and invertible \citep[see][]{horvath2012inference}, there exists an infinite-order AR representation of  the underlying functional ARMA specification. We define by $\widetilde{\psi}_j $ the common estimate of the  $j$th functional AR coefficient in this representation  and  propose to reject  $H_0^{\rm FARMA}$ for large values of the statistic 
\begin{align}
    \label{testfarma}
T_2= 
\sum_{j=1}^b \big|\widehat{\psi}_j -\widetilde{\psi}_j
\big|_{\mathcal{L}^2,{w}_{n,j}} ,
\end{align}
where the critical value  is  determined by a  multiplier bootstrap (see Algorithm \ref{alg:FARMA_test}, which also uses data-adaptive weights).
Note again that under $H_0^{\rm FARMA}$,   we expect that  the test statistic $T_2$ is small. 
\smallskip

Finally, we consider  for a  fixed index $j \in \{ 1,\ldots,b \}$, the hypothesis that the $j$th coefficient function in the FAR approximation \eqref{approx_far} is separable, that is 
\begin{equation}
\label{h0sep}
\begin{split}
    H_0^{{\rm sep},j}: & \quad \psi_j(u,v)=C_jg_j(u)h_j(v)  ~ \text{ for~~constants } 
    C_j\neq 0 \text{ and functions  }  g_j,h_j : [0,1] \to \mathbb{R} \\
    &  \quad \text{such that } \int_0^1 g_j(u)\dee u=\int_0^1 h_j(v)\dee v=1. 
\end{split}
\end{equation}
Such a restriction yields a more parsimonious description of the temporal dynamics. For example, for a FAR(1) process, simple calculations yield that $H_0^{{\rm sep},j}$ implies that the covariance function can be written as $\mbox{Cov}(Y_i(u),Y_{i+k}(v))=Cg_1(u)g_1(v)\rho^k$ for all $k\ge 1$, where $-1<\rho<1$ and $C$ is a constant. This is a simple representation of the functional covariance structure. In \eqref{h0sep}, we assume $C_j\ne 0$ to ensure identification and a well-defined integral-based normalization (we set $\psi_j(u,v) =0$ when $C_j=0$).
We propose to reject $H_0^{{\rm sep},j}$, whenever 
\begin{align}
    \label{testsep}
T_{3,j}= \big |\widehat{\psi}_j^{\rm sep} -\widehat\psi_j  
\big |_{\mathcal{L}^2,{w}_{n,j}}  ,
\end{align}
where the critical value   is again obtained by a multiplier bootstrap (see Algorithm \ref{alg:separability_test}) and $\widehat{\psi}_j^{\rm sep}$ is an estimate of the $j$th coefficient function in the FAR approximation \eqref{approx_far} under the null hypothesis of separability, that is 
\begin{align}
    \label{det21}
\widehat{\psi}_j^{\rm sep}(u,v):=
\widehat{C}_j\widehat{g}_j(u)
\widehat{h}_j(v)
\end{align}
with $$\widehat{C}_j=\int_0^1 \int_0^1 \widehat{\psi}_j(u,v)\dee u\dee v,\quad \widehat{g}_j(u)=\int_0^1 \widehat{\psi}_j(u,v)\dee v/\widehat{C}_j,\quad \widehat{h}_j(v)=\int_0^1 \widehat{\psi}_j(u,v)\dee u/\widehat{C}_j.
$$ 
In the following section we provide details for the construction of the estimators $\widehat{\psi}_j$ and the calculation of 
corresponding quantiles by multiplier bootstrap for testing the hypotheses \eqref{h0far}, \eqref{h0farma} and \eqref{h0sep}.

\subsection{Estimation of the coefficient  kernels} \label{sec31}

In this section we define the estimators of the coefficient functions $\psi_1, \ldots , \psi_b$ in the AR-approximation \eqref{approx_farA} which are required for the calculation of the test statistics $T_1,$ $T_2$ and $T_{3,j}$ in \cref{sec30}. For this  purpose, we denote by
$$
\bm{\Psi}_j = \big ( \psi_{j,kl} \big )_{k,l=1, \ldots , p} 
$$
the  $p\times p$ matrix of the coefficients  defined by \eqref{approx_farA}. Then the $j$th coefficient kernel can be represented as 
$$
\psi_j(u,v)= \bm{\alpha}^\top(u)\bm{\Psi}_j\bm{\alpha}(v), 
$$ 
where $\bm{\alpha}(\cdot)=(\alpha_1(\cdot),\ldots,\alpha_p(\cdot))^\top$ is the vector of basis functions.  It follows from  the proof of \cref{approx_ar} in the supplementary material that the coefficient matrices $\bm{\Psi}_j$ 
can be expressed in terms of the  prediction coefficients 
$\bm{\Phi}_j$ of the best linear prediction \eqref{ar_best_pre}, that is  
\begin{align}
    \label{det20}
\bm{\Psi}_j={\rm diag}(f_1,\cdots,f_p)\bm{\Phi}_j{\rm diag}(1/f_1,\cdots,1/f_p).
\end{align}
Consequently, we can write 
\begin{align}
\label{det4}
\psi_j(u,v)=
\bm{\alpha}_{f+}^\top(u)
\bm{\Phi}_j
\bm{\alpha}_{f-}(v)~,
\end{align}
where 
\begin{equation}
    \begin{split}
        \bm{\alpha}_{f+}(\cdot)& ={\rm diag}(f_1,\ldots,f_p)\bm{\alpha}(\cdot) ,\\
        \bm{\alpha}_{f-}(\cdot)& ={\rm diag}(1/f_1,\ldots,1/f_p)\bm{\alpha}(\cdot) . 
    \end{split}
    \label{det3}
\end{equation}
Therefore, estimating the coefficient kernels 
$\psi_j $ is  equivalent to estimating the unknown coefficient matrices $\bm{\Phi}_j$ in the best linear prediction \eqref{ar_best_pre}.

Now,  Proposition \ref{var_approx} in Section \ref{secD} of the supplemental material provides a  vector autoregressive (VAR) approximation for the time series $\{\bm{x}_i\}_{i=1}^n$, that is 
\begin{equation}\label{var}
    \bm{x}_i=\sum_{j=1}^{\min\{i-1,b\}} \bm{\Phi}_j\bm{x}_{i-j}+\bm{\epsilon}_i+\bigO_\Pr\left(p^{1/2}b^{-\tau+2}(\log b)^{\tau-1}\right).
\end{equation}
We assume that $i \geq b+1$, denote by $\bm{\Phi}=(\bm{\Phi}_1, \ldots , \bm{\Phi}_b)^\top$ the (unknown) $bp\times p$ block coefficient matrix and define an $(n-b)\times bp$  design matrix by  
\begin{equation}
    \label{det11}
\bm{X} = \begin{pmatrix}
    \bm{x}_{b}^\top & \bm{x}_{b-1}^\top & \ldots & \bm{x}_{1}^\top \\
    \bm{x}_{b+1}^\top &\bm{x}_{b}^\top & \ldots & \bm{x}_{2}^\top \\
    \vdots & \vdots & \vdots  & \vdots\\
    \bm{x}_{n-1}^\top &\bm{x}_{n-2}^\top & \ldots &\bm{x}_{n-b}^\top 
\end{pmatrix} \in \mathbb{R}^{(n-b)\times bp}~.
\end{equation}
Denoting by $\bm{Y}=(\bm{x}_{b+1},\ldots,\bm{x}_n)^\top
\in \mathbb{R}^{(n-b)\times p}$  the response matrix and ignoring the approximation error in  the stochastic expansion \eqref{var}, we obtain the linear model
\begin{equation}\label{linear_mod}
 \bm{Y} = \bm{X} \bm{\Phi} + \bm{\varepsilon} ,
\end{equation}
where $\bm{\varepsilon}=(\bm{\epsilon}_{b+1},\ldots,\bm{\epsilon}_n)^\top$.
By \cref{ass_design_matrix} in \cref{sec_multiboots}, together with Lemmas D.3 and E.8 in the supplemental material, $\bm{X}^\top\bm{X}$ is invertible with probability tending to one. Consequently, the least squares estimator $\widehat{\bm{\Phi}} =(\widehat{\bm{\Phi}}_1, \ldots , \widehat{\bm{\Phi}}_b)^\top $ exists and  we define 
$$
\widehat{\bm{\psi}}(u,v):=(\widehat\psi_1(u,v),\ldots,\widehat\psi_b(u,v))^\top
$$
as  the estimate of the vector of coefficient kernels 
$\bm{\psi}(u,v):=(\psi_1(u,v),\ldots,\psi_b(u,v))^\top$, where 
\begin{align}
\label{det5}
\widehat \psi_j(u,v)=
\bm{\alpha}_{f+}^\top(u)
\widehat {\bm{\Phi}}_j
\bm{\alpha}_{f-}(v)
\end{align}
and $\bm{\alpha}_{f-}$  and $\bm{\alpha}_{f+}$  are defined in \eqref{det3}.

\subsection{Critical values via multiplier bootstrap}
\label{sec32}
Standard arguments from least squares estimation show that 
$$
\widehat{\bm{\Phi}}-\bm{\Phi}=  (\bm{X}^\top\bm{X})^{-1}\bm{X}^\top\bm{\epsilon}, 
$$
and observing the representations \eqref{det4}
 and \eqref{det5},
 we can derive a distributional approximation for the difference $\widehat{\bm{\psi}} -  {\bm{\psi}}$. More specifically, 
we define $\bm{x}_i^{(b)}=(\bm{x}_{i-1}^\top,\ldots,
\bm{x}_{i-b}^\top)^\top\in \mathbb{R}^{bp}$,  then joint statistical inference for  the unknown operator kernels $\bm{\psi}(u,v)$ boils down to evaluating the distributional
behavior of the process 
\begin{align}
\label{det1}
\sqrt{n}
\left (\widehat{\bm{\psi}}(u,v)-\bm{\psi}(u,v)\right)
\approx 
\bm{A}_{f}(v)
\left(\frac{\bm{X}^\top\bm{X}}{n}\right)^{-1}
\left(\frac{1}{\sqrt{n}}\sum_{i=b+1}^n
\bm{x}_i^{(b)}\bm{\epsilon}_i^\top
\right)\bm{\alpha}_{f+}(u)~, 
\end{align}
where $\bm{\alpha}_{f+}$  is given  in \eqref{det3} and 
\begin{align}
    \label{det22}
\bm{A}_f(v)=
\begin{pmatrix}
\bm{\alpha}_{f-}^\top(v) & \bm{0} &\cdots  & \bm{0} \\
\bm{0} & \bm{\alpha}_{f-}^\top(v) &\cdots & \bm{0}\\
\vdots & \vdots & \vdots & \vdots \\
\bm{0} & \bm{0} & \cdots & \bm{\alpha}_{f-}^\top(v)~
\end{pmatrix}  \in \mathbb{R}^{b\times bp}.
\end{align}
 For a weight function $\bm{w}_n$
we obtain for  the  corresponding  
 weighted $\mathcal{L}^2$-norm  of \eqref{det1} the distributional approximation
\begin{equation}
\Lambda_{n}^w:=\sqrt{n}
\big |\widehat{\bm{\psi}} -\bm{\psi}\big |_{\mathcal{L}^2,
\bm{w}_{n}}~
\approx 
\left|\bm{A}_{f} 
\left(\frac{\bm{X}^\top\bm{X}}{n}\right)^{-1}
\left(\frac{1}{\sqrt{n}}\sum_{i=b+1}^n
\bm{x}_i^{(b)}\bm{\epsilon}_i^\top
\right)\bm{\alpha}_{f+}
\right|_{\mathcal{L}^2,
\bm{w}_{n}}, \label{sup_diff}
\end{equation}
To motivate the multiplier bootstrap, 
define 
\begin{align}
\label{det23}
\bm{z}_i = (z_{i,1},\ldots,z_{i,bp^2})^\top := 
{\rm vec}\big(\bm{x}_i^{(b)}\bm{\epsilon}_i^\top\big)\in\mathbb{R}^{bp^2} 
\end{align}
as the vectorized version of the matrix $ \bm{x}_i^{(b)}\bm{\epsilon}_i^\top\in \mathbb{R}^{bp \times p}$ where $\bm{x}_i^{(b)}=(\bm{x}_{i-1}^\top,\ldots,
\bm{x}_{i-b}^\top)^\top\in \mathbb{R}^{bp}$,  and
\begin{align}
  \label{det12}  
 \bm{Z}_n^b=(Z_{n,1}^b,\ldots,Z_{n,bp^2}^b)^\top={1 \over \sqrt{n}} \sum_{i=b+1}^n 
\bm{z}_i .
\end{align}
Then
the right hand side of \eqref{det1} 
can be rewritten as 
\begin{equation}\label{eq_target}
\widetilde{\bm{B}}_{n}^{\rm oracle,1}(u,v)
:=
\bm{A}_{f}(v)
\big(\bm{X}^\top\bm{X}/n\big)^{-1}
\widetilde{\bm{I}}{\rm diag}(\bm{Z}_n^b)\widetilde{\bm{E}}
\bm{\alpha}_{f+}(u)
,
\end{equation}
where ${\rm diag}(\bm{Z}_n^b)$ is the $bp^2 \times bp^2$ squared diagonal matrix with  diagonal entries  $Z_{n,k}^b $ defined by \eqref{det12}, 
$\widetilde{\bm{I}}=(\bm{I}_{bp},\ldots,\bm{I}_{bp})\in \mathbb{R}^{bp\times bp^2}$, $\bm{I}_{bp} \in \mathbb{R}^{bp \times bp} $ is the identity matrix and $$\widetilde{\bm{E}}=
\begin{pmatrix}
\bm{1}_{bp} & \bm{0}_{bp} & \cdots & \bm{0}_{bp}\\
\bm{0}_{bp} & \bm{1}_{bp} & \cdots & \bm{0}_{bp}\\
\vdots & \vdots & \ddots & \vdots\\
\bm{0}_{bp} & \bm{0}_{bp} & \cdots & \bm{1}_{bp}
\end{pmatrix}
\in \mathbb{R}^{bp^2\times p}
$$ 
with
$\bm{1}_{bp}=(1,\ldots,1)^\top \in \mathbb{R}^{bp}$ and $\bm{0}_{bp}=(0,\ldots,0)^\top \in \mathbb{R}^{bp}$.
Observing the approximation  \eqref{sup_diff} and these notations, it follows that  the distribution of the weighted $\mathcal{L}^2$-distance $\sqrt{n}
\big|\widehat{\bm{\psi}} -\bm{\psi}\big|_{\mathcal{L}^2, \bm{w}_n}$ can be approximated by the distribution   of $|\widetilde{\bm{B}}_{n}^{\rm oracle, 1} |_{
\mathcal{L}^2, \bm{w}_n}$.
Next, we define a multiplier bootstrapped sum given a block size $m$ as  
\begin{align}
    \label{det13}
\bm{Z}_n^{\ast,b}=(Z_{n,1}^{\ast,b},\ldots,Z_{n,bp^2}^{\ast,b})^\top:=\frac{1}{\sqrt{n-m-b+1}}\sum_{j=b+1}^{n-m+1}\Big (\frac{1}{\sqrt{m}}\sum_{i=j}^{j+m-1}\bm{z}_i\Big )N_j,
\end{align}
where $\{N_j\}_{j \in \mathbb{N}}$ is a sequence of independent standard normal random variables which is independent of the random coefficients $\Upsilon_{b+1}^{n}:=\{\bm{r}_i=(r_{i,1},\ldots,r_{i,p})^\top\}_{i=b+1}^n$.
According to \eqref{eq_target}, we define the bootstrapped statistic as 
\begin{align}
    \label{det25} 
\widetilde{\bm{B}}_{n}^{\rm oracle,2}(u,v)
:=\bm{A}_{f}(v)\big(\bm{X}^\top\bm{X}/n\big)^{-1}\widetilde{\bm{I}}
{\rm diag}(\bm{Z}_n^{\ast,b})
\widetilde{\bm{E}}\bm{\alpha}_{f+}(u) \in \mathbb{R}^{b},
\end{align}
where ${\rm diag}(\bm{Z}_n^{\ast,b})$ is the $bp^2 \times bp^2$ squared diagonal matrix with diagonal entries  $Z_{n,k}^{\ast,b}$ defined by \eqref{det13}. We  will show in Theorems \ref{thm_gaussian} and \ref{thm_boots} of the supplementary
material that, conditional on the data, $\bm{Z}_n^{\ast,b}$ approximates $\bm{Z}_n^b$ in distribution in large samples with high probability. Consequently, we claim that the law of $ |\widetilde{\bm{B}}_{n}^{\rm oracle,1}|_{\mathcal{L}^2, \bm{w}_{n}}$ will be well approximated by the conditional distribution of $|\widetilde{\bm{B}}_{n}^{\rm oracle,2}|_{\mathcal{L}^2, \bm{w}_{n}}$ uniformly over all quantiles and weight functions in $\mathcal{W}$.

However,  this ``oracle'' multiplier bootstrap cannot be directly implemented 
as the definition of the vectors $\bm{z}_i = {\rm vec}\big(\bm{x}_i^{(b)}\bm{\epsilon}_i^\top\big)$ involves the unknown  vectors  $\bm{x}_i:=(x_{i,1},...,x_{i,p})^\top$  and the  ``errors'' $\bm{\epsilon}_i$ in the truncated approximation \eqref{var}. To address this issue, we estimate the vectors 
$\bm{x}_i$ by 
\begin{align}
\label{det7}
\widehat{\bm{x}}_i =\Big (\frac{r_{i,1}}{\widehat{f}_1},\ldots,\frac{r_{i,p}}{\widehat{f}_p}\Big )^\top
\end{align}
and calculate the  residuals by
\begin{align}
\label{det8}
\widehat{\bm{\epsilon}}_i=\bm{\widehat{x}}_i-\sum_{j=1}^b \widehat{\bm{\Phi}}_j
    \bm{\widehat{x}}_{i-j}, ~~~~~~~(   i=b+1,\ldots,n),
    \end{align}
where $\widehat{\bm{\Phi}}_j$ is the least squares estimator in \eqref{linear_mod}. Furthermore, we define  $\widehat{\bm{z}}_i={\rm vec}\big(
\widehat{\bm{x}}_i^{(b)}\widehat{\bm{\epsilon}}
_i^\top\big)$, let
\begin{align}
    \label{det9}
    \widehat{\bm{Z}}_n^{\ast,b}=(\hat Z_{n,1}^{\ast,b},\ldots, \hat Z_{n,bp^2}^{\ast,b})^\top:=
\frac{1}{\sqrt{n-m-b+1}}\sum_{j=b+1}^{n-m+1}\Big (\frac{1}{\sqrt{m}}\sum_{i=j}^{j+m-1}\widehat{\bm{z}}_i\Big )N_j
\end{align}
and introduce 
\begin{align}
    \label{det10}
    \widehat{\bm{B}}_{n}(u,v) =\big  ( 
\widehat{B}_{n,1}(u,v) , \ldots, \widehat{B}_{n,{b}}(u,v) \big )^\top =\
    \widehat{\bm{A}}_{f}(v)\big(\widehat{\bm{X}}^\top
\widehat{\bm{X}}/n\big)^{-1}
\widetilde{\bm{I}}{\rm diag}(\widehat{\bm{Z}}_n^{\ast,b})
\widetilde{\bm{E}}\widehat{\bm{\alpha}}
_{f+}(u),  
\end{align}
where  $\widehat{\bm{X}}$ is defined as the design matrix  $\bm{X}$ in \eqref{det11} with $\bm x_j^\top $  replaced by the  estimate $\widehat{\bm x}_j^\top $ defined in \eqref{det7}. Similarly, $\widehat{\bm{A}}_f(v)$ and $\widehat{\bm{\alpha}}_{f+}(u)$ are defined as $\bm{A}_f(v)$ in \eqref{det22} and $\bm{\alpha}_{f+}(u)$ in \eqref{det3} with the standard deviation $f_k$ replaced by its estimate $\widehat{f}_k$.  We finally define  the bootstrap statistics 
\begin{align}
    \label{det14}
    M_n^1 & = \sum_{j=k+1}^b\big|\widehat{B}_{n,j}\big|_{\mathcal{L}^2,w_{n,j}},
    \\
    M_n^2 &= \big |\widehat{\bm{B}}_{n}\big|_{\mathcal{L}^2, \bm{w}_{n}} ,   \label{det15} \\
    M_{n,j} &=\big|\widehat{B}_{n,j} \big|_{\mathcal{L}^2, w_{n,j}},
       \label{det16}
\end{align}
and use the conditional quantiles of the statistic $M_n^1, M_n^2$ and $M_{n,j}$ for the tests \eqref{testfar}, \eqref{testfarma} and \eqref{testsep}, respectively,  that is
$$
\mathbb{P} \big ( M_n^1 \leq q_{1-\alpha}^{1,*} \mid \Upsilon_{b+1}^n\big ) =
\mathbb{P} \big ( M_n^2 \leq q_{1-\alpha}^{2,*} \mid \Upsilon_{b+1}^n\big ) =
\mathbb{P} \big ( M_{n,j} \leq q_{1-\alpha}^{3,j,*} \mid \Upsilon_{b+1}^n\big ) = 1- \alpha .
$$
In the following section we describe how this procedure is implemented to obtain valid statistical tests for the hypotheses \eqref{h0far}, \eqref{h0farma} and \eqref{h0sep}.

\subsection{Practical implementation}
\label{sec33}

We will first discuss tuning parameter selection in our methodology, and then outline the implementation algorithms for  testing the aforementioned hypotheses.

\subsubsection{Choice of the weight function}
 One could simply choose $\bm{w}_{n}$ as some fixed and equal weights such as $\bm{w}_n(u,v)\equiv \bm{1}$ for all $u,v\in [0,1]$. On the other hand, it is advantageous to choose each weight component as \begin{align}
     \label{det17}
 w_{n,j}(u,v)=\left[{\rm Std}(\widehat{\psi}_j(u,v))\big/ \int_{[0,1]^2} {\rm Std}(\widehat{\psi}_j(u,v))\dee u\dee v\right]^\sigma
 \end{align}
 where $\sigma \in (0,1]$. This data-adaptive choice of weights yields much smaller weighted $\mathcal{L}^2$ norm $|\widehat{\psi}_j(u,v)-\psi_j(u,v)|_{\mathcal{L}^2,w_{n,j}}$ compared to those of the fixed choice. Furthermore, the data-adaptive weighted deviation also captures the standard deviation of $\widehat{\psi}_j(u,v)$ which provides direct information on the estimation uncertainty at each pair $(u,v)\in [0,1]^2$. For practical implementation, we set $\sigma=1/3$, in line with the choice adopted in related studies (\cite{zhang2024simultaneous}). In addition, ${\rm Std}(\widehat{\psi}_j(u,v))$ needs to be estimated in practice. More specifically, we estimate $\text{Std}(\widehat{\psi}_j(u,v))$ using the sample standard deviation of the empirical bootstrap replicates $\{\widehat{B}_{n,j}^{(r)}(u,v)\}_{r=1}^B$. We refer to \cref{prac_impl} for more details.

\subsubsection{Tuning parameter selection}\label{sec_parameter}
There are three parameters to choose throughout our framework, including the truncation number $p$ in the basis expansion, the functional AR order $b$ and the window size $m$ in the multiplier bootstrap. 

(a) The truncation number $p$ can be selected by the cumulative percentage of total variance (CPV) criterion. Let $\bar{Y}=\sum_{i=1}^{n}Y_i/n$ be  the sample mean function and 
$\widehat{c}(u,v)
=\sum_{i=1}^{n}
\left(Y_i(u)-\bar{Y}(u)\right)
\left(Y_i(v)-\bar{Y}(v)\right)/n$
be the sample covariance function, which induces the sample covariance operator
$$
(\widehat{\mathcal{C}}f)(u)
=
\int_0^1
\widehat{c}(u,v)f(v)\,dv,
\quad
f\in \mathcal L^2([0,1]).
$$ 
If $\hat{\lambda}_1, \hat{\lambda}_2,\ldots$ denote the empirical eigenvalues of $\widehat{\mathcal{C}}$, the CPV($p$) is defined as 
$${\rm CPV}(p):=\sum_{i=1}^p \hat{\lambda}_i\Big/ 
\sum_{i=1}^\infty \hat{\lambda}_i
$$ 
(note that $\sum_{i=1}^\infty
\hat{\lambda}_i = 
\int_0^1 \widehat{c}(u,u)\,du$).
We recommend choosing $p$ such that the \rm CPV($p$) exceeds the  predetermined cutoff value 90\% or 95\%.

(b) We determine the optimal functional AR order $b$ using the Akaike information criterion (AIC).   More specifically, for a given  set of candidate orders $\mathcal{B}$, we fit for each $b \in \mathcal{B}$ a VAR($b$) model to the resulting $p$-dimensional time series $\{\bm{x}_i\}$. Let $\widehat{\bm{\Sigma}}_\epsilon(b)$ denote the estimated covariance matrix of the residuals from the fitted VAR($b$) model, then the AIC is defined by 
$${\rm AIC}(b) = \log{\rm det}\big(\widehat{\bm{\Sigma}}_\epsilon(b)\big)+2bp^2 /(n-b),$$
and the ``optimal'' functional AR order can be selected by
$\hat{b}=\underset{b\in\mathcal{B}}{\mathrm{argmin}}\text{AIC}(b)$.

(c) To select the window size $m$, we apply the plug-in method which balances the bias and variance of an  estimator $$\widetilde{\bm{\Xi}}_{m,n}:=\frac{1}{(n-m-b+1)m}\sum_{j=b+1}^{n-m+1}\Big (
\sum_{i=j}^{j+m-1} \bm{z}_i\Big )\Big (
\sum_{i=j}^{j+m-1} \bm{z}_i^\top\Big )$$ of the long-run variance matrix $\bm{\Xi}_n:=\EE[\bm{Z}_n^b(\bm{Z}_n^b)^\top]$. According to the discussion in Section B.5 of \cite{wu2024frequency}, assessing the accuracy of the estimator $\widetilde{\bm{\Xi}}_{m,n}$ for $\bm{\Xi}_n$ in the high dimensional matrix framework is computationally intensive. Therefore, we choose a block size $m$ that minimizes the average mean squared error (AMSE) across dimensions. More specifically, denote $\Xi_{i,n}$ and $\widetilde{\Xi}_{i,m,n}$ as the $i$th diagonal elements of $\bm{\Xi}_n$ and $\widetilde{\bm{\Xi}}_{m,n}$, respectively, for $i=1,\ldots,bp^2$. Then our goal is to find the optimal $m$ that minimizes ${\rm AMSE}(m)=\sum_{i=1}^{bp^2} \EE[\widetilde{\Xi}_{i,m,n}-\Xi_{i,n}]^2/bp^2$. By the same techniques as in
Section B.2-B.3 of \cite{wu2024frequency}, we can estimate the variance and bias of AMSE($m$), denoted by $\bar{C}_{1,n}^2$ and $\bar{C}_{2,n}$, respectively. Consequently, the plug-in block size selector chooses the optimal $m$ by
$$m^\ast = \max\left\{1, \left\lfloor \Big(\frac{2\bar{C}_{2,n}^2}{\bar{C}_{1,n}^2}\Big)n^{1/3}\right\rfloor\right\}.$$   
Further details on estimating the variance and bias of AMSE($m$) can be found in Section B of \cite{wu2024frequency}. Here, we also refer to alternative methods, such as the minimum volatility approach \citep{CZ2022} and the automatic selection methodology \citep{politis2004automatic}, which can also be used to determine the block size $m$.

\subsubsection{Algorithms}\label{prac_impl}

Here, we describe the implementation procedures for the proposed tests when the
data-driven weight function is employed. We first summarize the calculation of the residuals in Algorithm \ref{alg:residuals} and state the bootstrap procedure and the calculation of the weights in Algorithm \ref{boot}. The output of both algorithms  will be the input to Algorithms \ref{alg:FAR_test}--\ref{alg:separability_test}, which define the tests for the hypotheses \eqref{h0far}, \eqref{h0farma} and \eqref{h0sep}, respectively.  

\begin{algorithm}[H]
\caption{Computation of the residual process}
\label{alg:residuals}
\begin{algorithmic}[1]
\STATE Select the truncation number $p$.
\STATE Calculate the scaled multivariate time series
$\{\widehat{\bm{x}}_i\}_{i=1}^n$ defined by \eqref{det7}
\STATE Fit a VAR model to $\{\widehat{\bm{x}}_i\}_{i=1}^n$  is selected by AIC.
\STATE Determine  the AR coefficient matrices
$\{\widehat{\bm{\Phi}}_j\}_{j=1}^b$
\STATE Compute the residual process
$\{\widehat{\bm{\epsilon}}_i\}_{i=b+1}^n$ by \eqref{det8}.
\end{algorithmic}
\end{algorithm}

\vspace{-0.35cm}
\begin{algorithm}[H]
\caption{Multiplier bootstrap and calculation of the weights}
\label{boot}
\begin{algorithmic}[1]
\STATE Apply Algorithm~\ref{alg:residuals} to obtain
$\{\widehat{\bm{x}}_i\}_{i=1}^n$,
$\{\widehat{\bm{\Phi}}_j\}_{j=1}^b$ and
$\{\widehat{\bm{\epsilon}}_i\}_{i=b+1}^n$.
\STATE Choose the window size $m$ using the plug-in selection method in \cref{sec_parameter}.
\STATE Generate $B$ sets of i.i.d. standard normal random variables
$\{N_j^{(r)}\}_{j=b+1}^{n-m+1}$, $r=1,\ldots,B$.
\STATE For each $r=1,\ldots,B$, \\
 \quad - Compute $  \widehat{\bm{Z}}_{n,(r)}^{\ast,b}$ by  \eqref{det9},
where the random variables $\{N_j\}_{j=b+1}^{n-m+1}$ are replaced by $\{N_j^{(r)}\}_{j=b+1}^{n-m+1}$ \\
\quad -  Compute the bootstrapped process $ \widehat{B}_{n,j}^{(r)}$ by  \eqref{det10}, where $\widehat{\bm{Z}}_{n}^{\ast,b}$ is replaced by $\widehat{\bm{Z}}_{n,(r)}^{\ast,b}$.
\STATE Compute  the standard deviation 
$\widehat{\operatorname{Std}}\{\widehat{B}_{n,j}(u,v)\}$
of the bootstrap sample 
$\{\widehat{B}_{n,j}^{(r)}(u,v)\}_{r=1}^B$.
\STATE Compute the weight function
\[
    \widehat{w}_{n,j}(u,v)
    =
    \Bigg [
    \frac{
    \widehat{\operatorname{Std}}\{\widehat{B}_{n,j}(u,v)\}
    }{
    \int_0^1\int_0^1
    \widehat{\operatorname{Std}}\{\widehat{B}_{n,j}(u,v)\}
    \dee u\,\dee v
    }
    \Bigg]^{1/3}   ~.
\]
\end{algorithmic}
\end{algorithm}

\vspace{-0.35cm}
\begin{algorithm}[H]
\caption{Test for FAR$(k)$ models}
\label{alg:FAR_test}
\begin{algorithmic}[1]
\STATE Apply Algorithm~\ref{alg:residuals} and \cref{boot} to obtain 
$\{\widehat{\bm{x}}_i\}_{i=1}^n$,
$\{\widehat{\bm{\Phi}}_j\}_{j=1}^b$ and
$\{\widehat{\bm{\epsilon}}_i\}_{i=b+1}^n$, the bootstrap processes $\{\widehat{B}_{n,j}^{(r)}\}_{r=1}^B$ and the weight functions $ \widehat{w}_{n,j} $ ($j=k+1, \ldots, b$).
\STATE Compute the estimates $
\widehat \psi_j$  in \eqref{det5} and the test statistic 
\begin{align} \label{testfarweight} 
\widehat{T}_1= \sum_{j= k+1}^b \big |\widehat{\psi}_j 
\big |_{\mathcal{L}^2, {\widehat{w}}_{n,j}} 
\end{align}
\STATE Compute the bootstrap statistics
\[
    \widehat{M}_{n,r}^1
    =
    \sum_{j=k+1}^b
    \left|
        \widehat{B}_{n,j}^{(r)}(u,v)
    \right|_{\mathcal{L}^2,\widehat{w}_{n,j}},
    \qquad r=1,\ldots,B .
\]
\STATE Let $\widehat{q}_{1-\alpha}^{1,\ast}$ be the empirical
$(1-\alpha)$-quantile of the bootstrap sample 
$\{\widehat{M}_{n,r}^1\}_{r=1}^B$.
Reject $H_0^{\rm FAR}$  if
\begin{align}
\label{testfarnew} 
   \sqrt{n}  \widehat{T}_1 > \widehat{q}_{1-\alpha}^{1,\ast}.
\end{align}
\end{algorithmic}
\end{algorithm}

\begin{algorithm}[H]
\caption{Test for FARMA$(p^\ast,q^\ast)$ models}
\label{alg:FARMA_test}
\begin{algorithmic}[1]
\STATE  Apply Algorithm~\ref{alg:residuals} and \cref{boot} to obtain 
$\{\widehat{\bm{x}}_i\}_{i=1}^n$,
$\{\widehat{\bm{\Phi}}_j\}_{j=1}^b$ and
$\{\widehat{\bm{\epsilon}}_i\}_{i=b+1}^n$, the bootstrap processes $\{\widehat{B}_{n,j}^{(r)}\}_{r=1}^B$ and the weight functions $ \widehat{w}_{n,j} $ ($j=1, \ldots, b$).

\STATE Fit a VARMA$(p^\ast,q^\ast)$ model to the time series  
$\{\widehat{\bm{x}}_i\}_{i=1}^n$ and estimate the corresponding AR and MA
coefficient matrices, denoted by
$\{\widetilde{\bm{\Phi}}_i\}_{i=1}^{p^\ast}$ and
$\{\widetilde{\bm{\Theta}}_j\}_{j=1}^{q^\ast}$.
\STATE Convert the fitted VARMA$(p^\ast,q^\ast)$ process into its VAR$(\infty)$
representation, with AR coefficient matrices denoted by
$\{\widetilde{\bm{\Psi}}_j\}_{j=0}^\infty$.
\STATE Estimate the corresponding coefficient functions by $
    \widetilde{\psi}_j(u,v)
    =
    \bm{\alpha}_{f+}^{\top}(u)
    \widetilde{\bm{\Psi}}_j
    \bm{\alpha}_{f-}(v).
$
\STATE Compute the  estimates $
\widehat \psi_j$  in \eqref{det5} and the  test 
statistic 
\begin{align}
    \label{testfarmaweight}
\widehat{T}_2= 
\sum_{j=1}^b \big |\widehat{\psi}_j -\widetilde{\psi}_j
\big |_{\mathcal{L}^2,\widehat{w}_{n,j}} 
\end{align}
\STATE Compute the bootstrap statistics
\[
    \widehat{M}_{n,r}^2
    =
    \sum_{j=1}^b
    \left|
        \widehat{B}_{n,j}^{(r)}(u,v)
    \right|_{\mathcal{L}^2,\widehat{w}_{n,j}},
    \qquad r=1,\ldots,B .
\]
\STATE Let $\widehat{q}_{1-\alpha}^{2,\ast}$ be the empirical
$(1-\alpha)$-quantile of the bootstrap sample
$\{\widehat{M}_{n,r}^2\}_{r=1}^B$.
Reject $H_0^{\rm FARMA}$ if
\begin{align}
\label{testfarmanew} 
   \sqrt{n} \widehat{T}_2 > \widehat{q}_{1-\alpha}^{2,\ast}.
\end{align}
\end{algorithmic}
\end{algorithm}

\vspace{-0.35cm}
\begin{algorithm}[H]
\caption{Test for separability of the coefficient function}
\label{alg:separability_test}
\begin{algorithmic}[1]
\STATE Apply Algorithm~\ref{alg:residuals} and \cref{boot} to obtain 
$\{\widehat{\bm{x}}_i\}_{i=1}^n$,
$\{\widehat{\bm{\Phi}}_j\}_{j=1}^b$ and
$\{\widehat{\bm{\epsilon}}_i\}_{i=b+1}^n$, the bootstrap processes $\{\widehat{B}_{n,j}^{(r)}\}_{r=1}^B$ and the weight function $ \widehat{w}_{n,j} $.
\STATE Compute the  estimates $
\widehat \psi_j$  in \eqref{det5}, the separable approximation
$\widehat{\psi}_j^{\rm sep}$ in \eqref{det21}  
and the test statistic 
\begin{align}
    \label{testsepweight}
\widehat{T}_{3,j}= \big |\widehat{\psi}_j^{\rm sep} -\widehat\psi_j  
\big |_{\mathcal{L}^2,{\widehat{w}}_{n,j}} 
\end{align}
\STATE For each $r=1,\ldots,B$, compute
\[
    \widehat{M}_{n,r}^{3,j}
    =
    \left|
        \widehat{B}_{n,j}^{(r)}(u,v)
    \right|_{\mathcal{L}^2,\widehat{w}_{n,j}} .
\]
\STATE Let $\widehat{q}_{1-\alpha}^{3,j,\ast}$ be the empirical
$(1-\alpha)$-quantile of the bootstrap sample 
$\{\widehat{M}_{n,r}^{3,j}\}_{r=1}^B$.
Reject $H_0^{{\rm sep},j}$ if
\begin{align}
\label{testsepnew} 
    \sqrt{n}\,\widehat{T}_{3,j}
    >
    \widehat{q}_{1-\alpha}^{3,j,\ast}.
\end{align}
\end{algorithmic}
\end{algorithm}
\section{Main results} \label{sec4}
\def\theequation{4.\arabic{equation}}	
\setcounter{equation}{0}
In this section, we present the main theoretical results which show that the proposed methodology in Section \ref{sec3} yields valid statistical inference.

\subsection{FAR approximation}We begin with a rigorous statement for the AR-approximation \eqref{approx_far} in Section \ref{sec2} of a centered stationary time series $\{Y_i\}_{i=1}^n$ in $\mathcal{L}^2([0,1])$ with  $\EE|Y_i|_{\mathcal{L}^2}^2<\infty$. 
 For this purpose the following assumption is required.
\begin{assumption}\label{weak_depen_components}	There exist universal constants $C>0$, $\tau >1$ 
such that the auto-covariances in \eqref{det27} satisfy 
	$\Vert \bm{\Gamma}(j)\Vert\le C(|j|+1)^{-\tau}$ for all 
 $j\in \mathbb{Z}$.
\end{assumption}
\cref{weak_depen_components} states that the correlation among the components of $\bm{x}_i$ is relatively weak. This condition is generally mild and can be fulfilled in most cases, as the random components $x_{i,k}$ typically exhibit weak dependence across different $k$ under appropriate basis expansions. Furthermore, simple calculations show that \cref{weak_depen_components} implies $\max_{k,l}|{\rm Cov}(x_{i,k}, x_{i+j,l})| \le C(|j|+1)^{-\tau}$, which provides a polynomial decay rate of the covariance structure of scalar random variables across time. On the other hand, to avoid erratic behavior of the functional AR approximation, the smallest eigenvalue of the covariance matrix of time series $\{\bm{x}_i\}_{i=1}^n$ should be bounded away from zero. Here, we put forth a uniform-positive-definite-in-covariance (UPDC) assumption for stationary multivariate time series below.
\begin{assumption}\label{updc}
	There exists a universal constant $\kappa_1>0$ such that the smallest eigenvalue of the 
$pn \times pn$ covariance  matrix $\cov(\bm{{\rm x}})$ of the vector 
 $\bm{{\rm x}}= (\bm{x}_1^\top,...,
\bm{x}_n^\top)^\top\in\mathbb{R}^{np}$
is bounded from below  by $\kappa_1  $.
 Moreover,
      the stationary time series   $\{x_{i,k}\}_{i \in \mathbb{Z}}$  from the expansion \eqref{infinite} satisfy
    $\sup_{k\ge 1}\Vert x_{i,k}\Vert_q<\infty$ for some  $q> 9$. 
\end{assumption}
The first part of this condition is necessary to avoid an ill-conditioned matrix  $\cov(\bm{{\rm x}})$ and hence facilitates construction of the functional AR approximation. Note that it is a mild requirement and has been widely used in the statistical literature for covariance and precision matrix estimation; see for instance, \cite{CXW13}, \cite{CLZ16} and references therein. In particular, \citep[Theorem 11.8.1]{BD91} states that this UPDC condition holds if its spectral density matrix is uniformly bounded below by a positive constant.

\begin{theorem}\label{approx_ar}
    If Assumptions \ref{ass_conti_u}--\ref{updc} are satisfied and $d\ge 2$, we have for $i\ge 2$ 
\begin{equation}\label{approx_farnew}
	Y_i(u)=\sum_{j=1}^{\min\{i-1,b\}}\int_0^1 
	\psi_j(u,v)Y_{i-j}(v)\dee v+\varepsilon_i(u)
	+\bigO_\Pr\big (p^{1/2}b^{-\tau+2}(\log b)^{\tau-1}+p^{-d+2\theta}\big )
	\end{equation}
	uniformly with respect to $u$, where the error process $\{\varepsilon_i\} _{i=1}^n$ is  a functional white noise process
 defined by  $\varepsilon_i(u):=\bm{\alpha}_{f+}^\top(u)\bm{\epsilon}_i$. Here $\psi_j \in\mathcal{L}^2([0,1]^2)$ is the $j$th  autoregressive operator kernel defined in \eqref{approx_farA}
 and satisfies $\sum_{j=1}^\infty|\psi_j|_{\mathcal{L}^2}<\infty$. 
 \end{theorem}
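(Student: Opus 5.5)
The plan is to prove the multivariate VAR approximation \eqref{var} first, for the $p$-dimensional series $\{\bm{x}_i\}$, and then lift it to the functional level through the basis representation.

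\textbf{Step 1 (truncation).} By Assumption \ref{ass_conti_u}, $\Vert r_{i,k}\Vert_2\le Ck^{-(d+1-\theta)}$. Together with uniform bounds on the basis functions ($\sup_u|\alpha_k(u)|\lesssim k^{\theta}$, which covers both the bounded-basis case $\theta=0$ and the Legendre case $\theta=1/2$), this gives
\[
\sup_u|Y_i(u)-Y_i^{(p)}(u)|\le \sum_{k>p}|r_{i,k}|\,|\alpha_k(u)| .
\]
The right-hand side is $\bigO_\Pr(\sum_{k>p}k^{-(d+1-2\theta)})=\bigO_\Pr(p^{-d+2\theta})$, which justifies \eqref{truncation}. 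The same bound controls $\int\psi_j(u,v)(Y_{i-j}-Y^{(p)}_{i-j})(v)\dee v$, because $\psi_j$ only involves the first $p$ basis functions, so only the projection of $Y_{i-j}$ onto them enters. Since $\int_0^1\psi_j(u,v)Y_{i-j}(v)\dee v=\bm{\alpha}_{f+}^\top(u)\bm{\Phi}_j\bm{x}_{i-j}$ by \eqref{det4} and orthonormality, \eqref{approx_farnew} reduces to multiplying the vector identity \eqref{var} by $\bm{\alpha}_{f+}^\top(u)$. The truncated sum thus contributes $|\bm{\alpha}_{f+}(u)|\cdot\bigO_\Pr(p^{1/2}b^{-\tau+2}(\log b)^{\tau-1})$, and $\sup_u|\bm{\alpha}_{f+}(u)|\le \sum_k f_k|\alpha_k(u)|<\infty$ uniformly in $p$ when $d\ge2$.

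\textbf{Step 2 (VAR approximation).} For \eqref{var}, write the best linear predictor coefficients via the Yule–Walker equations $(\bm{\Phi}_1,\dots,\bm{\Phi}_{i-1})=\bm{\gamma}_{i}^\top\bm{\Gamma}_{i-1}^{-1}$. Here $\bm{\Gamma}_{i-1}$ is the block Toeplitz covariance matrix of $(\bm{x}_{i-1},\dots,\bm{x}_1)$. By Assumption \ref{updc} its smallest eigenvalue is at least $\kappa_1$. By Assumption \ref{weak_depen_components} its off-diagonal blocks decay like $(|j|+1)^{-\tau}$. A Jaffard/Demko-type decay result for inverses of block matrices with polynomial off-diagonal decay and bounded condition number then gives $\Vert(\bm{\Gamma}_{i-1}^{-1})_{st}\Vert\le C(|s-t|+1)^{-\tau}(\log(|s-t|+2))^{\tau-1}$. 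Hence $\Vert\bm{\Phi}_j\Vert\le Cj^{-\tau}(\log j)^{\tau-1}$, uniformly in $i$. We then compare with the stationary infinite-order predictor coefficients via Baxter-type inequalities. The discarded part $\sum_{j>b}\bm{\Phi}_j\bm{x}_{i-j}$ has $L_2$-norm at most $\sum_{j>b}\Vert\bm{\Phi}_j\Vert\,\Vert|\bm{x}_{i-j}|\Vert_2\lesssim p^{1/2}\sum_{j>b}j^{-\tau}(\log j)^{\tau-1}$, where $p^{1/2}$ comes from $\EE|\bm{x}_i|^2=p$. Accounting for the difference between finite-past and stationary coefficients with a Baxter bound costs an additional factor of order $b$, which yields the stated rate $p^{1/2}b^{-\tau+2}(\log b)^{\tau-1}$. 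The white noise property of $\bm{\epsilon}_i$ is immediate from the projection construction.

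\textbf{Step 3 (summability).} We have $|\psi_j|_{\mathcal{L}^2}=|{\rm diag}(f)\bm{\Phi}_j{\rm diag}(1/f)|_F$, so summability requires controlling the entries $\Phi_{j,kl}f_k/f_l$. I would exploit that the decay bound for $\bm{\Gamma}^{-1}$ is entrywise, and that the stationary kernel is determined by the functional covariance. Concretely, I would identify $\psi_j$ with the projection of the infinite-order functional predictor operator, whose Hilbert–Schmidt norms are summable by the same inverse-decay argument applied on the operator level. Then $\sum_j|\psi_j|_{\mathcal{L}^2}\le C\sum_j j^{-\tau}(\log j)^{\tau-1}<\infty$, using $\tau>1$.

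\textbf{Main obstacle.} I expect the main difficulty to be Step 2's uniform (in $p$ and $i$) polynomial decay of the inverse block Toeplitz matrix, with explicit log factors, together with the finite-versus-infinite past Baxter comparison. A secondary worry is the scaling $f_k/f_l$ in the summability claim.
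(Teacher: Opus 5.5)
Your Step 1 and the lifting to the functional level match the paper's proof. This covers the truncation via $|\alpha_k|_\infty\le Ck^{\theta}$, the identity $\int_0^1\psi_j(u,v)Y_{i-j}(v)\dee v=\bm{\alpha}_{f+}^\top(u)\bm{\Phi}_j\bm{x}_{i-j}$ from \eqref{det20} and orthonormality, and the boundedness of $\sup_u|\bm{\alpha}_{f+}(u)|$. Note that the contribution of $Y_{i-j}-Y^{(p)}_{i-j}$ is exactly zero, not merely small. The decay of $\bm{\Phi}_j$ is what the paper imports as Lemma~\ref{phi_rate}, namely $\Vert\bm{\Phi}_j\Vert\le C(j/(\log j+1))^{-\tau+1}$.

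Two parts of your argument do not work as written. The first is Step 2, where the Baxter comparison is both unnecessary and mis-accounted. The $\bm{\Phi}_j$ entering $\psi_j$ through \eqref{det20} are the coefficients of the finite-past predictor \eqref{ar_best_pre}. Hence $\bm{x}_i=\sum_{j=1}^{i-1}\bm{\Phi}_j\bm{x}_{i-j}+\bm{\epsilon}_i$ holds exactly, and the only error is the tail $\sum_{j=b+1}^{i-1}\bm{\Phi}_j\bm{x}_{i-j}$, which is empty when $i-1\le b$. Suppose you instead put the infinite-past coefficients $\bm{\Phi}_j^{\infty}$ into $\psi_j$. Baxter's inequality then only bounds the discrepancy by $Cp^{1/2}\sum_{j\ge i}\Vert\bm{\Phi}^\infty_j\Vert$, which does not vanish for small $i$. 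For example, at $i=2$ the one-step coefficient $\bm{\Gamma}(1)^\top\bm{\Gamma}(0)^{-1}$ generally differs from $\bm{\Phi}_1^\infty$. So the claim for all $i\ge 2$ cannot be reached that way, and nothing in your argument produces the ``additional factor of order $b$''. In the paper, the exponent $-\tau+2$ comes simply from summing the one-power-weaker bound of Lemma~\ref{phi_rate} over $j>b$. If you can justify $\Vert\bm{\Phi}_j\Vert\le Cj^{-\tau}(\log j)^{\tau-1}$ with $C$ free of $p$ and $i$, your own tail estimate already gives the stronger rate $p^{1/2}b^{-\tau+1}(\log b)^{\tau-1}$, and you should stop there. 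That justification requires a norm-controlled, dimension-free Jaffard-type inversion theorem for block matrices, which you should cite precisely.

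The second problem is that Step 3 rests on a false identification. The $p\times p$ matrices $\bm{\Phi}_j$ predict $(x_{i,1},\ldots,x_{i,p})$ from the past of these $p$ coordinates only. They are not the compression of a predictor that uses the past of all coordinates, because shrinking the conditioning set changes the projection coefficients. Nor is there an operator-level inverse-decay argument for $Y$ itself. The covariance operator of $Y_i$ is compact with unbounded inverse, and the UPDC condition holds only in the standardized coordinates. Returning to $\bm{\Psi}_j={\rm diag}(f_1,\ldots,f_p)\bm{\Phi}_j{\rm diag}(1/f_1,\ldots,1/f_p)$ introduces ratios $f_k/f_l$. By Assumption~\ref{ass_conti_u} these reach order at least $p^{d+1-\theta}$, and operator-norm decay of $\bm{\Phi}_j$ does not absorb them.

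What does work for each fixed $p$, which is all the statement literally asserts, is $|\psi_j|_{\mathcal{L}^2}=|\bm{\Psi}_j|_F\le \max_{k,l\le p}(f_k/f_l)\sqrt{p}\,\Vert\bm{\Phi}_j\Vert$. All $f_k>0$ for $k\le p$ by Assumption~\ref{updc}. This bound is summable in $j$ by the decay estimate: with your rate for every $\tau>1$, and with that of Lemma~\ref{phi_rate} only for $\tau>2$. The paper's proof never addresses the summability claim at all. A bound uniform in $p$ would need additional structure that your argument does not supply.
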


We continue formulating a general model for the functional time series $\{Y_i\}_{i\in\mathbb{Z}}$, which describes the standardized univariate stationary time series $\{x_{i,k}\}_{i\in\mathbb{Z}}$ appearing in the Karhunen–Loève-type expansion \eqref{infinite} from the perspective of nonlinear systems and  will be essential for our theoretical analysis of the multiplier bootstrap.

\begin{definition}\label{def1}
	Let $\{Y_i\}_{i\in\mathbb{Z}}$ denote a centered stationary functional time series with $\EE|Y_i|_{\mathcal{L}^2}<\infty$.  We say that $\{Y_i\}_{i\in\mathbb{Z}}$ admits a physical representation if for each $k$   there exists   a measurable function $G_{k}$ such that the elements of  $\{x_{i,k}\}_{i \in \mathbb{Z}}$  from the expansion \eqref{infinite} can be represented in the form  $
	x_{i,k}=G_{k}(\mathcal{F}_i),$ where   $\mathcal{F}_i=(...,\eta_{i-1},\eta_i)$ is a one-sided shift process and $\{\eta_i\}_{i \in \mathbb{Z}}$ is a sequence of  i.i.d. random elements. For $l\ge 0$, define the $l$-th physical dependence measure for the functional time series $\{Y_i\}_{i\in\mathbb{Z}}$ with respect to the basis  $\{\alpha_k \}_{k=1}^\infty$  as 
    $$
	\delta_x(l,q)=\sup_{1\le k<\infty}\Vert G_{k}(\mathcal{F}_i)-G_{k}(\mathcal{F}_{i,l})\Vert_q,
    $$
	where $\mathcal{F}_{i,l}=(\mathcal{F}_{i-l-1},\eta_{i-l}^\ast,\eta_{i-l+1},...,
	\eta_i)$ is a coupled process with $\eta_{i-l}^\ast$ an i.i.d. copy of $\eta_{i-l}$. 
\end{definition}
Note that in the above definition, the time series $\{x_{i,k}\}_{i \in \mathbb{Z}}$ is considered as a physical system where functions $G_{k}$ are the underlying data generating mechanisms and $\{\eta_i\}_{i \in \mathbb{Z}}$ are innovations that drive the system. The coefficient 
$\delta_x(l,q)$ does not depend on $i$ and measures the temporal dependence of the functional time series $\{Y_i\}_{i \in \mathbb{N}}$ by quantifying the corresponding changes in the system's output uniformly across all basis expansion coefficients when the shock of the system $l$ steps ahead is changed to an independent copy. We refer to \cite{Wu05} for more discussion of the physical dependence measures with examples on how to calculate them for a wide range of linear and nonlinear univariate time series models.

Definition \ref{def1} covers a wide class of commonly used functional time series models and we  
refer the  reader to Section \ref{app_examples} in the supplemental
material  for some examples.
It is related to the class of functional time series model formulated in \cite{zhou2023}. The difference is that in \eqref{infinite} we separate the standard deviation $f_{k}$ from $r_{i,k}$ and the functional time series model in \cite{zhou2023} is formulated without this extra step. Standardization of the basis expansion coefficients is needed in the fitting of the VAR approximation \eqref{var} to avoid near singularity of the design matrix.  Furthermore, Definition \ref{def1} is also related to the concept of $m$-approximable functional time series introduced in \cite{hormann2010weakly}, as both formulations utilize the concepts of Bernoulli shifts and coupling. The difference lies in our adaptation of the basis expansion,  which separates the functional index $u$ and time index $i$ and hence makes it easier technically to investigate the behavior of various estimators of $\psi_j(u,v)$ uniformly with respect to $u$ and $v$. 

\subsection{Multiplier bootstrap}\label{sec_multiboots}
To establish the main results, we need the following assumptions. Let $\bm{\Sigma}=\EE[\bm{X}^\top\bm{X}/n]$ denote the population version of the matrix $\bm{X}^\top\bm{X}/n$ in \eqref{eq_target}.

\begin{assumption}\label{ass_dep}
	There exists some constant $\tau>5$ such that for some universal constant $C>0$, the physical dependence measure satisfies
	$\delta_x(l,q) \le C(l+1)^{-\tau},~l\ge 0.$
\end{assumption}

\begin{assumption}\label{ass_moment}
The elements of the vector $\bm z_i$ in \eqref{det23} satisfy  $\max_{1\le k\le bp^2}\EE|z_{i,k}|^q\le C_q<\infty$ for each $i$, where $q>9$ is the same moment order  as in \cref{updc}, and $C_q>0$ is a constant.
\end{assumption}

\begin{assumption}
\label{ass_design_matrix}
The smallest eigenvalue of the matrix $\bm{\Sigma}$ is bounded from below by a constant $\kappa_3>0$.
\end{assumption}

\begin{assumption}\label{f_bound}
	For sufficiently large $k$, $|f_{k}|\ge Ck^{-(d+1-\theta)}$, where $C>0$ is a universal finite constant and $\theta=0$ or $1/2$.
\end{assumption}

\cref{ass_dep} is a mild short-range dependence assumption which asserts that the temporal dependence of the stationary functional time series $\{Y_i\}_{i\in\mathbb{Z}}$ decays at a sufficiently fast polynomial rate. We provide two examples in Section C of the supplemental material on how to calculate $\delta_x(l,q)$ for a class of functional MA$(\infty)$ and functional AR(1) processes. \cref{ass_moment} imposes a moment condition on the components of the random vector $\bm{z}_i$ in \eqref{det23}, while \cref{ass_design_matrix} ensures positive definiteness of the design matrix in order to avoid multicollinearity. The constraint on the standard deviation $f_k$
in \cref{f_bound} is commonly assumed in the functional data literature
and imposes a lower bound on $f_k$; see, for example, \cite{Hall07}.

We denote  by 
$$
\widetilde{\bm{\Xi}}_n=\EE[(\bm{Z}_n^{\ast,b})(\bm{Z}_n^{\ast,b})^\top \mid \Upsilon_{b+1}^{n}]
$$
the conditional covariance matrix of $\bm{Z}_n^{\ast,b}$ given $\Upsilon_{b+1}^{n}$ and 
employ the empirical estimates of the random components to implement the multiplier bootstrap procedure. Specifically, recall the notation of $\widehat{\bm{Z}}_n^{\ast,b}$  in \eqref{det9} and the definition of the process $\widehat{\bm{B}}_n$  in \eqref{det10}, then the following theorem provides a bound on the error of the empirical multiplier bootstrap approximation.
    \begin{theorem}
    \label{empirical_thm}
    Let Assumptions \ref{ass_conti_u}--\ref{f_bound} be satisfied and
    assume that the smallest eigenvalue of the matrix $\bm{\Xi}_n$ defined in \cref{sec_parameter} is bounded below by some constant $\kappa_4>0$ and $m=\bigO\big((n/p)^{1/3}\big)$. For some finite constant $C>0$, define 
\begin{align}
    \label{det26}
       \mathcal{A}_n=\big \{\omega:\Delta_n(\omega):=
	 |\widetilde{\bm{\Xi}}_n-\bm{\Xi}_n |_F\le Cbp^{17/6}n^{-1/3}h_n\big \},
\end{align}
    where $\omega$ represents the element in the probability space, $h_n$ diverges to infinity at an arbitrarily slow rate, then $\Pr(\mathcal{A}_n)=1-o(1)$. Moreover, on the event $\mathcal{A}_n$, we have 
        	\begin{align}
        &\sup \limits_{x\in\mathbb{R},\bm{w}_n\in \mathcal{W}}\Big|
		\Pr\Big(
  \big|\widehat{\bm{B}}_n
  \big|
  _{\mathcal{L}^2,\bm{w}_n}
  \le x\mid \Upsilon_{b+1}^{n}\Big)-\Pr\Big(\sqrt{n}|\widehat{\bm{\psi}}-\bm{\psi}|_{\mathcal{L}^2,\bm{w}_n}\le x\Big)\Big|
  \notag \\ 
 & \le
C\left(
        \min\left\{
        (bp^2)^{7/4}n^{-\frac{1}{2}+\frac{9}{2q}+\frac{2}{\tau-1}}, (bp^2)^{\frac{5}{6}} n^{\frac{2}{3 r}-\frac{1}{3}}(\log n)^{\frac{1}{3}}\right\}+b^{7/8}p^{3/2}n^{-1/4+1/q}+bp^{17/6}
        n^{-1/3}h_n\right),
        \label{eq_final}
		\end{align}
where 
        $$
        \frac{1}{r}=
    \max\left\{\frac{1}{q},\frac{1}{q\sqrt{\tau+1} }+\left(\frac{1}{2}-\frac{1}{q\sqrt{\tau+1} }\right) \max \Big\{\frac{2}{q\sqrt{\tau+1}}, \frac{1}{\tau}\big(\frac{1}{2}-\frac{1}{q}\big)\Big\}\right\}.
    $$
\end{theorem}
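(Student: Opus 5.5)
The plan is to pass through the oracle quantities of Section~\ref{sec32} in a chain of approximations, and to control each link uniformly over $x\in\mathbb{R}$ and $\bm{w}_n\in\mathcal{W}$. The central fact is that, for fixed data, each map $\bm{Z}\mapsto|\bm{A}_f(\bm{X}^\top\bm{X}/n)^{-1}\widetilde{\bm{I}}{\rm diag}(\bm{Z})\widetilde{\bm{E}}\bm{\alpha}_{f+}|_{\mathcal{L}^2,\bm{w}_n}$ is a seminorm of the Gaussian-approximable vector $\bm{Z}\in\mathbb{R}^{bp^2}$. Hence every event $\{|\cdot|_{\mathcal{L}^2,\bm{w}_n}\le x\}$ is a symmetric convex set in $\mathbb{R}^{bp^2}$, and the whole family of such sets, indexed by $(x,\bm{w}_n)$, lies in the class of convex sets. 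This reduces uniformity over $\mathcal{W}$ to high-dimensional Gaussian approximation over convex sets. We avoid any entropy argument in $\bm{w}_n$; only $\kappa_2>0$ is needed, to keep the seminorm Lipschitz in $\bm{Z}$ with a constant of polynomial order in $b,p$.

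\textbf{Step 1 (real statistic to oracle 1).} Write $\sqrt{n}(\widehat{\bm{\psi}}-\bm{\psi})=\widetilde{\bm{B}}_n^{\rm oracle,1}+R_n$. The remainder $R_n$ collects three contributions: the VAR truncation error of order $p^{1/2}b^{-\tau+2}(\log b)^{\tau-1}$ from \eqref{var} (Theorem~\ref{approx_ar}), the boundary terms $i\le b$, and the replacement of $f_k,\bm{x}_i$ by $\widehat f_k,\widehat{\bm{x}}_i$. Using $\|(\bm{X}^\top\bm{X}/n)^{-1}-\bm{\Sigma}^{-1}\|=\bigO_\Pr(bp\,n^{-1/2})$ (Assumption~\ref{ass_design_matrix} together with the concentration lemmas of the supplement), the remainder is bounded in probability at a polynomial rate. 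We then transfer it to distribution functions through a Gaussian anti-concentration bound for norms of Gaussian vectors over convex sets (Nazarov's inequality, giving a factor of order $(bp^2)^{1/4}$ times the perturbation size). This step produces the term $b^{7/8}p^{3/2}n^{-1/4+1/q}$.

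\textbf{Step 2 (oracle 1 to a Gaussian vector).} Approximate $\bm{Z}_n^b$ by $N(0,\bm{\Xi}_n)$ over convex sets, invoking Theorem~\ref{thm_gaussian} of the supplement. That theorem itself rests on $m$-dependent approximation via Assumption~\ref{ass_dep} and Assumption~\ref{ass_moment}, combined with either a Lindeberg/Stein bound for convex sets or a strong-approximation argument; these two routes yield the two entries inside the $\min$ in \eqref{eq_final}, with the exponent $1/r$ arising from the optimised $m$-dependence truncation. The lower bound $\kappa_4$ on $\lambda_{\min}(\bm{\Xi}_n)$ is used here to standardise the vector.

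\textbf{Step 3 (Gaussian to bootstrap).} Conditionally on $\Upsilon_{b+1}^n$, $\bm{Z}_n^{\ast,b}$ is exactly $N(0,\widetilde{\bm{\Xi}}_n)$. On $\mathcal{A}_n$, a Gaussian comparison over convex sets (Pinsker plus the KL divergence between Gaussians, controlled by $|\widetilde{\bm{\Xi}}_n-\bm{\Xi}_n|_F/\kappa_4$) gives the term $bp^{17/6}n^{-1/3}h_n$. To show $\Pr(\mathcal{A}_n)\to1$, we bound $\EE|\widetilde{\bm{\Xi}}_n-\bm{\Xi}_n|_F$ by a bias of order $bp^2/m$ and a variance of order $(bp^2)\sqrt{m/n}$ per entry, using the physical dependence calculus (Theorem~\ref{thm_boots}). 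Substituting $m\asymp(n/p)^{1/3}$ balances these to the rate $bp^{17/6}n^{-1/3}$, and Markov's inequality with $h_n\to\infty$ finishes.

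\textbf{Step 4 (oracle 2 to the feasible $\widehat{\bm{B}}_n$).} Replace $\bm{z}_i,\bm{X},f_k$ by their estimated versions. The difference $\widehat{\bm{Z}}_n^{\ast,b}-\bm{Z}_n^{\ast,b}$ is conditionally Gaussian with covariance built from $\widehat{\bm{z}}_i-\bm{z}_i$. This covariance is small because $\widehat{\bm{\epsilon}}_i-\bm{\epsilon}_i=-\sum_j(\widehat{\bm{\Phi}}_j-\bm{\Phi}_j)\bm{x}_{i-j}$ plus the $\widehat f$ errors, each of order $\bigO_\Pr(\sqrt{bp^2/n})$ in a suitable norm. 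Anti-concentration again absorbs this into the Step~1 rate.

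The main obstacle is keeping every perturbation bound uniform in $\bm{w}_n$ and polynomial in $(b,p)$, given the factors $1/f_k\lesssim k^{d+1-\theta}$ in $\bm{\alpha}_{f-}$ from Assumption~\ref{f_bound}. My first attempt would be to keep everything at the level of the vector $\bm{Z}$, using the convex-set reduction above, so that the $\bm{\alpha}_{f\pm}$ factors never need explicit norm bounds. Only the $\widehat f_k/f_k-1$ ratios would then enter, and these are uniformly $\bigO_\Pr(n^{-1/2+1/q})$.
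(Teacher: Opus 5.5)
Your chain of approximations is the paper's. You go from the real statistic to $\widetilde{\bm B}_n^{\rm oracle,1}$ and then to $\bm B_n^{\rm oracle}$ (replacing $(\bm X^\top\bm X/n)^{-1}$ by $\bm\Sigma^{-1}$). You then apply Gaussian approximation over convex sets (Theorem~\ref{thm_gaussian}), a conditional Gaussian comparison on $\mathcal A_n$ (Theorem~\ref{thm_boots}), and perturbation arguments for the feasible bootstrap. The paper uses the smoothed indicators $h_{T,\sigma}$ of Lemma~\ref{comparison} where you use $\Pr(|\bm W-\bm W'|>\epsilon)+C(bp^2)^{1/4}\epsilon$; the rates are the same.

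\textbf{The gap is in Step~1.} The VAR-truncation part of $R_n$ is a bias, not a centred fluctuation. Least squares on $b$ lags targets the order-$b$ Yule--Walker coefficients rather than $(\bm\Phi_1,\ldots,\bm\Phi_b)$. So $\sqrt n(\widehat{\bm\Phi}-\bm\Phi)$ contains $(\bm X^\top\bm X/n)^{-1}n^{-1/2}\sum_i\bm x_i^{(b)}\bm\xi_i^\top$, with $\bm\xi_i=\sum_{j>b}\bm\Phi_j\bm x_{i-j}$ and $\EE[\bm x_i^{(b)}\bm\xi_i^\top]\neq\bm 0$ in general. Its mean is controlled only through $\sqrt n\sum_{j>b}\Vert\bm\Phi_j\Vert\lesssim\sqrt n\,b^{-\tau+2}(\log b)^{\tau-1}$ (Lemma~\ref{phi_rate}), up to polynomial factors in $b,p$. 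No hypothesis of the theorem forces this bound to vanish; with $b\asymp\log n$ and fixed $\tau$ it diverges. Hence it cannot be absorbed into $b^{7/8}p^{3/2}n^{-1/4+1/q}$ by anti-concentration.

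The paper isolates this remainder at the end of its proof. It bounds the contribution by $p^{d/3+5/6}b^{-\tau/3+1}(\log b)^{(\tau-1)/3}$ via Markov's inequality and Assumption~\ref{f_bound}. It makes the term negligible only by \emph{additionally} choosing $b\asymp n^{1/(\tau-3)}$ and $d<6$. You need an explicit growth condition of this kind on $b$.

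Your vector-level device is genuinely cleaner than the paper's function-level Markov bounds (its terms $VI$--$VII$). For any $bp\times p$ matrix $\bm M$ one has exactly $\widehat{\bm A}_f\bm M\widehat{\bm\alpha}_{f+}=\bm A_f(\bm I_b\otimes\bm D)\bm M\bm D^{-1}\bm\alpha_{f+}$, with $\bm D={\rm diag}(f_1/\widehat f_1,\ldots,f_p/\widehat f_p)$. The truncation remainder also lives in $\mathbb R^{bp^2}$ without $1/f_k$ factors. So your approach would remove the $p^{d}$ dependence and the restriction $d<6$, but not the need for $b$ to grow fast enough.

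\textbf{Smaller corrections.}
\begin{itemize}
\item The term $b^{7/8}p^{3/2}n^{-1/4+1/q}$ does not come from Step~1. In the paper, the $\bm\Sigma$-replacement on the real-statistic side (term $V$) contributes only $b^{7/8}p^{5/4}n^{-1/4}\log^{1/2}n$. The term $b^{7/8}p^{3/2}n^{-1/4+1/q}$ comes from the estimated residuals in $\widehat{\bm Z}_n^{\ast,b}$ (term $I$). It enters through the uniform bound $\max_i|\widehat\epsilon_{i,k}-\epsilon_{i,k}|=\bigO_\Pr(bp^{3/2}n^{-1/2+2/q})$ of Lemma~\ref{consis_epsilon}, which is what the paper uses to control the block sums.
\item In Step~3, your stated orders $bp^2/m$ and $bp^2\sqrt{m/n}$ balance at $m\asymp n^{1/3}$ and give $bp^2n^{-1/3}$, not $bp^{17/6}n^{-1/3}$ at $m\asymp(n/p)^{1/3}$. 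The paper uses the per-entry bound $\Vert\widetilde{\bm\Xi}_{n,kl}-\bm\Xi_{n,kl}\Vert_2=\bigO(\sqrt p/m+p\sqrt{m/n})$, with the $\sqrt p$ coming from Lemma~\ref{eps_depend}, summed over the $(bp^2)^2$ entries.
\item The $(bp^2)^{1/4}$ anti-concentration bound over convex sets is due to Ball and Bentkus. Nazarov's inequality concerns rectangles.
\end{itemize}
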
    
From  \cref{empirical_thm} we conclude that, under certain regularity conditions, the law of  $\sqrt{n}|\widehat{\bm{\psi}}-\bm{\psi}|_{\mathcal{L}^2,\bm{w}_n}$ will be well approximated by the empirical conditional distribution of $|\widehat{\bm{B}}_n|_{\mathcal{L}^2, \bm{w}_n}$. Notably, the first term on the right-hand side of \eqref{eq_final} reflects the Gaussian approximation error, the second term represents the empirical multiplier bootstrap approximation error, and the last term relates to the covariance-matrix comparison error between $\widetilde{\bm{\Xi}}_n$ and $\bm{\Xi}_n$. In particular, if the functional time series has rapidly decaying memory ($\tau \to \infty$) and finite moments of arbitrarily large order ($q\to \infty$), the FAR order can be chosen as $b \asymp \log(n)$ and the truncation number grows as large as $p \asymp n^{2/17}$ such that the approximation error in \eqref{eq_final} converges  to zero. 

By \cref{empirical_thm}, the bootstrap critical values consistently estimate the quantiles of the limiting distributions of our test statistics under the null hypotheses. Note that the test statistics \eqref{testfar}, \eqref{testfarma} and \eqref{testsep} in \cref{sec30} can be viewed as functionals of $|\widehat{\bm{\psi}}-\bm{\psi}|_{\mathcal{L}^2,
\bm{w}_n}$, and their sampling distributions are well approximated by the laws of $\widehat{M}_n^1$, $\widehat{M}_n^2$ and $\widehat{M}_{n,j}$ according to \cref{empirical_thm}. A formal statement of the validity and consistency of the proposed procedures is provided in the next subsection.

\subsection{Validity and consistency of the specification tests}\label{sec_consistency}
 The following propositions establish that the proposed tests \eqref{testfarweight}, \eqref{testfarmaweight} and \eqref{testsepweight} have asymptotically correct size under the null hypotheses and are consistent under the alternative hypotheses.

\begin{proposition}\label{consistency_test_far}
Suppose Assumptions \ref{ass_conti_u}--\ref{f_bound} hold. On the event $\mathcal{A}_n$ defined in \eqref{det26} we have\\
(a) If the null hypothesis $H_0^{\rm{FAR}}$ holds, then the test \eqref{testfarweight} satisfies
$$
\lim_{n\to \infty} \Pr(\sqrt{n}  \widehat{T}_1>\widehat{q}_{1-\alpha}^{1,\ast} \mid \Upsilon_{b+1}^{n} )  =  \alpha. 
$$
(b) If the alternative $H_a^{\rm{FAR}}$ holds and $p \ll \big(\frac{n}{b^3\log(n)}\big)^{\frac{1}{d+3-\theta}}$, then the test \eqref{testfarweight} satisfies
$$
\lim_{n\to \infty} \Pr(\sqrt{n} \widehat{T}_1>\widehat{q}_{1-\alpha}^{1,\ast} \mid \Upsilon_{b+1}^{n} )  =  1.
$$
\end{proposition}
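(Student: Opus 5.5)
Under $H_0^{\rm FAR}$ the true kernels satisfy $\psi_j\equiv 0$ for $j=k+1,\ldots,b$. The plan is to rewrite the statistic as
\[
\sqrt n\,\widehat T_1=\sum_{j=k+1}^b \sqrt n\,\big|\widehat\psi_j-\psi_j\big|_{\mathcal L^2,\widehat w_{n,j}},
\]
which is the weighted norm $\sqrt n|\widehat{\bm\psi}-\bm\psi|_{\mathcal L^2,\bm w}$. Here the weight vector $\bm w$ has components $\widehat w_{n,j}$ for $j>k$ and $+\infty$ (or a huge constant) for $j\le k$, and there is a factor $b$ from the normalization $1/b$ in the definition of the norm. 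Likewise, $M_n^1=\sum_{j>k}|\widehat B_{n,j}|_{\mathcal L^2,\widehat w_{n,j}}$ is the same functional applied to $\widehat{\bm B}_n$.

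To avoid the infinite weights, I would first reread the proof of \cref{empirical_thm} and check that it goes through for any fixed linear selection of coordinates. The step involved is the Gaussian approximation and bootstrap comparison over the class of sets $\{\bm z: |L\bm z|\le x\}$ with $L$ a coordinate projection composed with the $\mathcal L^2$-type map. It should carry over, since restricting to coordinates $j>k$ only reduces the dimension. Either way, I obtain a bound uniform in $x$ and in weights from $\mathcal W$ on
\[
\Big|\Pr\big(M_n^1\le x\mid\Upsilon_{b+1}^n\big)-\Pr\big(\sqrt n\,T_1\le x\big)\Big|.
\]
On $\mathcal A_n$ this bound is the right-hand side of \eqref{eq_final}, which tends to zero under the rate conditions.

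Second, I handle the data-adaptive weights $\widehat w_{n,j}$. They are random and depend on the same data, but the uniformity over $\mathcal W$ in \cref{empirical_thm} lets me plug them in, provided $\widehat w_{n,j}\in\mathcal W$ with probability tending to one. This means the bootstrap standard deviations must be bounded below by a positive constant after normalization, which I would argue from $\lambda_{\min}(\bm\Xi_n)\ge\kappa_4$ and \cref{ass_design_matrix}.

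Third, I pass from the approximation to the size statement. I use anti-concentration of the Gaussian analogue of $M_n^1$, which is already implicit in the Gaussian approximation step of \cref{empirical_thm}, to show that the conditional distribution function is continuous at its $(1-\alpha)$-quantile up to $o(1)$. The estimation of $\widehat q^{1,*}_{1-\alpha}$ by $B$ Monte Carlo replicates adds only $o(1)$ as $B\to\infty$. Together these give part (a).

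For part (b), under $H_a^{\rm FAR}$ some $\psi_{j_0}$ with $j_0>k$ satisfies $|\psi_{j_0}|_{\mathcal L^2}\ge c>0$. By the triangle inequality, and since the weights are bounded above (their normalized integral is one and they are continuous, so $|f/w|\ge |f|/\sup w$ with $\sup w$ bounded in probability),
\[
\sqrt n\,\widehat T_1\ \ge\ \sqrt n\,|\psi_{j_0}|_{\mathcal L^2,\widehat w_{n,j_0}}-\sqrt n\,|\widehat\psi_{j_0}-\psi_{j_0}|_{\mathcal L^2,\widehat w_{n,j_0}} .
\]
The first term is of order $\sqrt n$. For the second I need a rate for $\sup_{u,v}|\widehat\psi_{j_0}-\psi_{j_0}|$. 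Using \eqref{det4}–\eqref{det5}, this reduces to $\|\widehat{\bm\Phi}_{j_0}-\bm\Phi_{j_0}\|$ times $|\bm\alpha_{f+}||\bm\alpha_{f-}|$, and \cref{f_bound} gives $|\bm\alpha_{f-}|\lesssim p^{d+3/2-\theta}$. The least-squares error is $O_\Pr(\sqrt{b p^2\log n/n})$ from the deviation bounds behind \eqref{linear_mod}, plus the effect of estimating $f_k$ and the truncation error. This is where the condition $p\ll (n/(b^3\log n))^{1/(d+3-\theta)}$ enters: it makes the estimation error $o_\Pr(1)$, so $\sqrt n\,\widehat T_1\gtrsim\sqrt n$. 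Meanwhile $\widehat q^{1,*}_{1-\alpha}=O_\Pr(1)$ or grows at most like a power of $b,p$ that is $o(\sqrt n)$, because $M^1_n$ is a norm of a conditionally Gaussian vector whose covariance is controlled on $\mathcal A_n$. Rejection then has probability tending to one.

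I expect the main obstacle to be the consistency rate in (b). It requires matching the kernel estimation error, inflated by the factor $1/f_k\sim k^{d+1-\theta}$, exactly to the stated growth condition on $p$, and handling the estimated $\widehat f_k$ inside both $\widehat{\bm x}_i$ and $\widehat{\bm\alpha}_{f\pm}$. The weight issue in (a) is secondary.
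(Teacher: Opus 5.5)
Your plan is essentially the paper's proof. Part (a) is read off from Theorem~\ref{empirical_thm} once $\psi_j=0$ for $j>k$; your convex-set remark for restricting to lags $j>k$, the check $\widehat w_{n,j}\in\mathcal W$, and the anti-concentration step only spell out what the paper calls obvious. Part (b) rests on the same two facts as the paper: the rate from Proposition~\ref{esti_consistency} with Assumption~\ref{f_bound}, and $\widehat q^{1,\ast}_{1-\alpha}=o_\Pr(\sqrt n)$. The paper uses the rate explicitly for the critical value, asserting that $\widehat q^{1,\ast}_{1-\alpha}/\sqrt n$ is of the order of $\sum_{j>k}|\widehat\psi_j-\psi_j|_{\mathcal L^2,\widehat w_{n,j}}$, and takes $\widehat T_1$ bounded away from zero for granted. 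You instead use the rate for the lower bound on $\widehat T_1$ and control the critical value through the conditional Gaussian norm; this is the same computation.

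Three adjustments for the rate matching you anticipate:
\begin{enumerate}
\item Work in $\mathcal L^2$ rather than the sup norm. Orthonormality gives $|\widehat\psi_j-\psi_j|_{\mathcal L^2}=|\widehat{\bm\Psi}_j-\bm\Psi_j|_F$, and $\widehat w_{n,j}\ge\kappa_2$ controls the weight; the sup norm costs an extra factor $p^{\theta}$.
\item Your justification that $\sup\widehat w_{n,j}$ is bounded does not hold: continuity plus a normalized integral does not bound the supremum. Replace it by Jensen's inequality $\int_0^1\!\int_0^1\widehat w_{n,j}^2\,\dee u\,\dee v\le 1$, which with Cauchy--Schwarz gives $|\psi_{j_0}|_{\mathcal L^2,\widehat w_{n,j_0}}\ge\int_0^1\!\int_0^1|\psi_{j_0}|\,\dee u\,\dee v$.
\item With the paper's per-lag bound $Cp^{d+3/2-\theta}\sqrt{bp^3\log(n)/n}$, the sum over up to $b$ lags is $o_\Pr(1)$ only if $p^{d+3-\theta}=o\big(\sqrt{n/(b^3\log n)}\big)$. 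That is the exponent $1/(2(d+3-\theta))$, which is what the paper's own remark $p\ll n^{1/(2d+5)}$ uses. So the exponent $1/(d+3-\theta)$ printed in the statement is missing a factor $2$, and the printed condition does not by itself make your estimation error $o_\Pr(1)$.
\end{enumerate}
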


\begin{proposition}
\label{consistency_test_farma}
Suppose Assumptions \ref{ass_conti_u}--\ref{f_bound} hold. On the event $\mathcal{A}_n$ defined in \eqref{det26} we have \\
(a) If the null hypothesis $H_0^{\rm{FARMA}}$ holds, then the test \eqref{testfarmaweight} satisfies
$$
\lim_{n\to \infty} \Pr(\sqrt{n}  \widehat{T}_2>\widehat{q}_{1-\alpha}^{2,\ast} \mid \Upsilon_{b+1}^{n} )  =  \alpha.
$$
(b) If the alternative $H_a^{\rm{FARMA}}$ holds and $p \ll \big(\frac{n}{b^3\log(n)}\big)^{\frac{1}{d+3-\theta}}$, then the test \eqref{testfarmaweight} satisfies
$$
\lim_{n\to \infty} \Pr(\sqrt{n} \widehat{T}_2>\widehat{q}_{1-\alpha}^{2,\ast} \mid \Upsilon_{b+1}^{n} )  =  1.
$$
\end{proposition}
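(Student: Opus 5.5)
Under $H_0^{\rm FARMA}$ the process is causal and invertible, so its AR($\infty$) kernels $\psi_j$ coincide with those of the approximation in \cref{approx_ar}. The plan is to decompose
\[
\sqrt n\,\widehat T_2=\sqrt n\sum_{j=1}^b|\widehat\psi_j-\psi_j|_{\mathcal L^2,\widehat w_{n,j}}+R_n ,
\]
where, by the triangle inequality and the fact that every admissible weight is bounded below by $\kappa_2$,
\[
|R_n|\le \sqrt n\,\kappa_2^{-1}\sum_{j=1}^b|\widetilde\psi_j-\psi_j|_{\mathcal L^2}.
\]
The leading term equals $b\,\Lambda_n^{\widehat w}$, which is a multiple of the quantity approximated in \cref{empirical_thm}, and $M_n^2$ is built from the same bootstrap process $\widehat{\bm B}_n$ with the same summation and normalisation.

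The first step handles the data-adaptive weights. Since the bound \eqref{eq_final} is uniform over $\bm w_n\in\mathcal W$, we may plug in $\widehat{\bm w}_n$. We only need $\widehat{\bm w}_n\in\mathcal W$ with probability tending to one; this follows from \eqref{det17} because the bootstrap standard deviations are bounded below on $\mathcal A_n$, using $\lambda_{\min}(\bm\Xi_n)\ge\kappa_4$ together with the bound on $\Delta_n$. Conditioning on $\Upsilon_{b+1}^n$ freezes $\widehat{\bm w}_n$, so the conditional bootstrap law and the law of the leading term differ in Kolmogorov distance by the right-hand side of \eqref{eq_final}, which is $o(1)$.

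The second step, which I expect to be the main obstacle, is showing $R_n=o_\Pr(1)$, or more precisely that $R_n$ is small relative to the anti-concentration scale of $|\widehat{\bm B}_n|$. The natural route is the following. The VARMA fit is a finite-parameter $\sqrt n$-consistent estimator with $O_\Pr(p^2/\sqrt n)$ error in Frobenius norm. The map from VARMA parameters to VAR($\infty$) coefficients is smooth, and under causality and invertibility its coefficients decay geometrically, so $\sum_j\|\widetilde{\bm\Psi}_j-\bm\Psi_j\|$ is controlled. The difficulty is that this alone only gives an $O_\Pr(1)$ contribution after the $\sqrt n$ scaling. To close the gap I would exploit that the restricted estimator is a smooth projection of the same data: $\sqrt n(\widetilde{\bm\Psi}-\bm\Psi)$ is asymptotically a linear functional of the same score $\bm Z_n^b$. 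I would therefore either absorb it into the bootstrap process or argue, as the paper presumably does, that the parametric restricted estimator converges fast enough that $R_n$ is negligible. Truncation and approximation errors from \eqref{approx_farnew} and \eqref{var}, and the effect of replacing $f_k$ by $\widehat f_k$, are controlled using \cref{f_bound}. Once $R_n$ is negligible, anti-concentration of the Gaussian-type limit of $|\widehat{\bm B}_n|$ (which follows from the Gaussian approximation behind \cref{empirical_thm}) yields part (a).

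For part (b), under a fixed alternative there is some $j_0\le b$ with $|\psi_{j_0}-\psi^{\circ}_{j_0}|_{\mathcal L^2}\ge c>0$, where $\psi^\circ$ are the pseudo-true FARMA kernels that $\widetilde\psi$ converges to. The consistency rate $\sup_j|\widehat\psi_j-\psi_j|_{\mathcal L^2}=O_\Pr\big(\sqrt{b p^{2(d+3-\theta)}\log n/n}\big)$ holds because $\|\bm\alpha_{f-}\|$ inflates by $p^{d+1-\theta}$. Under the stated growth condition on $p$ this rate is $o(1)$, so $\widehat T_2\ge c\kappa_2'/b>0$ eventually, giving $\sqrt n\,\widehat T_2\to\infty$. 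On $\mathcal A_n$ the bootstrap quantile satisfies $\widehat q^{2,*}_{1-\alpha}=O_\Pr(\text{poly}(b,p))=o(\sqrt n/b)$ by a conditional second-moment bound on $\widehat{\bm B}_n$. Hence the rejection probability tends to one.
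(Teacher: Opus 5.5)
Your overall architecture is the paper's. You split $\sqrt n\,\widehat T_2$ into $\sqrt n\sum_j|\widehat\psi_j-\psi_j|_{\mathcal L^2,\widehat w_{n,j}}$ plus a remainder controlled by $\widetilde\psi_j-\psi_j$, apply \cref{empirical_thm} to the first piece, and for (b) combine a pseudo-true limit of $\widetilde\psi_j$ with $\widehat q^{2,\ast}_{1-\alpha}/\sqrt n\to0$. The gap is exactly the step you single out, and the proposal leaves it open. As you note, a $\sqrt n$-consistent FARMA fit only gives $R_n=\bigO_\Pr(1)$. There is no reason to expect $R_n=o_\Pr(1)$, since $\sqrt n(\widetilde{\bm\Psi}_j-\bm\Psi_j)$ has a nondegenerate limit. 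Neither of your two ways out proves the statement as posed:
\begin{itemize}
\item Absorbing the restricted fluctuation into the bootstrap changes the procedure. $\widehat{\bm B}_n$ in \eqref{det10} is built from the unrestricted scores only and mimics $\sqrt n(\widehat{\bm\psi}-\bm\psi)$, not $\sqrt n(\widehat{\bm\psi}-\widetilde{\bm\psi})$. You would therefore be validating a different critical value than $\widehat q^{2,\ast}_{1-\alpha}$.
\item ``As the paper presumably does'' is a deferral, not an argument.
\end{itemize}
Without a mechanism that makes $R_n$ negligible, part (a) does not follow from \cref{empirical_thm}.

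The paper's mechanism is a relative-rate comparison rather than $R_n=o_\Pr(1)$. Under $H_0^{\rm FARMA}$ it asserts $|\widetilde\psi_j-\psi_j|_{\mathcal L^2,\widehat w_{n,j}}=\bigO_\Pr(n^{-1/2})$, citing finitely many operators and geometrically decaying AR($\infty$) coefficients. This is of smaller order than the unrestricted rate \eqref{convergence_rate}, and $\widehat T_2$ adds up $b$ unrestricted terms. So after multiplying by $\sqrt n$, the leading term lives on a scale that grows with $p$ and $b$, while the remainder does not.

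To make this a proof you must verify the criterion you wrote down yourself: $R_n$ must be $o_\Pr$ of the spread, not merely the size, of $\sum_j|\widehat B_{n,j}|_{\mathcal L^2,\widehat w_{n,j}}$. When doing so, note that $\widetilde\psi_j=\bm\alpha_{f+}^\top\widetilde{\bm\Psi}_j\bm\alpha_{f-}$ carries the same $p^{d+1-\theta}$ inflation through $\bm\alpha_{f-}$ as $\widehat\psi_j$. So neither the $n^{-1/2}$ rate nor the comparison is automatic.

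A smaller point on (b). The rate $\sqrt{bp^{2(d+3-\theta)}\log n/n}$, and the bound $\widehat q^{2,\ast}_{1-\alpha}/\sqrt n=\bigO_\Pr\big(\sqrt{b^3p^{2(d+3-\theta)}\log n/n}\big)$ that you need after summing $b$ lags, vanish only when $p\ll (n/(b^3\log n))^{1/(2(d+3-\theta))}$. The displayed condition with exponent $1/(d+3-\theta)$ does not imply this; the paper's own example $p\ll n^{1/(2d+5)}$ reflects the missing factor $1/2$. Your assertion that the stated condition makes the rate $o(1)$ is therefore false as written, and you should make the polynomial in your quantile bound explicit.
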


\begin{proposition}
\label{consistency_test_sep}
Suppose Assumptions \ref{ass_conti_u}--\ref{f_bound} hold. On the event $\mathcal{A}_n$ defined in \eqref{det26} we have \\
(a) If the null hypothesis $H_0^{{\rm sep},j}$ holds, then the test \eqref{testsepweight} satisfies
$$
\lim_{n\to \infty} \Pr(\sqrt{n}  \widehat{T}_{3,j}>\widehat{q}_{1-\alpha}^{3,\ast} \mid \Upsilon_{b+1}^{n} )  =  \alpha.
$$
(b) If the alternative $H_a^{{\rm sep},j}$ holds and $p \ll \big(\frac{n}{b^3\log(n)}\big)^{\frac{1}{d+3-\theta}}$, then the test \eqref{testsepweight} satisfies
$$
\lim_{n\to \infty} \Pr(\sqrt{n} \widehat{T}_{3,j}>\widehat{q}_{1-\alpha}^{3,\ast} \mid \Upsilon_{b+1}^{n} )  =  1.
$$
\end{proposition}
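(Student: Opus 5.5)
The plan is to reduce both parts of Proposition \ref{consistency_test_sep} to \cref{empirical_thm} by linearizing the separable projection. Write $\mathcal{S}(\psi)(u,v):=\int_0^1\psi(u,v')\dee v'\int_0^1\psi(u',v)\dee u'\big/\int\!\!\int\psi$. Then $\widehat\psi_j^{\rm sep}=\mathcal{S}(\widehat\psi_j)$, and under $H_0^{{\rm sep},j}$ we have $\psi_j=\mathcal{S}(\psi_j)$, because $C_jg_j(u)\cdot C_jh_j(v)/C_j=\psi_j$. Hence $\widehat\psi_j^{\rm sep}-\widehat\psi_j=\mathcal{S}(\widehat\psi_j)-\mathcal{S}(\psi_j)-(\widehat\psi_j-\psi_j)$. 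A first-order expansion of $\mathcal{S}$ at $\psi_j$ (valid since $C_j\neq0$) gives $\sqrt n(\widehat\psi_j^{\rm sep}-\widehat\psi_j)=-(I-D\mathcal{S}_{\psi_j})[\sqrt n(\widehat\psi_j-\psi_j)]+R_n$. The remainder $R_n$ is quadratic in $\widehat\psi_j-\psi_j$ and divided by $\widehat C_j$.

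First I need a rate. From \eqref{det1}, $|\widehat\psi_j-\psi_j|_{\mathcal{L}^2}$ is bounded by $\Vert\bm{\alpha}_{f-}\Vert\,\Vert\bm{\alpha}_{f+}\Vert$ times $\Vert(\bm X^\top\bm X/n)^{-1}\Vert$ times the norm of $\bm Z_n^b/\sqrt n$. I will use Assumption \ref{f_bound} to bound $1/f_k\lesssim k^{d+1-\theta}$ and Assumption \ref{ass_design_matrix} for the inverse. Together with the moment bound on $\bm Z_n^b$, this gives $|\widehat\psi_j-\psi_j|_{\mathcal{L}^2}=\bigO_\Pr\big(p^{d+3/2-\theta}\sqrt{bp\log n/n}\big)$, or similar. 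The added term comes from the deterministic FAR approximation bias of \cref{approx_ar}, and the replacement $\widehat f_k$ for $f_k$ is handled as in the supplement. Consequently $\widehat C_j\to C_j\neq 0$ in probability, and $\sqrt n|R_n|=o_\Pr(1)$ provided the rate is $o(n^{-1/4})$.

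For (a), the key step is to identify the bootstrap target. The test compares $\sqrt n\widehat T_{3,j}$ with the quantile of $|\widehat B_{n,j}|_{\mathcal{L}^2,\widehat w_{n,j}}$, whereas the statistic equals $|(I-D\mathcal{S})\sqrt n(\widehat\psi_j-\psi_j)|_{\mathcal{L}^2,\widehat w_{n,j}}+o_\Pr(1)$. This step is the main obstacle. I would handle it by working with the linear map $L=I-D\mathcal{S}_{\psi_j}$ (or its plug-in $\widehat L$) and reapplying the Gaussian approximation and bootstrap comparison behind \cref{empirical_thm} to the linearly transformed process. The transformed process is still of the form $\bm A_f(\bm X^\top\bm X/n)^{-1}\widetilde{\bm I}{\rm diag}(\bm Z)\widetilde{\bm E}\bm\alpha_{f+}$ composed with the bounded operator $L$, so the same error bound \eqref{eq_final} holds up to a constant. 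If the paper's $M_{n,j}$ is intended literally, one instead reads the statement with the bootstrap process $\widehat L\widehat B_{n,j}$. Next, the data-driven weights $\widehat w_{n,j}$ enter only through $\mathcal{W}$. Because \eqref{eq_final} is uniform over $\bm w_n\in\mathcal{W}$, I can condition on them: the lower bound $\kappa_2$ follows from the power $1/3$ normalisation and consistency of the bootstrap standard deviations. Finally, the conditional quantile is continuous, via anti-concentration of the Gaussian functional norm, which the supplement's Gaussian approximation already uses. On $\mathcal{A}_n$ this yields rejection probability $\alpha+o(1)$.

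For (b), under the alternative $\psi_j\neq\mathcal{S}(\psi_j)$, so $\delta:=|\psi_j-\mathcal{S}(\psi_j)|_{\mathcal{L}^2,w}>0$ is fixed. The same consistency and continuity of $\mathcal{S}$ (when $C_j\neq0$; if $\widehat C_j\to0$ I would argue separately that the statistic diverges) give $\widehat T_{3,j}\to\delta$ in probability. Hence $\sqrt n\widehat T_{3,j}\asymp\sqrt n$. The growth condition $p\ll(n/(b^3\log n))^{1/(d+3-\theta)}$ is exactly what makes the estimation error $o(1)$. Meanwhile the bootstrap quantile is $\bigO_\Pr\big(p^{d+2-\theta}\sqrt{b\log n}\big)$, by a Gaussian maximal or second-moment bound on $|\widehat B_{n,j}|$ given the data, and this is $o(\sqrt n)$ under the same condition. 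Therefore the rejection probability tends to one.
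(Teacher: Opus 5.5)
As a proof of part (a) as stated, your proposal has a gap, and you flag it yourself. You calibrate $\sqrt{n}\,\widehat{T}_{3,j}$ with the conditional law of $|\widehat{L}\widehat{B}_{n,j}|_{\mathcal{L}^2,\widehat{w}_{n,j}}$. The statement, however, uses $\widehat{q}^{3,j,\ast}_{1-\alpha}$, the quantile of $M_{n,j}=|\widehat{B}_{n,j}|_{\mathcal{L}^2,\widehat{w}_{n,j}}$. So what you establish is the asymptotic size of a different test.

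The paper handles this step differently. It bounds $\widehat{C}_j-C_j$, $\widehat{g}_j-g_j$ and $\widehat{h}_j-h_j$ through $\Vert\widehat{\bm{\Phi}}_j-\bm{\Phi}_j\Vert\le C\sqrt{bp^2\log n/n}$ and compares with \eqref{convergence_rate}. From this it concludes $|\widehat{\psi}_j^{\rm sep}-\psi_j|_{\mathcal{L}^2,\widehat{w}_{n,j}}=o_{\Pr}(|\widehat{\psi}_j-\psi_j|_{\mathcal{L}^2,\widehat{w}_{n,j}})$. Then $\sqrt{n}\,\widehat{T}_{3,j}$ behaves like $\sqrt{n}|\widehat{\psi}_j-\psi_j|_{\mathcal{L}^2,\widehat{w}_{n,j}}$, and Theorem~\ref{empirical_thm} applies to the unmodified $M_{n,j}$. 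In your notation, this is the claim that the first-order term $D\mathcal{S}_{\psi_j}[\widehat{\psi}_j-\psi_j]$ is negligible; your argument neither proves nor uses it.

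Your part (b) follows the paper's argument: a positive limit of $\widehat{T}_{3,j}$ against $\widehat{q}^{3,j,\ast}_{1-\alpha}/\sqrt{n}\to0$. One correction there: $p^{d+2-\theta}\sqrt{b\log n}=o(\sqrt{n})$ follows only if the growth condition is read as $p^{2(d+3-\theta)}\ll n/(b^3\log n)$. That is the reading behind the paper's example $p\ll n^{1/(2d+5)}$; it does not follow from the exponent as printed.

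Do not repair (a) by importing the paper's claim, because your own linearization shows it fails. Take $\alpha_1\equiv1$ and let $\bm{e}_1$ be the first unit vector. Then $\widehat{C}_j=\widehat{\Psi}_{j,11}$, and $\widehat{\psi}_j^{\rm sep}$ has coefficient matrix $\widehat{\bm{\Psi}}_j\bm{e}_1\bm{e}_1^\top\widehat{\bm{\Psi}}_j/\widehat{\Psi}_{j,11}$. Hence $D\mathcal{S}_{\psi_j}[\bm{\Delta}]$ contains the term $\bm{\gamma}\,\bm{e}_1^\top\bm{\Delta}$, where $\bm{\Delta}=\widehat{\bm{\Psi}}_j-\bm{\Psi}_j$ and $\bm{\gamma}$ is the coefficient vector of $g_j$, so $|\bm{\gamma}|\ge\gamma_1=1$.

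The first row of $\bm{\Delta}$ has entries $f_1(\widehat{\Phi}_{j,1l}-\Phi_{j,1l})/f_l$. It is amplified by $1/f_l$ exactly like the dominant part of $|\bm{\Delta}|_F$. Since $\sum_k f_k^2<\infty$, this row carries a non-vanishing fraction (about $f_1^2/\sum_k f_k^2$) of $|\bm{\Delta}|_F^2$, as soon as the entries of $\sqrt{n}(\widehat{\bm{\Phi}}_j-\bm{\Phi}_j)$ have variances of order one.

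The paper's bound for III involves $\widehat{h}_j-h_j$, hence this row. It is copied from the bound for II and loses the factor $\max_{l\le p}f_l^{-1}\asymp p^{d+1-\theta}$. Moreover, comparing an upper bound for one error with an upper bound for the other cannot yield an $o_{\Pr}$ relation.

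Concretely, take constant weights and $\psi_j\equiv C_j$. Then $I-D\mathcal{S}_{\psi_j}$ simply deletes the first row and column of $\bm{\Delta}$. So the statistic is asymptotically stochastically smaller than $\sqrt{n}|\widehat{\psi}_j-\psi_j|$, whose law $M_{n,j}$ mimics, and the literal test is conservative rather than of size $\alpha$.

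Your version with $\widehat{L}\widehat{B}_{n,j}$ is therefore the appropriate corrected form of (a). Present it as such, and state explicitly the extra rate condition you need for $\sqrt{n}R_n=o_{\Pr}(1)$, since it is not among the hypotheses.
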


The additional constraint on $p$ under the alternatives is mild. For example, if the functional time series has rapidly decaying memory ($\tau\to \infty$) based on the normalized Legendre polynomials ($\theta=1/2$) and the FAR order $b\asymp \log(n)$, then the truncation number should be chosen as $p\ll n^{\frac{1}{2d+5}}$.

\section{Simulation studies}
\label{sec5}
\def\theequation{5.\arabic{equation}}	
\setcounter{equation}{0}
Here, we employ the multiplier bootstrap procedure in Algorithm \ref{boot} to evaluate the proposed specification tests for several functional time series structures.

\subsection{Type I errors on model specification tests}\label{simu_size}
First,  we generate functional time series via a general basis expansion framework
$Y_i(u)=\bm{\alpha}_\ast^\top(u)\bm{r}_i$, where $\bm{\alpha}_\ast(u)=(\alpha_1(u),\alpha_2(u),\ldots)^\top$ and $\bm{r}_i=(r_{i,1},r_{i,2},\ldots)^\top$. Different models for multivariate random coefficients are specified below. Note that the normalized Legendre polynomial basis is used in Cases (1) and (5), whereas the Fourier basis is employed in Cases (2)–(4).
\begin{enumerate}[(1)]
    \item \label{case1} {\it Stationary AR(1) model.} Let $\bm{r}_i=(r_{i,1},...r_{i,10})^\top$, consider $$\bm{r}_i=
    \bm{A}_1\bm{r}_{i-1}+\bm{\epsilon}_{i},$$ where $\bm{A}_1$ is a ten-dimensional matrix with $0.3$ at its diagonal and $0.1$ at its off-diagonals. Furthermore, let $\bm{e}_i=(e_{i,1},e_{i,2},\ldots,e_{i,10})^\top$ i.i.d. follow a multivariate normal distribution $\mathcal{N}(0,\bm{\Sigma}_1)$, where the diagonal elements of the matrix  $\bm{\Sigma}_1$ are given by  $1$   and the elements on the off-diagonal are $ 0.4 $. The innovation process is generated by $\epsilon_{i1}=e_{i1}, \epsilon_{i2}=0.8e_{i2}, \epsilon_{i3}=-0.5e_{i3}, \epsilon_{ik}=k^{-2}e_{ik}$ for $k\ge 4,~i=1,...,n$.
\item \label{case2} {\it Stationary AR(2) model.}  Let $\bm{r}_i=(r_{i,1},r_{i,2})^\top$, consider $$\bm{r}_i=\bm{B}_1\bm{r}_{i-1}+\bm{B}_2\bm{r}_{i-2}+\bm{\epsilon}_i,
$$ where 
$$
\bm{B}_1=\begin{pmatrix}
	0.5 & 0.2\\
	-0.2 & -0.5
	\end{pmatrix} ~~\text{ and } ~~\bm{B}_2=\begin{pmatrix}
	-0.3 & 0.7\\
	-0.1 & 0.3
	\end{pmatrix}.
    $$
Further, we assume that the errors $\bm{e}_i=(e_{i1},e_{i2})^\top$ are  i.i.d. following  a centered multivariate $t$-distribution with  $8$  degrees of freedom and the scale parameter 
$\bm{\Sigma}_2=\begin{pmatrix}
1 & 0.4\\
0.4 & 1\end{pmatrix}$. The innovation process is generated by $\epsilon_{i1}=e_{i1},~\epsilon_{i2}=0.5e_{i2},~i=1,\ldots,n$.

    \item \label{case3} {\it Stationary ARMA(1,1) model.}  Let $\bm{r}_i=(r_{i,1},...r_{i,10})^\top$, consider 
    $$
    \bm{r}_i=
    \bm{B}\bm{r}_{i-1}+\bm{\epsilon}_{i}+
    \bm{\Theta}\bm{\epsilon}_{i-1},
    $$
    where $\bm{B}$ is a ten-dimensional diagonal matrix with $0.4$ on its diagonal, $\bm{\Theta}$ is a matrix with $0.4$ on the diagonal and $0.2$ on the off-diagonal. The innovation process is the same as in Case (1). 
\item \label{case4} {\it Stationary ARMA(2,1) model.} Let $\bm{r}_i=(r_{i,1},r_{i,2})^\top$, consider $$\bm{r}_i=\bm{B}_1\bm{r}_{i-1}+\bm{B}_2\bm{r}_{i-2}+\bm{\epsilon}_i+
\bm{\Theta}_1\bm{\epsilon}_{i-1},$$ where 
$$
\bm{B}_1=\begin{pmatrix}
	0.3 & 0\\
	0 & 0.3
	\end{pmatrix}~, ~~ 
    \bm{B}_2=\begin{pmatrix}
	0.2 & 0.1\\
	-0.3 & 0.5
	\end{pmatrix} ~~\text{  and  } ~~
    \bm{\Theta}_1=
    \begin{pmatrix}
	-0.3 & 0.4\\
	0.2 & -0.5
	\end{pmatrix}.
    $$
    The innovation process is of the same form as in Case (2)  except that the i.i.d. innovations  $\bm{e}_i=(e_{i,1},e_{i,2})^\top$  have  a centered multivariate $t$-distribution with $6$ degrees of freedom. 
    \item {\it Stationary FAR(1) model with separable coefficient.}  Consider $$Y_i(u)=\int_0^1 \psi(u,v)Y_{i-1}(v)\dee v+\epsilon_i(u),$$
where $\psi(u,v)=
0.3\sum_{j,k=1}^{5} f_jf_k\alpha_j(u)
\alpha_k(v)$ with $f_1=1, f_2=0.7, f_3=-0.5$ and $f_j=j^{-1}$ for $j\ge 4$. The error process is given by $\epsilon_i(u)=\bm{\alpha}^\top(u)\bm{e}_i,$ where the random variables $\bm{e}_i=(e_{i,1},\ldots,e_{i,5})^\top$ are independent  centered multivariate normal distributed with a covariance matrix $\bm{\Sigma}_3$ having $1$ on the diagonal and $0.4$ on the off-diagonal.
\end{enumerate}

\cref{tab:type1} reports the simulated Type I error
of the tests \eqref{testfarnew}, \eqref{testfarmanew}
 and \eqref{testsepnew} under the respective null hypotheses \eqref{h0far}, \eqref{h0farma} and \eqref{h0sep}, where the observed data are sampled from the aforementioned model settings for sample sizes $n=400$ and $n=800$. All results are based on $1000$ simulation runs and $B=1000$ bootstrap replications. We consider both constant weights and data-adaptive standard-deviation weights at nominal levels $0.05$ and $0.1$. From \cref{tab:type1}, the simulated Type I errors are generally close to their nominal levels, supporting the finite-sample validity of the proposed procedures.

\begin{table*}[htbp!]
	\centering
	\caption{\it Simulated Type I error of the tests \eqref{testfarnew}, \eqref{testfarmanew}
 and \eqref{testsepnew} under the respective null hypotheses \eqref{h0far}, \eqref{h0farma} and \eqref{h0sep}.} 
	\label{tab:type1}
	\vspace*{0.1in}
	\begin{tabular}{|c|c|c|c|c|}
		\hline
	 Model & Weight & $\alpha$ & $n=400$ & $n=800$\\
        \hline
        \multirow{4}{*}{FAR(1)} & \multirow{2}{*}{1} & 0.05 & 0.066 & 0.059 \\
        && 0.1 & 0.113 & 0.114 \\
        \cline{2-5} & \multirow{2}{*}{Std} & 0.05 & 0.070 & 0.064\\
        & & 0.1 & 0.126 & 0.118 \\
        \hline
        \multirow{4}{*}{FAR(2)} & 
        \multirow{2}{*}{1} & 
        0.05 & 0.062 & 0.057 \\
        & & 0.1 & 0.126 & 0.110 \\
        \cline{2-5}
        & \multirow{2}{*}{Std} & 0.05 & 0.074 & 0.057\\
        & & 0.1 & 0.123 & 0.118\\
        \hline
        \multirow{4}{*}{FARMA(1,1)}
        & \multirow{2}{*}{1} & 
        0.05 & 0.058 & 0.054 \\ 
        & & 0.1 & 0.101 & 0.097\\
        \cline{2-5}
        & \multirow{2}{*}{Std} & 0.05 & 0.060 & 0.053\\
        & & 0.1 &  0.090 & 0.088\\
        \hline
        \multirow{4}{*}{FARMA(2,1)}
        & \multirow{2}{*}{1} & 
        0.05 & 0.040 & 0.042 \\ 
        & & 0.1 & 0.085 & 0.102\\
        \cline{2-5}
        & \multirow{2}{*}{Std}
        & 0.05 & 0.043 & 0.044 \\
        & & 0.1 & 0.090  & 0.097\\
        \hline
        \multirow{4}{*}{Separable FAR(1)}
        & \multirow{2}{*}{1} &
        0.05 & 0.057 & 0.064 \\ 
        & & 0.1 & 0.092 & 0.096\\
        \cline{2-5}
        & \multirow{2}{*}{Std}
        & 0.05 & 0.072 & 0.058 \\
        & & 0.1 & 0.103  & 0.106\\
        \hline
	\end{tabular}
\end{table*}

\subsection{Empirical power of model specification tests}
Here, we apply the proposed method to conduct specification tests on typical functional time series models and evaluate their  powers in comparison with existing approaches.

\subsubsection{Goodness-of-fit tests for FAR(2) models}
Now, we investigate the power of the  test \eqref{testfarnew} for FAR(2) models. Although the methodologies proposed in \cite{zhang2016white} and \cite{kokoszka2017inference} can in principle be extended to the goodness-of-fit test for FAR($p$) models with $p>1$, their theoretical justification remains unclear, with respect to the explicit formulation of the effect on estimating the functional AR operators. Instead, we compare our approach with the data-adaptive weight to a recent goodness-of-fit test for autoregressive Hilbertian (ARH) models proposed by \cite{alvarez2025goodness}. The latter
test is formulated in terms of a Cram\'{e}r-von-Mises norm with calibration achieved via
a wild bootstrap resampling procedure, and we denote it  by CvM-test in the following discussion. We consider the null hypothesis
\begin{equation}
\label{h0far2}H_{0}^{\rm FAR}: \{ Y_i \}_{i=1}^n \text{~is~a~FAR(2)~process}
\end{equation}
We suppose that the functional time series is driven by the following model
\begin{equation}\label{farma_far2}
Y_i(u)=\int_0^1 \phi_1(u,v)Y_{i-1}(v)\dee v+\int_0^1
\phi_2(u,v)Y_{i-2}(v)\dee v
+ \varepsilon_i(u)
-\int_0^1 \theta(u,v)\varepsilon_{i-1}(v)\dee v,
\end{equation}
where $\phi_1$, $\phi_2$ and $\theta$ are Gaussian kernels ($c\exp\{-(u^2+v^2)/2\}$ for some constant  $c$) with $|\phi_1(u,v)|_{\mathcal{L}^2} =0.5$, $|\phi_2(u,v)|_{\mathcal{L}^2}=0.3$ and $|\theta(u,v)|_{\mathcal{L}^2}=\delta$. We allow $\delta$ to vary over the interval $[0,0.6]$ in increments of $0.1$ and the innovations  $\{\varepsilon_i\}_{i=1}^n$  are  i.i.d. standard Brownian motion on the interval $[0,1]$. Note  that the null hypothesis of  a FAR($2$) process  corresponds to the choice  $\delta=0$. When $\delta>0$, the statistical power of the test is then evaluated based on the 
data-generating process FARMA($2,1$).

\begin{figure}[htbp!]
	\centering
	\vspace{-0.2cm}
	{\includegraphics[width=7cm,height=6cm]{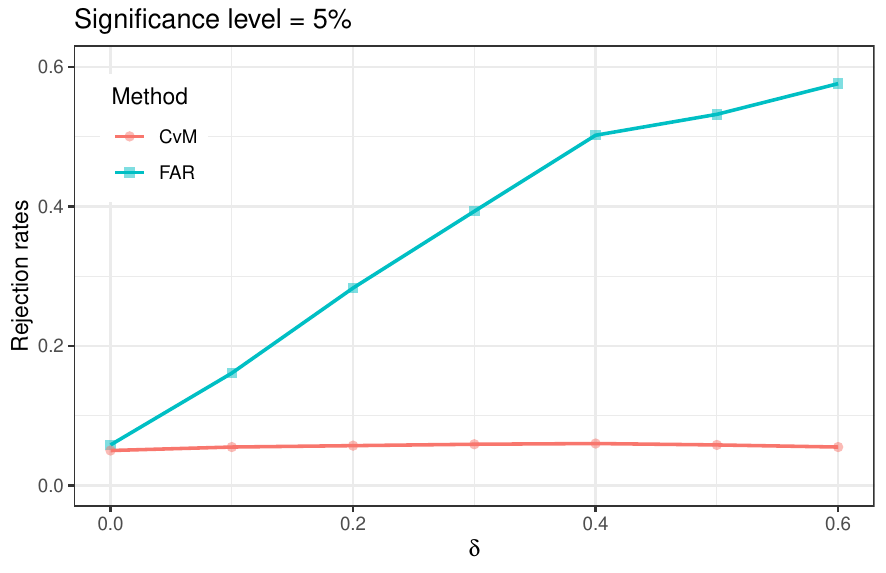}
	\includegraphics[width=7cm,height=6cm]{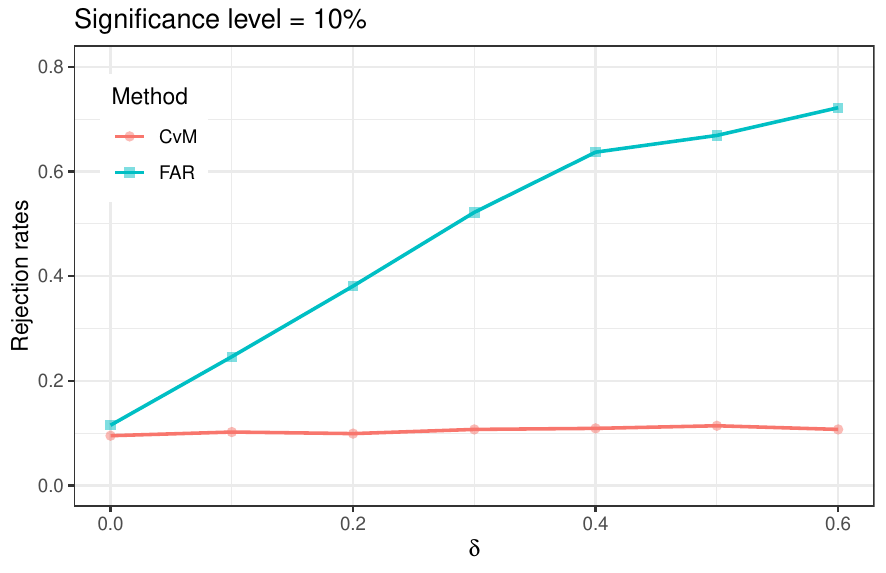}
	}
\vspace{-0.5cm}
\caption{\small \it 
Simulated rejection rates of different tests  for the hypothesis \eqref{h0far2} with nominal levels  $\alpha=5\%$ (Left) and $\alpha=10\%$ (Right). FAR: the test \eqref{testfarnew} proposed in this paper. CvM: the test  proposed by \cite{alvarez2025goodness}.}
\label{fig_power3}
\end{figure}

To implement our multiplier bootstrap and the wild bootstrap for the CvM test, we simulate functional time series with sample size $n=400$ in $1000$ Monte Carlo replicates and $B=1000$ bootstrap resamples. The simulated rejection rates as a function of $\delta$ are shown in \cref{fig_power3}. We observe that both our test and the CvM test achieve nearly nominal empirical sizes at $\alpha=5\%, 10\%$ when $\delta=0$. Furthermore, our test achieves steadily increasing rejection rates as $\delta$ grows. In contrast, the CvM test remains relatively stable as $\delta$ increases, indicating limited power under the considered FARMA($2,1$) alternative. This also demonstrates that the CvM test exhibits certain limitations against FARMA alternatives, even though its algorithm does not explicitly assume an ARH process under the alternative hypothesis.

\subsubsection{Model specification tests for FARMA models}
Next, we will apply our approach with the data-adaptive weights to perform model specification tests for FARMA processes. We consider the hypothesis 
\begin{equation}
\label{h0farma1}H_{0}^{\rm FARMA}: \{Y_i\}_{i=1}^n \text{~is~a~FARMA(1,1)~process}
\end{equation}
where the functional time series is given by
\begin{equation}\label{farma_farma}
Y_i(u)=\int_0^1 \phi(u,v)Y_{i-1}(v)\dee v+\varepsilon_i(u)+\int_0^1 \theta_1(u,v)\varepsilon_{i-1}(v)\dee v
-\int_0^1 \theta_2(u,v)\varepsilon_{i-2}(v)\dee v.
\end{equation}
Here the coefficient kernels $\phi$, $\theta_1$ and $\theta_2$ are Gaussian kernels with $|\phi(u,v)|_{\mathcal{L}^2} =0.3$, $|\theta_1(u,v)|_{\mathcal{L}^2}=0.5$ and $|\theta_2(u,v)|_{\mathcal{L}^2} =\delta$. The parameter $\delta$ varies over the interval $[0,0.5]$ with increment $0.1$ and the innovations $\{\varepsilon_i(u)\}$ are i.i.d. standard Brownian motions on $[0,1]$. Notice that the choice  $\delta=0$ corresponds to the null hypothesis of a FARMA($1,1$) process. 

We performed our multiplier bootstrap procedure in Algorithm \ref{alg:FARMA_test} with $B=1000$ bootstrap replications and sample size $n=400$ ($1000$ simulation runs). The empirical rejection rates as a function of $\delta$ are shown in \cref{fig_power4}. We notice that our test achieves nearly
nominal empirical sizes at $\alpha=5\%, 10\%$ when $\delta=0$ and exhibits a notably increasing power as $\delta$ grows.

\begin{figure}[htbp!]
	\centering
	\vspace{-0.2cm}
	\includegraphics[width=7cm,height=6cm]{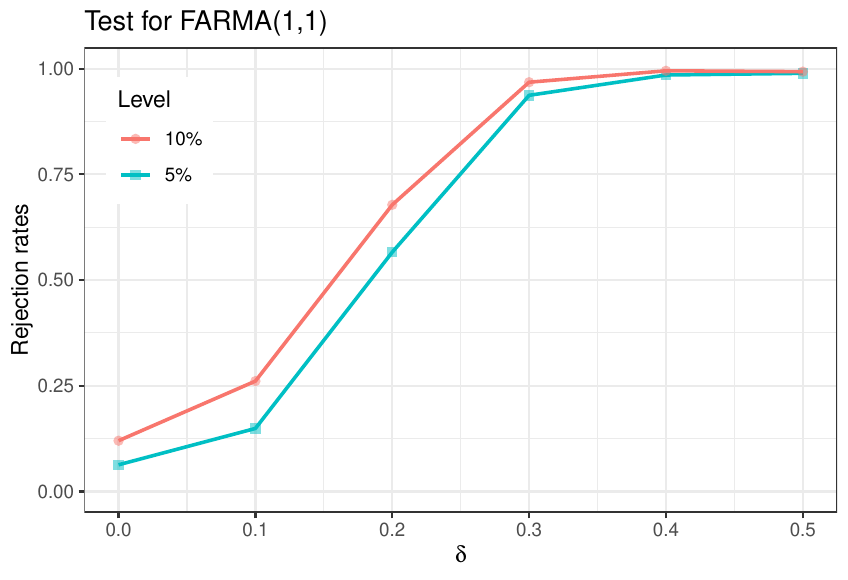}
  \vspace{-0.5cm}
\caption{\small \it Simulated rejection rates of the test \eqref{testfarmanew} for the hypothesis \eqref{h0farma1} with nominal levels $\alpha=5\%$ and $10\%$.}
\label{fig_power4}
\end{figure}  

\subsubsection{Test for separability}

Finally, we perform the separability test on the coefficient function 
\begin{equation}
    \label{h0sep1}H_{0}^{\rm sep}: \psi(u,v)=Cg(u)h(v)
    \end{equation} for some constant $C$ and $\int_0^1 g(u)\dee u=\int_0^1 h(v)\dee v =1$. We consider the FAR(1) model
\begin{equation}\label{eq_sepa}
Y_i(u)=\int_0^1 \psi(u,v)Y_{i-1}(v)\dee v+\varepsilon_i(u),
\end{equation}
where $\psi(u,v)=
0.3\left[\sum_{j,k=1}^{5} f_jf_k\alpha_j(u)
\alpha_k(v)-
\delta \sum_{j=1}^{5}
f_j\alpha_j^2(u)\right]$ with $f_1=1, f_2=0.7, f_3=-0.5$ and $f_j=j^{-1}$ for $j\ge 4$. The error process is the same as in Case (5) in \cref{simu_size}. The parameter $\delta$ is set to range between 0 and 0.6 with increments of 0.05, maintaining stationarity and a moderate level of dependence in the FAR(1) model.
Particularly when $\delta=0$, the coefficient function $\psi $ is separable and \eqref{eq_sepa} implies a dynamic factor structure. \cref{fig_power5} displays the rejection rates of the  multiplier bootstrap test \eqref{testsepnew} with the data-adaptive weight as a function of  $\delta$. It turns out that when $\delta=0$, the empirical sizes are close to their nominal levels $\alpha=0.05$ and $0.1$. On the other hand, rejection rates increase steadily as $\delta$ grows, indicating that our test effectively detects departures from separability.

\begin{figure}[htbp!]
	\centering
	\vspace{-0.2cm}
	\includegraphics[width=7cm,height=6cm]{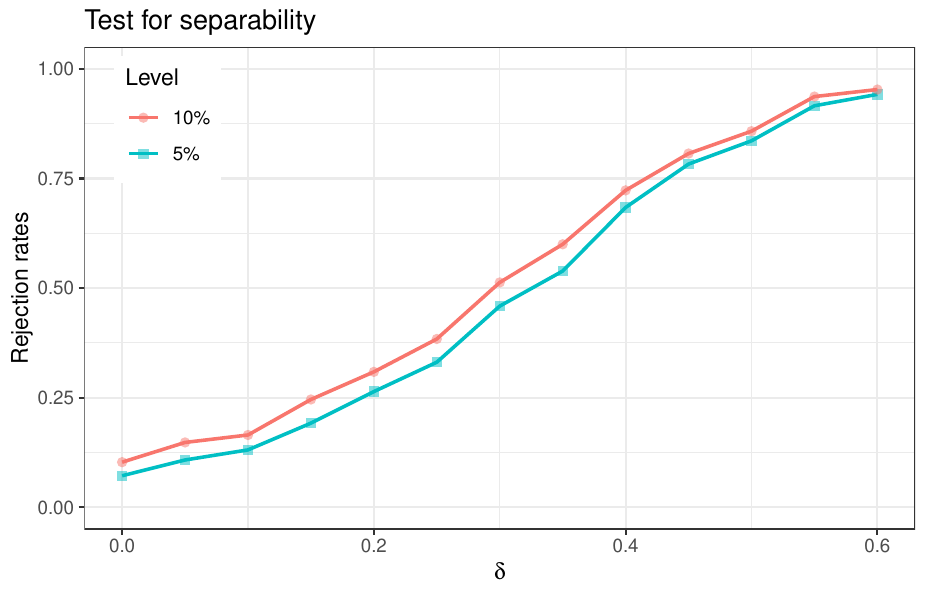}
  \vspace{-0.5cm}
\caption{\small  \it Simulated rejection rates of the test \eqref{testsepnew} for the hypothesis \eqref{h0sep1} with nominal levels $\alpha=5\%$ and $10\%$.}
\label{fig_power5}
\end{figure} 

\section{Empirical illustration}
\label{sec6}
\def\theequation{A.\arabic{equation}}	
\setcounter{equation}{0}

We consider the daily curves of electricity demands (MWh) in Spain from January 1, 2015 to December 31, 2017. These data can be obtained from  \href{https://www.esios.ree.es/en}{Red El\'{e}ctrica de Esp\~{a}na} system operator. Since the daily electricity demands on weekdays and weekends differ, in this paper we focus on the weekday curves (from Monday to Friday) with $n=782$ days. 

The original data are recorded by 10-minute intervals from 00:00--23:50 on each day, which consists of 144 observations. We consider the daily log-transformed real demand curves by smoothing and rescaling them to a continuous interval $[0,1]$. By applying the white noise test of \cite{kokoszka2017inference}, we reject the null hypothesis and conclude that the dataset exhibits temporal dependence. Meanwhile, the stationarity test of \cite{horvath14} fails to reject the null hypothesis at the 5\% significance level, which further supports the application of the proposed methodology to this dataset.

We first implement the  test \eqref{testfarnew}  for a FAR($k$)-model sequentially for $k=1,2,3,\ldots $. The bootstrap sample size is $B=10000$ and we use the normalized Legendre polynomials as basis functions. 
The $p$-values of the different tests are displayed in the left part of Table \ref{T1}.
It turns out that our methodology rejects FAR models with orders $k=1,...,4$ for both constant and data-driven weights. Furthermore, the test \eqref{testfarnew}  rejects the null hypothesis of FAR($5$) at the 10\% significance level under the data-driven weight, but fails to reject the null under the constant weight.  For $k=6$, the test \eqref{testfarnew}  fails to reject the null hypothesis of a FAR($6$) process under both constant and data-driven weights at the 10\% significance level. The corresponding AIC of the fitted model is $-5.474$. These results suggest that the FAR($6$) model offers an appropriate fit to this dataset.

\begin{table*}[htbp!]
	\centering
	\caption{\it P-values for potential FAR($k$) and FARMA($p^\ast,q^\ast$) models.}
	\label{T1}
	\vspace*{0.1in}
	\begin{tabular}{|c|c|c||c|c|c|}
    	\hline
 \multicolumn{3}{|c||}{ FAR test \eqref{testfarnew}} & \multicolumn{3}{c|}{FARMA test \eqref{testfarmanew}} \\
		\hline
	 & \multicolumn{2}{c||}{Weight} &  & \multicolumn{2}{c|}{Weight} \\
     \hline
     $k$ & 1 & {\rm Std} & $(p^\ast,q^\ast)$ & 1 & {\rm Std} \\
    \hline
    1 & 0.0000 &   0.0000 & $(3,1)$ & 	0.0338 & 0.0075 \\
    2  & 0.0054 & 0.0005 & $(3,2)$ & 0.0592 & 0.0124\\
    3 & 0.0034 & 0.0002 & $(4,1)$ & 0.0739 & 0.0346\\ 
    4 & 0.0091 & 0.0015 & $(4,2)$ & 0.1608 & 0.0954\\
    5 & 0.1340 & 0.0943 & $(5,1)$ & \textbf{0.4264}
    & \textbf{0.4102} \\
    6 & \textbf{0.1406} & \textbf{0.1264} & $(6,1)$ & \textbf{0.4599} & \textbf{0.4724}
    \\
    \hline
	\end{tabular}
\end{table*}

Next, we proceed with the proposed model specification tests for potential FARMA processes. The null hypothesis is rejected for FARMA(3,1), FARMA(3,2), and FARMA(4,1) models at the 10\% significance level, indicating that these models do not adequately describe this dataset. For the FARMA($4,2$) model, the $p$-values show rejection of the null under the data-adaptive weight but not under the constant weight at the 10\% level. On the other hand, we notice that FARMA(5,1) and FARMA(6,1) models both pass the specification test, but the FARMA(5,1) model yields the smallest AIC ($-5.504$) among all candidate models. Therefore, combining the specification tests for both the FAR and FARMA models with the AIC criterion, we favor the FARMA(5,1) model over competing candidate models. 

Compared with the existing goodness-of-fit tests reviewed in Section \ref{sec5}, both the KRS and $Z_n$ tests reject the null hypothesis of a FAR(1) process. We further apply the CvM tests to FAR($k$) models for $k=1,...,7$, all candidate FAR models are likewise rejected, with $p$-values equal to zero under $B=10000$ bootstrap replications. This suggests that a pure FAR specification may be inadequate. In conjunction with our proposed FARMA tests, this finding supports the inclusion of a moving-average component.

In conclusion, the selected FARMA(5,1) model indicates that the weekday electricity demand curve depends on demand patterns observed during the preceding working week, while the moving-average component captures the short-term persistence of unexpected demand fluctuations. The autoregressive dependence reflects the regular weekly operating cycle of electricity consumption, whereas the moving-average component may account for transient shocks such as weather conditions, holidays, or other atypical events. Although the data may retain some longer-term seasonal variation, the FARMA(5,1) model provides the most parsimonious representation of the temporal dependence among the weekday demand curves.

\begin{table*}[htbp!]
	\centering
	\caption{\it $P$-values of separability tests for the fitted FAR(6) model.}
	\label{T3}
	\vspace*{0.1in}
	\begin{tabular}{|c|c|c|c|c|c|c|}
		\hline
		& \multicolumn{6}{c|}{FAR lag} \\
		\hline
		Weight & 1 & 2 & 3 & 4 & 5 & 6\\ 
		\hline
		1 &0.0672 & 0.9196 & 0.9765 & 0.9820 & 0 & 0.0051  \\
		\hline
		{\rm Std} & 0.0187 & 0.8917 & 0.9692 & 0.9693 & 0 & 0 \\
		\hline
	\end{tabular}
\end{table*}

We further apply the proposed methodology to test the separability of the FAR coefficient kernels. According to Algorithm \ref{alg:residuals}, we select $b=6$ for the AR order. Table \ref{T3} reports the corresponding $p$-values of the fitted FAR(6) coefficient kernels under the data-adaptive and constant weight functions. For the first coefficient kernel, separability is rejected at the 10\% level under both weight functions. Separability is strongly rejected for the fifth and sixth coefficient kernels under both weighting schemes, whereas the tests provide no evidence against separability for the second, third and fourth coefficient kernels. Overall, departures from separability are concentrated at lags 1, 5, and 6, suggesting more complex dependence at the immediate and approximately weekly horizons.

\bigskip

{\bf Acknowledgements.} The work  of Holger Dette  has been partially supported
by the Deutsche Forschungsgemeinschaft (DFG):
TRR 391 {\it Spatio-temporal Statistics for the Transition of Energy and Transport} (520388526);
Research unit 5381 \textit{Mathematical Statistics in the Information Age} (460867398).

\clearpage

\appendix

\section*{\centering Supplemental Material for ``Model Specification Test for Stationary Functional Time Series''}

\bigskip

\section{Additional simulation studies}
\def\theequation{A.\arabic{equation}}	
\setcounter{equation}{0}

We  investigate the  power of our proposed test for the hypothesis \eqref{h0far}  and compare it with  two existing goodness-of-fit tests  proposed by \cite{zhang2016white} and \cite{kokoszka2017inference}. More specifically, we consider the null hypothesis
\begin{equation}
    \label{h0far1}
H_{01}^{\rm FAR}: \{ Y_i\}_{i=1}^n\text{~is~a~FAR(1)~process}
\end{equation}
and evaluate the power of our specification test \eqref{testfarnew} under a FAR(2) alternative. 
The functional time series are generated by $Y_i(u)=\sum_{k=1}^3 r_{i,k}\alpha_k(u)$, where the Fourier basis is used for $\{\alpha_k(u)\}$ and $\bm{r}_i=(r_{i,1},r_{i,2},r_{i,3})^\top$ follows a VAR process given by
\begin{equation}\label{eq_far}
\bm{r}_i=\bm{B}_1\bm{r}_{i-1}+\delta \bm{B}_2\bm{r}_{i-2}+\bm{\epsilon}_i,
\end{equation}
(where $\delta=0$ corresponds to the null hypothesis) and
$$
\bm{B}_1
=\begin{pmatrix}
0.10 & 0.05 & -0.05 \\
-0.05 & 0.10 & 0.00\\
0.00 & -0.10 & 0.05
\end{pmatrix}~~\text{  and } 
~~\bm{B}_2=
\begin{pmatrix}
    0.20 & -0.15 & 0.10\\
    0.05 & 0.25 & -0.20\\
    -0.10 & 0.0 & 0.30\\
\end{pmatrix}.
$$
The innovation is constructed as follows. Let $\{\bm{e}_i\}_{i=1}^n$
with $\bm{e}_i=(e_{i1},e_{i2},e_{i3})^\top$ be a 
BEKK$(1,0)$ model defined by 
\begin{align*}
\bm{e}_i &= \bm{H}_i^{1/2} \bm{\eta}_i,\\
\bm{H}_i &= 0.1 \bm{I}_3 + \bm{C}\bm{e}_{i-1}\bm{e}_{i-1}^\top\bm{C}^\top,
\end{align*}
where $\bm{I}_3$ is a $3\times 3$ identity matrix, 
$$
\bm{C}=\begin{pmatrix}
    0.3 & 0 & 0\\
    0.1 & 0.2 & 0\\
    0 & 0.1 & 0.25
\end{pmatrix}
$$
and $\{\bm{\eta}_i\}_{i=1}^n$ are independent $3$-dimensional standard normal distributed random variables. Note that this model allows for conditional heteroscedasticity. The innovations $\bm{\epsilon}_i =(\epsilon_{i1},\epsilon_{i2},\epsilon_{i3})^\top $  are then defined by $\epsilon_{i1}=e_{i1}, \epsilon_{i2}=-0.5e_{i2}, \epsilon_{i3}=0.25e_{i3}$.

\begin{figure}[htbp!]
	\centering
	\vspace{-0.2cm}
	{\includegraphics[width=7cm,height=6cm]{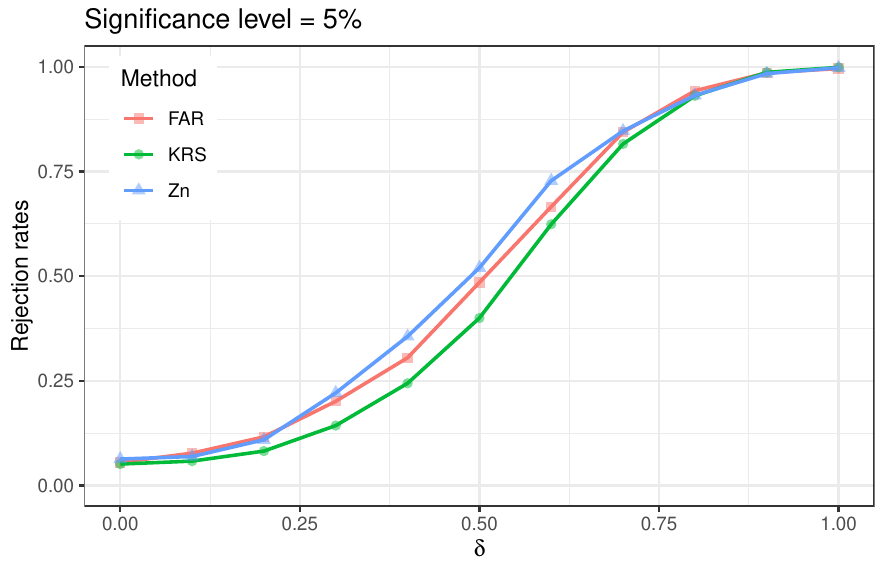}
	\includegraphics[width=7cm,height=6cm]{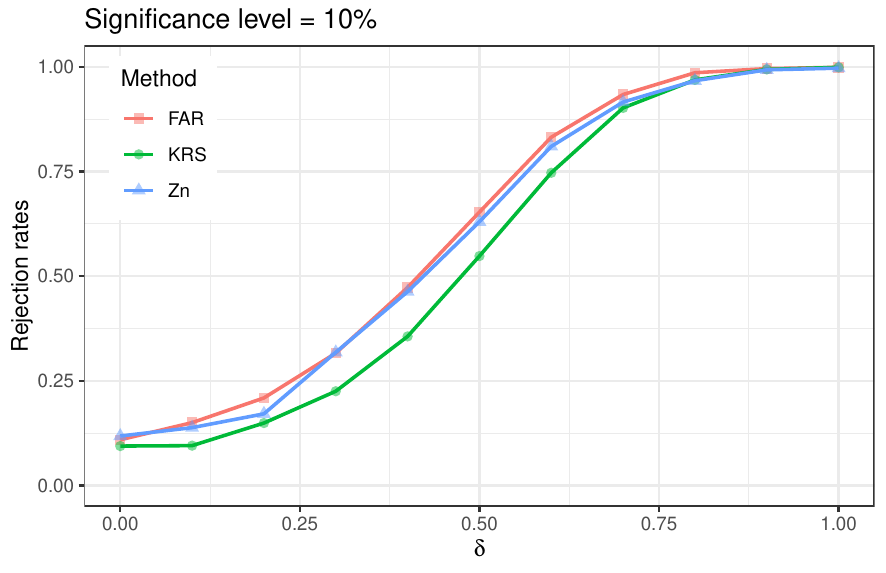}
	}
\vspace{-0.5cm}
\caption{\small \it 
Simulated rejection rates of different tests for the hypothesis \eqref{h0far1} with nominal levels  $\alpha=5\%$ (Left) and $\alpha=10\%$ (Right). FAR: the test \eqref{testfarnew} proposed in this paper. $Z_n$: the test  proposed by \cite{zhang2016white}. KRS: the test suggested in  \cite{kokoszka2017inference}. } 
\label{fig_power2}
\end{figure}

We implement our multiplier bootstrap method \eqref{testfarnew} with the data-adaptive weight to test $H_{01}^{\rm FAR}$ across a range of $\delta$ from $0$ to $1$ in increments of $0.1$ (denoted by FAR test in the following). Moreover, note that the model \eqref{eq_far} reduces to the vector AR(1) model when $\delta=0$ and remains stationary at $\delta=1$. In contrast, we also compare the Ljung-Box type test investigated by \cite{kokoszka2017inference} (denoted by KRS test) and the spectral-based test proposed by \cite{zhang2016white}, which we denote by $Z_n$-test in the following discussion. Specifically, the KRS test is a portmanteau test to assess the cumulative significance of residual auto-covariance operators up to some lag and we choose the maximum lag $H=5$ in our study. The $Z_n$-test proposed by \cite{zhang2016white} considers a
Cram\'{e}r-von-Mises functional based on the functional periodogram to test the uncorrelatedness of the residuals, where a block bootstrap procedure is also required to approximate the null distribution. 

The sample size is set as $n=400$ and the bootstrap procedure is based on $B=1000$ replicates for $1000$ Monte Carlo runs. For  the $Z_n$ test  we used  $B=100$  bootstrap replications to keep computation manageable. In \cref{fig_power2} we display  the empirical rejection rates of  the three tests 
for a nominal level of $5\%$ and $10\%$. Among the three tests, the KRS test exhibits the lowest  power, while our test outperforms both the  KRS and the  $Z_n$-test if the significance level is $\alpha=10\%$. On the other hand,  $Z_n$-test shows a slight superiority compared to our test for moderate values of $\delta$ between $0.4$ and $0.7$ if the nominal level is chosen as $\alpha = 5\%$. Overall, our FAR method delivers comparable goodness-of-fit results to the benchmark methods for testing for a FAR(1) model. 

\section{Technical proofs of main results}\label{app_proof}
\def\theequation{B.\arabic{equation}}	
\setcounter{equation}{0}

\noindent
\textbf{Proof of \cref{approx_ar}}.
The two terms  on the right hand side of  \eqref{approx_farnew} correspond to the approximation and truncation error, and we examine these errors 
separately.  The last term in \eqref{approx_farnew} comes from the truncation error in basis expansion of the functional time series. Given some basis function and the truncation number $p$, it boils down to computing the order of the remaining term $\sum_{k=p+1}^\infty r_{i,k}\alpha_k(u)$. Define the supremum norm of a function in $\mathcal{L}^2([0,1])$ as $|f|_\infty=
\sup_{u\in[0,1]}|f(u)|$. Then for trigonometric polynomials, univariate polynomial splines or orthonormal wavelet
bases, we have $|\alpha_k|_\infty \le C$ for $k=1,2,\ldots$. While for normalized Legendre polynomials, it follows that $|\alpha_k|_\infty \le C\sqrt{k}$ for $k=1,2,\ldots$. In line with our convention for $\theta$, we write the general bound of a basis function as $|\alpha_k|_\infty \le Ck^{\theta}$ where $\theta=0$ or $1/2$. Using  \cref{ass_conti_u}, one can establish that
$$ \Bigg \Vert
\sup_{u\in[0,1]} \Big|\sum_{k=p+1}^\infty 
r_{i,k}\alpha_k(u)
\Big| \Bigg\Vert_2\le 
\sqrt{C\sum_{k=p+1}^\infty
\sum_{l=p+1}^\infty (kl)^{-(d+1-\theta)} (kl)^{\theta}} \le Cp^{-d+2\theta}.$$
Therefore, we have $Y_i(u)=Y_i^{(p)}(u)+\bigO_\Pr(p^{-d+2\theta})$ uniformly with respect to $u$.

On the other hand, the first term in \eqref{approx_farnew} represents the approximation error of functional AR model truncated at $\min\{i-1,b\}$. From \cref{phi_rate} in Section \ref{secD} of the  in the supplementary
material, we have
\begin{align*}
&\left\Vert \sup_{u\in[0,1]} \Big|\bm{\alpha}_{f+}^\top(u)\sum_{j=b+1}^{i-1}\bm{\Phi}_j\bm{x}_{i-j}\Big|\right\Vert_2\le \sup_{u\in[0,1]} |\bm{\alpha}_{f+}^\top(u)|\left\Vert \sum_{j=b+1}^{i-1}\bm{\Phi}_j\bm{x}_{i-j}\right\Vert_2\\
\le & \sup_{u\in [0,1]} |\bm{\alpha}_{f+}(u)| 
\sqrt{
\sum_{j=b+1}^{i-1}
\sum_{k=b+1}^{i-1}
{\rm Tr}\left(
\bm{\Phi}_k\bm{\Gamma}(j-k)\bm{\Phi}_j^\top\right)}\le Cp^{1/2}b^{-\tau+2}(\log b)^{\tau-1}.
\end{align*}

By \cref{weak_depen_components} and using the basis expansion to recover the functional time series, we can derive
\begin{align*}
Y_i^{(p)}(u)
=& \bm{\alpha}_{f+}^\top(u)\bm{x}_i 
=\bm{\alpha}_{f+}^\top(u) \Big [\sum_{j=1}^b \bm{\Phi}_j\bm{x}_{i-j}+\sum_{j=b+1}^{i-1}\bm{\Phi}_j\bm{x}_{i-j}+\bm{\epsilon}_i \Big ]\\
=&\bm{\alpha}_{f+}^\top(u)\sum_{j=1}^b \bm{\Phi}_j\bm{x}_{i-j}+\bm{\alpha}_{f+}^\top(u)\bm{\epsilon}_i+\bigO_\Pr\big(p^{1/2}b^{-\tau+2}(\log b)^{\tau-1}\big)
\end{align*}
Consequently, it turns out that 
\begin{equation}
\label{intermediate}
Y_i(u)=\bm{\alpha}_{f+}^\top(u)\sum_{j=1}^b \bm{\Phi}_j\bm{x}_{i-j}+\bm{\alpha}_{f+}^\top(u)\bm{\epsilon}_i+\bigO_\Pr\big(p^{1/2}b^{-\tau+2}(\log b)^{\tau-1}+p^{-d+2\theta}\big).
\end{equation}

Now, it remains to verify that the first term on the right hand side of \eqref{approx_farnew} is equivalent to the corresponding term in \eqref{intermediate}. For 
this purpose we recall the representation of the
 kernel $\psi_j $ in \eqref{approx_farA},  the definition of the  associated coefficient matrix as $\bm{\Psi}_j=\{\psi_{j,kl}\}_{k,l=1}^p$ for $j=1,\ldots,b$ and the definition of $\bm{\alpha}_{f+}^\top$
 in \eqref{det3}. Then considering the basis expansion 
of  $Y_{i-j}$ in \eqref{infinite} and using the orthogonality  of the basis functions, we have
\begin{align}\label{thm_right_eq}
	\sum_{j=1}^{\min\{i-1,b\}}\int_0^1 \psi_j(u,v)Y_{i-j}(v)\dee v 
	=&\sum_{j=1}^{\min\{i-1,b\}}\sum_{k=1}^p\sum_{l=1}^p\psi_{j,kl}f_lx_{i-j,l}\alpha_k(u)  \\
=&\bm{\alpha}_{f+}^\top(u)\sum_{j=1}^{\min\{i-1,b\}}{\rm diag}(1/f_1,\cdots,1/f_p)
	\bm{\Psi}_j{\rm diag}(f_1,\cdots,f_p)\bm{x}_{i-j}.
	\notag
    \end{align} 
    By \eqref{det20}, we have 
    $\bm{\Phi}_j={\rm diag}(1/f_1,\cdots,1/f_p) 
    \bm{\Psi}_j{\rm diag}(f_1,\cdots,f_p)$, which yields 
    \begin{equation}
    \label{eq_last}
    \sum_{j=1}^{\min\{i-1,b\}}\int_0^1 \psi_j(u,v)Y_{i-j}(v)\dee v=\bm{\alpha}_{f+}^\top(u)
    \sum_{j=1}^{\min
    \{i-1,b\}}
    \bm{\Phi}_j\bm{x}_{i-j}.
    \end{equation}
 The assertion now follows combining Eqs. \eqref{intermediate} and \eqref{eq_last}. 
 $\hfill \square$
\medskip

\noindent
 \textbf{Proof of \cref{empirical_thm}}. Denote $1\le k,l\le bp^2$ as indices of the coordinates of the covariance matrix. Similarly to the proof of Lemma 9 in \cite{CZ2022} and equipped with \cref{eps_depend}, we can obtain $\Vert \widetilde{\bm{\Xi}}_{n,kl}-\bm{\Xi}_{n,kl}\Vert_2=\bigO(\frac{\sqrt{p}}{m}+p\sqrt{\frac{m}{n}})$. By balancing these two error terms, we have $|\widetilde{\bm{\Xi}}_n-\bm{\Xi}_n|_F=\bigO_\Pr(bp^{17/6}n^{-1/3})$ by choosing $m=\bigO((n/p)^{1/3})$. Given a sequence $h_n$ that diverges to infinity at an arbitrarily slow rate, then $\Pr(\mathcal{A}_n)=1-o(1)$ by the Markov's inequality. 

 Define the target distance as
 $$\widehat{\mathcal{D}}(\widehat{\bm{Z}}_n^{\ast,b},\bm{Z}_n^b):= \sup_{x\in\mathrm{R}, \bm{w}_n\in\mathcal{W}}
 \left|\Pr\left(\left|\widehat{\bm{B}}_n\right|_{\mathcal{L}^2,\bm{w}_n}\le x \mid \Upsilon_{b+1}^n\right)
 -\Pr\left(\left|\widetilde{\bm{B}}_n^{\rm oracle,1}\right|_{\mathcal{L}^2,\bm{w}_n}\le x\right)\right|.$$
Similar to the proof of \cref{thm_boots} in \cref{add_theory}, we first decompose $\widehat{\mathcal{D}}(\widehat{\bm{Z}}_n^{\ast,b},\bm{Z}_n^b)$ into several distance terms. For any two events $A$ and $B$ and given a $\sigma$-field $\mathcal{G}$, note that the inequality $\Pr(A\mid\mathcal{G})\le \Pr(A\cap B \mid \mathcal{G})+\Pr(B^c\mid \mathcal{G})$ holds. Moreover, denote $\widehat{\bm{\Sigma}}=\widehat{\bm{X}}^\top\widehat{\bm{X}}/n$. Given positive sequences $\delta_1=\delta_{1,n}$ and $\delta_2=\delta_{2,n}$ that converge to zero as $n\to \infty$, then we have
{\footnotesize
\begin{align*}
    &\sup_{x\in\mathbb{R},\bm{w}_n\in\mathcal{W}}\left|
    \Pr\left(\left|\widehat{\bm{B}}_n(u,v)\right|_{\mathcal{L}^2,
    \bm{w}_n}\le x \mid  \Upsilon_{b+1}^n\right)-\Pr\left(\left|\widetilde{\bm{B}}_n^{\rm oracle,1}(u,v)\right|_{\mathcal{L}^2,\bm{w}_n}\le x\right)\right| \\
    \le & \sup_{x\in\mathbb{R},\bm{w}_n\in\mathcal{W}}\left|
    \Pr\left(\left|\bm{A}_f(v)\widehat{\bm{\Sigma}}^{-1}
    \widetilde{\bm{I}}{\rm diag}(\widehat{\bm{Z}}_n^{\ast,b})\widetilde{\bm{E}}\widehat{\bm{\alpha}}_{f+}(u)\right|_{\mathcal{L}^2,
    \bm{w}_n}\le x+\delta_1 \mid \Upsilon_{b+1}^n\right)-\Pr\left(\left|\widetilde{\bm{B}}_n^{\rm oracle,1}(u,v)\right|_{\mathcal{L}^2,\bm{w}_n}\le x\right)\right|\\
    &\qquad + \sup_{\bm{w}_n\in\mathcal{W}}\left|
    \Pr\left(\left|[\widehat{\bm{A}}_f(v)-\bm{A}_f(v)]\widehat{\bm{\Sigma}}^{-1}
    \widetilde{\bm{I}}{\rm diag}(\widehat{\bm{Z}}_n^{\ast,b})\widetilde{\bm{E}}\widehat{\bm{\alpha}}_{f+}(u)\right|_{\mathcal{L}^2,
    \bm{w}_n}\ge \delta_1 \mid \Upsilon_{b+1}^n \right)\right|\\
    \le & \sup_{x\in\mathbb{R},\bm{w}_n\in\mathcal{W}}\left|
    \Pr\left(\left|\bm{A}_f(v)\widehat{\bm{\Sigma}}^{-1}
    \widetilde{\bm{I}}{\rm diag}(\widehat{\bm{Z}}_n^{\ast,b})\widetilde{\bm{E}}\bm{\alpha}_{f+}(u)\right|_{\mathcal{L}^2,
    \bm{w}_n}\le x+\delta_1+\delta_2 \mid  \Upsilon_{b+1}^n\right)-\Pr\left(\left|\widetilde{\bm{B}}_n^{\rm oracle,1}(u,v)\right|_{\mathcal{L}^2,\bm{w}_n}\le x\right)\right|\\
    &\qquad + \sup_{\bm{w}_n\in\mathcal{W}}\left|
    \Pr\left(\left|\left[\widehat{\bm{A}}_f(v)-\bm{A}_f(v)\right]\widehat{\bm{\Sigma}}^{-1}
    \widetilde{\bm{I}}{\rm diag}(\widehat{\bm{Z}}_n^{\ast,b})\widetilde{\bm{E}}\widehat{\bm{\alpha}}_{f+}(u)\right|_{\mathcal{L}^2,
    \bm{w}_n}\ge \delta_1 \mid \Upsilon_{b+1}^n \right)\right|\\
    &\qquad + \sup_{\bm{w}_n\in\mathcal{W}}\left|
    \Pr\left(\left|\bm{A}_f(v)\widehat{\bm{\Sigma}}^{-1}
    \widetilde{\bm{I}}{\rm diag}(\widehat{\bm{Z}}_n^{\ast,b})\widetilde{\bm{E}}\left[\widehat{\bm{\alpha}}_{f+}(u)-\bm{\alpha}_{f+}(u)\right]\right|_{\mathcal{L}^2,
    \bm{w}_n}\ge \delta_2 \mid \Upsilon_{b+1}^n \right)\right|\\
    \le & \sup_{x\in\mathbb{R},\bm{w}_n\in\mathcal{W}}\left|
    \Pr\left(\left|\bm{A}_f(v)\widehat{\bm{\Sigma}}^{-1}
    \widetilde{\bm{I}}{\rm diag}(\widehat{\bm{Z}}_n^{\ast,b})\widetilde{\bm{E}}\bm{\alpha}_{f+}(u)\right|_{\mathcal{L}^2,
    \bm{w}_n}\le x+\delta_1+\delta_2 \mid  \Upsilon_{b+1}^n\right)- \Pr\left(\left|\widetilde{\bm{B}}_n^{\rm oracle,2}(u,v)\right|_{\mathcal{L}^2,
    \bm{w}_n}\le x+\delta_1+\delta_2 \mid  \Upsilon_{b+1}^n\right)\right|\\
    &\qquad + \sup_{x\in\mathbb{R},\bm{w}_n\in\mathcal{W}}\left|\Pr\left(\left|\widetilde{\bm{B}}_n^{\rm oracle,2}(u,v)\right|_{\mathcal{L}^2,\bm{w}_n}\le x+\delta_1+\delta_2 \mid \Upsilon_{b+1}^n \right)-\Pr\left(\left|\widetilde{\bm{B}}_n^{\rm oracle,2}(u,v)\right|_{\mathcal{L}^2,\bm{w}_n}\le x \mid  \Upsilon_{b+1}^n\right)\right|\\
    &\qquad + \sup_{x\in\mathbb{R},\bm{w}_n\in\mathcal{W}}\left|\Pr\left(\left|\widetilde{\bm{B}}_n^{\rm oracle,2}(u,v)\right|_{\mathcal{L}^2,\bm{w}_n}\le x \mid \Upsilon_{b+1}^n \right)-\Pr\left(\left|\bm{B}_n^u(u,v)\right|_{\mathcal{L}^2,\bm{w}_n}\le x \right)\right|\\
    &\qquad + \sup_{x\in\mathbb{R},\bm{w}_n\in\mathcal{W}}\left|\Pr\left(\left|\bm{ B}_n^u(u,v)\right|_{\mathcal{L}^2,\bm{w}_n}\le x \right)-\Pr\left(\left|\bm{B}_n^{\rm oracle}(u,v)\right|_{\mathcal{L}^2,\bm{w}_n}\le x \right)\right|\\
    &\qquad + \sup_{x\in\mathbb{R},\bm{w}_n\in\mathcal{W}}\left|\Pr\left(\left|\bm{B}_n^{\rm oracle}(u,v)\right|_{\mathcal{L}^2,\bm{w}_n}\le x \right)-\Pr\left(\left|\widetilde{\bm{B}}_n^{\rm oracle,1}(u,v)\right|_{\mathcal{L}^2,\bm{w}_n}\le x \right)\right|\\
    &\qquad + \sup_{\bm{w}_n\in\mathcal{W}}\left|
    \Pr\left(\left|\left[\widehat{\bm{A}}_f(v)-\bm{A}_f(v)\right]\widehat{\bm{\Sigma}}^{-1}
    \widetilde{\bm{I}}{\rm diag}(\widehat{\bm{Z}}_n^{\ast,b})\widetilde{\bm{E}}\widehat{\bm{\alpha}}_{f+}(u)\right|_{\mathcal{L}^2,
    \bm{w}_n}\ge \delta_1 \mid \Upsilon_{b+1}^n \right)\right|\\
    &\qquad + \sup_{\bm{w}_n\in\mathcal{W}}\left|
    \Pr\left(\left|\bm{A}_f(v)\widehat{\bm{\Sigma}}^{-1}
    \widetilde{\bm{I}}{\rm diag}(\widehat{\bm{Z}}_n^{\ast,b})\widetilde{\bm{E}}\left[\widehat{\bm{\alpha}}_{f+}(u)-\bm{\alpha}_{f+}(u)\right]\right|_{\mathcal{L}^2,
    \bm{w}_n} \ge \delta_2 \mid \Upsilon_{b+1}^n \right)\right|\\
    =:&~
    I+II+III+IV+V+VI+VII.
    \end{align*}}
    
   Since $\left|\widetilde{\bm{B}}_n^{{\rm oracle},2}(u,v)\right|_{\mathcal{L}^2,\bm{w}_n}$ is a Gaussian random variable conditional on $\Upsilon_{b+1}^n$, we have $II\le  C(\delta_1+\delta_2)$. In addition, from Theorems \ref{thm_gaussian} and \ref{thm_boots}, we obtain that $III\le C\left(b^{7/8}p^{5/4}n^{-1/4}\log^{1/2}(n)+bp^{17/6}n^{-1/3}h_n\right)$ and $IV\le C\left(\min\left\{
        (bp^2)^{7/4}n^{-\frac{1}{2}+\frac{9}{2q}+\frac{2}{\tau-1}}, (bp^2)^{\frac{5}{6}} n^{\frac{2}{3 r}-\frac{1}{3}}(\log n)^{\frac{1}{3}}\right\}\right)$. Now we turn to derive the bound of the error term 
        $I$. Recall $\widetilde{\bm{V}}_n^m={\rm vec}(\widetilde{\bm{\Sigma}}^{-1}\widetilde{\bm{I}}{\rm diag}(\bm{Z}_n^{\ast,b})\widetilde{\bm{E}})$, define the estimated vector as $\widehat{\bm{V}}_n^m={\rm vec}(\widehat{\bm{\Sigma}}^{-1}\widetilde{\bm{I}}{\rm diag}(\widehat{\bm{Z}}_n^{\ast,b})\widetilde{\bm{E}})\in\mathbb{R}^{bp^2}$ and an intermediate random vector as $\bar{\bm{V}}_n^m={\rm vec}(\widehat{\bm{\Sigma}}^{-1}
    \widetilde{\bm{I}}{\rm diag}(\bm{Z}_n^{\ast,b})
    \widetilde{\bm{E}})$. Then followed by \cref{comparison},
    the distance of $I$ becomes
        {\footnotesize
        \begin{align}
        & \sup_{x\in\mathbb{R},\bm{w}_n\in\mathcal{W}}\left|
    \Pr\left(\left|\bm{A}_f(v)
    \widetilde{\bm{I}}{\rm diag}(\widehat{\bm{V}}_n^m)\widetilde{\bm{E}}\bm{\alpha}_{f+}(u)\right|_{\mathcal{L}^2,
    \bm{w}_n}\le x \mid \Upsilon_{b+1}^n\right)-\Pr\left(\left|\bm{A}_f(v)
    \widetilde{\bm{I}}{\rm diag}(\widetilde{\bm{V}}_n^m)\widetilde{\bm{E}}\bm{\alpha}_{f+}(u)\right|_{\mathcal{L}^2,\bm{w}_n}\le x \mid \Upsilon_{b+1}^n \right)\right| \notag\\
    &\le 4(bp^2)^{1/4}\sigma + \sup_{T\in \widetilde{\mathcal{T}}_n}
    \left|\EE\left[h_{T,\sigma}(\widehat{\bm{V}}_n^m)-h_{T,\sigma}(\widetilde{\bm{V}}_n^m)\mid \Upsilon_{b+1}^n\right]\right| \notag\\
    &\le 4(bp^2)^{1/4}\sigma
    +\frac{C}{\sigma}\left(\EE\left[\left|\widehat{\bm{V}}_n^m-\bar{\bm{V}}_n^m\right|\mid \Upsilon_{b+1}^n\right]+
    \EE\left[\left|\bar{\bm{V}}_n^m-\widetilde{\bm{V}}_n^m\right|\mid  \Upsilon_{b+1}^n\right]\right) \notag\\
    &\le 4(bp^2)^{1/4}\sigma
    +\frac{C}{\sigma}\left(\sqrt{\EE\left[(\widehat{\bm{V}}_n^m-\bar{\bm{V}}_n^m)^\top
    (\widehat{\bm{V}}_n^m-\bar{\bm{V}}_n^m)\mid \Upsilon_{b+1}^n\right]}+
    \sqrt{\EE\left[(\bar{\bm{V}}_n^m-\widetilde{\bm{V}}_n^m)^\top(\bar{\bm{V}}_n^m-\widetilde{\bm{V}}_n^m)\mid  \Upsilon_{b+1}^n\right]}\right) \notag\\
    &= 4(bp^2)^{1/4}\sigma+
    \frac{C}{\sigma}(I_1+I_2). \label{error_i}
        \end{align}}
        By \cref{consis_epsilon} and the fact ${\rm vec}(AB) = (I\otimes B){\rm vec}(A)$, we have
        
        \begin{align*}
            I_1 \le & \sqrt{{\rm Tr}\left(\left[\bm{I}_p \otimes \widehat{\bm{\Sigma}}^{-1}\right]^2\frac{1}{n-m-b+1}
    \sum_{i=b+1}^{n-m+1}\left(\frac{1}{\sqrt{m}}\sum_{j=i}^{i+m-1}(\widehat{\bm{z}}_j-\bm{z}_j)\right)
    \left(\frac{1}{\sqrt{m}}\sum_{j=i}^{i+m-1}(\widehat{\bm{z}}_j-\bm{z}_j)^\top\right)\right)}\\
    \le & \left\Vert \bm{I}_p\otimes \widehat{\bm{\Sigma}}^{-1}\right\Vert \sqrt{\frac{1}{n-m-b+1}
    \sum_{i=b+1}^{n-m+1}\left|\frac{1}{\sqrt{m}}\sum_{j=i}^{i+m-1}
    \left[(\widehat{\bm{x}}_i^{(b)}-\bm{x}_i^{(b)})\otimes \widehat{\bm{\epsilon}}_i + \bm{x}_i^{(b)}\otimes (\widehat{\bm{\epsilon}}_i-\bm{\epsilon}_i)\right]\right|_F^2}\\
    \le & C\left(b^{1/2}p^{3/2}n^{-1/2}+b^{3/2}p^{5/2}n^{-1/2+2/q}\right)\le Cb^{3/2}p^{5/2}n^{-1/2+2/q}.
        \end{align*}
        On the other hand, by \cref{consis_epsilon} we find that
        \begin{align*}
        & \left\Vert \widehat{\bm{\Sigma}}^{-1}-\widetilde{\bm{\Sigma}}^{-1}\right\Vert =
        \left\Vert \widehat{\bm{\Sigma}}^{-1}\big(\widetilde{\bm{\Sigma}}-\widehat{\bm{\Sigma}}\big)\widetilde{\bm{\Sigma}}^{-1}\right\Vert\\
        \le & \left\Vert \widehat{\bm{\Sigma}}^{-1}\right\Vert
        \left\Vert \widetilde{\bm{\Sigma}}-\widehat{\bm{\Sigma}}\right\Vert \left\Vert \widetilde{\bm{\Sigma}}^{-1}\right\Vert \le Cbp/\sqrt{n}.
    \end{align*}
    Consequently, it yields that
        \begin{align*}
           I_2 & \le  \sqrt{{\rm Tr}\left(\left[\bm{I}_p \otimes (\widehat{\bm{\Sigma}}^{-1}-\widetilde{\bm{\Sigma}}^{-1})\right]^2
    \frac{1}{n-m-b+1}\sum_{i=b+1}^{n-m+1}\left(\frac{1}{\sqrt{m}}\sum_{j=i}^{i+m-1}\bm{z}_j\right)
    \left(\frac{1}{\sqrt{m}}\sum_{j=i}^{i+m-1}\bm{z}_j^\top\right)\right)}\\
    & \le \left\Vert \bm{I}_p \otimes (\widehat{\bm{\Sigma}}^{-1}-\widetilde{\bm{\Sigma}}^{-1})\right\Vert \sqrt{\frac{1}{n-m-b+1}
    \sum_{i=b+1}^{n-m+1}\left|\frac{1}{\sqrt{m}}\sum_{j=i}^{i+m-1}
    \bm{z}_j\right|_F^2}\\
    & \le \frac{Cbp}{\sqrt{n}}\cdot b^{1/2}p \le Cb^{3/2}p^{2}n^{-1/2}.
        \end{align*}
        As a result, $I_1+I_2 \le Cb^{3/2}p^{5/2}n^{-1/2+2/q}$. Now going back to \eqref{error_i}, we choose $\sigma=\bigO(b^{5/8}pn^{-1/4+1/q})$ to obtain the optimal error bound 
        $I \le Cb^{7/8}p^{3/2}n^{-1/4+1/q}$. For the term $VI$, we denote $\widehat{\bm{G}}_{n,j}=\bm{I}^\ast {\rm diag}(\widehat{\bm{Z}}_{n,j}^m)\bm{E}^\ast$ where $\bm{I}^\ast \in \mathbb{R}^{p\times p^2}$ with its $k$th $p$ squared blocks being identity matrices $\bm{I}_p$, $\widehat{\bm{Z}}_{n,j}^m=\frac{1}{n-m-b+1}\sum_{k=b+1}^{n-m+1}(\frac{1}{\sqrt{m}}\sum_{i=k}^{k+m-1} \widehat{\bm{x}}_{i-j}\otimes \widehat{\bm{\epsilon}}_i)\odot (\frac{1}{\sqrt{m}}\sum_{i=k}^{k+m-1} \widehat{\bm{x}}_{i-j}\otimes \widehat{\bm{\epsilon}}_i)$ with $\odot$ representing the Hadamard (elementwise) product for matrices and $\bm{E}^\ast \in \mathbb{R}^{p^2\times p}$ has 1 at locations $\{(k-1)p+1,\ldots,kp\}$ on each $k$th column. Then using the fact $\min_{1\le j\le b}\inf_{u,v\in [0,1]}w_{n,j}(u,v)\ge \kappa_2$ and the Markov inequality, we can obtain
        {\footnotesize
        \begin{align*}
    &\EE\left[\max_{1\le j\le b}
    \left|\bm{E}_j^\top
    \left[\widehat{\bm{A}}_f(v)-\bm{A}_f(v)\right]
    \widehat{\bm{\Sigma}}^{-1}\widetilde{\bm{I}}{\rm diag}(\widehat{\bm{Z}}_n^{\ast,b})\widetilde{\bm{E}}\widehat{\bm{\alpha}}_{f+}(u)\big/w_{n,j}\right|_{\mathcal{L}^2}^2\mid 
    \Upsilon_{b+1}^n \right]\\
    \le & \EE  \left\{ \max_{1\le j\le b}\int_0^1\int_0^1
    {\rm Tr}\left(\bm{E}_j^\top
    \left[\widehat{\bm{A}}_f(v)-\bm{A}_f(v)\right]
    \widehat{\bm{\Sigma}}^{-1}\widetilde{\bm{I}}{\rm diag}(\widehat{\bm{Z}}_n^{\ast,b})\widetilde{\bm{E}}
    \widehat{\bm{\alpha}}_{f+}(u)
    \widehat{\bm{\alpha}}_{f+}^\top(u)
    \widetilde{\bm{E}}^\top
    {\rm diag}(\widehat{\bm{Z}}_n^{\ast,b})
    \widetilde{\bm{I}}^\top
    \widehat{\bm{\Sigma}}^{-1}
    \left[\widehat{\bm{A}}_f(v)-\bm{A}_f(v)\right]^\top\bm{E}_j
    \right)\dee u\dee v\mid  \Upsilon_{b+1}^n \right\}\\
    \le & Cb \left|\bm{1}^\top {\rm diag}\{(1/\widehat{f}_1-1/f_1)^2,\ldots,(1/\widehat{f}_p-1/f_p)^2\}\widehat{\bm{G}}_{n,j}
    {\rm diag}(\widehat{f}_1^2,\ldots,\widehat{f}_p^2)\bm{1}\right| \le Cbp^2/\sqrt{n}.
\end{align*}
}   
By the Markov inequality, we have 
$$VI \le \frac{\EE\left[\max_{1\le j\le b}
    \left|\bm{E}_j^\top
    \left[\widehat{\bm{A}}_f(v)-\bm{A}_f(v)\right]
    \widehat{\bm{\Sigma}}^{-1}\widetilde{\bm{I}}{\rm diag}(\widehat{\bm{Z}}_n^{\ast,b})\widetilde{\bm{E}}\widehat{\bm{\alpha}}_{f+}(u)\right|_{\mathcal{L}^2}^2\Big|
    \Upsilon_{b+1}^n
    \right]}{\kappa_2^2\delta_1^2}
    \le \frac{Cb^2p^4}{n\delta_1^2}.$$
In a similar fashion, we can obtain that
{\footnotesize
        \begin{align*}
    &\EE\left[\max_{1\le j\le b}
    \left|\bm{E}_j^\top\bm{A}_f(v)
    \widehat{\bm{\Sigma}}^{-1}\widetilde{\bm{I}}{\rm diag}(\widehat{\bm{Z}}_n^{\ast,b})\widetilde{\bm{E}}\left[
    \widehat{\bm{\alpha}}_{f+}(u)-\bm{\alpha}_{f+}(u)\right]\big/w_{n,j}\right|_{\mathcal{L}^2}^2\Big|
    \Upsilon_{b+1}^n \right]\\
    \le & \EE  \left\{ \max_{1\le j\le b}\int_0^1\int_0^1
    {\rm Tr}\left(\bm{E}_j^\top
    \bm{A}_f(v)\widehat{\bm{\Sigma}}^{-1}
    \widetilde{\bm{I}}{\rm diag}(\widehat{\bm{Z}}_n^{\ast,b})\widetilde{\bm{E}}
    \left[\widehat{\bm{\alpha}}_{f+}(u)-\bm{\alpha}_{f+}(u)
    \right]\left[\widehat{\bm{\alpha}}_{f+}(u)-\bm{\alpha}_{f+}(u)\right]^\top
    \widetilde{\bm{E}}^\top
    {\rm diag}(\widehat{\bm{Z}}_n^{\ast,b})
    \widetilde{\bm{I}}^\top
    \widehat{\bm{\Sigma}}^{-1}
    \bm{A}_f^\top(v)\bm{E}_j
    \right)\dee u\dee v\Big| \Upsilon_{b+1}^n \right\}\\
    \le & Cb \left|\bm{1}^\top {\rm diag}(1/f_1^2,\ldots,1/f_p^2)\widehat{\bm{G}}_{n,j}
    {\rm diag}((\widehat{f}_1-f_1)^2,\ldots,(\widehat{f}_p-f_p)^2)\bm{1}\right| \le Cbp^2/\sqrt{n}.
\end{align*}
}  
Therefore, 
$$VII \le \frac{\EE\left[\max_{1\le j\le b}
    \left|\bm{E}_j^\top
    \bm{A}_f(v)\widehat{\bm{\Sigma}}^{-1}\widetilde{\bm{I}}{\rm diag}(\widehat{\bm{Z}}_n^{\ast,b})\widetilde{\bm{E}}\left[
    \widehat{\bm{\alpha}}_{f+}(u)-\bm{\alpha}_{f+}(u)\right]\right|_{\mathcal{L}^2}^2\Big|
    \Upsilon_{b+1}^n
    \right]}{\kappa_2^2\delta_2^2}
    \le \frac{Cb^2p^4}{n\delta_2^2}.$$
    As a result, we choose $\delta_1=\delta_2=\bigO\big((b^2p^4/n)^{1/3}\big)$ by balancing the terms $VI, VII$ and $II$. Then it turns out that $$II+VI+VII \le Cb^{2/3}p^{4/3}n^{-1/3}.$$
    For the last distance of the term $V$, we follow the proof of \citep[Lemma D.6]{liu2025wasserstein}, first recall the convex set $$\widetilde{T}_x^{w_n}=\left\{\bm{S}\in \mathbb{R}^{bp^2}:
\frac{1}{b}\sum_{j=1}^{b} \left|\bm{E}_j^\top\bm{A}_f(v)
\widetilde{\bm{I}}{\rm diag} (\bm{S})\widetilde{\bm{E}}\bm{\alpha}_{f+}(u)\big/w_{n,j}(u,v)\right|_{\mathcal{L}^2}
	\le x\right\},$$ and $\widetilde{\mathcal{T}}_n$ is the collection of the convex sets in $\mathbb{R}^{bp^2}$. Further we define $T^{\epsilon}$ and $T^{-\epsilon}$ for $T\in \widetilde{\mathcal{T}}_n$ as 
    $$T^\epsilon=\{\bm{S}\in \mathbb{R}^{bp^2}: \text{dist}(\bm{S},T)\le \epsilon\},\qquad T^{-\epsilon}
    =\{\bm{S}\in \mathbb{R}^{bp^2}: \text{dist}(\bm{S},\mathbb{R}^{bp^2}\backslash T )> \epsilon\}.$$ Let $\bm{V}_n:={\rm vec}(\bm{\Sigma}^{-1}
    \widetilde{\bm{I}}{\rm diag}(\bm{Z}_n^b)\widetilde{\bm{E}})$ and $\widetilde{\bm{V}}_n={\rm vec}(\widetilde{\bm{\Sigma}}^{-1}
    \widetilde{\bm{I}}{\rm diag}(\bm{Z}_n^b)\widetilde{\bm{E}})$. Then we observe that
    \begin{align*}
        V=&\sup_{T\in \widetilde{\mathcal{T}}_n} \left|
        \Pr(\widetilde{\bm{V}}_n\in T)-
        \Pr(\bm{V}_n\in T)\right|\\
        \le & \Pr(|\widetilde{\bm{V}}_n-\bm{V}_n|> \epsilon)
        +\sup_{T\in \widetilde{\mathcal{T}}_n}
        \left\{\Pr(\bm{V}_n \in T^\epsilon\backslash T), 
        \Pr(\bm{V}_n \in T\backslash T^{-\epsilon})\right\}\\
        :=& V_1 + V_2.
    \end{align*}
    For the first term $V_1$, we have
    \begin{align}\label{error_v1}
    V_1 &\le \frac{1}{\epsilon} \EE\left|\bm{I}_p\otimes (\widetilde{\bm{\Sigma}}^{-1}-\bm{\Sigma}^{-1})\bm{Z}_n^b\right| \notag\\
    &\le \frac{1}{\epsilon} \left\Vert \bm{I}_p \otimes (\widetilde{\bm{\Sigma}}^{-1}-\bm{\Sigma}^{-1})\right\Vert \EE\left| \frac{1}{\sqrt{n}}\sum_{i=b+1}^n \bm{z}_i\right| \notag\\
    &\le \frac{C}{\epsilon} b^{3/2}p^{2}\log(n)/\sqrt{n}.
    \end{align}
    On the other hand, notice that
    $$\Pr(\bm{V}_n \in T^\epsilon\backslash T)=\Pr(\bm{V}_n \in T^\epsilon)- \Pr(\bm{V}_n \in T).$$ Consider the Gaussian counterpart $\bm{V}_n^u:={\rm vec}(\bm{\Sigma}^{-1}
    \widetilde{\bm{I}}{\rm diag}(\bm{U}_n^b)\widetilde{\bm{E}})$, then we have
    \begin{align*}
        & \Pr(\bm{V}_n \in T^{\epsilon}\backslash T)-
        \Pr(\bm{V}_n^u \in T^{\epsilon}\backslash T)\\
        \le & \left|\Pr(\bm{V}_n \in T^{\epsilon})-
        \Pr(\bm{V}_n^u \in T^{\epsilon})\right|+\left|
        \Pr(\bm{V}_n \in T)-\Pr(\bm{V}_n^u \in T)\right|\\
        \le & 2\mathcal{D}(\bm{V}_n, \bm{V}_n^u).
    \end{align*}
    Similarly,
    $$\Pr(\bm{V}_n \in T\backslash T^{-\epsilon})-
        \Pr(\bm{V}_n^u \in T\backslash T^{-\epsilon})
        \le  2\mathcal{D}(\bm{V}_n, \bm{V}_n^u).$$ Then by \cite{Bentkus03}, it yields that
        $$V_2 \le \sup_{T \in \widetilde{\mathcal{T}}_n}\{\Pr(\bm{V}_n^u \in T^{\epsilon}\backslash T),
        \Pr(\bm{V}_n^u \in T\backslash T^{-\epsilon})\} + 4\mathcal{D}(\bm{V}_n,\bm{V}_n^u)\le 4\left((bp^2)^{1/4}\epsilon + \mathcal{D}(\bm{V}_n,\bm{V}_n^u)\right).$$
        Putting the above results together and optimizing the error bound, we conclude that 
        $$V\le C\left(\min\left\{
        (bp^2)^{7/4}n^{-\frac{1}{2}+\frac{9}{2q}+\frac{2}{\tau-1}}, (bp^2)^{\frac{5}{6}} n^{\frac{2}{3 r}-\frac{1}{3}}(\log n)^{\frac{1}{3}}\right\} + b^{7/8}p^{5/4}n^{-1/4}\log^{1/2}(n)\right).$$
	As a result, it follows that  
    {\small
    \begin{equation}\label{final_error}
    \widehat{\mathcal{D}}(\widehat{\bm{Z}}_n^b,\bm{Z}_n)\le C\left(\min\left\{
        (bp^2)^{7/4}n^{-\frac{1}{2}+\frac{9}{2q}+\frac{2}{\tau-1}}, (bp^2)^{\frac{5}{6}} n^{\frac{2}{3 r}-\frac{1}{3}}(\log n)^{\frac{1}{3}}\right\}+b^{7/8}p^{3/2}n^{-1/4+1/q}+bp^{17/6}
        n^{-1/3}h_n\right).
        \end{equation}
        }
        Finally, we need to bound the distance between $\Pr(|\widehat{\bm{B}}_n(u,v)|_{\mathcal{L}^2, \bm{w}_n}\le x \mid \Upsilon_{b+1}^n)$ and $\Pr(\sqrt{n}|\widehat{\bm{\psi}}-\bm{\psi}|_{\mathcal{L}^2,\bm{w}_n})$. Note that $$\sqrt{n} \left[\widehat{\bm{\psi}}(u,v)-\bm{\psi}(u,v)\right]
        = \widetilde{\bm{B}}_n^{{\rm oracle},1}
        + \bm{A}_f(v) \widetilde{\bm{\Sigma}}^{-1}\left(\frac{1}{\sqrt{n}}\sum_{i=b+1}^n \bm{x}_i^{(b)}\bm{R}^\top\right)\bm{\alpha}_{f+}(u),$$
         where $\Vert \bm{R}\Vert_2 = \bigO_\Pr(b^{-\tau+2}(\log b)^{\tau-1})$.
         Similarly given a positive sequence $\delta_3=\delta_{3,n}$ that converges to 0 as $n\to \infty$, we have 

         {\footnotesize
\begin{align*}
    &\sup_{x\in\mathbb{R},\bm{w}_n\in\mathcal{W}}\left|
    \Pr\left(\left|\widehat{\bm{B}}_n(u,v)\right|_{\mathcal{L}^2,
    \bm{w}_n}\le x \mid  \Upsilon_{b+1}^n\right)-\Pr\left(\sqrt{n}\left|\widehat{
    \bm{\psi}}-\bm{\psi}\right|_{\mathcal{L}^2,\bm{w}_n}\le x\right)\right| \\
    \le & \sup_{x\in\mathbb{R},\bm{w}_n\in\mathcal{W}}\left|
    \Pr\left(\left|\widehat{\bm{B}}_n(u,v)\right|_{\mathcal{L}^2,
    \bm{w}_n}\le x+\delta_3 \mid  \Upsilon_{b+1}^n\right)-\Pr\left(\left|\widetilde{\bm{B}}_n^{\rm oracle,1}(u,v)\right|_{\mathcal{L}^2,\bm{w}_n}\le x + \delta_3\right)\right|\\
    &\quad + \sup_{\bm{w}_n\in\mathcal{W}}\left|
    \Pr\left(\left|\bm{A}_f(v)\widetilde{\bm{\Sigma}}^{-1}\left(
    \sum_{i=b+1}^n \bm{x}_i^{(b)}\bm{R}^\top/\sqrt{n}\right)\bm{\alpha}_{f+}(u)\right|_{\mathcal{L}^2,
    \bm{w}_n}\ge \delta_{3} \right)\right| + C\delta_3\\
    \le & C\left(\min\left\{
        (bp^2)^{7/4}n^{-\frac{1}{2}+\frac{9}{2q}+\frac{2}{\tau-1}}, (bp^2)^{\frac{5}{6}} n^{\frac{2}{3 r}-\frac{1}{3}}(\log n)^{\frac{1}{3}}\right\}+b^{7/8}p^{3/2}n^{-1/4+1/q}+bp^{17/6}
        n^{-1/3}h_n\right)\\
        & \quad + \sup_{\bm{w}_n\in\mathcal{W}}\left|
    \Pr\left(\left|\bm{A}_f(v)\widetilde{\bm{\Sigma}}^{-1}\left(
    \sum_{i=b+1}^n \bm{x}_i^{(b)}\bm{R}^\top/\sqrt{n}\right)\bm{\alpha}_{f+}(u)\right|_{\mathcal{L}^2,
    \bm{w}_n}\ge \delta_{3} \right)\right| + C\delta_3\\
    =&: C\left(\min\left\{
        (bp^2)^{7/4}n^{-\frac{1}{2}+\frac{9}{2q}+\frac{2}{\tau-1}}, (bp^2)^{\frac{5}{6}} n^{\frac{2}{3 r}-\frac{1}{3}}(\log n)^{\frac{1}{3}}\right\}+b^{7/8}p^{3/2}n^{-1/4+1/q}+bp^{17/6}
        n^{-1/3}h_n\right) + A_1 + A_2.
    \end{align*}
    }

    By \cref{f_bound}, we obtain that
    \begin{equation*}
    \EE\left[\max_{1\le j\le b}
    \left|\bm{E}_j^\top
    \bm{A}_f(v)
    \widetilde{
    \bm{\Sigma}}^{-1}
    \left(\sum_{i=b+1}^n\bm{x}_i^{(b)}\bm{R}^\top/\sqrt{n}\right)\bm{\alpha}_{f+}(u)\big/w_{n,j}\right|_{\mathcal{L}^2}^2\right] \le Cp^{d+5/2}b^{-\tau+3}(\log b)^{\tau-1}.
\end{equation*}  
By the Markov inequality, we have 
$$A_1 \le \frac{\EE\left[\max_{1\le j\le b}
    \left|\bm{E}_j^\top
    \bm{A}_f(v)
    \widetilde{
    \bm{\Sigma}}^{-1}
    \left(\sum_{i=b+1}^n\bm{x}_i^{(b)}\bm{R}^\top/\sqrt{n}\right)\bm{\alpha}_{f+}(u)\big/w_{n,j}\right|_{\mathcal{L}^2}^2\right]}{\kappa_2^2\delta_3^2}
    \le \frac{Cp^{d+5/2}b^{-\tau+3}(\log b)^{\tau-1}}{\delta_3^2}.$$
    By balancing $A_1$ and $A_2$, we obtain the approximation error as $\bigO\left(p^{d/3+5/6}b^{-\tau/3+1}(\log b)^{(\tau-1)/3}\right)$. Once we choose $b\asymp n^{\frac{1}{\tau-3}}$ and $d<6$, the latter error bound is negligible compared to the error bounds in \eqref{final_error}. Combining the above arguments completes the proof.
         $\hfill \square$
\medskip

\noindent
\textbf{Proof of \cref{consistency_test_far}}. First, we consider the FAR($k$) test as follows:
    $$H_0^{\rm FAR}: 
    \{Y_i\}_{i=1}^n \text{~follows~a~FAR}(k) ~\text{process}\quad v.s.\quad  H_a^{\rm FAR}: \exists j\in \{k+1,\ldots,b\} \text{~such that~} |\psi_j(u,v)|_{\mathcal{L}^2, w_{n,j}}= c>0,$$
    where $c$ is some finite constant.
    Note that $\widehat{T}_1=\sum_{j=k+1}^b \left|\widehat{\psi}_j(u,v)\right|_{\mathcal{L}^2,\widehat{w}_{n,j}}$, then it is obvious to find that $\Pr(\sqrt{n} \widehat{T}_1>
    \widehat{q}_{1-\alpha}^{1,\ast} \mid H_0^{\text{FAR}}) \to \alpha$  according to \cref{empirical_thm}. 
    
    On the other hand, if $H_a^{\text{FAR}}$ holds, we have $\widehat{T}_1=\bigO_\Pr(1)$. Then it suffices to prove $\widehat{q}_{1-\alpha}^{1,\ast}/\sqrt{n} \to 0$ as $n\to \infty$. In fact, $\widehat{q}_{1-\alpha}^{1,\ast}/\sqrt{n}$ has the same rate as $\sum_{j=k+1}^b |\widehat{\psi}_j(u,v)-\psi_j(u,v)|_{\mathcal{L}^2,\widehat{w}_{n,j}}$ on the event $\mathcal{A}_n$. By \cref{esti_consistency} and \cref{f_bound}, we notice that for each $j$
    \begin{align}
        \left| \widehat{\psi}_j(u,v)-\psi_j(u,v)\right|_{\mathcal{L}^2,\widehat{w}_{n,j}} \le C &\sqrt{\int_0^1\int_0^1\left[\widehat{\psi}_j(u,v)-\psi_j(u,v)\right]^2\dee u\dee v}
        \le C \left| \widehat{\bm{\Psi}}_j-\bm{\Psi}_j\right|_{F} \notag \\
        &= \left|\text{diag}(f_1,\ldots,f_p)\left(\widehat{\bm{\Phi}}_j-\bm{\Phi}_j\right)\text{diag}(1/f_1,\ldots,1/f_p)\right|_F \notag
        \\
         & \le Cp^{d+3/2-\theta}\sqrt{\frac{bp^3\log (n)}{n}}. \label{convergence_rate}
    \end{align}
    Consequently, we have $\widehat{q}_{1-\alpha}^{1,\ast}/\sqrt{n}=\bigO_\Pr\left(b\left|\widehat{\psi}_j-\psi_j\right|_{\mathcal{L}^2,\widehat{w}_{n,j}}\right)=o_\Pr(1)$. Therefore, as $n\to \infty$, the probability of $\sqrt{n}\widehat{T}_1>\widehat{q}_{1-\alpha}^{1,\ast}$ converges to 1 under $H_a^{\text{FAR}}$. 
    $\hfill \square$
    \medskip

\noindent
\textbf{Proof of \cref{consistency_test_farma}}. Consider the FARMA model test:
    $$H_0^{\rm FARMA}: 
    \{Y_i\}_{i=1}^n \text{~follows~a~FARMA}(p^\ast,q^\ast) ~\text{process}\quad v.s.\quad  H_a^{\rm FARMA}: \{Y_i\}_{i=1}^n \text{~is~not~a~FARMA}(p^\ast,q^\ast) ~\text{process}$$
     Recall the test statistic  $\widehat{T}_2=\sum_{j=1}^b \left|\widehat{\psi}_j(u,v)-\widetilde{\psi}_j(u,v)\right|_{\mathcal{L}^2,\widehat{w}_{n,j}}$ where $\widetilde{\psi}_j(u,v)$ is the  coefficient function that was converted from the model under $H_0^\text{FARMA}$. Now, it suffices to compare the rates of $\left|\widehat{\psi}_j-\psi_j\right|_{\mathcal{L}^2,\widehat{w}_{n,j}}$ and $\left|\widetilde{\psi}_j-\psi_j\right|_{\mathcal{L}^2,\widehat{w}_{n,j}}$ where $\psi_j$ is the AR coefficient function induced by the true model. If $H_0^{\text{FARMA}}$ holds, notice that $\left|\widetilde{\psi}_j-\psi_j\right|_{\mathcal{L}^2,\widehat{w}_{n,j}}=\bigO_\Pr(n^{-1/2})$ (\cite{lutkepohl2013introduction}) which is faster than the convergence rate of $\widehat{\psi}_j$ in \eqref{convergence_rate}. As a result, we conclude that $\Pr(\sqrt{n} \widehat{T}_2>\widehat{q}_{1-\alpha}^{2,\ast} \mid H_0^{\text{FARMA}}) \to \alpha$  based on \cref{empirical_thm}. 
     
     When $H_a^{\text{FARMA}}$ holds, there exist some $j\in \{1,\ldots,b\}$ and some constant $c>0$ such that $\Pr(|\widetilde{\psi}_j-\psi_j|_{\mathcal{L}^2,\widehat{w}_{n,j}}\ge c)> 0$. In fact, $\widetilde{\psi}_j$ converges to a pseudo-true parameter in probability, which implies $\lim_{n\to \infty}\Pr(|\widetilde{\psi}_j-\psi_j|_{\mathcal{L}^2,\widehat{w}_{n,j}}\ge c_0)= 1$ where $0<c_0\le c$. Then on the event $\mathcal{A}_n$, we can obtain that as $n\to \infty$,
    \begin{align*}
        & \Pr(\sqrt{n} \widehat{T}_2>\widehat{q}_{1-\alpha}^{2,\ast} \mid H_a^{\text{FARMA}}) = 1- \Pr(\sqrt{n} \widehat{T}_2\le \widehat{q}_{1-\alpha}^{2,\ast} \mid H_a^{\text{FARMA}})\\
         \ge & 1- \Pr\left(\sum_{j=1}^b \left| \left|\widehat{\psi}_j-\psi_j\right|_{\mathcal{L}^2,\widehat{w}_{n,j}} -  \left|\widetilde{\psi}_j-\psi_j\right|_{\mathcal{L}^2,\widehat{w}_{n,j}}\right| \le \widehat{q}_{1-\alpha}^{2,\ast}/\sqrt{n}  \mid H_a^{\text{FARMA}}\right)\to 1.
    \end{align*} 
$\hfill \square$
\medskip

\noindent
\textbf{Proof of \cref{consistency_test_sep}}. Here, we consider the separability test below:
    $$H_{0}^{{\rm sep},j}: \psi_j(u,v)=C_jg_j(u)h_j(v) \quad v.s.\quad H_{a}^{{\rm sep},j}: \psi_j(u,v)\neq C_jg_j(u)h_j(v).$$
    Recall the test statistic  $\widehat{T}_{3,j}=
    \left|\widehat{\psi}_j^{\rm sep}
    (u,v)-\widehat{\psi}_j(u,v)\right|_{\mathcal{L}^2,\widehat{w}_{n,j}}$ where $
\widehat{\psi}_j^{\rm sep}(u,v):=
\widehat{C}_j\widehat{g}_j(u)\widehat{h}_j(v)$. Similar to the FARMA test, we first need to compare the rates of $\left|\widehat{\psi}_j-\psi_j\right|_{\mathcal{L}^2,\widehat{w}_{n,j}}$ and $\left|\widehat{\psi}_j^{\rm sep}-\psi_j\right|_{\mathcal{L}^2,\widehat{w}_{n,j}}$ where $\psi_j$ is the AR coefficient function converted by the true model. Note that under $H_0^{{\rm sep}, j}$,
    \begin{align*}
        &\left| \widehat{\psi}_j^{\rm sep}(u,v) - \psi_j(u,v)\right|_{\mathcal{L}^2,\widehat{w}_{n,j}}\\
        =& \left| \widehat{C}_j\widehat{g}_j(u)
\widehat{h}_j(v) - C_jg_j(u)h_j(v)\right|_{\mathcal{L}^2,\widehat{w}_{n,j}}\\
        =&\left| (\widehat{C}_j-C_j)\widehat{g}_j(u)
        \widehat{h}_j(v) + C_j[\widehat{g}_j(u)-g_j(u)]
        \widehat{h}_j(v) + C_j g_j(u)
        [\widehat{h}_j(v)-h_j(v)]\right|_{\mathcal{L}^2,\widehat{w}_{n,j}}\\
        \le & \left| (\widehat{C}_j-C_j)\widehat{g}_j(u)
        \widehat{h}_j(v)\right|_{\mathcal{L}^2,\widehat{w}_{n,j}}+ \left|
        C_j[\widehat{g}_j(u)-g_j(u)]
        \widehat{h}_j(v)\right|_{\mathcal{L}^2,\widehat{w}_{n,j}}  + \left|C_j g_j(u)
        [\widehat{h}_j(v)-h_j(v)]\right|_{\mathcal{L}^2,\widehat{w}_{n,j}}\\
        = & I+ II +III
    \end{align*}
Denote $\widehat{\Psi}_{j,kl}$ and $\Psi_{j,kl}$ as the $(k,l)$th components of $\widehat{\bm{\Psi}}_{j}$ and $\bm{\Psi}_j$, respectively. Similarly, let $\widehat{\Phi}_{j,kl}$ and $\Phi_{j,kl}$ be the $(k,l)$th components of $\widehat{\bm{\Phi}}_{j}$ and $\bm{\Phi}_j$, respectively. For the estimator $\widehat{C}_j$, we have
\begin{align*}
    [\widehat{C}_j - C_j]^2 &= 
    \left[\int_0^1 \int_0^1 \left(\widehat{\psi}_j(u,v)-\psi_j(u,v)\right)\dee u \dee v\right]^2 =(\widehat{\Psi}_{j,11}-\Psi_{j,11})^2
    \\
    & = (\widehat{\Phi}_{j,11}-\Phi_{j,11})^2=\left|\widehat{\bm{\Phi}}_j-\bm{\Phi}_j
\right|
_{\max}^2
\le \left\Vert \widehat{\bm{\Phi}}_j-\bm{\Phi}_j
\right\Vert^2 \le \frac{Cbp^2\log n}{n}.
    \end{align*}

For the estimated separate functions $\widehat{g}_j(u)$ and $\widehat{h}_j(v)$, note that
    $$\int_0^1 \widehat{g}_j^2(u)\dee u \le C\left|\widehat{\bm{\Psi}}_j\right|_F^2, \quad \int_0^1 \widehat{h}_j^2(v)\dee v\le C\left| \widehat{\bm{\Psi}}_j\right|_F^2.$$ Then we have
    $$I \le C\sqrt{\frac{bp^2\log n}{n}} \cdot \max\left\{\left|\bm{\Psi}_j\right|_F, \left| \widehat{\bm{\Psi}}_j-\bm{\Psi}_j
\right|_{F}\right\}.$$
Similarly, one can obtain
    $$II =III \le C\sqrt{\frac{bp^2\log n}{n}} \cdot \max\left\{\left| \bm{\Psi}_j\right|_F, \left| \widehat{\bm{\Psi}}_j-\bm{\Psi}_j
\right|_{F}\right\}.$$
Observe that the summability of the functional AR kernels $\sum_{j=1}^\infty |\psi_j(u,v)|_{\mathcal{L}^2,w_{n,j}}<\infty$ immediately yields $\sum_{j=1}^\infty |\bm{\Psi}_j|_F <\infty$. Then comparing to \eqref{convergence_rate}, we obtain that
$$\left|\widehat{\psi}_j^{\rm sep}-\psi_j\right|_{\mathcal{L}^2,\widehat{w}_{n,j}} =o_\Pr\left(\left|\widehat{\psi}_j-\psi_j\right|_{\mathcal{L}^2,\widehat{w}_{n,j}}\right).$$ 
As a result, one can conclude that $\Pr(\sqrt{n} \widehat{T}_{3,j}>\widehat{q}_{1-\alpha}^{3,j,\ast} \mid H_0^{\text{sep},j}) \to \alpha$ by \cref{empirical_thm}. On the other hand, when $H_a^{\text{sep},j}$ holds, there exists some constant $c>0$ such that $\Pr(|\widehat{\psi}_j^{\rm sep}-\psi_j|_{\mathcal{L}^2,\widehat{w}_{n,j}}\ge c)> 0$.
Similarly on the event $\mathcal{A}_n$ and with $n\to \infty$, we have
    \begin{align*}
        & \Pr(\sqrt{n} \widehat{T}_{3,j}>
        \widehat{q}_{1-\alpha}^{3,j,\ast} \mid H_a^{\text{sep},j}) = 1- \Pr(\sqrt{n} \widehat{T}_{3,j}\le \widehat{q}_{1-\alpha}^{3,j,\ast} \mid H_a^{\text{sep},j})\\
         \ge & 1- \Pr\left( \left| \left|\widehat{\psi}_j^{\rm sep}-\psi_j\right|_{\mathcal{L}^2,\widehat{w}_{n,j}} -  \left|\widehat{\psi}_j-\psi_j\right|_{\mathcal{L}^2,\widehat{w}_{n,j}}\right| \le \widehat{q}_{1-\alpha}^{3,j,\ast}/\sqrt{n}  \mid H_a^{\text{sep},j}\right)\to 1.
    \end{align*}
    $\hfill \square$
 
\section{Commonly used basis functions}\label{app_basis}
\def\theequation{B.\arabic{equation}}	
\setcounter{equation}{0}

Here, we list two commonly used orthogonal basis functions which are employed in our paper.

\begin{example}[Fourier bases]
    For $t\in[0,1]$,
	$$\alpha_k(t)=\begin{cases}
		1, &k=0,\\
		\sqrt{2}\cos(k\pi t), &k\ge 1.
        \end{cases}$$
\end{example}

	\begin{example}[Normalized Legendre polynomials \cite{Bell04}]\label{example1}
		The Legendre polynomial of degree $n$ can be obtained using Rodrigue's formula 
		$$P_n(t)=\frac{1}{2^n n!}\frac{\dee^n}{\dee t^n}(t^2-1)^n,~-1\le t \le 1.$$ For $t\in [0,1]$, the normalized Legendre polynomials turn out to be
		\begin{align*}
		\alpha_k(t)=\begin{cases}
		1,&k=0,\\
		\sqrt{2k+1}P_k(2t-1),&k>0.
		\end{cases}
		\end{align*}
	\end{example}

\section{Examples of Functional Time Series Models}\label{app_examples}
\def\theequation{D.\arabic{equation}}	
\setcounter{equation}{0}

	Here, we will provide two models to  illustrate how to calculate the physical dependence measure $\delta_x(l,q)$ defined in \cref{ass_dep} of the main paper.
	\subsection{FMA$(\infty)$ model}
	\begin{example}[Functional MA$(\infty)$ model] \label{exam1}
		Let $\eta_i(v)$ be i.i.d. centered and continuous Gaussian random functions with $\sup_{v\in[0,1]}\EE\Vert\eta_i(v)\Vert_2<\infty$.  For each integer $m\ge 0$, let $\beta_m(u,v)=a_m\beta_m^\ast(u,v)$ where $\{a_m\}$ is a positive deterministic sequence with $\sum_{m=0}^\infty a_m<\infty$ and $\beta_m^\ast(\cdot,\cdot)$ is a $\mathcal{C}([0,1]^2)$ deterministic function such that $|\beta^*_m(u,v)|\le C$ for all $u,v$ and $m$ and some finite constant $C$. Consider the functional MA$(\infty)$ model,
		\begin{equation}\label{exp1}
		Y_i(u)=\sum_{m=0}^\infty \int_0^1 \beta_m(u,v)\eta_{i-m}(v)\dee v.
		\end{equation}
		Let  $C(u,v):=\EE(\eta_i(u)\eta_i(v))$ be the covariance function of $\eta_i(u)$. Let $v_1(u), v_2(u),\cdots$ and the corresponding $\lambda_1\ge \lambda_2\ge \cdots$ be the eigenfunctions and eigenvalues of $C(u,v)$. By the basis expansion method, we can write $\beta_m^\ast(u,v)=
		\sum_{j=1}^\infty
        \sum_{k=1}^\infty b_{j,k}^m\alpha_j(u)v_k(v)$ and $\eta_i(v)=\sum_{k=1}^\infty\eta_{i,k}v_k(v)$, 
        then we have $$x_{i,j}f_j=\sum_{m=0}^\infty a_m\left(\sum_{k=1}^\infty
		b_{j,k}^m\eta_{i-m,k}\right).$$ Here $f^2_j=\sum_{m=0}^\infty a_m^2\theta_{jm}$ is the variance of the random coefficient $\widetilde{x}_{i,j}$, where $\theta_{jm}:=\sum_{k=1}^\infty
		(b_{j,k}^m)^2 \lambda_k$. Let $\eta_i=(\eta_{i,j})_{j\ge 1}$. Similar to the discussion of \citep[Example 1]{CZ2022}, we obtain that $\delta_x(l,q)=O(a_l)$ for any given $q\ge 2$ if $\sum_{k=0}^\infty a_k^2\theta_{jk}\ge C\theta_{jm}$ for sufficiently large $j$ and $m$. Consequently, the physical dependence measure for the functional MA($\infty$) models $\delta_x(l,q)\le C(l+1)^{-\tau}$ when we choose $a_l=(l+1)^{-\tau}$.
	\end{example}
	
	\subsection{FAR(1) model}
	In this paper, we focus on the discussion when the physical dependence measure is of polynomial decay, that is, \cref{ass_dep} of the main article holds true. However, all our results can be extended to the case when it is of exponential decay
	\begin{equation}\label{expon}
	\delta_x(l,q)\le C\rho^l, 0<\rho<1.
	\end{equation}
	Next, we will demonstrate an example of FAR(1) model to verify this exponential decay \eqref{expon} of the dependence measures. 
	\begin{example}[Functional AR(1) model] \label{exam2}
		Let $\epsilon_i(u)$ be i.i.d. centered and continuous Gaussian random functions with $\sup_{t\in[0,1]}\EE\Vert\epsilon_i(u)\Vert_2<\infty$. Consider the following model \begin{equation}\label{model}
		Y_i(u)=\int_0^1 B(u,v)Y_{i-1}(v)\dee v+\epsilon_i(u),
		\end{equation}
		where $B(u,v): [0,1]^2\to \mathbb{R}$ is a continuous, symmetric function satisfying $\int_0^1\int_0^1B^2(u,v)\dee u\dee v<\infty$ and $\int_0^1\int_0^1B(u,v)x(u)x(v)\dee u\dee v\ge 0$ with any random function $x(u)\in \mathcal{L}^2([0,1])$. Thus $B(u,v)$ is called a symmetric and positive-definite kernel on $[0,1]^2$. Define $C(u,v):=\EE(\epsilon_i(u)\epsilon_i(v))$ as the covariance function of $\epsilon_i(u)$. Let $v_1(u), v_2(u),\cdots$ and the corresponding $\lambda_1\ge \lambda_2\ge \cdots$ be the eigenfunctions and eigenvalues of $C(u,v)$. By the basis expansion method, we can write
		$Y_i(u)=\sum_{k=1}^\infty \widetilde{x}_{i,k}v_k(u),~
		B(u,v)=\sum_{k=1}^\infty b_kv_k(u)v_k(v)$ and 
		$\epsilon_i(u)=\sum_{k=1}^\infty \epsilon_{i,k}v_k(u)$. Consequently we have
        \begin{equation} \label{basis_expan}\widetilde{x}_{i,k}=b_k\widetilde{x}_{i-1,k}+\epsilon_{i,k}=\sum_{m=0}^\infty b_k^m\epsilon_{i-m,k}.
        \end{equation}
		Here denote $f^2_k={\rm Var}(\widetilde{x}_{i,k})=\left(\sum_{m=0}^\infty b_k^{2m}\right)\lambda_k$. If we let $\rho:=\sup_k|b_k|\in (0,1)$, then $f_k^2\ge \lambda_k$. Further observe that $\epsilon_{i,k}$ are independent Gaussian random variables across $k$, hence $(\epsilon_{i-l,k}-\epsilon_{i-l,k}^\ast)$ is normally distributed with mean 0 and variance  $2\lambda_k$, then we have 
		\begin{align*}
		\Vert x_{i,k}-x_{i,k}^\ast\Vert_q&=\Vert b_k^l(\epsilon_{i-l,k}-\epsilon_{i-l,k}^\ast)/f_k\Vert_q\\
		&\le C_q\sqrt{2\lambda_k}b_k^l/
        \sqrt{\lambda}_k\le C\rho^l.
		\end{align*}
	\end{example}

\section{Additional technical results} \label{secD}
\def\theequation{E.\arabic{equation}}
\setcounter{equation}{0}

In this section, we provide additional technical results that further support the theoretical results established in the main article.

\subsection{VAR approximation and related}
Throughout our methodology, the functional AR approximation theory in \cref{approx_ar} is motivated by a VAR approximation result on the scaled multivariate time series $\{\bm{x}_i\}$. Recall the best linear predictions $\widehat{\bm{x}}_i=
\sum_{j=1}^{i-1}\bm{\Phi}_j\bm{x}_{i-j}, i=2,\ldots,n$. Let $\bm{x}_{i-1}^{(i)}=(\bm{x}_{i-1}^\top,\ldots,\bm{x}_1^\top)^\top\in \mathbb{R}^{(i-1)p}$ be a block vector and $\Gamma^{(i)}={\rm Cov}(\bm{x}_{i-1}^{(i)},\bm{x}_{i-1}^{(i)})\in \mathbb{R}^{(i-1)p\times (i-1)p}$ be the covariance matrix of $\bm{x}_{i-1}^{(i)}$. Then denote $\bm{\Phi}^{(i)}=(\bm{\Phi}_1^\top,\ldots,
\bm{\Phi}_{i-1}^\top)^\top \in \mathbb{R}^{(i-1)p\times p}$ and $\bm{\gamma}^{(i)}={\rm Cov}(\bm{x}_{i-1}^{(i)},\bm{x}_i)\in \mathbb{R}
^{(i-1)p\times p}$, we have the Yule-Walker equation
$$\bm{\Phi}^{(i)}=[\bm{\Gamma}^{(i)}]^{-1}
\bm{\gamma}^{(i)}.$$
Now, we state the VAR approximation theory in the proposition below.
\begin{proposition}\label{var_approx}
    Under Assumptions \ref{ass_conti_u}, \ref{weak_depen_components} and \ref{updc}, a rich class of short memory stationary multivariate time series can be well approximated by a stationary white-noise-driven multivariate AR process of slowly diverging order $b$ as
    \begin{equation}\label{approx_var}
        \bm{x}_i=\sum_{j=1}^{\min\{i-1,b\}} \bm{\Phi}_j\bm{x}_{i-j}+\bm{\epsilon}_i+\bigO_\Pr\left(p^{1/2}b^{-\tau+2}(\log b)^{\tau-1}\right).
    \end{equation}
\end{proposition}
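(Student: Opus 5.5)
The approach is to read \eqref{approx_var} as a tail bound for the finite-past Yule--Walker predictor. By construction $\bm{x}_i=\sum_{j=1}^{i-1}\bm{\Phi}_j\bm{x}_{i-j}+\bm{\epsilon}_i$ exactly, where $\bm{\Phi}^{(i)}=[\bm{\Gamma}^{(i)}]^{-1}\bm{\gamma}^{(i)}$ and $\bm{\epsilon}_i$ is the prediction error. For $i-1\le b$ there is nothing to prove. For $i-1>b$ it suffices to show
\[
\Big\Vert \Big|\sum_{j=b+1}^{i-1}\bm{\Phi}_j\bm{x}_{i-j}\Big|\Big\Vert_2\le Cp^{1/2}b^{-\tau+2}(\log b)^{\tau-1},
\]
and then apply Markov's inequality. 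The second moment of this remainder equals $\sum_{j,k=b+1}^{i-1}{\rm Tr}(\bm{\Phi}_k\bm{\Gamma}(j-k)\bm{\Phi}_j^\top)$. This is at most $\lambda_{\max}(\bm{\Gamma}^{(i)})\sum_{j>b}|\bm{\Phi}_j|_F^2$, or more crudely $\big(\sum_j\Vert\bm{\Gamma}(j)\Vert\big)\,p\,(\sum_{j>b}\Vert\bm{\Phi}_j\Vert)^2$. By \cref{weak_depen_components} with $\tau>1$, the block Toeplitz matrix $\bm{\Gamma}^{(i)}$ has operator norm bounded by $C\sum_{j}(|j|+1)^{-\tau}<\infty$, uniformly in $i$ and $p$. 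The factor $p^{1/2}$ comes from converting operator norms to Frobenius norms, ${\rm Tr}$ over $p$ coordinates. The problem therefore reduces to the decay bound $\Vert\bm{\Phi}_j\Vert\le Cj^{-\tau+1}(\log j)^{\tau-1}$ uniformly in $i$ (the content of \cref{phi_rate}), since $\sum_{j>b}j^{-\tau+1}(\log j)^{\tau-1}\lesssim b^{-\tau+2}(\log b)^{\tau-1}$.

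The decay bound is the main obstacle. I would get it from the decay of the inverse of a banded-dominated block Toeplitz matrix. By \cref{updc}, $\lambda_{\min}(\bm{\Gamma}^{(i)})\ge\kappa_1$, and $\lambda_{\max}$ is bounded, so $\bm{\Gamma}^{(i)}$ is uniformly well conditioned. Its off-diagonal blocks decay polynomially at rate $(|j-k|+1)^{-\tau}$. A Jaffard-type theorem for matrices in a polynomially weighted Banach algebra (or the Demko--Moss--Smith argument for operator-valued blocks) would give $\Vert[\bm{\Gamma}^{(i)}]^{-1}_{jk}\Vert\le C(|j-k|+1)^{-\tau}$ uniformly in dimension. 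To stay within elementary tools, I would instead use the Demko-type approximation argument. Approximate $\bm{\Gamma}^{(i)}$ by its $L$-banded truncation, whose inverse decays geometrically with rate governed by the condition number. Control the truncation error by $\sum_{|j|>L}(|j|+1)^{-\tau}\asymp L^{-\tau+1}$. Then optimize $L\asymp |j-k|/\log|j-k|$. This optimization is exactly where the logarithmic factor $(\log)^{\tau-1}$ and the loss of one power appear.

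Next, combine this with $\bm{\Phi}_j=\sum_k[\bm{\Gamma}^{(i)}]^{-1}_{jk}\bm{\Gamma}(k)$, where $\Vert\bm{\Gamma}(k)\Vert\le C(k+1)^{-\tau}$. A convolution of two polynomially decaying sequences retains the slower rate, which yields $\Vert\bm{\Phi}_j\Vert\le Cj^{-\tau+1}(\log j)^{\tau-1}$ uniformly in $i,p$. The points needing care are the uniformity in $p$, since all constants must depend only on $\kappa_1,\tau,C$ through operator norms of blocks, and that the moment part of \cref{updc} is used only to make second moments finite. Finally, I would note that $\{\bm{\epsilon}_i\}$ is white noise by the orthogonality property of best linear prediction.
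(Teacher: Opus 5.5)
Your proposal is correct and follows the paper's route. The paper likewise isolates the tail $\sum_{j=b+1}^{i-1}\bm{\Phi}_j\bm{x}_{i-j}$, bounds its second moment through $\sum_{j,k}{\rm Tr}(\bm{\Phi}_k\bm{\Gamma}(j-k)\bm{\Phi}_j^\top)$, and plugs in the decay bound of \cref{phi_rate}, which it takes from \cite{CZ2023} without proof; your banded-truncation/Demko-type argument with $L\asymp |j-k|/\log|j-k|$ is a valid way to supply that bound and reproduces its $(j/\log j)^{-\tau+1}$ rate. The one caveat, which the paper shares, is that summing $j^{-\tau+1}(\log j)^{\tau-1}$ over $j>b$, and your convolution step that keeps the slower rate, need $\tau>2$ rather than the $\tau>1$ of \cref{weak_depen_components}.
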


To validate the above proposition, we introduce the following lemma that bounds the decay rate of the coefficient matrices $\bm{\Phi}_j, j\ge 1$.

\begin{lemma}\label{phi_rate}
Suppose Assumptions \ref{ass_conti_u}, \ref{weak_depen_components} and \ref{updc} hold, then for the VAR process $\bm{x}_i=\sum_{j=1}^{i-1}
\bm{\Phi}_j\bm{x}_{i-j}+\bm{\epsilon}_i$, there exists some constant $C>0$ such that $$\Vert \bm{\Phi}_j\Vert \le C\left(\frac{j}{\log j +1}\right)^{-\tau+1}, \quad j\ge 1.$$
\end{lemma}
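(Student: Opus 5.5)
The plan is to treat the best linear predictor coefficients $\bm{\Phi}_j$ as the blocks of the solution of the Yule--Walker system $\bm{\Phi}^{(i)}=[\bm{\Gamma}^{(i)}]^{-1}\bm{\gamma}^{(i)}$. The decay of $\Vert\bm{\Phi}_j\Vert$ then follows from two facts: the off-diagonal decay of the block Toeplitz matrix $\bm{\Gamma}^{(i)}$, and the inheritance of such decay by its inverse. By Assumption \ref{weak_depen_components}, the $(s,t)$ block of $\bm{\Gamma}^{(i)}$ is $\bm{\Gamma}(s-t)$ with $\Vert\bm{\Gamma}(s-t)\Vert\le C(|s-t|+1)^{-\tau}$. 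Likewise, the $s$th block of $\bm{\gamma}^{(i)}$ is $\bm{\Gamma}(s)$ up to transposition, with norm at most $C(s+1)^{-\tau}$.

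For the spectrum, Assumption \ref{updc} gives $\lambda_{\min}(\bm{\Gamma}^{(i)})\ge\kappa_1$ uniformly in $i$, since $\bm{\Gamma}^{(i)}$ is a principal submatrix of $\cov(\bm{{\rm x}})$. The upper bound $\lambda_{\max}(\bm{\Gamma}^{(i)})\le C\sum_{h}(|h|+1)^{-\tau}<\infty$ follows from $\tau>1$ and a block Schur test. Hence the condition number is bounded uniformly in $i$ and $p$.

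The key step is a Jaffard-type, or Demko--Moss--Smith-type, inverse-closedness result for block matrices with polynomial off-diagonal decay. I would write $\bm{\Gamma}^{(i)}=c(\bm{I}-\bm{R})$ with $c=(\lambda_{\max}+\lambda_{\min})/2$, so that $\Vert\bm{R}\Vert\le\rho<1$, and expand the inverse as the Neumann series $[\bm{\Gamma}^{(i)}]^{-1}=c^{-1}\sum_{k\ge0}\bm{R}^k$.

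The entries of $\bm{R}^k$ are then controlled in two complementary ways. For $k\le K$, repeated block convolution of the decay profiles gives block $(s,t)$ of $\bm{R}^k$ a norm of at most $(C')^k k^{\tau}(|s-t|+1)^{-\tau}$, using the submultiplicativity of the weight $(1+|h|)^{\tau}$ and the fact that the number of terms grows polynomially in $k$. For $k>K$, I simply use $\Vert\bm{R}^k\Vert\le\rho^k$. Choosing the cutoff $K\asymp\log(|s-t|+1)$ balances the two parts. This gives
\[
\Vert([\bm{\Gamma}^{(i)}]^{-1})_{st}\Vert\le C\Big(\frac{|s-t|+1}{\log(|s-t|+1)+1}\Big)^{-\tau},
\]
and this balancing is exactly where the logarithmic factor in the lemma comes from.

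Finally, $\bm{\Phi}_j=\sum_{s}([\bm{\Gamma}^{(i)}]^{-1})_{js}\bm{\gamma}^{(i)}_s$. Bounding this by the convolution $\sum_s ((|j-s|+1)/\log)^{-\tau}(s+1)^{-\tau}$, I would split the sum at $s\le j/2$ and $s>j/2$. The first half contributes $C(j/\log j)^{-\tau}$. The second half contributes $\sum_{s>j/2}(s+1)^{-\tau}\le Cj^{-\tau+1}$, which is the dominant term, so $\Vert\bm{\Phi}_j\Vert\le C(j/(\log j+1))^{-\tau+1}$ uniformly in $i$ and $p$.

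The main obstacle is the Neumann-series step. The constants must be kept independent of both the matrix size $(i-1)p$ and the block dimension $p$, which requires working with operator norms of blocks rather than with entries. The convolution bound for powers must also be handled carefully so that the growth in $k$ is only geometric, allowing $K\asymp\log$ to suffice. If this direct route proves messy, I would instead invoke Jaffard's theorem for the polynomially weighted Banach algebra. That route loses the explicit log and yields a slightly weaker exponent, which would still suffice for the stated bound.
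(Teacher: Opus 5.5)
Your reduction to the Yule--Walker system, the spectral bounds $\kappa_1\le\lambda_{\min}(\bm{\Gamma}^{(i)})\le\lambda_{\max}(\bm{\Gamma}^{(i)})\le\Lambda$, and the final splitting of $\sum_s([\bm{\Gamma}^{(i)}]^{-1})_{js}\bm{\gamma}^{(i)}_s$ are fine. The gap is the Neumann-series step: it does not give the inverse bound you claim. Put $d=|s-t|+1$. Your two estimates are $\Vert(\bm{R}^k)_{st}\Vert\le (C')^k k^{\tau}d^{-\tau}$ and $\Vert(\bm{R}^k)_{st}\Vert\le\rho^k$. The $k$-fold convolution of the block-norm profile $b(h)=\sup_s\Vert\bm{R}_{s,s+h}\Vert$ has total mass $(\sum_h b(h))^k$, so any such bound forces $C'\ge\sum_h b(h)$. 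That quantity is unrelated to $\rho$. It is at most $1$ only under a block diagonal-dominance condition, roughly $\sum_{h\neq0}\Vert\bm{\Gamma}(h)\Vert\le\lambda_{\min}(\bm{\Gamma}(0))$, which Assumption \ref{updc} does not give.

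Take $K=\beta\log d$. The head is of order $(C')^{K}K^{\tau}d^{-\tau}=d^{-\tau+\beta\log C'}(\log d)^{\tau}$ and the tail is $\rho^{K}=d^{-\beta\log(1/\rho)}$. Optimizing over $\beta$ gives at best $d^{-\tau\vartheta}$ up to logarithms, with $\vartheta=\log(1/\rho)/\log(C'/\rho)<1$ as soon as $C'>1$. So geometric growth in $k$ costs a power of $d$, not a logarithm. A purely logarithmic loss from $K\asymp\log d$ would need growth polynomial in $k$; your remark that geometric growth ``allows $K\asymp\log$ to suffice'' is backwards.

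Fed into your last step, this gives only $\Vert\bm{\Phi}_j\Vert\le C(j^{-\tau\vartheta}+j^{-\tau+1})$. Here $\vartheta$ depends on the condition number and may be tiny, so this is not the lemma. Weighted bounds $\sup_{s,t}d^{\tau}\Vert(\bm{R}^k)_{st}\Vert\le C\,{\rm poly}(k)\,\bar\rho^{\,k}$ with $\bar\rho<1$ are exactly the nontrivial content of Jaffard's theorem (Brandenburg trick, differential-norm estimates). They do not come out of naive convolution. Your fallback would need a quantitative version with operator-valued entries, with constants depending only on $C,\tau,\kappa_1,\Lambda$ and uniform in the section size $(i-1)p$ and in $p=p_n$. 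The qualitative theorem for a single infinite matrix is not enough, and contrary to what you say it does not lose the exponent.

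The paper gives no argument here and defers to Proposition 2 of Cui and Zhou (2025). A route that produces exactly the stated shape, and avoids weighted algebras, is banding plus Demko--Moss--Smith.

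1. Let $\bm{\Gamma}_B$ and $\bm{\gamma}_B$ keep only the blocks with $|s-t|\le B$, respectively $s\le B$. The block Schur test gives $\Vert\bm{\Gamma}^{(i)}-\bm{\Gamma}_B\Vert+\Vert\bm{\gamma}^{(i)}-\bm{\gamma}_B\Vert\le CB^{-\tau+1}$.
2. Hence for $B\ge B_0$ the spectrum of $\bm{\Gamma}_B$ lies in $[\kappa_1/2,2\Lambda]$, and $\Vert[\bm{\Gamma}^{(i)}]^{-1}\bm{\gamma}^{(i)}-\bm{\Gamma}_B^{-1}\bm{\gamma}_B\Vert\le CB^{-\tau+1}$.
3. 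Approximate $x^{-1}$ on $[\kappa_1/2,2\Lambda]$ by Chebyshev polynomials of degree $k$ with error $C\varrho^{k}$, where $\varrho<1$ depends only on the condition number. A degree-$k$ polynomial in $\bm{\Gamma}_B$ has block bandwidth $kB$, so $\Vert(\bm{\Gamma}_B^{-1})_{st}\Vert\le C\varrho^{|s-t|/B}$, with no dependence on $p$ or $i$.
4. The $j$th block of $\bm{\Gamma}_B^{-1}\bm{\gamma}_B$ then has norm at most $C\varrho^{j/B}$, so $\Vert\bm{\Phi}_j\Vert\le C(\varrho^{j/B}+B^{-\tau+1})$.
5. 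Choose $B=\lfloor cj/(\log j+1)\rfloor$ with $c\le\log(1/\varrho)/(\tau-1)$ to get the lemma. Small $j$ is trivial because $\Vert[\bm{\Gamma}^{(i)}]^{-1}\bm{\gamma}^{(i)}\Vert\le C$.

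In this argument the exponent $-\tau+1$ comes from the operator-norm banding error and the logarithm from the bandwidth choice. It does not come from the final convolution, as in your write-up.
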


This lemma is an application of \citep[Proposition 2]{CZ2023} to stationary multivariate time series. A careful check of the proof of the latter proposition reveals that the arguments remain valid for stationary processes in the present setting. For brevity, the proof of \cref{phi_rate} is omitted and the proof of \cref{var_approx} is straightforward from \cref{phi_rate}.
\bigskip

\noindent
On the other hand, it is clear that the random vector 
$\bm{\epsilon}_i=(\epsilon_{i,1},...,
\epsilon_{i,p})^\top$ is the best linear prediction error in the VAR model. Denote $\epsilon_{i,k}=\tilde{G}_k(\mathcal{F}_i)$ where $\tilde{G}_k$ is some measurable function for $k=1,...,p$, similar to $G_k$ defined in \cref{def1}. Then we have the following lemma on its physical dependence measure.
	
\begin{lemma}\label{eps_depend}
Suppose Assumptions \ref{ass_conti_u}, \ref{weak_depen_components}--\ref{ass_dep} hold true, denote the physical dependence measure of 
$\{\epsilon_{i,k}\}_{i\in\mathbb{Z}}$ as $\delta_\epsilon(l,q):=\max_{1\le k\le p}\Vert \tilde{G}_k(\mathcal{F}_i)-\tilde{G}_k(\mathcal{F}_{i,l})\Vert_q$ and we assume that $\max_{i,k}\Vert
\epsilon_{i,k}\Vert_q<\infty,~q\ge 2$. Then, there exists some constant $C>0$ such that $\delta_\epsilon(l,q)\le C\sqrt{p}l^{-\tau+2}(\log l+1)^{\tau-1},~l\ge 1$.
	\end{lemma}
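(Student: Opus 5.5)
We represent $\epsilon_{i,k}$ through the finite-order prediction error and read off its physical dependence from that of the $x$'s. By construction,
\[
\bm{\epsilon}_i=\bm{x}_i-\sum_{j=1}^{i-1}\bm{\Phi}_j\bm{x}_{i-j},
\]
so $\tilde G_k(\mathcal{F}_i)=G_k(\mathcal{F}_i)-\sum_{j\ge1}\sum_{l'=1}^p(\bm{\Phi}_j)_{kl'}G_{l'}(\mathcal{F}_{i-j})$.

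Replacing $\eta_{i-l}$ by $\eta_{i-l}^\ast$ only affects the terms with $j\le l$, namely $\bm{x}_{i-j}$ through $\mathcal F_{i-j}$. The term $\bm{x}_{i-j}$ is perturbed at lag $l-j$. By Minkowski's inequality,
\[
\delta_\epsilon(l,q)\le \delta_x(l,q)+\max_k\sum_{j=1}^{l}\Big\Vert \sum_{l'}(\bm{\Phi}_j)_{kl'}\big(x_{i-j,l'}-x^\ast_{i-j,l'}\big)\Big\Vert_q .
\]
For the inner sum, take the $k$th row $\bm{\phi}_{j,k}^\top$ of $\bm{\Phi}_j$. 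We have $|\bm{\phi}_{j,k}|\le\Vert\bm{\Phi}_j\Vert$, and therefore $|\bm{\phi}_{j,k}|_1\le \sqrt p\,\Vert\bm{\Phi}_j\Vert$. Applying Minkowski's inequality componentwise with Definition \ref{def1} gives a bound of $\sqrt p\,\Vert\bm{\Phi}_j\Vert\,\delta_x(l-j,q)$. Hence
\[
\delta_\epsilon(l,q)\le C(l+1)^{-\tau}+C\sqrt p\sum_{j=1}^{l}\Vert\bm{\Phi}_j\Vert\,(l-j+1)^{-\tau}.
\]

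Next we insert Lemma \ref{phi_rate}, $\Vert\bm{\Phi}_j\Vert\le C(j/(\log j+1))^{-\tau+1}$, and Assumption \ref{ass_dep}, $\delta_x(m,q)\le C(m+1)^{-\tau}$. The resulting convolution is split at $j=l/2$.

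For $j\ge l/2$, the factor $\Vert\bm{\Phi}_j\Vert\le C l^{-\tau+1}(\log l+1)^{\tau-1}$. The remaining sum $\sum_m (m+1)^{-\tau}$ is finite because $\tau>1$.

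For $j<l/2$, we have $(l-j+1)^{-\tau}\le Cl^{-\tau}$. The sum $\sum_{j<l/2}\Vert\bm{\Phi}_j\Vert$ is at most $C\sum_j j^{-\tau+1}(\log j+1)^{\tau-1}$, which is finite since $\tau>5$. Should the summability be weaker, one can crudely bound it by $l^{2}$-order growth. In either case this piece is at most $Cl^{-\tau+2}(\log l+1)^{\tau-1}$.

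Combining both pieces, the $\delta_x(l,q)$ term is dominated, and $\delta_\epsilon(l,q)\le C\sqrt p\, l^{-\tau+2}(\log l+1)^{\tau-1}$. The exponent $-\tau+2$ is slack relative to what the computation actually delivers, so the stated bound holds.

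The main obstacle is legitimacy rather than the arithmetic. The representation of $\epsilon_{i,k}$ as an infinite linear combination must converge in $L_q$, not only in $L_2$. I would justify this by noting that $\sum_j\sqrt p\Vert\bm\Phi_j\Vert<\infty$ together with $\sup_k\Vert x_{i,k}\Vert_q<\infty$ (Assumption \ref{updc}) gives absolute $L_q$ convergence. I would also check that $\bm{\Phi}_j$ does not depend on $i$ when $i\to\infty$ (the stationary Wold-type predictor), so that $\tilde G_k$ is well defined as a function of $\mathcal F_i$.
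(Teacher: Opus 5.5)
Your proposal is correct and follows the paper's proof essentially step for step. Both arguments use Minkowski's inequality, the row bound $\sum_m|\bm{\Phi}_h(k,m)|\le\sqrt{p}\,\Vert\bm{\Phi}_h\Vert$, and the observation that only the lags $h\le l$ are perturbed, and then insert Lemma \ref{phi_rate} and Assumption \ref{ass_dep}. Your explicit split of the convolution at $j=l/2$ fills in a step the paper leaves implicit, and it in fact gives the sharper rate $l^{-\tau+1}(\log l+1)^{\tau-1}$; the paper avoids your $L_q$-convergence concern by working directly with the finite sum $\sum_{h=1}^{i-1}$.
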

	\noindent
	\textit{Proof}:
	Rewrite the VAR model in (12) of the main article as $\bm{x}_i=\sum_{h=1}^{i-1}\bm{\Phi}_{h}\bm{x}_{i-h}+\bm{\epsilon}_i$ and denote $\bm{\Phi}_{h}(k,m)$ as the $(k,m)$th element of the matrix $\bm{\Phi}_{h}$. Moreover, denote $x_{i,k}^\ast=
    G_k(\mathcal{F}_{i,l})$ and $\epsilon_{i,k}^\ast=\tilde{G}_k(\mathcal{F}_{i,l})$ where $\mathcal{F}_{i,l}=(\mathcal{F}_{i-l-1}, \eta_{i-l}^\ast, \ldots, \eta_i)$. On this basis, we can derive that for every $k=1,\ldots,p$,
	\begin{align*}
	& \Vert \epsilon_{i,k}-\epsilon_{i,k}^\ast \Vert_q\\
    =&\left\Vert
	x_{i,k}-x_{i,k}^\ast+\sum_{h=1}^{i-1}\sum_{m=1}^p
	\bm{\Phi}_{h}(k,m)(x_{i-h,m}^\ast-x_{i-h,m})\right\Vert_q\\
	\le &\Vert x_{i,k}-x_{i,k}^\ast\Vert_q+
    \left\Vert\sum_{h=1}^{i-1}
	\sum_{m=1}^p\bm{\Phi}_{h}(k,m)(x_{i-h,m}-x_{i-h,m}^\ast)\right\Vert_q\\
\le & \Vert x_{i,k}-x_{i,k}^\ast\Vert_q +
\sum_{h=1}^{i-1}\Big(\sum_{m=1}^p|\bm{\Phi}_{h}(k,m)|\Big)\max_{1\le m\le p}
\left\Vert x_{i-h,m}-x_{i-h,m}^\ast\right\Vert_q\\
	\le &C(l+1)^{-\tau}+\sum_{h=1}^l
    \sqrt{p}\left\Vert \bm{\Phi}_h\right\Vert
    \left\Vert x_{i-h,m}-x_{i-h,m}^\ast\right\Vert_q\\
    \le &C\sqrt{p}l^{-\tau+2}(\log l+1)^{\tau-1},
	\end{align*} 
    where the last inequality follows by \cref{ass_dep} and Lemma 1. As a result, $\delta_\epsilon(l,q)\le C\sqrt{p}l^{-\tau+2}(\log l+1)^{\tau-1}$.
    $\hfill \square$
\bigskip

Next, we aim to demonstrate the consistency of the coefficient matrix estimators $\{\widehat{\bm{\Phi}}_j\}_{j=1}^b$ in the VAR approximation.

\begin{proposition}\label{esti_consistency}
Suppose Assumptions \ref{updc} and \ref{ass_dep} hold, we have
$$\left\Vert
\widehat{\bm{\Phi}}_j-\bm{\Phi}_j\right\Vert \le 
C\sqrt{bp^2\log(n)/n},\quad j=1,\ldots,b.$$  
\end{proposition}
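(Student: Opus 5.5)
We prove Proposition \ref{esti_consistency} from the least squares identity
\begin{align*}
\widehat{\bm{\Phi}}-\bm{\Phi}=\Big(\frac{\bm{X}^\top\bm{X}}{n}\Big)^{-1}\Big(\frac{1}{n}\sum_{i=b+1}^n\bm{x}_i^{(b)}\bm{\epsilon}_i^\top+\frac{1}{n}\sum_{i=b+1}^n\bm{x}_i^{(b)}\bm{R}_i^\top\Big),
\end{align*}
where $\bm{R}_i$ denotes the remainder in \eqref{var}. Since $\widehat{\bm{\Phi}}_j-\bm{\Phi}_j$ is a $p\times p$ block of $\widehat{\bm{\Phi}}-\bm{\Phi}$, its operator norm is bounded by that of the whole matrix. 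It therefore suffices to control three quantities: the inverse design matrix, the score term, and the bias term. The claimed bound is understood in probability, that is, as $\bigO_\Pr$.

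\textbf{Inverse design matrix.} We show $\Vert(\bm{X}^\top\bm{X}/n)^{-1}\Vert=\bigO_\Pr(1)$. By \cref{ass_design_matrix}, $\lambda_{\min}(\bm{\Sigma})\ge\kappa_3$, so it suffices to prove $\Vert\bm{X}^\top\bm{X}/n-\bm{\Sigma}\Vert=o_\Pr(1)$. Each entry of $\bm{X}^\top\bm{X}/n-\bm{\Sigma}$ is a centered average of products $x_{i-j,k}x_{i-j',l}$. Under \cref{updc} and \cref{ass_dep}, these products have physical dependence measure summable in $l$, with $q/2$ moments. Wu's moment inequality for sums of physically dependent processes then gives $L^{q/2}$-norm of order $n^{-1/2}$ for each entry. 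A union bound over the $(bp)^2$ entries yields a max-entry bound of order $\sqrt{\log n/n}$, up to polynomial factors in $bp$ coming from the $q$-th moment. Passing to the Frobenius norm gives $\Vert\cdot\Vert\le bp\sqrt{\log n/n}=o(1)$ under the growth conditions on $b$ and $p$. Weyl's inequality then keeps $\lambda_{\min}(\bm{X}^\top\bm{X}/n)\ge\kappa_3/2$ with probability tending to one.

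\textbf{Score term.} We bound $n^{-1}\sum_i\bm{x}_i^{(b)}\bm{\epsilon}_i^\top$, which is $n^{-1/2}$ times $\bm{Z}_n^b$ reshaped into a $bp\times p$ matrix. Its entries $z_{i,k}$ have mean zero, because $\bm{\epsilon}_i$ is the best linear prediction error and is orthogonal to the past. By \cref{eps_depend} together with \cref{ass_dep}, the physical dependence of $z_{i,k}$ is bounded by $C\sqrt{p}\,l^{-\tau+2}(\log l)^{\tau-1}$, which is summable because $\tau>5$. Combined with the moment bound in \cref{ass_moment}, this makes each normalized entry sum $\bigO(1)$ in $L^2$. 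A maximal inequality over the $bp^2$ entries gives a max bound of order $\sqrt{\log n}$; a cruder Frobenius-norm bound gives $\sqrt{bp^2}$. Taking the operator norm as at most the Frobenius norm, and using the maximal inequality to supply the $\log n$ factor with high probability, the score term is $\bigO_\Pr(\sqrt{bp^2\log n/n})$.

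\textbf{Bias term and main obstacle.} For the bias term, Cauchy--Schwarz gives
\begin{align*}
\Big\Vert\frac{1}{n}\sum_i\bm{x}_i^{(b)}\bm{R}_i^\top\Big\Vert\le\sqrt{bp}\;\max_i\Vert\bm{R}_i\Vert_2,
\end{align*}
and $\Vert\bm{R}_i\Vert_2\le Cp^{1/2}b^{-\tau+2}(\log b)^{\tau-1}$ by \cref{phi_rate}. This is negligible relative to $\sqrt{bp^2\log n/n}$ once $b$ grows polynomially in $n$, as with the choice $b\asymp n^{1/(\tau-3)}$ used in the proof of \cref{empirical_thm}. Combining the three bounds gives the claim. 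The main obstacle is the score term: the dependence bound on $\bm{\epsilon}_i$ from \cref{eps_depend} carries a $\sqrt{p}$ factor, and this must not inflate the rate. I would handle it by applying the dependence-adjusted norm with the summability of $l^{-\tau+2}$, and by absorbing the $\sqrt{p}$ into the moment constant via stationarity and Hölder's inequality. If that fails, I would accept an extra $\sqrt{p}$, which the downstream rates would still tolerate.
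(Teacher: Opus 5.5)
Your route is the paper's: the same least-squares split into a score part and the truncation remainder ($\bm{\Phi}^M$ and $\bm{\Phi}^R$ in \eqref{eq_diff}), an $\bigO_\Pr(1)$ bound on $(\bm{X}^\top\bm{X}/n)^{-1}$ from $\Vert\bm{X}^\top\bm{X}/n-\bm{\Sigma}\Vert=\bigO_\Pr(bp/\sqrt{n})$ and Weyl's inequality, and the remainder handled via \cref{phi_rate}.

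The paper gets the Gram-matrix concentration from a $j$-dependent martingale decomposition, using Rosenthal, Burkholder and matrix Bernstein inequalities. Your entrywise route also works. However, your union-bound claim of a $\sqrt{\log n/n}$ maximum is not available under polynomial moments, and it is not needed, since $\EE|\bm{X}^\top\bm{X}/n-\bm{\Sigma}|_F^2\le C(bp)^2/n$ directly.

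You also need not invoke \cref{ass_design_matrix}, which is outside the proposition's hypotheses. Since $\bm{\Sigma}=\frac{n-b}{n}\cov(\bm{x}_i^{(b)})$ and $\cov(\bm{x}_i^{(b)})$ is, up to a permutation, a principal submatrix of $\cov(\bm{\mathrm{x}})$, \cref{updc} already gives $\lambda_{\min}(\bm{\Sigma})\ge\kappa_1(n-b)/n$, as in the paper.

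Your explicit comparison of the remainder with the target rate, and the growth condition on $b$ it requires, is more careful than the paper. The paper states $\Vert\bm{\Phi}_j^R\Vert\le C\sqrt{p}\,b^{-\tau+1}(\log b)^{\tau-1}$ but never compares it with $\sqrt{bp^2\log(n)/n}$.

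The genuine gap is the step you yourself flag, and the remedy you propose cannot work. The factor $\sqrt{p}$ in \cref{eps_depend} comes from $\sum_{m}|\bm{\Phi}_h(k,m)|\le\sqrt{p}\,\Vert\bm{\Phi}_h\Vert$. It is a property of the coefficient matrices, not of moments, so stationarity and H\"older's inequality do not touch it.

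Carrying it, Wu's inequality gives only $\Vert Z_{n,k}^b\Vert_2\le C\sum_{l\ge0}\delta_z(l,2)\le C\sqrt{p}$, hence $\EE|\bm{X}^\top\bm{\epsilon}/n|_F^2\le Cbp^3/n$. Your fallback therefore proves $\bigO_\Pr(\sqrt{bp^3/n})$, which is strictly weaker than the claim since $p\asymp n^{c_1}\gg\log n$. The maximal-inequality remark does not help: with polynomial moments a maximum over $bp^2$ entries costs a power of $bp^2$, and no $\sqrt{\log n}$ factor can offset $\sqrt{p}$.

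The paper does not resolve this either; it simply asserts $\Vert\bm{\epsilon}^\top\bm{X}/n\Vert=\bigO_\Pr(\sqrt{bp^2/n})$. What would close it is a martingale-difference property. If $\EE[\bm{\epsilon}_i\mid\mathcal{F}_{i-1}]=\bm{0}$, every cross term in $\EE|\sum_i\bm{x}_i^{(b)}\bm{\epsilon}_i^\top|_F^2$ vanishes. Each diagonal term then satisfies $\EE[|\bm{x}_i^{(b)}|^2|\bm{\epsilon}_i|^2]\le Cbp^2$ by Cauchy--Schwarz and the moment bounds, giving exactly $\bigO_\Pr(\sqrt{bp^2/n})$.

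A best linear prediction error is only uncorrelated with the past. Without such an assumption, or a $p$-free dependence bound for $\bm{\epsilon}_i$, the stated rate is not obtained. Capping $\delta_\epsilon(l,q)$ by $2\max_k\Vert\epsilon_{i,k}\Vert_q$ for small $l$ only shrinks the loss from $\sqrt{p}$ to about $p^{1/(2(\tau-2))}$.
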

\noindent
	\textit{Proof}: We will mainly adopt the proof strategy of \citep[Theorem 1]{liu2013probability} to derive the consistency rate for the estimation. Denote 
    $$\bm{S}_n=\frac{1}{n}{\bm{X}^\top \bm{X}}=\frac{1}{n}\sum_{i=b+1}^n
    \bm{x}_i^{(b)}(\bm{x}_i^{(b)})^\top,\quad 
    \bar{\bm{S}}_n=\frac{1}{n}\EE[{\bm{X}^\top \bm{X}}]=\frac{1}{n}\sum_{i=b+1}^n
    \EE\left[\bm{x}_i^{(b)}(\bm{x}_i^{(b)})^\top\right].$$ The following proof consists of two steps: (i) establish that $\bm{S}_n$ is bounded away from zero; (ii) Derive the convergence rate of $\widehat{\bm{\Phi}}_j$.

To investigate the Step (i), we first denote the $j$-dependent sequence as $$\bm{S}_{n,j}=\frac{1}{n}\sum_{i=b+1}^n
\EE[\bm{x}_i^{(b)}(\bm{x}_i^{(b)})^\top|\eta_{i-j-1},...,\eta_{i-1}],$$ namely $\EE[\bm{x}_i^{(b)}(\bm{x}_i^{(b)})^\top|\eta_{i-j-1},...,\eta_{i-1}]$ and $\EE[\bm{x}_{i'}^{(b)}(\bm{x}_{i'}^{(b)})^\top|\eta_{i'-j-1},...,\eta_{i'-1}]$ are independent if $|i-i'|>j$. Then we can decompose the difference between $\bm{S}_n$ and $\bar{\bm{S}}_n$ into three parts: 
\begin{align}\label{decomp_3}
&\bm{S}_n-\bar{\bm{S}}_n\notag \\
=&(\bm{S}_n-\bm{S}_{n,n})+\sum_{j=2}^n(\bm{S}_{n,j}-\bm{S}_{n,j-1})+
(\bm{S}_{n,1}-\bar{\bm{S}}_n)=:{\rm I}+{\rm II}+{\rm III}.
\end{align} 
Now, we start with the first term I. In general, notice that
\begin{align*}
\Vert \bm{S}_n-\bm{S}_{n,j}\Vert &=
\left\Vert\frac{1}{n}
\sum_{i=b+1}^n \left(\bm{x}_i^{(b)}
(\bm{x}_i^{(b)})^\top-\EE[\bm{x}_i^{(b)}
(\bm{x}_i^{(b)})^\top|\eta_{i-j-1},...,\eta_{i-1}]\right)\right\Vert\\
&\le \left|\frac{1}{n}
\sum_{i=b+1}^n \left(\bm{x}_i^{(b)}
(\bm{x}_i^{(b)})^\top-\EE[\bm{x}_i^{(b)}
(\bm{x}_i^{(b)})^\top|\eta_{i-j-1},...,\eta_{i-1}]\right)\right|_F,
\end{align*}
where the second inequality follows by $\Vert\bm{A}\Vert\le |\bm{A}|_F$ for some matrix $\bm{A}$. Then, the above bound of the difference boils down to evaluating its each entry. Denote $y_{i}^{(kl)}$ as the $(k,l)$-th element of the matrix $\bm{x}_i^{(b)}(\bm{x}_i^{(b)})^\top$ for $k,l=1,...,bp$. Similarly, we can write  $y_{i,j}^{(kl)}$ as the $(k,l)$-th element of the matrix $\EE[\bm{x}_i^{(b)}(\bm{x}_i^{(b)})^\top|\eta_{i-j-1},...,\eta_{i-1}]$. Furthermore, observe that $\{y_i^{(kl)}-y_{i,j}^{(kl)}\}_{i=b+1}^n$ is a martingale difference sequence for any $j=1,\ldots,n$.

Let $\delta_y(h,q):=\max_{i,k,l}\Vert y_{i}^{(kl)}-\tilde{y}_{i}^{(kl)}\Vert_q$ be the physical dependence measure of $z_i^{(kl)}$ where $\tilde{z}_i^{(kl)}$ is the coupling random variable of $z_i^{(kl)}$, then we have $\delta_z(h,q)\le C(h+1)^{-\tau}$ for $h\ge 0$ by Assumption 4 of the paper and the definition of $z_{i}^{(kl)}$. Denote $S_n^{(kl)}$ and $S_{n,n}^{(kl)}$ as the $(k,l)$th entries of $\bm{S}_n$ and $\bm{S}_{n,n}$, respectively. Then followed by \citep[Lemma A.1]{LiuLin09}, we have
	$$\left\Vert S_n^{(kl)}-S_{n,n}^{(kl)}\right\Vert_q
    \le C\sum_{h=n}^{\infty}
	\delta_z(h,q)/\sqrt{n}=\bigO(n^{-\tau+1/2}),$$ which also implies $\left|S_n^{(kl)}-S_{n,n}^{(kl)}\right|=\bigO_\Pr(n^{-\tau+1/2})$ and hence $\Vert \bm{S}_n-\bm{S}_{n,n}\Vert\le Cbpn^{-\tau+1/2}$.
 
To deal with the second term II, define $$\Delta_n^{(kl)}:=\sum_{j=2}^n
 \Lambda_{n,j}^{(kl)}:=
	\sum_{j=2}^n\left(\sum_{i=b+1}^n
	(y_{i,j}^{kl}-y_{i,j-1}^{kl})/n\right),~1\le k,l\le bp.$$ 
	Now let $z_{i,j}^{(kl)}=\sum_{h=(i-1)j+b+1}^{(ij+b)\wedge n}(y_{h,j}^{(kl)}-y_{h,j-1}^{(kl)})/n$ where $a\wedge b:=\min(a,b)$ for two real numbers $a$ and $b$. With $l_0=\lfloor \frac{n-b}{j}\rfloor$, we can obtain 
	$$\left|\Lambda_{n,j}^{(kl)}\right|=\left|\sum_{i=1}^{l_0} z_{i,j}^{(kl)}\right|=\left|\sum_{i~{\text is~odd}}z_{i,j}^{(kl)}
	+\sum_{i~{\text is~even}}z_{i,j}^{(kl)}\right|.$$
	Observe that $z_{1,j},z_{3,j},...$ are independent and $z_{2,j},z_{4,j},...$ are also independent by the definition of the $j$-dependent sequence. Since $\{y_{i,j}^{(kl)}-y_{i,j-1}^{(kl)}\}_{i=b+1}^n$ is a martingale difference sequence with respect to $\sigma(\eta_{i-j-1},\ldots,\eta_{i-1})$, it follows by \cref{rosenthal} and the triangle inequality that
 {\small
 \begin{equation}\label{lambda}
	\left\Vert\Lambda_{n,j}^{(kl)}\right\Vert_q\le \frac{14.7q}{\log q}\left\{
    \left(\sum_{i~{\text is~odd}}\left\Vert z_{i,j}^{(kl)}\right\Vert_2^2\right)^{1/2}+
    \left(\sum_{i~{\text is~odd}}\Vert z_{i,j}^{(kl)}\Vert_q^q\right)^{1/q}+
	\left( \sum_{i~{\text is~even}}\left\Vert z_{i,j}^{(kl)}\right\Vert_2^2\right)^{1/2}
	+\left(\sum_{i~{\text is~even}}\Vert z_{i,j}^{(kl)}\Vert_q^q\right)^{1/q}\right\}.
	\end{equation}
 }

Armed with \cref{mar_concentration} and Jensen's inequality, we have
	\begin{align*}
	&\Vert z_{i,j}^{(kl)}\Vert_q^q\le (q-1)^{q/2}j^{q/2}\delta_y^q(j,q)/n^q,\\
	&\Vert z_{i,j}^{(kl)}\Vert_2^2\le j\delta_y^2(j,2)/n^2.	
	\end{align*}
	Thus \cref{lambda} implies that for $2\le j\le n$,
	\begin{align*}
    \left\Vert \Lambda_{n,j}^{(kl)}\right\Vert_q \le& \frac{29.4q}{\log q}\left(\delta_z(j,2)\sqrt{jl_0/2n^2}+
    (l_0/2)^{1/2}(q-1)^{1/2}j^{1/2}\delta_y(j,q)/n\right)\\
    \le& C\left\{(j+1)^{-\tau}/\sqrt{n}+j^{-\tau+1/2-1/q}n^{-1+1/q}
    \right\}.
    \end{align*}
	Consequently, we have $$\Vert \Delta_n^{(kl)}\Vert_q\le \sum_{j=2}^n
    \left\Vert\Lambda_{n,j}^{(kl)}\right\Vert_q=\bigO(1/\sqrt{n}), \quad q\ge 2,$$
    and $$\left\Vert \sum_{j=2}^n (\bm{S}_{n,j}-\bm{S}_{n,j-1})\right\Vert\le Cbpn^{-1/2}.$$
	
	Finally, we focus on the last term III. Note that
	$$\Vert \bm{S}_{n,1}-\bar{\bm{S}}_n\Vert = 
	\left\Vert \frac{1}{n}\sum_{i=b+1}^n \left(\EE[\bm{x}_i^{(b)}
 (\bm{x}_i^{(b)})^\top|\eta_{i-1}]-
	\EE[\bm{x}_i^{(b)}(\bm{x}_i^{(b)})^\top]\right)\right\Vert .$$ Since $\big\{\EE[\bm{x}_i^{(b)}(\bm{x}_i^{(b)})^\top|\eta_{i-1}]-
	\EE[\bm{x}_i^{(b)}(\bm{x}_i^{(b)})^\top]\big\}$ are independent across $i$, then it will be controlled by \cref{lemma_berns}. By elementary calculations, we obtain $R\le C/n$ and $\sigma^2\le Cp/n$, then
	\begin{equation*}
	\Pr\left(\left\Vert \frac{1}{n}\sum_{i=b+1}^n \left(\EE[\bm{x}_i^{(b)}(\bm{x}_i^{(b)})^\top
 |\eta_{i-1}]-\EE[\bm{x}_i^{(b)}(\bm{x}_i^{(b)})^\top]\right)\right\Vert	
 \ge t\right)\le
	bp{\rm exp}\left\{-\frac{t^2/2}{\sigma^2+Rt/3}\right\} 
	\end{equation*}
	converges to 0 by choosing $t=\bigO\left(\sqrt{bp\log n/n}\right)$. Combining all three deviation terms, we conclude that
    \begin{align}\label{projection_matrix_diff}
\Vert \bm{S}_n-\bar{\bm{S}}_n\Vert&\le \Vert \bm{S}_n-\bm{S}_{n,n}\Vert+
\left\Vert\sum_{j=2}^n (\bm{S}_{n,j}-\bm{S}_{n,j-1})\right\Vert+
\Vert\bm{S}_{n,1}-\bar{\bm{S}}_n\Vert \notag\\
&\le C\left(bpn^{-\tau+1/2}
+bpn^{-1/2}+\sqrt{\frac{bp\log n}{n}}\right)\notag \\
&=\bigO_\Pr\left(bpn^{-1/2}\right)
=o_\Pr(1).
\end{align}
Furthermore, with the UPDC condition in Assumption \ref{updc} of the main article and \cref{lemma_w}, we have 
$$\lambda_{\min}(\bm{S}_n)\ge \lambda_{\min}(\bar{\bm{S}}_n)-\lambda_{\min}(\bm{S}_n-\bar{\bm{S}}_n)\ge \kappa_1-Cbpn^{-1/2}>0.$$

On the other hand, we aim to establish the consistency of coefficient matrix estimators $\{\widehat{\bm{\Phi}}_j\}_{j=1}^b$. Recall $\widehat{\bm{\beta}}=\left(\frac{\bm{X}^\top\bm{X}}{n}\right)^{-1}
\left(\frac{\bm{X}^\top\bm{Y}}{n}\right)$, and the deviation between $\widehat{\bm{\beta}}$ and $\bm{\beta}$ becomes
\begin{equation}\label{eq_diff}
\widehat{\bm{\beta}}=\bm{\beta}+
\left(\frac{\bm{X}^\top\bm{X}}{n}\right)^{-1}
\left(\frac{\bm{X}^\top\bm{\epsilon}}{n}\right)+\left(\frac{\bm{X}^\top\bm{X}}{n}\right)^{-1}
\left(\frac{\bm{X}^\top\bm{\xi}}{n}\right)=:\bm{\beta}+\bm{\Phi}^M+\bm{\Phi}^R,
\end{equation}
where $\bm{\xi}=\big(\bm{0},
\bm{\Phi}_{b+1}\bm{x}_{1},
\sum_{j=b+1}^{b+2}\bm{\Phi}_{j}\bm{x}_{b+3-j},
\ldots,\sum_{j=b+1}^{n-1}\bm{\Phi}_j\bm{x}_{n-j}\big)^\top$. From our earlier discussion, one can obtain that $\Vert \bm{S}_n^{-1}\Vert=\bigO_\Pr(1)$. Further denote $\bm{\Phi}_j^R$ as the $j$th block matrix of $\bm{\Phi}^R$ and by elementary calculation, we have $\Vert \bm{\Phi}_j^R\Vert = C\sqrt{p}b^{-\tau+1}(\log b)^{\tau-1}$.

Now, we will explore the error bound of each block of $\bm{\Phi}^M$, denoted by $\bm{\Phi}_j^M$. Similarly since $\Vert \bm{S}_n^{-1}\Vert=\bigO_\Pr(1)$, it is easy to find that $\Vert \bm{\epsilon}^\top\bm{X}/n\Vert=
\bigO_\Pr(\sqrt{bp^2/n})$. For some constant $M>0$, denote the
event $\mathcal{B}_n$ on which $\Vert 
\bm{\epsilon}^\top\bm{X}/n\Vert
\Vert (\bm{X}^\top\bm{X}/n)^{-1}\Vert \le M$, then one can obtain $\Pr(\mathcal{B}_n^c)=o(1)$. Then it yields that for every $j=1,\ldots,b$,
\begin{align}
    &\Pr\left(\Vert \bm{\Phi}_j^M\Vert \ge 2C\sqrt{bp^2\log(n)/n}\right)
    \notag \\
    \le & \Pr\left(\Big\{\Vert \bm{\Phi}_j^M\Vert \ge 2C\sqrt{bp^2\log(n)/n}\Big\}
    \cap \mathcal{B}_n \right)+\Pr(\mathcal{B}_n^c)
    \notag \\
    \le & \Pr\left(\left\Vert \frac{\bm{\epsilon}^\top\bm{X}}{n}(\bm{S}_n^{-1}-\bar{\bm{S}}_n^{-1})
    \bm{E}_j\right\Vert \ge C\sqrt{bp^2\log(n)/n}\right) \label{first_bound}\\
    &\qquad +\Pr\left(\left\Vert 
    \frac{\bm{\epsilon}^\top\bm{X}}{n}
    \bar{\bm{S}}_n^{-1}
    \bm{E}_j\right\Vert \ge C\sqrt{bp^2\log(n)/n}
    \right)+o(1).
    \label{second_bound}
\end{align}
According to the aforementioned discussion and \eqref{projection_matrix_diff}, we conclude that Eq.s \eqref{first_bound} and \eqref{second_bound} converge to zero as $n$ tends to infinity. In conclusion, we complete the proof. $\hfill \square$  

\subsection{Gaussian approximation and multiplier bootstrap}\label{add_theory}
Similar to the process $\widetilde{\bm{B}}_n^{\rm oracle,1}(u,v)$ introduced in \eqref{eq_target}, we define an intermediate process as 
\begin{align}
    \label{det24}
\bm{B}_{n}^{\rm oracle}(u,v):=\bm{A}_{f}(v)
\bm{\Sigma}^{-1}\widetilde{\bm{I}}{\rm diag}(\bm{Z}_n^b)\widetilde{\bm{E}}
\bm{\alpha}_{f+}(u).
\end{align} In this subsection, we will establish a Gaussian approximation theory for the weighted $\mathcal{L}^2$-norm of $\bm{B}_{n}^{\rm oracle,1}$ uniformly over all quantiles and a wide class of weight functions. This theory is based on the uniform Gaussian approximation results over all Euclidean convex sets for sums of stationary and weakly dependent time series of moderately high dimensions. The result extends the corresponding findings for independent and $m$-dependent data established in \cite{Bentkus03}, \cite{Fang15} and \cite{Fang16} among others, which may be of separate interest.

Next, we define 
$$
\bm{U}_n^b:=\frac{1}{\sqrt{n}}\sum_{i=b+1}^n\bm{u}_{i} ~, 
$$
where $\{\bm{u}_{i}\}_{i=b+1}^n$ is a sequence of $bp^2$-dimensional Gaussian random vectors that is independent of $\{\bm{z}_{i}\}_{i=b+1}^n$ and preserves the covariance structure of $\{\bm{z}_{i}\}_{i=b+1}^n$. With this notation we consider 
$$
\bm{B}_n^u(u,v)=
\bm{A}_{f}(v)
\bm{\Sigma}^{-1}
\widetilde{\bm{I}}{\rm diag}(\bm{U}_n^b)\widetilde{\bm{E}}
\bm{\alpha}_{f+}(u)
$$
as the Gaussian analogue of \eqref{det24}.
	Now, we state the Gaussian approximation result for the weighted $\mathcal{L}^2$ norm of the target statistic $\bm{B}_n^{\rm oracle}$ in the following theorem.
	
	\begin{theorem}\label{thm_gaussian}
		Let  Assumptions \ref{ass_conti_u}, \ref{weak_depen_components}--\ref{ass_design_matrix} be satisfied and suppose the smallest eigenvalue of the covariance matrix $\bm{\Xi}_n$ is bounded below by some constant $\kappa_4>0$, there exists a constant $C>0$ such that
		\begin{equation}
		\label{pGaus1}
        \begin{split}
        & \sup \limits_{
	x\in\mathbb{R},
    \bm{w}_n\in\mathcal{W}}
    \left|
	\Pr\left(|
 \bm{B}_n^{\rm oracle}|_
 {\mathcal{L}^2,\bm{w}_n}\le x\right)-\Pr\left(|\bm{B}_n^u|_
 {\mathcal{L}^2,\bm{w}_n}\le x\right)\right|  \\
 &~~~~~~~~~~~~~~~~~~~~~~~~~~~~~~~~
 \le C\min\left\{
        (bp^2)^{7/4}n^{-\frac{1}{2}+\frac{9}{2q}+\frac{2}{\tau-1}}, (bp^2)^{\frac{5}{6}} n^{\frac{2}{3 r}-\frac{1}{3}}(\log n)^{\frac{1}{3}}\right\}, 
        \end{split}
\end{equation}
        where 
        $$
        \frac{1}{r}=
    \max\left\{\frac{1}{q},\frac{1}{q\sqrt{\tau+1} }+\left(\frac{1}{2}-\frac{1}{q\sqrt{\tau+1} }\right) \max \Big\{\frac{2}{q\sqrt{\tau+1}}, \frac{1}{\tau}\big(\frac{1}{2}-\frac{1}{q}\big)\Big\}\right\}.
    $$
	\end{theorem}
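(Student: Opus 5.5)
The plan is to reduce the statement to a Gaussian approximation for the sum $\bm{Z}_n^b$ over the class of convex sets in $\mathbb{R}^{bp^2}$, and then invoke two separate high-dimensional CLTs for dependent data. The two terms inside the minimum in \eqref{pGaus1} would come from these two routes.

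\textbf{Step 1 (reduction to convex sets).} Both $\bm{B}_n^{\rm oracle}$ and $\bm{B}_n^u$ are the same deterministic linear map $\bm{S}\mapsto \bm{A}_f(v)\bm{\Sigma}^{-1}\widetilde{\bm{I}}{\rm diag}(\bm{S})\widetilde{\bm{E}}\bm{\alpha}_{f+}(u)$, applied to $\bm{Z}_n^b$ and $\bm{U}_n^b$ respectively. The map $\bm{S}\mapsto |\cdot|_{\mathcal{L}^2,\bm{w}_n}$ is a seminorm: it is an average of weighted $\mathcal{L}^2$ norms of linear images. Hence for each $x$ and each $\bm{w}_n\in\mathcal{W}$ the event $\{|\bm{B}_n^{\rm oracle}|_{\mathcal{L}^2,\bm{w}_n}\le x\}$ equals $\{\bm{Z}_n^b\in T_x^{w_n}\}$ for a convex set $T_x^{w_n}\subset\mathbb{R}^{bp^2}$. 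This is the set $\widetilde{T}_x^{w_n}$ already used in the proof of \cref{empirical_thm}, with $\bm{\Sigma}^{-1}$ inserted. The left-hand side of \eqref{pGaus1} is therefore bounded by
\[
\sup_{T \text{ convex}}\bigl|\Pr(\bm{Z}_n^b\in T)-\Pr(\bm{U}_n^b\in T)\bigr|.
\]
The supremum over weights and quantiles disappears at this point. Assumption \ref{ass_design_matrix} keeps $\bm{\Sigma}^{-1}$ bounded, but the reduction itself does not need it. The condition $\lambda_{\min}(\bm{\Xi}_n)\ge\kappa_4$ lets us standardize, so the Gaussian boundary-layer bound $\Pr(\bm{U}_n^b\in T^\epsilon\setminus T)\le C(bp^2)^{1/4}\epsilon$ of \cite{Bentkus03} is available.

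\textbf{Step 2 (dependence structure of $\bm{z}_i$).} Write $z_{i,k}=x_{i-h,l}\epsilon_{i,l'}$. By Definition \ref{def1}, \cref{eps_depend}, Hölder's inequality and Assumption \ref{ass_moment}, the physical dependence measure of each coordinate is $O(\sqrt{p}\,l^{-\tau+2}(\log l)^{\tau-1})$ in $L^{q/2}$. The same bound holds in $L^q$ after using the moment condition on $\bm{z}_i$ directly.

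\textbf{Step 3 (first bound, $m$-dependence plus blocking).} Replace $\bm{z}_i$ by the $M$-dependent approximation $\EE[\bm{z}_i\mid\eta_{i-M},\dots,\eta_i]$. Then form big and small blocks so the big blocks are independent, and apply the Bentkus / Fang–Röllin convex-set Berry–Esseen bound for independent sums. The $(bp^2)^{7/4}$ factor comes from the dimension dependence of that bound. The cost of the $M$-dependent approximation and of discarding the small blocks is converted into Kolmogorov distance through the smoothing inequality $V_1+V_2$ displayed in the proof of \cref{empirical_thm}. Optimizing $M$ and the block length against $n^{-1/2}$, the moment order $q$ and the decay $\tau$ should yield $(bp^2)^{7/4}n^{-1/2+9/(2q)+2/(\tau-1)}$.

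\textbf{Step 4 (second bound and main obstacle).} The second bound, $(bp^2)^{5/6}n^{2/(3r)-1/3}(\log n)^{1/3}$, is where I expect the main difficulty. I would try a Stein/martingale-embedding (or Lindeberg) argument for convex sets using the smoothed indicator $h_{T,\sigma}$ from the proof of \cref{empirical_thm}. The step that needs care is controlling the third-order remainder under dependence. For that I would decompose into martingale differences $\mathcal{P}_{i-j}\bm{z}_i$ and use a Rosenthal-type inequality together with the decay from Step 2. The effective moment index $r$ enters through interpolating between $L^2$ and $L^q$ for the projection tails; this interpolation is exactly what produces the nested maximum defining $1/r$. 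The rate $n^{-1/3}$ then comes from balancing the smoothing width $\sigma$ against $(bp^2)^{1/4}\sigma$.

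Taking the better of the two bounds from Steps 3 and 4 gives \eqref{pGaus1}.
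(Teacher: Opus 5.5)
Your Steps 1 and 2 agree with the paper. One correction to Step 2: Hölder gives the dependence measure of $z_{i,k}$ only in $L^{q/2}$. \cref{ass_moment} bounds $\|z_{i,k}\|_q$, not $\|z_{i,k}-z_{i,k}^\ast\|_q$, so the $L^q$ claim is unsupported. The real problems are Steps 3 and 4: neither, as described, produces the term it is meant to produce.

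\textbf{Step 4 (second term).} This term comes from an $L^2$ coupling, not from a Stein/Lindeberg remainder estimate. The paper combines two ingredients:
\begin{itemize}
\item \cref{comparison_new}, i.e.\ $\mathcal{D}(\bm W,\bm Z)\le 4d^{1/4}\sigma+\Pr(|\bm W-\bm Z|>\sigma)$;
\item the Wasserstein Gaussian approximation of \cite{liu2025wasserstein} (Theorem 3.2, eqs.\ (36)--(37)). It provides a Gaussian $\widetilde{\bm U}_n$ with $\EE|\bm Z_n^b-\widetilde{\bm U}_n|^2\lesssim (bp^2)^2n^{2/r-1}\log n$, and a similar bound for $\widetilde{\bm U}_n$ against $\bm U_n^b$.
\end{itemize}
The nested expression for $1/r$ is simply the rate of that coupling, imported as a black box. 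Chebyshev then gives $4(bp^2)^{1/4}\sigma+C(bp^2)^2n^{2/r-1}\log n/\sigma^{2}$. The choice $\sigma\asymp(bp^2)^{7/12}n^{2/(3r)-1/3}(\log n)^{1/3}$ yields $(bp^2)^{5/6}n^{2/(3r)-1/3}(\log n)^{1/3}$. The cube root, the $n^{-1/3}$ and the $(bp^2)^{5/6}$ all come from balancing a term linear in $\sigma$ against a $\sigma^{-2}$ term. ``Balancing $\sigma$ against $(bp^2)^{1/4}\sigma$'' is not a balance at all. Your sketch constructs no such coupling and contains no computation that could produce this particular $r$, so the second term is not established.

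\textbf{Step 3 (first term).} Big/small blocking with the small blocks discarded cannot reach an $n^{-1/2}$-type rate.
\begin{itemize}
\item With big blocks of length $L$, Bentkus' bound for the $\approx n/L$ independent block sums is of order $(bp^2)^{7/4}(L/n)^{1/2}$. This is an intrinsic $K^{-1/2}$ rate for $K$ independent summands.
\item The discarded small blocks (length $\ge M$) have size of order $(M/L)^{1/2}$, up to powers of $bp^2$. In the $V_1+V_2$ smoothing step the width $\epsilon$ must exceed this size, so that step costs at least $(M/L)^{1/2}$.
\item Optimising over $L$ therefore gives at best order $(M/n)^{1/4}$. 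For large $q,\tau$ this is far from the claimed $n^{-1/2+9/(2q)+2/(\tau-1)}$.
\end{itemize}
The paper uses no blocking. It truncates $\bm z_i$ at $(bp^2)^{1/2}n^{3/(2q)}$ and passes to the $M$-dependent version. It then applies Fang's Stein-method bound for bounded locally dependent sums (\cref{distance}, eq.\ (4.24) of \cite{Fang16}) with neighbourhood sizes $M,2M,3M$, which gives the recursion \eqref{key}. In that bound, $(bp^2)^{7/4}n^{-1/2+9/(2q)}$ equals $(bp^2)^{1/4}n\beta^3$ with $\beta\asymp(bp^2)^{1/2}n^{-1/2+3/(2q)}$. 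The factor $n^{2/(\tau-1)}$ is $M^2$ with $M\asymp n^{1/(\tau-1)}$.
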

Theorem \ref{thm_gaussian} implies that the distribution of $\left|\bm{B}_n^{\rm oracle}\right|_{\mathcal{L}^2, \bm{w}_n}$ can be well approximated by that of $\left|\bm{B}_n^u\right|_{\mathcal{L}^2, \bm{w}_n}$ uniformly over all quantiles and weight functions in $\mathcal{W}$ when $q$ and $\tau$ are large enough. In particular, it shows that for  sufficiently large $q$ and $\tau$ and for  sufficiently small $b, p$ are the first term on the right-hand side of \eqref{pGaus1} converges at the rate $\bigO(n^{-1/2+\omega_1})$ for arbitrarily small $\omega_1>0$. On the other hand, when $q$ and $\tau$ are sufficiently large, $b\asymp \log(n)$ and the last term on the right-hand side of \eqref{pGaus1} becomes $p^{5/3}n^{-1/3+\omega_2}$ for arbitrary small $\omega_2>0$. Therefore, the bounds in \eqref{pGaus1} vanish when the dimension of the random vector $\bm{x}_i$ grows as large as  $p \asymp n^{1/5-\omega}$ for arbitrarily small $\omega>0$.
\medskip

\noindent
\textit{Proof}:
According to \cref{comparison_new}, we will provide two upper bounds for the Gaussian approximation. First, we start to follow \cref{comparison} to derive first approximation bound. For the second approximation bound in \eqref{pGaus1}, we leverage the Wasserstein Gaussian approximation result finish the proof.
 
    For the first approximation bound,  recall $\bm{z}_{i}={\rm vec}
    \big((\bm{x}_{i}^{(b)})\bm{\epsilon}_i^\top\big)$. Since $\{\bm{x}_{i}^{(b)}\}$ and $\{\bm{\epsilon}_i\}$ are both stationary processes, we can rewrite $\bm{z}_{i}$ into a physical representation of the stationary multivariate time series, i.e.,
	$$\bm{z}_{i}=\bm{L}(\mathcal{F}_i),$$ where $\bm{L}=(L_1,...,L_{bp^2})^\top$ is a measurable vector function. Here, we denote the dependence measure of the component $\{z_{i,j}\}_{j=1}^{bp^2}$ for $i=b+1,\ldots,n$ as $$\delta_z(l,q):=\max_{1\le k\le bp^2}\Vert L_k(\mathcal{F}_i)-L_k(\mathcal{F}_{i,l})\Vert_q,$$ 
	where $\mathcal{F}_{i,l}$ is defined in \cref{def1}. Under \cref{ass_dep} and \cref{eps_depend}, we will obtain that
	\begin{align*}
	\delta_z(l,q/2)\le&
    \max_{1\le j\le b}
    \max_{1\le h\le p}\Vert x_{i-j,h}-x_{i-j,h}^\ast\Vert_{q}
 \max_{1\le m\le p}\Vert \epsilon_{i,m}\Vert_{q}+\max_{1\le j\le b}
 \max_{1\le h\le p}\Vert x_{i-j,h}^\ast\Vert_{q} \max_{1\le m\le p}\Vert \epsilon_{i,m}-\epsilon_{i,m}^\ast\Vert_{q}\\
	\le& C\sqrt{p}
    l^{-\tau+2}(\log l+1)^{\tau-1},
	\end{align*}
 where $x_{i-j,h}^\ast$ and $\epsilon_{i,m}^\ast$ are i.i.d. copies of $x_{i-j,h}$ and $\epsilon_{i,m}$, respectively. On the other hand, denote $$T_x^{w_n}=\left\{\bm{S}\in \mathbb{R}^{bp^2}:
\frac{1}{b}\sum_{j=1}^{b} \left|\bm{E}_j^\top\bm{A}_f(v)\bm{\Sigma}^{-1}
\widetilde{\bm{I}}{\rm diag} (\bm{S})\widetilde{\bm{E}}\bm{\alpha}_{f+}(u)\big/w_{n,j}(u,v)\right|_{\mathcal{L}^2}
	\le x\right\}$$ where $\bm{E}_j\in\mathbb{R}^b$ contains 1 at the $j$th location and 0 at others. Furthermore, let $\mathcal{T}_n=\{T_x^{w_n}: x\in\mathbb{R}, w_{n,j}(u,v)\in \mathcal{W} \text{~for~all~}j=1,\ldots,b\}$. Since $T_x^{w_n}$ is a convex set and $\mathcal{T}_n$ is a collection of convex sets, then we can rewrite the Kolmogorov distance between $\bm{Z}_n^b$ and $\bm{U}_n^b$ on $\mathcal{T}_n$ as
	$$\mathcal{D}(\bm{Z}_n^b,\bm{U}_n^b):=
    \sup_{\bm{w}_n\in\mathcal{W}}
	\sup_{x\in\mathbb{R}}\left|
	\Pr(\bm{Z}_n^b\in T_x^{w_n})-\Pr(\bm{U}_n^b\in T_x^{w_n})\right|=
\sup_{T\in\mathcal{T}_n}\left|
\Pr(\bm{Z}_n^b\in T)-\Pr(\bm{U}_n^b\in T)\right|,$$
	where $T$ is a convex set in $\mathcal{T}_n$.

 To deduce the error bound of Gaussian approximation, we need to make use of the truncation, $m$-dependent approximation techniques and finally apply Lemmas \ref{derivative}--\ref{comparison_new} in \cref{app_auxilary}. Now define the truncated version of $\bm{z}_{i}$ as
	\begin{equation*}
	\bar{\bm{z}}_{i}=\begin{cases}
	\bm{z}_{i},~&|\bm{z}_{i}|\le (bp^2)^{\frac{1}{2}}n^{\frac{3}{2q}},\\
	\bm{0}_{bp^2},~&\text{otherwise}.
	\end{cases}
	\end{equation*}
	Given a large constant $M=M_n$, let the $M$-dependent approximation sequence of $\bar{\bm{z}}_{i}$ be
$$\bar{\bm{z}}_{i}^M=\EE(\bar{\bm{z}}_{i}|\eta_{i-M-1},\cdots,\eta_{i-1}),
	~i=b+1,\cdots,n.$$
	Consequently, we can define the partial sum of the truncated series as $\bar{\bm{Z}}_n=\sum_{i=b+1}^n\bar{\bm{z}}_{i}/\sqrt{n}$ and the $M$-dependent analogue as $\bar{\bm{Z}}_n^M=\sum_{i=b+1}^n\bar{\bm{z}}_{i}^M/\sqrt{n}$. Further denote $\bar{\bm{Z}}_n^\ast=\bar{\bm{Z}}_n-\EE\bar{\bm{Z}}_n$, $\widetilde{\bm{Z}}_n^M=\bar{\bm{Z}}_n^M-\EE\bar{\bm{Z}}_n$ and let $\widetilde{\bm{U}}_n^M$ be a Gaussian random vector preserving the covariance structure of $\widetilde{\bm{Z}}_n^M$. With Lemmas \ref{derivative} and \ref{comparison}, we have
	\begin{align}
	\mathcal{D}(\bm{Z}_n^b,\bm{U}_n^b)\le&~4(bp^2)^{\frac{1}{4}}\sigma_1
	+\sup_{T\in\mathcal{T}_n}
 \left|\EE\left[h_{T,\sigma_1}(\bm{Z}_n^b)- h_{T,\sigma_1}(\bm{U}_n^b)\right]\right| \notag\\
	\le&~4(bp^2)^{\frac{1}{4}}\sigma_1+\sup_{T\in\mathcal{T}_n}
 \left|\EE\left[
	h_{T,\sigma_1}(\bm{Z}_n^b)-h_{T,\sigma_1}(\bar{\bm{Z}}_n^\ast)
	\right]\right|+\sup_{T\in\mathcal{T}_n}\left|
	\EE\left[h_{T,\sigma_1}(\bar{\bm{Z}}_n^\ast)- h_{T,\sigma_1}(\widetilde{\bm{Z}}_n^M)\right]\right|\notag \\
	&{}+\sup_{T\in\mathcal{T}_n}\left|
	\EE\left[h_{T,\sigma_1}(\tilde{\bm{Z}}_n^M)- h_{T,\sigma_1}(\widetilde{\bm{U}}_n^M)\right]\right|
	+\sup_{T\in\mathcal{T}_n}\left|
	\EE\left[h_{T,\sigma_1}(\widetilde{\bm{U}}_n^M)- h_{T,\sigma_1}(\bm{U}_n^b)\right]\right| \notag\\
	\le&~4(bp^2)^{\frac{1}{4}}\sigma_1+\frac{C}{\sigma_1}\EE|\bm{Z}_n^b-
	\bar{\bm{Z}}_n^\ast|+
    \frac{C}{\sigma_1}\EE|\bar{\bm{Z}}_n^\ast-
	\widetilde{\bm{Z}}_n^M|\notag \\
	&{}+\sup_{T\in\mathcal{T}_n}\left|
	\EE\left[h_{T,\sigma_1}(\widetilde{\bm{Z}}_n^M)- h_{T,\sigma_1}(\widetilde{\bm{U}}_n^M)\right]\right|+
	\frac{C}{\sigma_1}\EE|\widetilde{\bm{U}}_n^M-\bm{U}_n^b|\notag\\
	:=&~4(bp^2)^{\frac{1}{4}}\sigma_1+
	\frac{C\sigma_2}{\sigma_1}+
    \frac{C\sigma_3}
	{\sigma_1}+
    \frac{C\sigma_4}
	{\sigma_1}+
 \sup_{A\in\mathcal{T}_n}
 \left|\EE\left[h_{T,\sigma_1}
	(\widetilde{\bm{Z}}_n^M)-h_{T,\sigma_1}(\widetilde{\bm{U}}_n^M)\right]\right|, \label{combine}
	\end{align}
	where $\sigma_2:=\EE|\bm{Z}_n^b-\bar{\bm{Z}}_n^\ast|$, $\sigma_3:=\EE|\bar{\bm{Z}}_n^\ast-
	\widetilde{\bm{Z}}_n^M|$ and $\sigma_4:=\EE|\widetilde{\bm{U}}_n^M-\bm{U}_n^b|$.\\
	(1) Truncation approximation.
	We shall first control the truncation error $\sigma_2$. Note that $\EE\bm{z}_i=0$, then we have
 \begin{align*}
	|\bm{Z}_n^b-\bar{\bm{Z}}_n^\ast|&=
	\frac{1}{\sqrt{n}}\left|\sum_{i=b+1}^n(\bm{z}_{i}-\EE\bm{z}_{i}
	-\bar{\bm{z}}_{i}+\EE\bar{\bm{z}}_{i})\right|\\
	&\le \frac{1}{\sqrt{n}}\left|\sum_{i=b+1}^n(\bm{z}_{i}-
	\bar{\bm{z}}_{i})\right|+\frac{1}{\sqrt{n}}
	\left|\sum_{i=b+1}^n\EE(\bm{z}_{i}-\bar{\bm{z}}_{i})\right|\\
	:&={\rm I}+{\rm II}.
	\end{align*}
	For ${\rm I}$, notice that for any $i=1,...,n$,
	\begin{align*}
	&\Pr\left(|\bm{z}_{i}|>(bp^2)^{\frac{1}{2}}n^{\frac{3}{2q}}\right)=
	\EE\left[\bm{1}\{|\bm{z}_{i}|>(bp^2)^{\frac{1}{2}}n^{\frac{3}{2q}}\}\right]\\
	\le &\EE\left[\left(\frac{|\bm{z}_{i}|}
	{(bp^2)^{\frac{1}{2}}n^{\frac{3}{2q}}}\right)^q\right]=
	(bp^2)^{-\frac{q}{2}}n^{-\frac{3}{2}}\EE|\bm{z}_{i}|^q \\
	=&(bp^2)^{-\frac{q}{2}}
	n^{-\frac{3}{2}}\EE\left|\sum_{j=1}^{bp^2} z_{i,j}^2\right|^{q/2}
	\le (bp^2)^{-\frac{q}{2}}
	n^{-\frac{3}{2}}(bp^2)^{q/2-1}\sum_{j=1}^{bp^2}\EE|z_{i,j}|^q
	\le C_qn^{-\frac{3}{2}},
	\end{align*}
	where the second to last inequality follows from the inequality $\EE|X_1+\cdots+X_{bp^2}|^{q/2}\le (bp^2)^{q/2-1}\sum_{j=1}^{bp^2}\EE|X_j|^{q/2}$ for random variables $\{X_j\}_{j=1}^{bp^2}$ and \cref{ass_moment} in the main paper. Hence, we have $\Pr\left(|\bm{Z}_n^b-\bar{\bm{Z}}_n|=0\right)=1-\bigO(n^{-1/2})$. This implies there exists an order, say $n^{-2}$ such that $\Pr\left(\left|\frac{1}{\sqrt{n}}\sum_{i=b+1}^n
	(\bm{z}_{i}-\bar{\bm{z}}_{i})\right|>n^{-2}\right)\to 0$. As a result, ${\rm I}=o_\Pr(n^{-2})$.
	
	For ${\rm II}$, since for any $i=1,...,n$,
	\begin{align*}
	\EE(\bm{z}_{i}-\bar{\bm{z}}_{i})
	\le& \EE\left[|\bm{z}_{i}|
	\bm{1}\{|\bm{z}_{i}|>(bp^2)^{\frac{1}{2}}
	n^{\frac{3}{2q}}\}\right]\\
	\le& \EE\left[|\bm{z}_{i}|\left(\frac{|\bm{z}_{i}|}
	{(bp^2)^{\frac{1}{2}}n^{\frac{3}{2q}}}\right)^{q-1}\right]\\
	=&(bp^2)^{-\frac{q-1}{2}}n^{-\frac{3}{2}+\frac{3}{2q}}\EE|\bm{z}_{i}|^q\\ 
	\le&C_q (bp^2)^{\frac{1}{2}}n^{-\frac{3}{2}+\frac{3}{2q}},
	\end{align*}
	where the second inequality uses the fact that for the nonnegative random variable $y$ and some number $a>0$, the inequality $y\bm{1}\{y\ge a\}\le y\left(\frac{y}{a}\right)^p$ for any $p>0$ holds true. Consequently, ${\rm II}=\frac{1}{\sqrt{n}}\left|\sum_{i=b+1}^n
    \EE(\bm{z}_{i}-\bar{\bm{z}}_{i})\right|
    =\bigO((bp^2)^{\frac{1}{2}}n^{-1+\frac{3}{2q}})$. Now, by choosing $\beta=(bp^2)^{\frac{1}{2}}n^{-\frac{1}{2}+\frac{3}{2q}}$, then $\sigma_2=\bigO((bp^2)^{\frac{1}{2}}n^{-1+\frac{3}{2q}})$.
	\bigskip
	
	\noindent
	(2) $M$-dependence approximation.
	Next, we will deduce the 
    approximation rate between our original process and its $M$-dependent sequence, i.e., control $\sigma_3$ in \eqref{combine}. Recall the physical dependence measure $\delta_z(l,q/2)$ of ${z}_{i,j}$ and denote $\Theta_{M,q/2}=\sum_{l=M}^\infty\delta_z(l,q/2)$. Let
	$$\bar{\bm{Z}}_n-\bar{\bm{Z}}_n^M=\frac{1}{\sqrt{n}}\sum_{i=b+1}^n
	(\bar{\bm{z}}_{i}-\bar{\bm{z}}_{i}^M)=:\frac{1}{\sqrt{n}}\sum_{i=b+1}^n
	\bar{\bm{z}}_i^\Delta.$$ It is readily seen that $\{\bar{\bm{z}}_i^\Delta,i=b+1,...,n\}$ is a sequence of martingale differences, then we have
	\begin{align*}
	&\left\Vert\bar{\bm{Z}}_n^\ast-\tilde{\bm{Z}}_n^M\right\Vert_{q/2}^2=
	\left\Vert\bar{\bm{Z}}_n-\bar{\bm{Z}}_n^M\right\Vert_{q/2}^2\\ =&
	\left\Vert \frac{1}{\sqrt{n}}\sum_{i=b+1}^n \bar{\bm{z}}_i^\Delta \right\Vert_{q/2}^2
	=\left\{\EE\left[\sum_{j=1}^{bp^2}\left(\frac{1}{\sqrt{n}}\sum_{i=b+1}^n
	\bar{z}_{i,j}^\Delta\right)^2
 \right]^{q/4}\right\}^{4/q}\\
	\le&\left\{(bp^2)^{q/4-1}\sum_{j=1}^{bp^2}\EE\left|\frac{1}{\sqrt{n}}\sum_{i=b+1}^n
	\bar{z}_{i,j}^\Delta\right|^{q/2}\right\}^{4/q}\\
	\le& bp^2\left\Vert \max_{1\le j\le bp^2}\Big | \frac{1}{\sqrt{n}}\sum_{i=b+1}^n\bar{z}_{i,j}^\Delta\Big |\right\Vert_{q/2}^2 \le Cbp^2\Theta_{M,q/2}^2\le Cbp^3M^{-2\tau+6}(\log M)^{2\tau-2},
	\end{align*}
	where $\bar{z}_{i,j}^\Delta$ is the entrywise of vector $\bar{\bm{z}}_i^\Delta$, the first inequality is due to the fact that $\EE|X_1+\cdots+X_{bp^2}|^{q/4}\le (bp^2)^{q/4-1}\sum_{j=1}^{bp^2}\EE|X_j|^{q/4}, q>4$ with $X_j$ being random variables. The third inequality above is followed by Lemma A.1 of \cite{LiuLin09} using Burkholder's inequality. As a result, $\left\Vert\bar{\bm{Z}}_n^\ast-\widetilde{\bm{Z}}_n^M\right\Vert_q
	\le Cbp^{3/2}M^{-\tau+3}(\log M)^{\tau-1}$.
	
	Therefore, we can choose $M$ appropriately such that $M^{-\tau+3}(\log M)^{\tau-1}$ converges to zero fast. For example $M=\bigO\big(n^{\frac{1}{\tau-3}}(\log n)^{\frac{\tau-1}{\tau-3}}\big)$ for $\tau>4$, then we have $\bar{\bm{Z}}_n^\ast-\widetilde{\bm{Z}}_n^M=\bigO_\Pr(bp^{3/2}
    n^{-1})$.
	\medskip
	
	\noindent
	(3) Using \cref{distance} deduce the final result.
	Finally, we will employ \cref{distance} to deal with the last term of \cref{combine}. Let $\widetilde{\bm{Z}}_n^M=\sum_{i=b+1}^n
 \bar{\bm{z}}_{i}^\dagger/\sqrt{n}$, then by the truncation and $m$ dependence approximation techniques, we have $|\bar{\bm{z}}_{i}^\dagger/\sqrt{n}|\le \beta$ and
	$\EE\bar{\bm{z}}_{i}^\dagger/\sqrt{n}=0$. Recall $\bm{\Xi}_n=
	\EE\left(\sum_{i=b+1}^n\bm{z}_{i}\right)
	\left(\sum_{i=b+1}^n\bm{z}_{i}^\top\right)/n$, denote 
 $$\bm{\Xi}_M:={\rm Cov}(\widetilde{\bm{Z}}_n^M)=\EE\left[\sum_{i=b+1}^n
 (\bar{\bm{z}}_{i}^M-
	\EE\bar{\bm{z}}_{i})\right]\left[
 \sum_{i=b+1}^n(\bar{\bm{z}}_{i}^M
	-\EE\bar{\bm{z}}_{i})^\top\right]/n.$$ Since $\EE \bm{z}_i=0$, we find that the difference of covariance matrix between $\widetilde{\bm{Z}}_n^M$ and $\bm{Z}_n$ based on the Frobenius norm turns out to be
	\begin{align*}
	&\left|\bm{\Xi}_n-\bm{\Xi}_M\right|_F\\
	\le &\frac{1}{n}\left\{\left|\EE\left[\sum_{i=b+1}^n\left(\bm{z}_{i}-
	\bar{\bm{z}}_{i}^M\right)\right]
 \left(\sum_{i=b+1}^n\bm{z}_{i}^\top\right)
	\right|_F+\left|\EE\left(\sum_{i=b+1}^n
 \bar{\bm{z}}_{i}^M\right)
	\left[\sum_{i=b+1}^n\left(\bm{z}_{i}-\bar{\bm{z}}_{i}^M\right)^\top\right]
	\right|_F\right.\\
	&+\left.\left|\sum_{i=b+1}^n\EE
	(\bm{z}_{i}-\bar{\bm{z}}_{i})\sum_{i=b+1}^n\EE
	(\bm{z}_{i}-\bar{\bm{z}}_{i})^\top\right|_F\right\}\\
	\le &C\left(b^{1/2}pn^{-2}+
    b^{3/2}p^{5/2}M^{-\tau+3}(\log M)^{\tau-1}+(bp^2)n^{-2+3/q}\right)=\bigO(b^{3/2}p^{5/2}M^{-\tau+3}(\log M)^{\tau-1}),
	\end{align*}
	where the last inequality follows by the error bounds of the truncation approximation, $M$-dependence approximation, and the Cauchy Schwarz inequality. Further denote $\widetilde{U}_{n,j}^M$ and $U_{n,j}$ are components of vector $\widetilde{\bm{U}}_n^M$ and $\bm{U}_n^b$, respectively. Then, 
	\begin{align*}
	\sigma_4&=\EE\sqrt{\sum_{j=1}^{bp^2}(\widetilde{U}_{n,j}^M-U_{n,j})^2}\le \sqrt{\EE\sum_{j=1}^{bp^2}(\widetilde{U}_{n,j}^M-U_{n,j})^2}\\
	&=\sqrt{{\rm Tr}\left[(\bm{\Xi}_M)^{1/2}-
		(\bm{\Xi}_n)^{1/2}\right]^2}=\left|(\bm{\Xi}_M)^{1/2}-(\bm{\Xi}_n)^{1/2}
	\right|_F\\
	&=\bigO(b^{3/2}p^{5/2}M^{-\tau+3}(\log M)^{\tau-1}).
	\end{align*}  Since we assume that the smallest eigenvalue of $\bm{\Xi}_n$ is bounded below by some constant $\kappa_4>0$, we can deduce that
	\begin{align*}
	\lambda_{\min}(\bm{\Xi}_M)&\ge
	\lambda_{\min}(\bm{\Xi}_M-\bm{\Xi}_n)+
	\lambda_{\min}(\bm{\Xi}_n)\\
	&= -\lambda_{\max}(\bm{\Xi}_n-\bm{\Xi}_M)+
	\lambda_{\min}(\bm{\Xi}_n)\\
	&\ge \kappa_4-Cb^{3/2}p^{5/2}M^{-\tau+3}(\log M)^{\tau-1}>0.
	\end{align*}
	Hence, the small eigenvalue of $\bm{\Xi}_M$ is also bounded below by some positive constant, then $\left|(\bm{\Xi}_M)^{-1}\right|\le C$.
	Further note that $n_1=M,~n_2=2M,~n_3=3M$ and together with the Eq. (4.24) in \cite{Fang16}, we have
	\begin{equation}\label{key}
	\mathcal{D}(\widetilde{\bm{Z}}_n^M,\widetilde{\bm{U}}_n^M)\le 4(bp^2)^{\frac{1}{4}}\sigma_1
	+2Cn\beta^3M^2\frac{1}{\sigma_1}[(bp^2)^{\frac{1}{4}}(\sigma_1+3M\beta)+
    \mathcal{D}(\widetilde{\bm{Z}}_n^M,
    \widetilde{\bm{U}}_n^M)].
	\end{equation}
	By substituting $\beta=(bp^2)^{\frac{1}{2}}n^{-\frac{1}{2}+\frac{3}{q}}$ and optimizing $\sigma_1$, we can choose $\sigma_1=\bigO((bp^2)^{\frac{3}{2}}
	n^{-\frac{1}{2}+\frac{9}{2q}}M^2)$. Since $\sigma_3=o(\sigma_4)$, then it suffices to calculate the second and fourth terms in \cref{combine}, which are $\bigO((bp^2)^{-1}n^{-\frac{1}{2}-\frac{3}{q}}M^{-2})$ and $\bigO(p^{-\frac{1}{2}}n^{\frac{1}{2}-\frac{3}{2q}}M^{-\tau+1}(\log M)^{\tau-1})$, respectively. By choosing $M=\bigO(n^{\frac{1}{\tau-1}})$, we have $$\mathcal{D}(\bm{Z}_n^b,\bm{U}_n^b)\le C
	(bp^2)^{\frac{7}{4}}n^{-\frac{1}{2}+\frac{9}{2q}+\frac{2}{\tau-1}}.$$  

    Now, let us derive the second approximation error bound following \cref{comparison_new} and the technical schemes in the proof of \citep[Theorem 3.2]{liu2025wasserstein}. More specifically, denote $\widetilde{\bm{U}}_n$ as an intermediate Gaussian vector with covariance different from ${\rm Cov}(\bm{Z}_n^b)$. Then according to Eq.s (36),(37) in \cite{liu2025wasserstein} and the Chebyshev’s inequality, the Kolmogorov distance $\mathcal{D}(\bm{Z}_n^b,\bm{U}_n^b)$ becomes
    \begin{align*}
        \mathcal{D}(\bm{Z}_n^b,\bm{U}_n^b)& \le 4(bp^2)^{1/4}\sigma+
        \Pr(|\bm{Z}_n^b-\bm{U}_n^b|> \epsilon)\\
        &\le 4(bp^2)^{1/4}\sigma + \Pr(|\bm{Z}_n^b-\widetilde{\bm{U}}_n|> \epsilon/2)+
        \Pr(|\widetilde{
        \bm{U}}_n-\bm{U}_n^b|> \epsilon/2)\\
        &\le 4(bp^2)^{1/4}\sigma + 4C\frac{(bp^2)^2 n^{\frac{2}{r}-1} \log (n)}{\sigma^2 }+4C\frac{(bp^2)^2 n^{\frac{2}{r}-1}}{\sigma^2}\\
        &= 4(bp^2)^{1/4}\sigma + \bigO\left(\frac{(bp^2)^2 n^{\frac{2}{r}-1} \log (n)}{\sigma^2}\right),
    \end{align*}
    where $\frac{1}{r}=
    \max\left\{\frac{1}{q},\frac{1}{q\sqrt{\tau+1} }+\left(\frac{1}{2}-\frac{1}{q\sqrt{\tau+1} }\right) \max \Big\{\frac{2}{q\sqrt{\tau+1}}, \frac{1}{\tau}\big(\frac{1}{2}-\frac{1}{q}\big)\Big\}\right\}$. By choosing $\sigma=\bigO((bp^2)^{\frac{7}{12}} n^{\frac{2}{3 r}-\frac{1}{3}}(\log n)^{1 / 3})$, we have $$
    \mathcal{D}(\bm{Z}_n^b,\bm{U}_n^b)\le 
    C(bp^2)^{\frac{5}{6}} n^{\frac{2}{3 r}-\frac{1}{3}}(\log n)^{\frac{1}{3}}.$$
    In summary, we finish our proof of \cref{thm_gaussian}.
$\hfill \square$

Next theorem provides the rate of the theoretical multiplier bootstrap approximation.
	\begin{theorem}\label{thm_boots} Let 
    Assumptions \ref{ass_conti_u}, \ref{weak_depen_components}--\ref{ass_design_matrix} be satisfied, assume that 
	the smallest eigenvalue of the matrix $\bm{\Xi}_n$ is bounded below by some constant $\kappa_4>0$ and $m=\bigO\big((n/p)^{1/3}\big)$.  On the event $\mathcal{A}_n$ that defined in \eqref{det26}, we have for the process $\widetilde{\bm{B}}_n^{\rm oracle,2}$ defined in \eqref{det25}
		\begin{equation}\label{eq_compare}
        \begin{split}
    &		\sup \limits_{x\in\mathbb{R}, \bm{w}_n\in \mathcal{W}}\Big|
		\Pr\Big(
  \big|\widetilde{\bm{B}}_n^{\rm oracle,2}\big|
  _{\mathcal{L}^2,\bm{w}_n}
  \le x\mid \Upsilon_{b+1}^{n}
  \Big)-\Pr\Big(\big|
		\bm{B}_n^u \big|
  _{\mathcal{L}^2, \bm{w}_n}\le x\Big)\Big|
   \\
  & ~~~~~~~~~~~~~~~~~~~~~~~~~~~~~~~~~~~~~~~~~~~~~~~~~~~~~~ 
            \le C\left(b^{7/8}p^{5/4}n^{-1/4}\log^{1/2}(n)+bp^{17/6}n^{-1/3}h_n\right).
                    \end{split}
		\end{equation}
	\end{theorem}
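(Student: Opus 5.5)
Conditionally on $\Upsilon_{b+1}^n$, the vector $\bm{Z}_n^{\ast,b}$ is centered Gaussian in $\mathbb{R}^{bp^2}$ with covariance $\widetilde{\bm{\Xi}}_n$. The vector $\bm{U}_n^b$ is centered Gaussian with covariance $\bm{\Xi}_n$. Write $\widetilde{\bm{\Sigma}}=\bm{X}^\top\bm{X}/n$. Both $\widetilde{\bm{B}}_n^{\rm oracle,2}$ and $\bm{B}_n^u$ are the same linear map applied to these vectors, except that the first uses $\widetilde{\bm{\Sigma}}^{-1}$ and the second uses $\bm{\Sigma}^{-1}$. So the events $\{|\cdot|_{\mathcal{L}^2,\bm{w}_n}\le x\}$ are membership events in convex sets of the form $\widetilde{T}_x^{w_n}$ and $T_x^{w_n}$, exactly as in the proof of Theorem \ref{thm_gaussian}. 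I would split the distance with the triangle inequality through the intermediate Gaussian $\bm{A}_f\bm{\Sigma}^{-1}\widetilde{\bm{I}}{\rm diag}(\bm{Z}_n^{\ast,b})\widetilde{\bm{E}}\bm{\alpha}_{f+}$. This gives two pieces.
\begin{itemize}
\item[(i)] A Gaussian-to-Gaussian comparison with the same matrix $\bm{\Sigma}^{-1}$ but covariances $\widetilde{\bm{\Xi}}_n$ versus $\bm{\Xi}_n$.
\item[(ii)] A comparison with the same Gaussian input $\bm{Z}_n^{\ast,b}$ but matrices $\widetilde{\bm{\Sigma}}^{-1}$ versus $\bm{\Sigma}^{-1}$.
\end{itemize}

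For (i) I would use the Gaussian comparison over convex sets, either Lemma \ref{comparison_new} or the Pinsker/total-variation bound for Gaussians. Since $\lambda_{\min}(\bm{\Xi}_n)\ge\kappa_4$, the distance is bounded by $C|\bm{\Xi}_n^{-1/2}\widetilde{\bm{\Xi}}_n\bm{\Xi}_n^{-1/2}-\bm{I}|_F\le C|\widetilde{\bm{\Xi}}_n-\bm{\Xi}_n|_F$. On $\mathcal{A}_n$ this is at most $Cbp^{17/6}n^{-1/3}h_n$. On $\mathcal{A}_n$ this also keeps $\lambda_{\min}(\widetilde{\bm{\Xi}}_n)\ge\kappa_4/2$, which is needed below. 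The fact that $\Pr(\mathcal{A}_n)=1-o(1)$ comes from the entrywise bound $\Vert\widetilde{\bm{\Xi}}_{n,kl}-\bm{\Xi}_{n,kl}\Vert_2=\bigO(\sqrt{p}/m+p\sqrt{m/n})$. That bound follows from the block-sum argument of \cite{CZ2022}, fed with the dependence bound for $\bm{z}_i$ obtained from Lemma \ref{eps_depend}. The choice $m\asymp(n/p)^{1/3}$ balances the two terms, and Markov's inequality finishes.

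For (ii) I would use the smoothing argument from the proof of Theorem \ref{empirical_thm}, term $I$. Lemma \ref{comparison} with smoothing parameter $\sigma$ bounds the distance by
\[
4(bp^2)^{1/4}\sigma+\frac{C}{\sigma}\,\EE\Big[\big|(\bm{I}_p\otimes(\widetilde{\bm{\Sigma}}^{-1}-\bm{\Sigma}^{-1}))\bm{Z}_n^{\ast,b}\big|\,\Big|\,\Upsilon_{b+1}^n\Big].
\]
The conditional expectation is at most $\Vert\widetilde{\bm{\Sigma}}^{-1}-\bm{\Sigma}^{-1}\Vert\,{\rm Tr}(\widetilde{\bm{\Xi}}_n)^{1/2}$. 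By \eqref{projection_matrix_diff} in the proof of Proposition \ref{esti_consistency}, together with Assumption \ref{ass_design_matrix}, the first factor is $\bigO_\Pr(bp\sqrt{\log n/n})$; the logarithm enters through the Bernstein step. On $\mathcal{A}_n$ the second factor is $\bigO(b^{1/2}p)$. The gap term is then $\sigma^{-1}b^{3/2}p^{2}n^{-1/2}\log^{1/2} n$. Balancing it against $(bp^2)^{1/4}\sigma$ gives $\sigma\asymp b^{5/8}p^{3/4}n^{-1/4}\log^{1/4}n$. The resulting error is $b^{7/8}p^{5/4}n^{-1/4}\log^{1/2}(n)$, which is the first term of \eqref{eq_compare}.

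The main obstacle is uniformity over $\bm{w}_n\in\mathcal{W}$ and over $x$. This needs the anti-concentration bound for Gaussian measures of $\epsilon$-neighbourhoods of convex sets, $\Pr(\bm{G}\in T^{\epsilon}\setminus T)\le C(bp^2)^{1/4}\epsilon$ from \cite{Bentkus03}. That bound requires a Gaussian covariance with eigenvalues bounded below, so I would verify it for both $\bm{\Xi}_n$ and $\widetilde{\bm{\Xi}}_n$ on $\mathcal{A}_n$. I would also check that the sets $T_x^{w_n}$ remain convex in $\bm{S}$ for every weight; they do, because the weighted norm is a sum of seminorms of a linear image. Combining (i) and (ii) gives \eqref{eq_compare}.
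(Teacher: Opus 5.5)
Your proposal is correct and follows the paper's proof: the paper uses the same intermediate process $\bm{A}_f(v)\bm{\Sigma}^{-1}\widetilde{\bm{I}}{\rm diag}(\bm{Z}_n^{\ast,b})\widetilde{\bm{E}}\bm{\alpha}_{f+}(u)$, bounds your piece (ii) with the smoothing argument of Lemmas~\ref{derivative} and \ref{comparison} plus Jensen's inequality and $\Vert\widetilde{\bm{\Sigma}}^{-1}-\bm{\Sigma}^{-1}\Vert=\bigO_\Pr(bp\log(n)/\sqrt{n})$ from Lemma~\ref{lemma_design_matrix_inv}, and bounds your piece (i) on $\mathcal{A}_n$ with the Gaussian covariance comparison of Lemma~\ref{comparison_cov}, which is the total-variation bound you describe. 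Your extra check that $\lambda_{\min}(\widetilde{\bm{\Xi}}_n)$ stays bounded below on $\mathcal{A}_n$ is something the paper leaves implicit; also, with only $\log^{1/2} n$ in your gap term the balancing yields $\log^{1/4}(n)$ rather than $\log^{1/2}(n)$, which is smaller and so still within the stated rate.
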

    
\cref{thm_boots} states that the distribution of the statistic $|\bm{B}_n^u|_{\mathcal{L}^2,\bm{w}_n}$, which is built on Gaussian random variables, can be well approximated by the conditional distribution of $|\widetilde{\bm{B}}_n^{\rm oracle,2}|_{\mathcal{L}^2,\bm{w}_n}$ given $\Upsilon_{b+1}^{n}$. In particular, the first term on the right-hand side of \eqref{eq_compare} represents the multiplier bootstrap approximation error between the Gaussian vector $\bm{U}_n^b$ and the conditional Gaussian vector $\bm{Z}_n^{\ast,b}\mid \Upsilon_{b+1}^n$, while the second term relates to the covariance-matrix comparison error between $\widetilde{\bm{\Xi}}_n$ and $\bm{\Xi}_n$. Combined with \cref{thm_gaussian}, the above theorem also implies that under certain conditions, the conditional distribution of $|\widetilde{\bm{B}}_n^{\rm oracle,2}|_{\mathcal{L}^2, \bm{w}_n}$ given $\Upsilon_{b+1}^{n}$  approximates the distribution of
  $|\bm{B}_n^{\rm oracle}|_{\mathcal{L}^2, \bm{w}_n}$ uniformly across all quantiles and a wide class of smooth weight functions.
\medskip

\noindent
  \textit{Proof}: 
Rewrite $\widetilde{\bm{B}}_n^{\rm oracle,2}(u,v)=\bm{A}_{f}(v)\widetilde{\bm{\Sigma}}
^{-1}\widetilde{\bm{I}}
{\rm diag}(\bm{Z}_n^{\ast,b})
\widetilde{\bm{E}}\bm{\alpha}_{f+}(u)$ where $\widetilde{\bm{\Sigma}}
^{-1}=(\bm{X}^\top \bm{X}/n)^{-1}$ and define an intermediate process as $\bm{B}_n(u,v)=\bm{A}_{f}(v)\bm{\Sigma}
^{-1}\widetilde{\bm{I}}
{\rm diag}(\bm{Z}_n^{\ast,b})
\widetilde{\bm{E}}\bm{\alpha}_{f+}(u)$. By the triangle inequality, it follows that
\begin{align}
    \mathcal{D}'(\bm{Z}_n^{\ast,b}, \bm{U}_n^b)
    :=&\sup_{x\in R, \bm{w}_n\in \mathcal{W}}\left|\Pr\left(
   \left|\widetilde{\bm{B}}_n^{\rm oracle,2}(u,v)\right|
_{\mathcal{L}^2,\bm{w}_n}\le x\mid 
\Upsilon_{b+1}^n\right)-
\Pr\left(\left|\bm{B}_n^u(u,v)\right|
_{\mathcal{L}^2,\bm{w}_n}\le x\right)\right| \notag\\
\le & \sup_{x\in R, \bm{w}_n\in \mathcal{W}}\left|
\Pr\left(
   \left|\widetilde{\bm{B}}_n^{\rm oracle,2}(u,v)\right|
_{\mathcal{L}^2,\bm{w}_n}\le x\mid 
\Upsilon_{b+1}^n\right)-\Pr\left(
   \left|\bm{B}_n(u,v)\right|
_{\mathcal{L}^2,\bm{w}_n}\le x \mid \Upsilon_{b+1}^n\right)
\right| \notag\\
&\quad + \sup_{x\in \mathbb{R}, \bm{w}_n\in \mathcal{W}}\left|
\Pr\left(
   \left|\bm{B}_n(u,v)\right|
_{\mathcal{L}^2,\bm{w}_n}\le x\mid 
\Upsilon_{b+1}^n\right)- \Pr\left(
   \left|\bm{B}_n^u(u,v)\right|
_{\mathcal{L}^2,\bm{w}_n}\le x\right)\right| \notag\\
=&:I+II \label{boots_error}
\end{align}
For the term $I$, first we define two random vectors $\widetilde{\bm{V}}_n^m:={\rm vec}(\widetilde{\bm{\Sigma}}^{-1}\widetilde{\bm{I}}{\rm diag}(\bm{Z}_n^{\ast,b})\widetilde{\bm{E}})\in \mathbb{R}^{bp^2}$ and $\bm{V}_n^m:={\rm vec}(\bm{\Sigma}^{-1}\widetilde{\bm{I}}{\rm diag}(\bm{Z}_n^{\ast,b})\widetilde{\bm{E}})\in \mathbb{R}^{bp^2}$. Consider the event 
        $$\widetilde{T}_x^{w_n}=\left\{\bm{S}\in \mathbb{R}^{bp^2}:
\frac{1}{b}\sum_{j=1}^{b} \left|\bm{E}_j^\top\bm{A}_f(v)
\widetilde{\bm{I}}{\rm diag} (\bm{S})\widetilde{\bm{E}}\bm{\alpha}_{f+}(u)\big/w_{n,j}(u,v)\right|_{\mathcal{L}^2}
	\le x\right\}.$$ Moreover, let $\widetilde{\mathcal{T}}_n=\{\widetilde{T}_x^{w_n}: x\in\mathbb{R}, w_{n,j}(u,v)\in \mathcal{W} \text{~for~all~}j=1,\ldots,b\}$. Then followed by Lemmas \ref{derivative} and \ref{comparison},
   we have
   {\footnotesize
        \begin{align*}
        I & =\sup_{x\in\mathbb{R},\bm{w}_n\in\mathcal{W}}\left|
    \Pr\left(\left|\bm{A}_f(v)
    \widetilde{\bm{I}}{\rm diag}(\widetilde{\bm{V}}_n^m)\widetilde{\bm{E}}\bm{\alpha}_{f+}(u)\right|_{\mathcal{L}^2,
    \bm{w}_n}\le x \mid \Upsilon_{b+1}^n\right)-\Pr\left(\left|\bm{A}_f(v)
    \widetilde{\bm{I}}{\rm diag}(\bm{V}_n^m)\widetilde{\bm{E}}\bm{\alpha}_{f+}(u)\right|_{\mathcal{L}^2,\bm{w}_n}\le x \mid \Upsilon_{b+1}^n \right)\right|\\
    &\le 4(bp^2)^{1/4}\sigma + \sup_{T\in \widetilde{\mathcal{T}}_n}
    \left|\EE\left[h_{T,\sigma}(\widetilde{\bm{V}}_n^m)-h_{T,\sigma}(\bm{V}_n^m)\mid \Upsilon_{b+1}^n\right]\right|\\
    &\le 4(bp^2)^{1/4}\sigma
    +\frac{C}{\sigma}\EE\left[\left|\widetilde{\bm{V}}_n^m-\bm{V}_n^m\right|\mid \Upsilon_{b+1}^n\right]\\
    &\le 4(bp^2)^{1/4}\sigma
    +\frac{C}{\sigma}
    \sqrt{\EE\left[(\widetilde{\bm{V}}_n^m-\bm{V}_n^m)^\top(\widetilde{\bm{V}}_n^m-\bm{V}_n^m)\mid \Upsilon_{b+1}^n\right]}\\
    &= 4(bp^2)^{1/4}\sigma+ \frac{C}{\sigma}\sqrt{{\rm Tr}\left([\bm{I}_p\otimes (\widetilde{\bm{\Sigma}}^{-1}-\bm{\Sigma}^{-1})]^2
    \frac{1}{n-m-b+1}\sum_{i=b+1}^{n-m+1}\left(\frac{1}{\sqrt{m}}\sum_{j=i}^{i+m-1}\bm{z}_j\right)\left(\frac{1}{\sqrt{m}}\sum_{j=i}^{i+m-1}\bm{z}_j^\top\right)\right)}\\
    &\le 4(bp^2)^{1/4}\sigma+\frac{C}{\sigma} \left\Vert \bm{I}_p\otimes (\widetilde{\bm{\Sigma}}^{-1}-\bm{\Sigma}^{-1})\right\Vert
    \sqrt{\frac{1}{n-m-b+1}\sum_{i=b+1}^{n-m+1}\left|\frac{1}{\sqrt{m}}\sum_{j=i}^{i+m-1}\bm{z}_j\right|_F^2}\\
    &\le 4(bp^2)^{1/4}\sigma + \frac{C}{\sigma}b^{3/2}p^{2}\log(n)/\sqrt{n},
        \end{align*}}
        where the third inequality uses Jensen's inequality, the second equality results from ${\rm vec}(AB)=(I\otimes B){\rm vec}(A)$ and the last inequality follows by \cref{lemma_design_matrix_inv}. 
        By choosing the optimal order $\sigma=\bigO(b^{5/8}p^{1/2}n^{-1/4}\log^{1/2}(n))$, we have $I\le Cb^{7/8}p^{5/4}n^{-1/4}\log^{1/2}(n)$.

For $II$, since both $\bm{U}_n^b$ and $\bm{Z}_n^{\ast,b} \mid \Upsilon_{b+1}^n$ are Gaussian vectors in $\mathbb{R}^{bp^2}$,
we will utilize the comparison result in \cref{comparison_cov} to provide a sharper bound. By the proof of \cref{empirical_thm}, it follows that on the event $\mathcal{A}_n$,
    $$ II \le C\left|\widetilde{\bm{\Xi}}_n-\bm{\Xi}_n\right|_F\le Cbp^{17/6}n^{-1/3}h_n.$$
    Putting $I$ and $II$ together, we conclude that
   $$\mathcal{D}'(\bm{Z}_n^{\ast,b}, \bm{U}_n^b)\le C\left(b^{7/8}p^{5/4}n^{-1/4}\log^{1/2}(n)+bp^{17/6}n^{-1/3}h_n\right).$$
 $\hfill \square$
\subsection{Auxiliary technical results}
	Here, we aim to derive the approximation error between the matrices $\widetilde{\bm{\Sigma}}$ and $\bm{\Sigma}$.
    \begin{lemma}\label{lemma_design_matrix}
    Under \cref{ass_dep}, we can obtain
    $\Vert\widetilde{\bm{\Sigma}}-
    \bm{\Sigma}\Vert=\bigO_\Pr
    \left(\frac{bp\log(n)}{\sqrt{n}}\right)$.
	\end{lemma}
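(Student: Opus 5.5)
This is essentially the bound on $\bm{S}_n-\bar{\bm{S}}_n$ in the proof of Proposition \ref{esti_consistency}, now with $\widetilde{\bm{\Sigma}}=\bm{X}^\top\bm{X}/n$ and $\bm{\Sigma}=\EE[\bm{X}^\top\bm{X}/n]$. I would redo that argument entrywise and then pass to the operator norm, this time keeping track of a logarithmic factor.

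Since $\Vert\cdot\Vert\le|\cdot|_F$, it suffices to bound $|\widetilde{\bm{\Sigma}}-\bm{\Sigma}|_F$. This is a $bp\times bp$ matrix, so
$$
|\widetilde{\bm{\Sigma}}-\bm{\Sigma}|_F\le bp\max_{k,l}\big|S_n^{(kl)}-\EE S_n^{(kl)}\big| .
$$
Here $S_n^{(kl)}=n^{-1}\sum_{i=b+1}^n y_i^{(kl)}$, and $y_i^{(kl)}=x_{i-j_1,h_1}x_{i-j_2,h_2}$ is a product of two lagged components.

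The first step is the dependence of the summands. Using Hölder's inequality as in the bound on $\delta_z$ in the proof of Theorem \ref{thm_gaussian}, I would show
$$
\max_{k,l}\Vert y_i^{(kl)}-\tilde y_i^{(kl)}\Vert_{q/2}\le C\big(\delta_x(l-j_1,q)+\delta_x(l-j_2,q)\big),
$$
with the convention $\delta_x(h,q)=0$ for $h<0$. The uniform moment bound $\sup_k\Vert x_{i,k}\Vert_q<\infty$ from Assumption \ref{updc} supplies the factors not being differenced. Summing over $l$ and using Assumption \ref{ass_dep} with $\tau>5$, the dependence-adjusted norm $\sum_{l\ge0}\delta_y(l,q/2)$ is bounded uniformly in $(k,l)$ and $b$: the lag shift by $j\le b$ only reindexes the sum.

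The second step is a maximal inequality over the $(bp)^2$ entries. By the Burkholder-type bound of \citep[Lemma A.1]{LiuLin09}, each entry satisfies $\Vert S_n^{(kl)}-\EE S_n^{(kl)}\Vert_{q/2}\le Cn^{-1/2}$. For the maximum I would not use a union bound with $q/2$ moments alone, since that loses a factor $(bp)^{4/q}$. Instead I would apply a Nagaev/Fuk–Nagaev-type tail inequality for dependent sums: either Theorem 2 of Liu, Xiao and Wu (2013), or a truncation plus the $m$-dependent blocking and Bernstein step already used for term III in the proof of Proposition \ref{esti_consistency} (Lemma \ref{lemma_berns}). The aim is
$$
\max_{k,l}\big|S_n^{(kl)}-\EE S_n^{(kl)}\big|=\bigO_\Pr\big(\sqrt{\log(bp)/n}+(bp)^{4/q}n^{-1+2/q}\big).
$$
Since $b,p$ grow polynomially in $n$ and $q>9$, this is $\bigO_\Pr(\log(n)/\sqrt n)$. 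Multiplying by $bp$ then gives the claimed rate $bp\log(n)/\sqrt n$. The loose $\log n$ in place of $\sqrt{\log n}$ leaves room to absorb the polynomial tail term.

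The main obstacle is this maximal step. One must check that the polynomial-tail term of the Nagaev inequality is $o(\log n/\sqrt n)$ under $p\asymp n^{c_1}$ and the growth of $b$. If it is not, I would fall back on the simpler route: combine the three-term martingale decomposition from the proof of Proposition \ref{esti_consistency}, which already yields $\bigO_\Pr(bpn^{-1/2}+\sqrt{bp\log n/n})$, with the observation that this is dominated by $bp\log(n)/\sqrt n$.
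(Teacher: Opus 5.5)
Your argument reaches the statement, but by a different route from the paper, and the step you flag as the main obstacle is one you do not need. The paper replaces $\bm x_i^{(b)}$ by the $m$-dependent conditional expectations $\bar{\bm x}_i^{(b)}=\EE(\bm x_i^{(b)}\mid\eta_{i-m-1},\ldots,\eta_{i-1})$. It bounds $\widetilde{\bm\Sigma}-\bar{\bm\Sigma}$ through $bp\,|\cdot|_{\max}$ and a Zhang--Cheng-type high-dimensional tail inequality. It bounds $\bar{\bm\Sigma}-\EE\bar{\bm\Sigma}$ by splitting the indices into $m$ interleaved independent subsequences and applying the matrix Bernstein inequality (Lemma~\ref{lemma_berns}); the choice $m\asymp\log n$ there is what produces the factor $\log n$.

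You work entrywise instead. But since you pass through the Frobenius norm anyway, second moments suffice. The entrywise bound $\Vert S_n^{(kl)}-\EE S_n^{(kl)}\Vert_{q/2}\le Cn^{-1/2}$ that you already derive is uniform in $b$, because the lag shift only reindexes $\sum_{h\ge0}\delta_y(h,q/2)$. Hence
\[
\EE\big|\widetilde{\bm\Sigma}-\bm\Sigma\big|_F^2=\sum_{k,l}\operatorname{Var}\big(S_n^{(kl)}\big)\le C(bp)^2/n .
\]
Chebyshev then gives $\Vert\widetilde{\bm\Sigma}-\bm\Sigma\Vert=\bigO_\Pr(bp/\sqrt n)$. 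This is sharper than the lemma and needs no maximal inequality, truncation, blocking, or growth condition on $b,p$. It is essentially what your fallback gives, and that fallback is legitimate because $\bm S_n=\widetilde{\bm\Sigma}$ and $\bar{\bm S}_n=\bm\Sigma$ in \eqref{projection_matrix_diff}.

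Your primary route has a soft spot worth knowing about. Nagaev-type inequalities for physically dependent sums are stated in terms of weighted norms such as $\sup_{m\ge0}(m+1)^{\alpha}\sum_{h\ge m}\delta_y(h,q/2)$ with $\alpha>1/2-2/q$. For an entry built from lag $j$, these tail sums only start to decay once $m>j$, so the constant is of order $b^{\alpha}$, not $\bigO(1)$. Equivalently, any $m$-dependent blocking needs block length $m\gg b$. The gap between $\sqrt{\log n}$ and $\log n$ absorbs this when $b\asymp\log n$, but not for polynomially growing $b$. The same caveat applies implicitly to the paper's $m$-dependent approximation with $m\asymp\log n$, and its Bernstein step additionally requires almost surely bounded summands. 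The second-moment Frobenius argument avoids both issues.
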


    \noindent
\textit{Proof}: This proof primarily utilizes a Bernstein-type inequality for sums of independent random matrices (\cref{lemma_berns}). To facilitate it, we first introduce the $m$-dependent approximation sequence to deal with the issue of independence. Specifically, recall that $\widetilde{\bm{\Sigma}}=\bm{X}^\top\bm{X}/n=\sum_{i=b+1}^n \bm{x}_i^{(b)}\big(\bm{x}_i^{(b)}\big)^\top/n$ and we denote $$\bar{\bm{x}}_{i}^{(b)}=
\EE(\bm{x}_{i}^{(b)}|\eta_{i-m-1},...,\eta_{i-1}),~i=b+1,\cdots,n.$$
		It is easy to find that $\bar{\bm{x}}_{i}^{(b)}$ and $\bar{\bm{x}}_{j}^{(b)}$ are independent if $|i-j|>m$. Further denote $\mathbf{X}_i={\rm vec}\Big(\bm{x}_{i}^{(b)}\big(
        \bm{x}_{i}^{(b)}
        \big)^\top\Big)\in \mathbb{R}^{b^2p^2}$, $\bar{\mathbf{X}}_i={\rm vec}\Big(\bar{\bm{x}}_{i}^{(b)}
        \big(\bm{x}_{i}
        ^{(b)}\big)
        ^\top\Big)\in \mathbb{R}^{b^2p^2}$ and
        $\bar{\bm{\Sigma}}=
        \sum_{i=b+1}^n
        \bar{\bm{x}}_i^{(b)}
        \big(\bar{\bm{x}}
        _i^{(b)}\big)^\top$. By Assumption 4 of the paper and the discussion of \cite[Remark 2.3]{Zhang18}, we can derive that 
        $$\Omega_{m,q}:=
        \sum_{l=m}^\infty \max_i\left\Vert 
        |\bm{G}(\mathcal{F}_i)-
        \bm{G}(\mathcal{F}_{i,l})|_\infty\right\Vert_q\le C(b^2p^2)^{1/q}m^{-\tau+1}.$$
        Consequently, we can obtain
        \begin{align*}
            \Pr\Big(\big\Vert
            \widetilde{\bm{\Sigma}}-\bar{\bm{\Sigma}}
            \big\Vert\ge t\Big)&\le 
            \Pr\Big(bp\big|
            \widetilde{\bm{\Sigma}}-\bar{\bm{\Sigma}}\big|_{\max}
            \ge t\Big)\\
            &\le \Pr\left(\Big|\frac{1}{\sqrt{n}}\sum_{i=b+1}^n
            (\mathbf{X}_i-\bar{\mathbf{X}}_i)\Big|_\infty\ge \frac{t\sqrt{n}}{bp}\right)\\
            &\le \frac{C\{\log(b^2p^2)\}^{q/2}\Omega_{m+1,q}^q}{(t\sqrt{n}/bp)^q}\\
            & \le \frac{C\{\log(b^2p^2)\}^{q/2}b^2p^2(m+1)^{-q(\tau-1)}}{(t\sqrt{n}/bp)^{q}}
        \end{align*}
        By choosing $t\sqrt{n}/bp=\bigO\big((bp)
        ^{2/q}m^{-\tau+2}\big)$ and $m$ sufficiently large, we have $\Pr\Big(\big\Vert
            \widetilde{\bm{\Sigma}}-\bar{\bm{\Sigma}}
            \big\Vert\ge t\Big)=o(1)$. Moreover, we conclude that with high probability
            $$\big\Vert
            \widetilde{\bm{\Sigma}}-\bar{\bm{\Sigma}}
            \big\Vert\le C(bp)^{-1+2/q}m^{-\tau+2}/\sqrt{n}.$$
Now, it suffices to control the $m$-dependence approximation sequence. To employ \cref{lemma_berns},  we need to design the blocks of sequence with independent random variables. Namely, define $k_0=\lfloor \frac{n-b}{m}\rfloor$ denote the index set sequences for $i=b+1,\ldots,b+m$ as 
$$\mathcal{I}_i=\begin{cases}
    \{i+km: k=0,1,\ldots,k_0\}, & i+k_0m\le n,\\
    \{i+km: k=0,1,\ldots,k_0-1\}, &\text{otherwise}.
\end{cases}$$
Similar to the proof in \citep[Lemma 7]{CZ2022}, we can calculate
\begin{equation*}
    R_m=\frac{1}{n}\max_i\Big\Vert
    \bar{\bm{x}}_i^{(b)}\big(
    \bar{\bm{x}}_i^{(b)})^\top-\EE\big[\bar{\bm{x}}_i^{(b)}\big(
    \bar{\bm{x}}_i^{(b)})^\top\big]\Big
    \Vert \le \frac{Cbp}{n},
\end{equation*}
and $\sigma_m^2:=\left\Vert \frac{1}{n^2}\sum_{k\in\mathcal{I}_i}\EE\Big[
\bar{\bm{x}}_k^{(b)}
\big(\bar{\bm{x}}_k^{(b)}
\big)^\top-\EE\big\{\bar{\bm{x}}_k^{(b)}
\big(\bar{\bm{x}}_k^{(b)}\big)^\top\big\}
\Big]^2\right\Vert \le \frac{Cb^2p^2}{mn}$. Then it yields that 
\begin{align*}
    &\Pr\left(\left\Vert\bar{\bm{\Sigma}}-\EE\bar{\bm{\Sigma}}
    \right\Vert \ge t\right)\\
    \le & Cm\max_i\Pr\left(
    \left\Vert\frac{1}{n}\sum_{k \in \mathcal{I}_i}\Big\{
    \bar{\bm{x}}_k^{(b)}
    \big(\bar{\bm{x}}_k^{(b)}\big)^\top-
    \EE\left[\bar{\bm{x}}_k^{(b)}
    \big(\bar{\bm{x}}_k^{(b)}\big)\right]
    \Big\}\right\Vert\ge \frac{t}{m}\right)\\
    \le & Cmbp\cdot {\rm exp}\left(
    \frac{-t^2/(2m^2)}{\sigma_m^2+R_mt/(3m)}\right).
\end{align*}
Then by choosing $m=\bigO(\log n)$ and $t=\bigO\Big(\frac{bp\log(n)}{\sqrt{n}}\Big)$, we can obtain
\begin{align*}
    &\left\Vert \widetilde{\bm{\Sigma}}-
    \bm{\Sigma}\right\Vert\\
    \le & \left\Vert
    \widetilde{\bm{\Sigma}}-
    \bar{\bm{\Sigma}}\right\Vert+
    \left\Vert
    \bar{\bm{\Sigma}}-
    \EE\bar{\bm{\Sigma}}\right\Vert+
    \left\Vert
    \EE\bar{\bm{\Sigma}}-
    \bm{\Sigma}\right\Vert\\
    \le &\frac{Cbp\log(n)}{\sqrt{n}}.
\end{align*}
$\hfill \square$  

\begin{lemma}
\label{lemma_design_matrix_inv}
Under Assumptions \ref{ass_dep} and \ref{ass_design_matrix}, we have 
$$\left\Vert \widetilde{\bm{\Sigma}}^{-1}-
    \bm{\Sigma}^{-1}\right\Vert = 
    \bigO_\Pr\left(\frac{bp\log(n)}{\sqrt{n}}\right).$$
\end{lemma}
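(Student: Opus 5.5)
The plan is to write the difference of inverses as a product and bound each factor in operator norm. The starting point is the algebraic identity
\begin{align*}
\widetilde{\bm{\Sigma}}^{-1}-\bm{\Sigma}^{-1}
=\widetilde{\bm{\Sigma}}^{-1}\big(\bm{\Sigma}-\widetilde{\bm{\Sigma}}\big)\bm{\Sigma}^{-1},
\end{align*}
which holds whenever $\widetilde{\bm{\Sigma}}$ is invertible. Submultiplicativity of the operator norm then gives
\begin{align*}
\big\Vert \widetilde{\bm{\Sigma}}^{-1}-\bm{\Sigma}^{-1}\big\Vert \le \big\Vert\widetilde{\bm{\Sigma}}^{-1}\big\Vert\,\big\Vert \widetilde{\bm{\Sigma}}-\bm{\Sigma}\big\Vert\,\big\Vert\bm{\Sigma}^{-1}\big\Vert .
\end{align*}
So the claim reduces to showing that the first and last factors are $\bigO_\Pr(1)$ and the middle factor is $\bigO_\Pr\big(bp\log(n)/\sqrt{n}\big)$.

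The three factors are handled as follows.
\begin{itemize}
\item The middle factor is exactly \cref{lemma_design_matrix}, which gives $\Vert\widetilde{\bm{\Sigma}}-\bm{\Sigma}\Vert=\bigO_\Pr\big(bp\log(n)/\sqrt{n}\big)$ under \cref{ass_dep}.
\item For the deterministic factor, \cref{ass_design_matrix} gives $\lambda_{\min}(\bm{\Sigma})\ge\kappa_3$. Hence $\Vert\bm{\Sigma}^{-1}\Vert\le 1/\kappa_3$.
\item For the random inverse, Weyl's inequality gives $\lambda_{\min}(\widetilde{\bm{\Sigma}})\ge \kappa_3-\Vert\widetilde{\bm{\Sigma}}-\bm{\Sigma}\Vert$. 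On the event $\{\Vert\widetilde{\bm{\Sigma}}-\bm{\Sigma}\Vert\le \kappa_3/2\}$, the matrix $\widetilde{\bm{\Sigma}}$ is invertible and $\Vert\widetilde{\bm{\Sigma}}^{-1}\Vert\le 2/\kappa_3$.
\end{itemize}

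The last bullet is the step that needs care, namely showing that the event above has probability tending to one. This requires the rate from \cref{lemma_design_matrix} to vanish, i.e.\ $bp\log(n)/\sqrt{n}\to 0$. That holds under the standing growth conditions on $b$ and $p$ (e.g.\ $b\asymp\log n$ and $p\asymp n^{c_1}$ with $c_1$ small, as required elsewhere for the bounds in \cref{empirical_thm} to vanish). I would state this explicitly as the implicit rate condition.

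On that event, combining the three bounds gives $\Vert\widetilde{\bm{\Sigma}}^{-1}-\bm{\Sigma}^{-1}\Vert\le (2/\kappa_3^2)\,\Vert\widetilde{\bm{\Sigma}}-\bm{\Sigma}\Vert$. Its complement has probability $o(1)$, so the $\bigO_\Pr$ statement follows.
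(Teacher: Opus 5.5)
Your proposal is correct and follows the paper's proof: the identity $\widetilde{\bm{\Sigma}}^{-1}-\bm{\Sigma}^{-1}=\widetilde{\bm{\Sigma}}^{-1}(\bm{\Sigma}-\widetilde{\bm{\Sigma}})\bm{\Sigma}^{-1}$, submultiplicativity of the operator norm, and \cref{lemma_design_matrix} for the middle factor. The only difference is that you justify $\Vert\widetilde{\bm{\Sigma}}^{-1}\Vert=\bigO_\Pr(1)$ explicitly via Weyl's inequality under the implicit rate condition $bp\log(n)/\sqrt{n}\to 0$, a step the paper takes for granted.
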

\noindent
\textit{Proof}:
According to the result in \cref{lemma_design_matrix}, one can obtain 
\begin{align*}
    &\left\Vert
    \widetilde{\bm{\Sigma}}
    ^{-1}-\bm{\Sigma}^{-1}
    \right\Vert\\
    =&\left\Vert
    \widetilde{\bm{\Sigma}}^{-1}(\bm{\Sigma}-\widetilde{\bm{\Sigma}})
    \bm{\Sigma}^{-1}\right\Vert\\
    \le &\left\Vert
    \widetilde{\bm{\Sigma}}^{-1}
    \right\Vert \left\Vert
    \bm{\Sigma}-\widetilde{\bm{\Sigma}}
    \right\Vert \left\Vert
    \bm{\Sigma}^{-1}\right\Vert\\
    \le & \frac{Cbp\log(n)}{\sqrt{n}}.
\end{align*}
 $\hfill \square$

To prove the consistency of the estimated quantities, we denote 
\begin{align*}
    \widehat{
    \bm{\epsilon}}_i&=
    \widehat{\bm{x}}_i-\sum_{j=1}^b\widehat{\bm{\Phi}}_j\widehat{\bm{x}}_{i-j},\\
    \widehat{f}_{k}^2&=\frac{1}{n}\sum_{i=1}^n\left(r_{i,k}-\frac{1}{n}\sum_{i=1}^n r_{i,k}\right)^2,
\end{align*}
where $\widehat{x}_{i,k}=r_{i,k}/\widehat{f}_k$.

\begin{lemma}\label{consis_epsilon}
     Under Assumptions \ref{ass_conti_u}, \ref{weak_depen_components}--\ref{ass_design_matrix} of the main article, we have for each $k=1,\ldots,p$,
     \begin{align*}
    \frac{\widehat{f}_k}{f_k}-1 &= \bigO_\Pr(n^{-1/2}),\\
    \max_{b+1\le i\le n}|\widehat{\epsilon}_{i,k}-\epsilon_{i,k}|&=
    \bigO_\Pr\big(bp^{3/2}n^{-1/2+2/q}\big).
     \end{align*}
 \end{lemma}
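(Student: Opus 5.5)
The argument has two parts, one for each claim of Lemma \ref{consis_epsilon}.

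For the ratio $\widehat f_k/f_k$, write $\widehat f_k^2/f_k^2-1 = n^{-1}\sum_{i=1}^n (x_{i,k}^2-1) - (n^{-1}\sum_{i=1}^n x_{i,k})^2$, using $r_{i,k}=f_kx_{i,k}$ and $\EE x_{i,k}^2=1$. By Definition \ref{def1} and Assumption \ref{ass_dep}, both $\{x_{i,k}\}$ and $\{x_{i,k}^2-1\}$ have physical dependence measures that are summable in $l$. For the squares this follows from $\Vert x^2-x^{*2}\Vert_{q/2}\le \Vert x-x^*\Vert_q\Vert x+x^*\Vert_q$ together with the moment bound in Assumption \ref{updc}. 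Lemma A.1 of \cite{LiuLin09} (Burkholder) then shows that each centred partial sum is $\bigO_\Pr(n^{-1/2})$, uniformly in $k$ since the dependence measure is a supremum over $k$. The squared-mean term is therefore $\bigO_\Pr(n^{-1})$, and the delta method gives $\widehat f_k/f_k-1=\bigO_\Pr(n^{-1/2})$. As a by-product, $\max_{k\le p}|\widehat f_k/f_k-1|=\bigO_\Pr(p^{2/q}n^{-1/2})$ by a union bound with $q/2$ moments; this is what the residual step uses.

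For the residuals, decompose
\[
\widehat{\bm\epsilon}_i-\bm\epsilon_i=(\widehat{\bm x}_i-\bm x_i)-\sum_{j=1}^b(\widehat{\bm\Phi}_j-\bm\Phi_j)\widehat{\bm x}_{i-j}-\sum_{j=1}^b\bm\Phi_j(\widehat{\bm x}_{i-j}-\bm x_{i-j})+\bm\xi_i,
\]
where $\bm\xi_i=\sum_{j>b}\bm\Phi_j\bm x_{i-j}$ is the truncation remainder from Proposition \ref{var_approx}. Each term is bounded uniformly over $i$ as follows.
\begin{enumerate}
\item Since $\widehat{\bm x}_i-\bm x_i={\rm diag}(f_k/\widehat f_k-1)\bm x_i$, the first and third terms are controlled by $\max_k|f_k/\widehat f_k-1|\cdot\max_{i}\max_k|x_{i,k}|$. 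Using $\max_{i\le n,k\le p}|x_{i,k}|=\bigO_\Pr((np)^{1/q})$ (union bound with $q$ moments) and $\sum_j\Vert\bm\Phi_j\Vert<\infty$ (Lemma \ref{phi_rate}), these terms are $\bigO_\Pr(p^{1/2}n^{-1/2+2/q})$ coordinatewise, which is negligible.
\item For the second term, Proposition \ref{esti_consistency} gives $\Vert\widehat{\bm\Phi}_j-\bm\Phi_j\Vert\le C\sqrt{bp^2\log n/n}$. The Euclidean norm satisfies $|\widehat{\bm x}_{i-j}|\le p^{1/2}\max_k|\widehat x_{i-j,k}|=\bigO_\Pr(p^{1/2}(np)^{1/q})$. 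Summing over the $b$ lags gives $b\cdot b^{1/2}p\,(\log n)^{1/2}n^{-1/2}\cdot p^{1/2}(np)^{1/q}$, which after absorbing logarithmic and $b^{1/2}$ factors is the claimed $bp^{3/2}n^{-1/2+2/q}$. The extra $n^{1/q}$ slack absorbs the $\log n$ and $p^{1/q}$ factors because $p\asymp n^{c_1}$.
\item The remainder $\bm\xi_i$ is $\bigO_\Pr(p^{1/2}b^{-\tau+2}(\log b)^{\tau-1})$ by Lemma \ref{phi_rate}. With $b\asymp n^{1/(\tau-3)}$ or a similar polynomial choice and $\tau>5$, it is dominated.
\end{enumerate}

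The main obstacle is the uniformity over $i$ in the second term. The maximum over $n$ indices costs the factor $n^{1/q}$, and one must use the operator-norm rate from Proposition \ref{esti_consistency} rather than a crude Frobenius bound so as not to lose an additional power of $p$. If the sum over $j$ cannot be handled by the triangle inequality with the stated exponent of $b$, I would instead bound $|(\widehat{\bm\Phi}-\bm\Phi)^\top\widehat{\bm x}_i^{(b)}|$ jointly, using $\Vert\widehat{\bm\Phi}-\bm\Phi\Vert$ for the stacked $bp\times p$ matrix and $|\widehat{\bm x}_i^{(b)}|\le (bp)^{1/2}\max|\widehat x|$. This yields $b\,p^{3/2}$ up to the $n^{2/q}$ and $\log$ factors.
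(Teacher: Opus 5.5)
Your treatment of $\widehat f_k/f_k$ is the paper's argument; it cites \cite{wu2007strong} where you cite Liu--Lin, with the same content.

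For the residuals you take a genuinely different route. The paper bounds $\Vert\widehat\epsilon_{i,k}-\epsilon_{i,k}\Vert_{q/2}$ for each fixed $i$. It handles the regression error through the whole stacked column $\bm\Phi^M\bm V_k=\bm S_n^{-1}\bm X^\top\bm\epsilon\bm V_k/n$, paired with $\Vert\bm E_i^\top\bm X\Vert_q$, and only then applies a maximal inequality over the $n$ indices, which produces the factor $n^{2/q}$. You work pathwise instead. The quantities $\max_k|f_k/\widehat f_k-1|$ and $\Vert\widehat{\bm\Phi}-\bm\Phi\Vert$ do not depend on $i$, so only the regressors need a maximal inequality. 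This is cleaner, for three reasons:
\begin{itemize}
\item it avoids the paper's informal multiplication of $L^q$ norms by quantities such as $\Vert\widehat{\bm\beta}\Vert$ and $\Vert(\bm X^\top\bm X/n)^{-1}\Vert$, which are only $\bigO_\Pr(1)$;
\item it makes uniformity in $k$ explicit;
\item it keeps the response-side error $\widehat{\bm x}_i-\bm x_i$, which the paper's decomposition (using $\bm Y$ in place of $\widehat{\bm Y}$) drops.
\end{itemize}

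The step that fails as written is your primary treatment of the second term. Summing $\Vert\widehat{\bm\Phi}_j-\bm\Phi_j\Vert\,|\widehat{\bm x}_{i-j}|$ over $j$ gives $b^{3/2}p^{3/2}$, and the extra $b^{1/2}$ cannot be absorbed. In the paper's regime $b$ is polynomial in $n$; for instance, $b\asymp n^{1/(\tau-3)}$ is used in the proof of \cref{empirical_thm}. For $\tau$ near $5$, $b^{1/2}$ is then a larger power of $n$ than $n^{1/q}<n^{1/9}$.

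Your fallback is therefore necessary, not optional. It needs the rate for the whole stacked $bp\times p$ matrix, $\Vert\widehat{\bm\Phi}-\bm\Phi\Vert=\bigO_\Pr(\sqrt{bp^2\log n/n})$. This is not what \cref{esti_consistency} states: from its blockwise bound you only get $\sqrt b$ times this for the stacked matrix, which brings the $b^{3/2}$ back. It is, however, what the proof of that proposition delivers, since $\Vert\bm\Phi^M\Vert\le\Vert\bm S_n^{-1}\Vert\,\Vert\bm\epsilon^\top\bm X/n\Vert=\bigO_\Pr(\sqrt{bp^2/n})$. The $\bm\Phi^R$ part is negligible for the same reason as your $\bm\xi_i$ term. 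With this rate,
\[
\sqrt{bp^2\log n/n}\cdot(bp)^{1/2}(np)^{1/q}
\]
gives $bp^{3/2}n^{-1/2+2/q}$ once $(\log n)^{1/2}p^{1/q}$ is absorbed into the remaining $n^{1/q}$.

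Two minor points, neither of which affects the conclusion:
\begin{itemize}
\item The $\bm\xi_i$ bound picks up a factor $n^{1/q}$ once you take $\max_i$; it is still negligible.
\item Your first bullet actually gives $p^{1/2+3/q}n^{-1/2+1/q}$ rather than $p^{1/2}n^{-1/2+2/q}$; this is still dominated by the main term.
\end{itemize}
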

\noindent
\textit{Proof}: Note that $\EE[x_{i,k}]=0$ and $f_{k}^2=\EE[r_{i,k}^2]$. Then it yields that
$$ \widehat{f}_k^2=\frac{1}{n}\sum_{i=1}^n\left(
r_{i,k}-\frac{1}{n}\sum_{i=1}^n r_{i,k}\right)^2=f_k^2\left(
1+\frac{1}{n}\sum_{i=1}^n (x_{i,k}^2-1)-\Big(\frac{1}{n}\sum_{i=1}^n x_{i,k}\Big)^2\right).$$
Let $y_{i,k}=x_{i,k}^2-1$ and define $\delta_y(l,\cdot)$ as the physical dependence measure of $y_{i,k}$. Then by Assumption 4 of the main article, we can obtain
$$\delta_y(l,q/2)=\max_{1\le k\le p}\Vert y_{i,k}-y_{i,k}^\ast\Vert_{q/2}\le 
\max_{1\le k\le p}\Vert x_{i,k}+x_{i,k}^\ast\Vert_q \max_{1\le k\le p}\Vert x_{i,k}-x_{i,k}^\ast\Vert_q \le C(l+1)^{-\tau}.$$
According to \citep[Theorem 1]{wu2007strong}, we have for $k=1,\ldots,p$,
    $$\left\Vert \sum_{i=1}^n y_{i,k}/n \right\Vert_{q/2}=\bigO(1/\sqrt{n}),\quad \left\Vert \Big(\sum_{i=1}^n x_{i,k}/n \Big)^2\right\Vert_{q/2}
    =\bigO(1/n),\qquad q>4.$$ Consequently, one can obtain
    $\Vert\widehat{f}_k^2
    /f_k^2-1\Vert_{q/2}
    =\bigO(1/\sqrt{n})$ for $q>4$. By elementary calculation, we conclude that $\widehat{f}_k/f_k-1=\bigO_\Pr(n^{-1/2})$ for $k=1,\ldots,p$.

    On the other hand, to investigate the estimation consistency of the residual $\bm{\epsilon}_{i,k}$, recall $$\bm{\xi}=(\bm{0}_{p\times 1}, \bm{\Phi}_{b+1}\bm{x}_1,\sum_{j=b+1}^{b+2}\bm{\Phi}_j\bm{x}_{b+3-j},\ldots,\sum_{j=b+1}^{n-1}\bm{\Phi}_j\bm{x}_{n-j})^\top.$$ Then based on the definition of the residual and the multiple linear regression model built in Section 3, we have
    \begin{align*}
        \widehat{\bm{\epsilon}}-\bm{\epsilon}&=\left(\bm{Y}-\widehat{\bm{X}}\widehat{\bm{\beta}}\right)-(\bm{Y}-\bm{X}\bm{\beta}-\bm{\xi})\\
        &=(\bm{X}-\widehat{\bm{X}})\widehat{\bm{\beta}}-\bm{X}(\widehat{\bm{\beta}}-\bm{\beta})+\bm{\xi}.
    \end{align*}
    Furthermore, denote $\bm{E}_i$ is an $(n-b)$-dimensional vector with the $i$th entry being 1 and others being 0, $\bm{V}_k$ is a $p$-dimensional vector with the $k$th entry being 1 and others being 0. Armed with \eqref{eq_diff}, it turns out that for any $k=1,\ldots,p$,
    \begin{align*}
        &\Vert \widehat{\epsilon}_{i,k}-\epsilon_{i,k}\Vert_{q/2}\\
        \le & \Vert \bm{E}_i^\top
        (\bm{X}-\widehat{\bm{X}})\widehat{\bm{\beta}}\bm{V}_k\Vert_{q/2}+\Vert \bm{E}_i^\top \bm{X}(\bm{\Phi}^M+\bm{\Phi}^R)\bm{V}_k\Vert_{q/2}+\Vert \bm{E}_i^\top\bm{\xi}
        \bm{V}_k\Vert_{q/2}\\
        \le &\Vert \bm{E}_i^\top
        (\bm{X}-\widehat{\bm{X}})\Vert_q\Vert \widehat{\bm{\beta}}\Vert \Vert \bm{V}_k\Vert_{q}+
        \Vert \bm{E}_i^\top \bm{X}\Vert_q\left\Vert \left(\frac{\bm{X}^\top\bm{X}}{n}\right)^{-1}\right\Vert \left\Vert\sum_{i=b+1}^n \bm{x}_i^{(b)}
        \bm{\epsilon}_i^\top\bm{V}_k/n\right\Vert_q+\Vert \bm{E}_i^\top \bm{X}\Vert_q \Vert \bm{\Phi}^R\bm{V}_k\Vert_q+\Vert \bm{E}_i^\top \bm{\xi}\bm{V}_k\Vert_{q/2}\\
        \le & C\left(\sqrt{bp/n}+\sqrt{b^2p^3/n}+\sqrt{bp^2/n}b^{-\tau+2}(\log b)^{\tau-2}+\sqrt{p}b^{-\tau+2}(\log b)^{\tau-2}\right)\\
        \le & Cbp^{3/2}/\sqrt{n}.
    \end{align*}
    By the maximum inequality, we obtain that $\Vert \max_{b+1 \le i\le n}|\widehat{\epsilon}_{i,k}-
    \epsilon_{i,k}|\Vert_{q/2}\le Cbp^{3/2}n^{-1/2+2/q}$ for $q>4$ and hence conclude the estimation consistency $\max_{b+1 \le i\le n}|\widehat{\epsilon}_{i,k}-
    \epsilon_{i,k}|=\bigO_\Pr(bp^{3/2}n^{-1/2+2/q})$ for all $k=1,\ldots,p$.
$\hfill \square$

    \section{Auxiliary lemmas}\label{app_auxilary}
    In this section, we will introduce some lemmas used in our proofs of the main results. 

Denote $\mathcal{T}$ as the collection of all convex sets in $\mathbb{R}^p$. For $T\in \mathcal{T}$, let $h_T(x)=I_T(x)$ and denote the smoothed function
$h_{T,\sigma}(\omega)=\psi\left({\rm dist}(\omega,T)/\sigma\right)$ where ${\rm dist}(\omega,T)=\inf_{v\in T}|\omega-v|$ and $$\psi(x)=
\begin{cases}
	1 & x<0,\\
	1-2x^2 &0\le x<\frac{1}{2},\\
	2(1-x)^2 &\frac{1}{2}\le x<1,\\
	0 &x\ge1.
\end{cases}$$
Followed by the notation in \cite{Fang15}, define
$$\mathcal{D}(\mathscr{L}(\bm{W}), \mathscr{L}(\bm{Z}))=\sup _{A \in \mathcal{A}}|\mathbb{P}(\bm{W} \in A)-\mathbb{P}(\bm{Z} \in A)|,$$ where
where $\bm{W}, \bm{Z}$ are Gaussian random vectors and $\mathcal{A}$ denotes the collection of all convex sets in $\mathbb{R}^d$.

\begin{lemma}[Lemma 2.3 of \cite{Bentkus03}]\label{derivative}
	The function $h_{T,\sigma}(\omega)$ as defined above has the property
	$$\left|\nabla h_{T,\sigma}(\omega)\right|\le 2\sigma^{-1}~\text{for all}~\omega \in \mathbb{R}^{p}.$$ 
	\end{lemma}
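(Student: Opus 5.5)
Lemma~\ref{derivative} is the gradient bound $|\nabla h_{T,\sigma}(\omega)|\le 2\sigma^{-1}$ for the smoothed indicator $h_{T,\sigma}(\omega)=\psi({\rm dist}(\omega,T)/\sigma)$. The plan is to use the chain rule, combining two facts: $\psi$ has derivative bounded by $2$, and the distance function to a convex set is $1$-Lipschitz (and differentiable off $T$ with unit gradient). Since $h_{T,\sigma}$ is a composition, $\nabla h_{T,\sigma}(\omega)=\sigma^{-1}\psi'({\rm dist}(\omega,T)/\sigma)\,\nabla{\rm dist}(\omega,T)$ wherever this makes sense, and the bound follows immediately. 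The argument splits into the three steps below.

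\emph{Step 1 (the profile $\psi$).} From the piecewise definition, $\psi'(x)=0$ for $x<0$ and for $x\ge 1$. On $[0,1/2)$ we have $\psi'(x)=-4x$, and on $[1/2,1)$ we have $\psi'(x)=-4(1-x)$. Both one-sided expressions equal $-2$ at $x=1/2$, and the derivative vanishes at $x=0$ and at $x=1$. Hence $\psi\in C^1(\mathbb{R})$ with $\sup_x|\psi'(x)|=2$, the supremum being attained at $x=1/2$.

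\emph{Step 2 (the distance function).} Assume $T$ is nonempty and closed; if it is not closed, replace it by its closure, which does not change ${\rm dist}$. Let $\pi_T$ denote the metric projection onto $T$, which is well defined by convexity. For $\omega\notin T$, ${\rm dist}(\cdot,T)$ is differentiable with $\nabla{\rm dist}(\omega,T)=(\omega-\pi_T(\omega))/|\omega-\pi_T(\omega)|$, a unit vector. This is the standard identity $\nabla\tfrac12{\rm dist}^2=\omega-\pi_T(\omega)$, obtained from the firm nonexpansiveness of $\pi_T$, combined with taking a square root where ${\rm dist}>0$. Moreover, ${\rm dist}(\cdot,T)$ is $1$-Lipschitz everywhere.

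\emph{Step 3 (combination and the boundary of $T$).} On $\{\omega:\ {\rm dist}(\omega,T)>0\}$, the chain rule gives $|\nabla h_{T,\sigma}(\omega)|\le \sigma^{-1}\cdot 2\cdot 1$. On the interior of $T$, $h_{T,\sigma}\equiv 1$, so the gradient is zero. The only delicate point is $\partial T$, where ${\rm dist}$ itself fails to be differentiable. Here I would use that $\psi'(0)=0$ together with the Lipschitz property: $|h_{T,\sigma}(\omega+\delta)-h_{T,\sigma}(\omega)|=|\psi({\rm dist}(\omega+\delta,T)/\sigma)-\psi(0)|\le \sup_{0\le x\le|\delta|/\sigma}|\psi(x)-1|=2|\delta|^2/\sigma^2=o(|\delta|)$ for small $|\delta|$. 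So $h_{T,\sigma}$ is differentiable at boundary points with zero gradient.

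This boundary case is the only step needing care. In any case, the paper only uses the lemma through the mean value inequality $|h_{T,\sigma}(x)-h_{T,\sigma}(y)|\le 2\sigma^{-1}|x-y|$, and that inequality already follows directly from the Lipschitz constants $2$ (for $\psi$) and $1$ (for ${\rm dist}$) without any differentiability discussion.
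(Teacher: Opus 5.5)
Your proof is correct. The paper gives no proof of its own and simply cites Lemma 2.3 of Bentkus (2003); your argument is the standard verification of that lemma: $|\psi'|\le 2$ combined with the unit gradient $(\omega-\pi_T(\omega))/|\omega-\pi_T(\omega)|$ of the distance to a convex set off $\bar T$, and $\psi'(0)=0$ to handle points of $\bar T$. Your closing remark is also accurate: every use of the lemma in the paper, e.g.\ bounding $|\EE[h_{T,\sigma}(W)-h_{T,\sigma}(Z)]|$ by $C\sigma^{-1}\EE|W-Z|$, needs only the Lipschitz bound $|h_{T,\sigma}(x)-h_{T,\sigma}(y)|\le 2\sigma^{-1}|x-y|$, which holds even without convexity.
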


\begin{lemma}[Lemma 4.2 of \cite{Fang15}]\label{comparison}
	For any $p$-dimensional random vector $\bm{W}$,
	$$\mathcal{D}(\bm{W},\bm{Z})\le 4p^{\frac{1}{4}}\sigma+\sup_{T\in\mathcal{T}}
	|\EE[h_{T,\sigma}(\bm{W})-h_{T,\sigma}(\bm{Z})]|,$$
	where $\bm{Z}$ is a $p$-dimensional standard Gaussian vector.
\end{lemma}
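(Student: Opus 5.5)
The plan is the standard smoothing argument of \cite{Bentkus03} and \cite{Fang15}: bound the indicator $I_T$ from above and below by the smoothed functions $h_{\cdot,\sigma}$. The resulting boundary-layer errors are then controlled by the Gaussian surface-area bound for convex sets. Throughout, write
$$\Delta_\sigma:=\sup_{T\in\mathcal{T}}|\EE[h_{T,\sigma}(\bm{W})-h_{T,\sigma}(\bm{Z})]|,$$
and for $T\in\mathcal{T}$ use the enlargement $T^\sigma=\{\omega:{\rm dist}(\omega,T)\le\sigma\}$ and the shrinkage $T^{-\sigma}=\{\omega:{\rm dist}(\omega,\mathbb{R}^p\backslash T)>\sigma\}$. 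Both are again convex, so they lie in $\mathcal{T}$ (when $T^{-\sigma}$ is nonempty).

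\emph{Upper bound.} From the definition of $\psi$, $h_{T,\sigma}=1$ on $T$, $h_{T,\sigma}\in[0,1]$, and $h_{T,\sigma}=0$ outside $T^\sigma$. Hence $I_T\le h_{T,\sigma}\le I_{T^\sigma}$, and
$$\Pr(\bm{W}\in T)\le\EE h_{T,\sigma}(\bm{W})\le \EE h_{T,\sigma}(\bm{Z})+\Delta_\sigma\le \Pr(\bm{Z}\in T)+\Pr(\bm{Z}\in T^\sigma\backslash T)+\Delta_\sigma .$$

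\emph{Lower bound.} Apply the same device to the shrunken set $T^{-\sigma}$. The function $h_{T^{-\sigma},\sigma}$ vanishes unless ${\rm dist}(\omega,T^{-\sigma})<\sigma$. Any such $\omega$ lies within distance $<\sigma$ of a point whose distance to $T^c$ exceeds $\sigma$, so $\omega\in T$. This gives $I_{T^{-\sigma}}\le h_{T^{-\sigma},\sigma}\le I_T$, and therefore
$$\Pr(\bm{W}\in T)\ge \EE h_{T^{-\sigma},\sigma}(\bm{W})\ge \EE h_{T^{-\sigma},\sigma}(\bm{Z})-\Delta_\sigma\ge \Pr(\bm{Z}\in T)-\Pr(\bm{Z}\in T\backslash T^{-\sigma})-\Delta_\sigma .$$
If $T^{-\sigma}=\emptyset$, the lower bound is handled separately: then $T\subset \partial T^{\sigma'}$-type shell, so $T$ itself lies in a boundary layer of width $\sigma$, and $\Pr(\bm{Z}\in T)$ is already at most the layer bound below.

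\emph{Boundary layers, the main step.} It remains to show that
$$\sup_{T\in\mathcal{T}}\max\{\Pr(\bm{Z}\in T^\sigma\backslash T),\,\Pr(\bm{Z}\in T\backslash T^{-\sigma})\}\le 4p^{1/4}\sigma .$$
For this I would invoke Ball's bound on the Gaussian surface area of convex sets in $\mathbb{R}^p$, namely that it is at most $4p^{1/4}$, in the form given as Lemma 2.6 of \cite{Bentkus03}. Integrating the surface area of the parallel sets $\{{\rm dist}(\cdot,T)=t\}$ over $t\in[0,\sigma]$ (coarea formula) bounds the outer layer. The parallel bodies are convex, so the same constant applies uniformly in $t$. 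The inner layer is treated identically, writing $T\backslash T^{-\sigma}$ as the union of the level sets $\{{\rm dist}(\cdot,T^c)=t\}$, $t\in(0,\sigma]$, which are boundaries of the convex sets $T^{-t}$. Combining the upper and lower bounds and taking the supremum over $T\in\mathcal{T}$ gives $\mathcal{D}(\bm{W},\bm{Z})\le 4p^{1/4}\sigma+\Delta_\sigma$.

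I expect this surface-area step to be the only nontrivial input; everything else is elementary bookkeeping. I would cite it rather than reprove it.
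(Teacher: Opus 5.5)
Your argument is correct and is the standard proof behind the cited Lemma 4.2 of Fang and R\"ollin, which itself follows Bentkus. You sandwich $I_T$ between $h_{T^{-\sigma},\sigma}$ and $h_{T,\sigma}$, then bound both Gaussian boundary layers $T^{\sigma}\setminus T$ and $T\setminus T^{-\sigma}$ by Ball's $4p^{1/4}\sigma$ estimate; the paper gives no proof of its own and simply imports the lemma. The only blemish is your garbled treatment of the case $T^{-\sigma}=\emptyset$. It needs no separate argument: there $h_{\emptyset,\sigma}\equiv 0$ and $T\setminus T^{-\sigma}=T$, so the same chain of inequalities goes through verbatim.
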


\begin{lemma}[Lemma B.1 of \cite{liu2025wasserstein}]\label{comparison_new}
For any $p$-dimensional random vector $\bm{W}$,
	$$\mathcal{D}(\bm{W},\bm{Z})\le 4p^{\frac{1}{4}}\sigma+\min\left\{\sup_{T\in\mathcal{T}}
	|\EE[h_{T,\sigma}(\bm{W})-h_{T,\sigma}(\bm{Z})]|,\Pr(|\bm{W}-\bm{Z}|>\sigma)\right\},$$
	where $\bm{Z}$ is a $p$-dimensional standard Gaussian vector.
\end{lemma}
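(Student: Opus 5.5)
The plan is to prove the two bounds separately and then take the smaller one. For every convex $T$ each bound controls $|\Pr(\bm{W}\in T)-\Pr(\bm{Z}\in T)|$, so the minimum does as well, and taking the supremum over $T\in\mathcal{T}$ gives the claim. The first alternative, $4p^{1/4}\sigma+\sup_{T\in\mathcal{T}}|\EE[h_{T,\sigma}(\bm{W})-h_{T,\sigma}(\bm{Z})]|$, is exactly Lemma \ref{comparison}, so we cite it directly. The new content is the coupling bound
$$\mathcal{D}(\bm{W},\bm{Z})\le 4p^{1/4}\sigma+\Pr(|\bm{W}-\bm{Z}|>\sigma).$$
This bound only makes sense when $\bm{W}$ and $\bm{Z}$ live on a common probability space, and we will state that explicitly. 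This is also how the lemma is used in the proof of Theorem \ref{thm_gaussian}.

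For the coupling bound, fix a convex $T\subset\mathbb{R}^p$. As in the proof of Theorem \ref{empirical_thm}, define the outer parallel set $T^\sigma=\{\omega:{\rm dist}(\omega,T)\le\sigma\}$ and the inner set $T^{-\sigma}=\{\omega:{\rm dist}(\omega,\mathbb{R}^p\setminus T)>\sigma\}$; both are convex. On the event $\{|\bm{W}-\bm{Z}|\le\sigma\}$, $\bm{W}\in T$ implies $\bm{Z}\in T^\sigma$, and $\bm{Z}\in T^{-\sigma}$ implies $\bm{W}\in T$. This gives the sandwich
$$\Pr(\bm{Z}\in T^{-\sigma})-\Pr(|\bm{W}-\bm{Z}|>\sigma)\le \Pr(\bm{W}\in T)\le \Pr(\bm{Z}\in T^{\sigma})+\Pr(|\bm{W}-\bm{Z}|>\sigma).$$
Subtracting $\Pr(\bm{Z}\in T)$ from each side, it remains to show that both $\Pr(\bm{Z}\in T^\sigma\setminus T)$ and $\Pr(\bm{Z}\in T\setminus T^{-\sigma})$ are at most $4p^{1/4}\sigma$. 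We must handle the case $T^{-\sigma}=\emptyset$; then $\Pr(\bm{Z}\in T\setminus T^{-\sigma})=\Pr(\bm{Z}\in T)$, and the bound must still come from the boundary estimate rather than from any interior of $T$.

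The main step is this Gaussian boundary (anti-concentration) estimate for convex sets. By Ball's inequality, the Gaussian surface area of any convex set in $\mathbb{R}^p$ is at most $4p^{1/4}$; this is the form used by \cite{Bentkus03}, and it is the same ingredient behind the constant in Lemma \ref{comparison}. Integrating the surface area over the family of parallel sets $T^{s}$, $s\in[0,\sigma]$, which are all convex, gives $\Pr(\bm{Z}\in T^\sigma\setminus T)\le 4p^{1/4}\sigma$. Integrating over $T^{-s}$, $s\in[0,\sigma]$, gives the inner bound. We expect this integration step to be the only delicate point: one needs the coarea/Minkowski-type identity $\frac{\dee}{\dee s}\Pr(\bm{Z}\in T^s)=\gamma_p(\partial T^s)$ for convex sets, together with a limiting argument when $T$ has empty interior. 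Rather than rederive it, we would cite \cite[Eq.\ (1.4)]{Bentkus03} directly.

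Combining the two sides of the sandwich with these boundary bounds gives $|\Pr(\bm{W}\in T)-\Pr(\bm{Z}\in T)|\le 4p^{1/4}\sigma+\Pr(|\bm{W}-\bm{Z}|>\sigma)$ uniformly in $T\in\mathcal{T}$. Taking the minimum with the bound from Lemma \ref{comparison} completes the proof.
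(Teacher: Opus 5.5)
Your proof is correct and is the standard argument: the first alternative is Lemma \ref{comparison} verbatim, and the coupling alternative follows from the sandwich $\Pr(\bm{Z}\in T^{-\sigma})-\Pr(|\bm{W}-\bm{Z}|>\sigma)\le\Pr(\bm{W}\in T)\le\Pr(\bm{Z}\in T^{\sigma})+\Pr(|\bm{W}-\bm{Z}|>\sigma)$ combined with Bentkus's bound $4p^{1/4}\sigma$ on the Gaussian mass of the outer and inner shells of a convex set. The paper gives no separate proof of this lemma (it cites \cite{liu2025wasserstein}), but it runs exactly this argument for the term $V$ in the proof of Theorem \ref{empirical_thm}; your caveat that the second alternative presupposes a joint law for $(\bm{W},\bm{Z})$ is the right one.
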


\begin{lemma}[Lemma 4.4 of \cite{liu2025wasserstein}]\label{comparison_cov}
Suppose $\bm{X}$ and $\bm{Y}$ are zero mean Gaussian vectors in $\mathbb{R}^d$, with positive definite covariance matrices $\bm{\Sigma}_X$ and $\bm{\Sigma}_Y$, respectively. Suppose the smallest
eigenvalue of either $\bm{\Sigma}_X$ or $\bm{\Sigma}_Y$ is bounded below by $\lambda_\ast>0$, then
$$\mathcal{D}(\bm{X},\bm{Y})\le \frac{3}{2}\min\left\{1,\lambda_\ast^{-1}|\bm{\Sigma}_X-\bm{\Sigma}_Y|_F\right\}.$$
\end{lemma}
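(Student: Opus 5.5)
The approach is to pass from the convex-set Kolmogorov distance to the total variation distance, and then use a sharp total variation bound for centered Gaussians. For any Borel set $A$, and in particular any convex $A$, we have $|\Pr(\bm{X}\in A)-\Pr(\bm{Y}\in A)|\le d_{TV}(\mathcal{N}(0,\bm{\Sigma}_X),\mathcal{N}(0,\bm{\Sigma}_Y))$. Hence $\mathcal{D}(\bm{X},\bm{Y})$ is bounded by this total variation distance. The trivial bound $d_{TV}\le 1$ gives the ``$\min\{1,\cdot\}$'' part of the claim, so it remains to show $d_{TV}\le \frac32\lambda_\ast^{-1}|\bm{\Sigma}_X-\bm{\Sigma}_Y|_F$.

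\textbf{Step 1 (whitening).} Both the hypothesis and the total variation distance are symmetric in $\bm{X}$ and $\bm{Y}$. So I assume without loss of generality that $\lambda_{\min}(\bm{\Sigma}_X)\ge\lambda_\ast$. Total variation is invariant under the bijection $x\mapsto \bm{\Sigma}_X^{-1/2}x$, which gives
$$
d_{TV}\big(\mathcal{N}(0,\bm{\Sigma}_X),\mathcal{N}(0,\bm{\Sigma}_Y)\big)=d_{TV}\big(\mathcal{N}(0,\bm{I}_d),\mathcal{N}(0,\bm{M})\big),\qquad \bm{M}:=\bm{\Sigma}_X^{-1/2}\bm{\Sigma}_Y\bm{\Sigma}_X^{-1/2}.
$$
Moreover, $|\bm{M}-\bm{I}_d|_F=|\bm{\Sigma}_X^{-1/2}(\bm{\Sigma}_Y-\bm{\Sigma}_X)\bm{\Sigma}_X^{-1/2}|_F\le \Vert\bm{\Sigma}_X^{-1/2}\Vert^2|\bm{\Sigma}_Y-\bm{\Sigma}_X|_F\le\lambda_\ast^{-1}|\bm{\Sigma}_X-\bm{\Sigma}_Y|_F$, using $|ABC|_F\le\Vert A\Vert\,|B|_F\,\Vert C\Vert$.

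\textbf{Step 2 (diagonalize).} Next I rotate by the eigenvectors of $\bm{M}$. Both laws become product measures $\bigotimes_i\mathcal{N}(0,1)$ and $\bigotimes_i\mathcal{N}(0,\lambda_i)$, where $\lambda_i$ are the eigenvalues of $\bm{M}$ and $\sum_i(\lambda_i-1)^2=|\bm{M}-\bm{I}_d|_F^2$. It then suffices to prove
$$
d_{TV}\Big(\bigotimes_i\mathcal{N}(0,1),\bigotimes_i\mathcal{N}(0,\lambda_i)\Big)\le \tfrac32\min\Big\{1,\big(\textstyle\sum_i(\lambda_i-1)^2\big)^{1/2}\Big\}.
$$
This is precisely the upper bound in the theorem of Devroye, Mehrabian and Reddad on total variation between high-dimensional Gaussians, which I would invoke directly.

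\textbf{Step 3 (self-contained route).} If I prove Step 2 directly, I expect the main obstacle to be obtaining a universal constant without an upper bound on the $\lambda_i$. Pinsker's inequality with $\mathrm{KL}=\frac12\sum_i(\lambda_i-1-\log\lambda_i)$ fails when some $\lambda_i\to0$. I would therefore split into two cases. If $\sum_i(\lambda_i-1)^2\ge 4/9$, the trivial bound $d_{TV}\le1\le\frac32\big(\sum_i(\lambda_i-1)^2\big)^{1/2}$ already suffices. Otherwise every $|\lambda_i-1|<2/3$, so $\lambda_i\in(1/3,5/3)$. On this range $\lambda-1-\log\lambda\le c(\lambda-1)^2$ for an explicit constant $c$. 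Pinsker then gives $d_{TV}\le\sqrt{\mathrm{KL}/2}\le\frac{\sqrt{c}}{2}\big(\sum_i(\lambda_i-1)^2\big)^{1/2}$, and one checks that $\sqrt{c}/2\le 3/2$. Alternatively, the Hellinger product formula gives the same conclusion. Combining Steps 1--3 yields $\mathcal{D}(\bm{X},\bm{Y})\le\frac32\min\{1,\lambda_\ast^{-1}|\bm{\Sigma}_X-\bm{\Sigma}_Y|_F\}$.
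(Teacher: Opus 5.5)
Your proof is correct. The paper gives no argument of its own and simply imports the lemma from Liu et al.; your route (bound the convex-set distance by total variation, whiten and diagonalize, then apply the Devroye--Mehrabian--Reddad upper bound) is evidently the intended one, since $3/2$ is exactly their constant. Your self-contained Step 3 is essentially how that upper bound is proved, so nothing is missing. It uses the trivial bound when $\sum_i(\lambda_i-1)^2\ge 4/9$, and otherwise Pinsker together with $\lambda-1-\log\lambda\le(\lambda-1)^2$ on $(1/3,5/3)$; the worst ratio there is $\tfrac94\log 3-\tfrac32\approx 0.97$, attained as $\lambda\to 1/3$.
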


\begin{lemma}[Remark 2.2 of \cite{Fang16}]\label{distance}
	Let $\bm{W}=\sum_{i=1}^n\bm{X}_i$ be a sum of $p$-dimensional random vectors such that $\EE(\bm{X}_i)=0$ and ${\rm Cov}(\bm{W})=\bm{\Sigma}_w$. Suppose $\bm{W}$ can be decomposed as follows:\\
	1. $\forall i \in [n]$, $\exists i\in N_i \subset [n]$, such that $\bm{W}-\bm{X}_{N_i}$ is independent of $\bm{X}_i$, where $[n] = \{1,\cdots,n\}$.\\
	2. $\forall i \in [n]$, $j\in N_i$, $\exists N_i \subset N_{ij} \subset [n]$, such that $\bm{W}-\bm{X}_{N_{ij}}$ is independent of 
	$\{\bm{X}_i, \bm{X}_j\}$.\\
	3. $\forall i \in [n]$, $j \in N_i,~k \in N_{ij},~\exists N_{ij} \subset N_{ijk} \subset [n]$ such that $\bm{W}-\bm{X}_{N_{ijk}}$ is independent of 
	$\{\bm{X}_i,\bm{X}_j,\bm{X}_k\}$.
	
	Suppose further that for each $i\in [n],~j\in N_i,~k\in N_{ij},~
	|\bm{X}_i|\le \beta,~|N_i|\le n_1,~|N_{ij}|\le n_2,~|N_{ijk}|\le n_3$,
	where $|\cdot|$ is the Euclidean norm of a vector or the operator norm of a matrix. Then there exists a universal constant $C$ such that
	$$\mathcal{D}(\bm{W},\bm{\Sigma}_\omega^{1/2}\bm{Z})\le Cp^{\frac{1}{4}}n|\bm{\Sigma}_w^{-1/2}|^3
		\beta^3n_1(n_2+n_3/p),$$
		where $\bm{Z}$ is a $p$-dimensional standard Gaussian random vector.
\end{lemma}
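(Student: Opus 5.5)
The final statement shown is Lemma~\ref{distance} (Remark~2.2 of \cite{Fang16}): a Kolmogorov bound over convex sets for a sum of locally dependent vectors with a three-level dependency neighbourhood structure. I would prove it by Stein's method for multivariate normal approximation, in the smoothing–induction scheme of \cite{Fang15,Fang16}.

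First, standardize. Set $\bm{W}'=\bm{\Sigma}_w^{-1/2}\bm{W}$ and $\bm{X}_i'=\bm{\Sigma}_w^{-1/2}\bm{X}_i$. Since convex sets map to convex sets under linear maps, it suffices to treat $\operatorname{Cov}(\bm{W}')=\bm{I}_p$. The new summands satisfy $|\bm{X}_i'|\le|\bm{\Sigma}_w^{-1/2}|\beta$, which is the source of the factor $|\bm{\Sigma}_w^{-1/2}|^3\beta^3$.

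Next, smooth. By Lemma~\ref{comparison}, with $\mathcal{D}=\mathcal{D}(\bm{W}',\bm{Z})$,
\[
\mathcal{D}\le 4p^{1/4}\sigma+\sup_{T}\big|\EE h_{T,\sigma}(\bm{W}')-\EE h_{T,\sigma}(\bm{Z})\big|.
\]
For each $T$, let $f$ solve the Stein equation $\Delta f(w)-w\cdot\nabla f(w)=h_{T,\sigma}(w)-\EE h_{T,\sigma}(\bm{Z})$, using the Ornstein–Uhlenbeck semigroup solution. One then has to bound $\EE[\Delta f(\bm{W}')-\bm{W}'\cdot\nabla f(\bm{W}')]$.

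Then expand using the neighbourhoods. Write $\bm{W}'\cdot\nabla f(\bm{W}')=\sum_i\bm{X}_i'\cdot\nabla f(\bm{W}')$ and subtract $\nabla f(\bm{W}'-\bm{X}'_{N_i})$. The subtracted term has mean zero because $\bm{W}'-\bm{X}'_{N_i}$ is independent of $\bm{X}_i'$ and $\EE\bm{X}_i'=0$. Taylor expansion in the direction $\bm{X}'_{N_i}$ produces second-order terms. Their expectation is matched against $\EE\Delta f$ via $\sum_i\sum_{j\in N_i}\EE\bm{X}_i'\bm{X}_j'^\top=\bm{I}_p$, after a second decoupling with $N_{ij}$. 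A third decoupling with $N_{ijk}$ handles the remaining third-order terms. Each remainder is a triple sum of cardinality at most $n\,n_1 n_2$ (or $n\,n_1n_3$) with summands of size $\beta^3$ times third derivatives of $f$.

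The hardest step is bounding the third derivatives of $f$. They are not uniformly bounded with good dependence on $p$: the sup bound scales like $\sigma^{-2}$ times $p$-dependent factors. I would follow \cite{Bentkus03,Fang15}. Write $\partial^3 f$ via the semigroup and split the time integral at $s=\sigma^2$. For small $s$, use $|\nabla h_{T,\sigma}|\le 2\sigma^{-1}$ (Lemma~\ref{derivative}). For large $s$, integrate Gaussian derivatives by parts onto the indicator; the expectation is then controlled by $\Pr(\bm{W}'+\text{small}\in T^{\sigma}\setminus T)$. This probability is at most $\Pr(\bm{Z}\in T^\sigma\setminus T)+2\mathcal{D}\le Cp^{1/4}\sigma+2\mathcal{D}$ by Ball's Gaussian perimeter bound. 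The extra $n_3/p$ term comes from the piece where the $p^{1/4}$ perimeter gain is unavailable but one coordinate sum loses a factor $p$.

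This yields the recursive inequality
\[
\mathcal{D}\le Cp^{1/4}\sigma+\frac{C n\beta^3 n_1}{\sigma}\big[(n_2+n_3/p)\,p^{1/4}(\sigma+\beta n_2)+\mathcal{D}\big],
\]
which has the same form as \eqref{key}. Finally, choose $\sigma$ proportional to $n\beta^3 n_1(n_2+n_3/p)$ with a large constant so that the coefficient of $\mathcal{D}$ on the right is at most $1/2$. Absorbing $\mathcal{D}$ into the left-hand side gives the claimed bound. If the bound is not already $<1$, the claim is trivial, so the regime with $\beta n_2\le\sigma$ may be assumed.
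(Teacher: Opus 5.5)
The paper gives no proof of this lemma. It is imported from \cite{Fang16}, where Remark~2.2 follows from the identity-covariance theorem exactly by your standardization step, which is correct. Your re-derivation of that theorem has the right architecture, but the step that carries the difficulty fails as you describe it.

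\textbf{The third-derivative estimate.} The three derivatives of $P_th_{T,\sigma}$ ($P_t$ the Ornstein--Uhlenbeck semigroup) must be split between $h_{T,\sigma}$ and the Gaussian kernel. One derivative on $h_{T,\sigma}$ and two on the kernel gives a time weight $\asymp t^{-1}$.
\begin{itemize}
\item \emph{Small times.} Using only $|\nabla h_{T,\sigma}|\le 2\sigma^{-1}$ (Lemma~\ref{derivative}) gives an integral that diverges at $t=0$. You need the Lipschitz bound on $\nabla h_{T,\sigma}$, of order $\sigma^{-2}$ and supported on $T^\sigma\setminus T$. Two derivatives on $h_{T,\sigma}$ give weight $t^{-1/2}$ and a total $\asymp\sigma^{-1}$.
\item \emph{Large times.} If you leave one derivative on $h_{T,\sigma}$ to produce the shell probability, the weight stays $t^{-1}$ on $[\sigma^2,1]$. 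Then $\mathcal D$ enters with the factor $\sigma^{-1}\log(1/\sigma)$, and the final bound carries a logarithm the lemma does not have.
\end{itemize}
A log-free route puts all three derivatives on the kernel, with weight $t^{-3/2}$ and integral $\asymp\sigma^{-1}$ over $[\sigma^2,\infty)$. To integrate $|\partial_{\bm X_i'}\partial_{\bm X_j'}\partial_{\bm X_k'}\phi|$ at the dimension-free cost $C\beta^3$, the three directions must be independent of the evaluation point. Otherwise you face $\sup_{|u|,|v|,|w|\le 1}|\partial_u\partial_v\partial_w\phi|$, whose integral grows with $p$. This is exactly where $\bm W'-\bm X'_{N_{ijk}}$ is used. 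One then compares $\bm W'-\bm X'_{N_{ijk}}$ with $\bm Z$ in $\mathcal D$, using two facts: $h_{T,\sigma}$ is a mixture of indicators of convex sets, and the Gaussian counterpart is non-singular in $t$ by cancellation. The dependent shift $\bm X'_{N_{ijk}}-\bm V$ costs a shell of width $O(\sigma+n_3\beta)$.

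\textbf{The $n_3/p$ term and the recursion.} The source of $n_3/p$ is misidentified, and your displayed recursion is not of the form \eqref{key}. With $n_1=M$, $n_2=2M$, $n_3=3M$, \eqref{key} reads
\[
\mathcal D\le 4p^{1/4}\sigma+\frac{Cn\beta^3n_1n_2}{\sigma}\big[p^{1/4}(\sigma+n_3\beta)+\mathcal D\big].
\]
So all third-order sums have $nn_1n_2$ terms, the coefficient of $\mathcal D$ carries $n_2$, and $N_{ijk}$ enters only through the shift $n_3\beta$. If there were triple sums with $nn_1n_3$ terms under the generic bound, as you suggest, you would get $p^{1/4}n\beta^3n_1n_3$ with no factor $1/p$. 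Choosing $\sigma=2Cn\beta^3n_1n_2$ in the correct recursion gives $\mathcal D\le Cp^{1/4}(n\beta^3n_1n_2+n_3\beta)$. The $1/p$ then comes from the trace identity
\[
p=\operatorname{tr}\operatorname{Cov}(\bm W')=\sum_i\EE\big[(\bm X_i')^\top\bm X'_{N_i}\big]\le nn_1\beta^2,
\]
with $\beta$ the standardized bound. Hence $n_3\beta\le n\beta^3n_1n_3/p$.

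\textbf{The regime $\beta n_2\lesssim\sigma$.} This also follows from the trace identity, not from the remark that the bound may be assumed $<1$. That assumption only bounds $\sigma$ from above, not from below.
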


In the proof of Proposition 1, we employ the following concentration inequalities.
	\begin{lemma}[Theorem 4.1 and its following Remark of \cite{johnson1985best}]\label{rosenthal}
		A version of the Rosenthal inequality for independent variables $\{X_i\}$:
		$$\left\Vert \sum_{i=1}^n X_i\right\Vert_q\le \frac{14.7q}{\log q}(\mu_{n,2}^{1/2}+\mu_{n,q}^{1/q}),$$ where 
		$\mu_{n,q}=\sum_{i=1}^{n}\EE|X_i|^q$.
	\end{lemma}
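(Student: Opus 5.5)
The statement is the Rosenthal inequality with the optimal order $q/\log q$ of \cite{johnson1985best}. It is used there for independent, centred $X_i$; in the proof of \cref{esti_consistency} the summands are centred martingale-difference blocks, so I assume $\EE X_i=0$ throughout. The plan has three steps: reduce to symmetric variables, derive a moment bound from the Hoffmann-J{\o}rgensen tail inequality, and then optimise the level so that the constant grows like $q/\log q$ instead of the naive $q$.

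\emph{Reduction and maxima.} By symmetrization with independent copies $X_i'$ and Jensen's inequality, $\Vert\sum_i X_i\Vert_q\le \Vert\sum_i (X_i-X_i')\Vert_q$. The symmetric summands $X_i-X_i'$ satisfy $\EE|X_i-X_i'|^q\le 2^q\EE|X_i|^q$ and $\EE|X_i-X_i'|^2=2\EE X_i^2$, so passing to the symmetric case costs only a constant factor, which the constant $14.7$ absorbs. The maximum is controlled elementarily:
$$\Vert \max_i|X_i|\Vert_q\le \Big(\sum_i\EE|X_i|^q\Big)^{1/q}=\mu_{n,q}^{1/q}.$$
It therefore suffices to prove
$$\Vert S_n\Vert_q\le \frac{Cq}{\log q}\Big(\Vert S_n\Vert_2+\Vert\max_i|X_i|\Vert_q\Big),$$
noting that $\Vert S_n\Vert_2=\mu_{n,2}^{1/2}$ for centred independent summands.

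\emph{Hoffmann-J{\o}rgensen step.} For symmetric independent summands, L\'evy's inequality and the Hoffmann-J{\o}rgensen inequality give, with $S^*=\max_k|S_k|$ and $X^*=\max_i|X_i|$, for $t,s>0$,
$$\Pr(S^*>2t+s)\le \Pr(S^*>t)^2+\Pr(X^*>s).$$
I would iterate this as the good-$\lambda$ form
$$\Pr(S^*>\lambda t)\le \Pr(S^*>t)^{k}+k\,\Pr(X^*>t),$$
valid for integer $\lambda\approx k$ after $k$ applications. I would integrate $q\,x^{q-1}$ against this tail with the base level $t_0=C\Vert S_n\Vert_2$, so that by Chebyshev $\Pr(S^*>t_0)\le 4/C^2$ is small. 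Taken at face value, this produces the bound $\Vert S^*\Vert_q\le Cq(\Vert S_n\Vert_2+\Vert X^*\Vert_q)$, which has the wrong order in $q$.

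\emph{Improving $q$ to $q/\log q$ (main obstacle).} The hard part is saving the factor $\log q$. The idea is that $k$-fold iteration reaches level $\approx k t_0$ with tail $\approx e^{-ck\log k}$, a Poisson-type decay, rather than $e^{-ck}$. To capture this, I would not square the tail only once per step. Instead I would use the $k$-fold refinement of Hoffmann-J{\o}rgensen,
$$\Pr(S^*>kt+(k-1)s)\le \frac{\Pr(S^*>t)^k}{k!}+\Pr(X^*>s),$$
which follows by counting $k$ disjoint blocks on which the partial sums exceed $t$. The $1/k!$ comes from ordering the stopping times. Taking $t=t_0$ and $s=\Vert X^*\Vert_q\,e$, the contribution at level $k$ is about $(kt_0)^q/k!$. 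Maximising $k^q/k!$ over $k$ at $k\approx q/\log q$ gives $\big(\sum_k (kt_0)^q/k!\big)^{1/q}\approx t_0\, q/\log q$; this is the same computation that makes Poisson moments grow like $q/\log q$. The $X^*$ term contributes $(k-1)s$ at the same $k$, giving $(q/\log q)\Vert X^*\Vert_q$ after Markov's inequality. If the factorial version proves delicate, my fallback is to compare $S_n$ with a compound-Poisson variable via truncation at $\Vert X^*\Vert_q$, split into large and small jumps, and bound the Poisson moments directly. Tracking the numerical constants, including the factor 2 from symmetrization and the choice $C$ in $t_0$, would then yield the stated $14.7$.
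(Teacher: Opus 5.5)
The paper does not prove this lemma; it quotes it from \cite{johnson1985best}. Your argument therefore has to stand on its own. Your reductions are fine: centring, symmetrization, and $\Vert X^*\Vert_q\le\mu_{n,q}^{1/q}$. You are also right to add $\EE X_i=0$, since the statement as printed fails for $X_i\equiv 1$.

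The gap is in the step meant to produce $q/\log q$. The inequality
\[
\Pr\big(S^*>kt+(k-1)s\big)\le \frac{\Pr(S^*>t)^k}{k!}+\Pr(X^*>s)
\]
is false. Take i.i.d.\ Rademacher $X_i$, $t=1/2$, $s=1$ and any $k\ge 2$. Since $|S_1|=1$ and $X^*=1$, the right-hand side equals $1/k!$ for every $n$, while the left-hand side tends to $1$ as $n\to\infty$. Nor can ``ordering the stopping times'' produce a factorial. The crossing times $\tau_1<\dots<\tau_k$ are constructed successively, so there is no $k!$-fold overcount to remove. The strong Markov argument gives only a product of $k$ conditional crossing probabilities, i.e.\ $\Pr(S^*>t)^k$ up to symmetry constants. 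A genuine $1/k!$ comes from a rate quantity, as in $\Pr(\text{at least $k$ of the independent events $A_i$ occur})\le(\sum_i\Pr(A_i))^k/k!$. There $\sum_i\Pr(A_i)$ can be far larger than $\Pr(\bigcup_i A_i)$. A valid crossing analogue would need something like $-\log(1-\Pr(S^*>t))$ in place of $\Pr(S^*>t)$, and would itself have to be proved.

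Even with a corrected inequality, the integration is not carried out.
\begin{itemize}
\item With $t=t_0$ and $s$ fixed, $\Pr(X^*>s)$ enters at every level $x_k=kt_0+(k-1)s$ and does not decay in $k$. Hence $\sum_k x_{k+1}^q\Pr(S^*>x_k)$ diverges. Your remark that the $X^*$-term ``contributes $(k-1)s$ at the same $k$'' looks at a single $k$, whereas the moment integrates over all levels.
\item Turning $\Pr(X^*>s)$ into $(Cq/\log q)^q\Vert X^*\Vert_q^q$ via Markov requires $s\gtrsim x\log q/q$ at level $x$, which caps $k$ at order $q/\log q$. At high levels $t$ must then grow, and you need to integrate $\Pr(S^*>u)^k$ against $u^{q-1}$.
\item Chebyshev cannot do this, since $u^{-2k}$ with $2k<q$ is not integrable against $u^{q-1}$. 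Bootstrapping through $\Vert S^*\Vert_q$ works only if $\Pr(S^*>u_0)\le e^{-c\log^2 q}$, which forces $u_0$ far above $\Vert S_n\Vert_2$.
\end{itemize}

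The missing idea is that one must count the large jumps, not merely control $X^*$. Your compound-Poisson fallback does this and can be made rigorous. For even integer $q$ it needs no tail argument at all:
\begin{itemize}
\item Expand $\EE S_n^q$ over set partitions of $\{1,\dots,q\}$; partitions containing a singleton block contribute $0$.
\item For $2\le r\le q$, H\"older gives $\sum_i\EE|X_i|^r\le\mu_{n,2}^{(q-r)/(q-2)}\mu_{n,q}^{(r-2)/(q-2)}\le(\mu_{n,2}^{1/2}+\mu_{n,q}^{1/q})^r$.
\item Hence $\EE S_n^q\le B_q(\mu_{n,2}^{1/2}+\mu_{n,q}^{1/q})^q$, with $B_q$ the Bell number and $B_q^{1/q}\le Cq/\log q$.
\end{itemize}
Non-even $q$ needs an additional truncation at $\mu_{n,q}^{1/q}$. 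In any case your sketch does not deliver the constant $14.7$, although the paper only uses the order $Cq/\log q$.
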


\begin{lemma}[Theorem 2.1 of \cite{rio2009moment}]\label{mar_concentration}
	A version of the Burkholder inequality for martingale differences $\{X_i\}$:
	$$\left\Vert \sum_{i=1}^n X_i\right\Vert_q^2
	\le (q-1)\sum_{i=1}^n\Vert X_i\Vert_q^2.$$
\end{lemma}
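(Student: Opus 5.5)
The plan is to prove the inequality for $q\ge 2$ by induction on $n$. The key is the one-step bound
\begin{equation*}
\Big\Vert S_{n-1}+X_n\Big\Vert_q^2\le \Vert S_{n-1}\Vert_q^2+(q-1)\Vert X_n\Vert_q^2,\qquad S_{n-1}:=\sum_{i=1}^{n-1}X_i .
\end{equation*}
Iterating this from $n$ down to $1$, starting from $S_0=0$, gives the claim immediately. We may assume $\Vert X_i\Vert_q<\infty$ for all $i$, since otherwise there is nothing to prove. Write $\mathcal{G}_{n-1}$ for the $\sigma$-field generated by $X_1,\ldots,X_{n-1}$; the martingale difference property means $\EE[X_n\mid\mathcal{G}_{n-1}]=0$.

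To prove the one-step bound, fix $S=S_{n-1}$ and $X=X_n$, and consider
\begin{equation*}
N(t)=\EE|S+tX|^q,\qquad \varphi(t)=N(t)^{2/q},\qquad t\in[0,1].
\end{equation*}
Since $q\ge 2$, the map $x\mapsto|x|^q$ is $C^2$ with derivatives $q|x|^{q-2}x$ and $q(q-1)|x|^{q-2}$. Dominated convergence, using the $q$-th moments and H\"older's inequality, then justifies differentiating under the expectation:
\begin{equation*}
N'(t)=q\,\EE\big[|S+tX|^{q-2}(S+tX)X\big],\qquad N''(t)=q(q-1)\,\EE\big[|S+tX|^{q-2}X^2\big].
\end{equation*}

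The first key step is $\varphi'(0)=0$. Because $|S|^{q-2}S$ is $\mathcal{G}_{n-1}$-measurable, conditioning gives $N'(0)=q\,\EE\big[|S|^{q-2}S\,\EE(X\mid\mathcal{G}_{n-1})\big]=0$. This is the only place the martingale structure enters.

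The second step bounds the second derivative:
\begin{equation*}
\varphi''=\tfrac{2}{q}\big(\tfrac{2}{q}-1\big)N^{2/q-2}(N')^2+\tfrac{2}{q}N^{2/q-1}N''.
\end{equation*}
The first term is nonpositive since $q\ge 2$. For the second term, H\"older's inequality with exponents $q/(q-2)$ and $q/2$ gives $N''\le q(q-1)N^{(q-2)/q}\Vert X\Vert_q^2$. Hence $\varphi''(t)\le 2(q-1)\Vert X\Vert_q^2$ on $[0,1]$. Taylor's formula with integral remainder then yields
\begin{equation*}
\varphi(1)\le\varphi(0)+\varphi'(0)+(q-1)\Vert X\Vert_q^2=\Vert S\Vert_q^2+(q-1)\Vert X\Vert_q^2,
\end{equation*}
which is the one-step bound.

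The main technical nuisance is the case where $N(t)=0$ for some $t$, where $\varphi=N^{2/q}$ need not be twice differentiable. I would handle it by replacing $N$ with $N+\eta$ for small $\eta>0$, carrying out the same computation (the Hölder step still holds since $N\le N+\eta$), and letting $\eta\downarrow 0$. Alternatively one can argue directly that $N\equiv0$ on an interval forces $S=X=0$ almost surely, so the bound holds trivially there. Apart from this regularisation, the argument is routine.
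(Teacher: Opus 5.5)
Your proof is correct. The paper gives no argument of its own here, only the citation to Theorem 2.1 of Rio (2009). Your induction on the one-step bound $\Vert S_{n-1}+X_n\Vert_q^2\le \Vert S_{n-1}\Vert_q^2+(q-1)\Vert X_n\Vert_q^2$ is a complete, self-contained proof of that result by the standard Taylor-expansion argument: the first-order term vanishes because $\EE[X_n\mid\mathcal{G}_{n-1}]=0$, H\"older bounds the second derivative of $t\mapsto\Vert S_{n-1}+tX_n\Vert_q^2$ by $2(q-1)\Vert X_n\Vert_q^2$, and the $N+\eta$ regularisation handles degenerate points. You are also right to restrict to $q\ge 2$; the lemma leaves this implicit, but $n=1$ shows it is necessary.

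The only slip is in your alternative treatment of the degenerate case. What matters is a zero of $N$ at a single point $t_0$, not $N\equiv 0$ on an interval. Such a zero is harmless: if $t_0>0$, then $S=-t_0X$, so $X$ is $\mathcal{G}_{n-1}$-measurable and $X=\EE[X\mid\mathcal{G}_{n-1}]=0=S$; if $t_0=0$, then $S=0$ and the bound is trivial.
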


    \begin{lemma}[Weyl's inequality]\label{lemma_w}
    	Let $\bm{A}$ and $\bm{B}$ be Hermitian $n\times n$ matrices with eigenvalues $\lambda_1(\bm{A})\ge \cdots\ge \lambda_n(\bm{A}),~\lambda_1(\bm{B})\ge \cdots\ge \lambda_n(\bm{B})$, respectively. If $1\le k\le i\le n$ and $1\le l\le n-i+1$, then
    	$$\lambda_{i+l-1}(\bm{A})+\lambda_{n-l+1}(\bm{B})\le 
    	\lambda_i(\bm{A}+\bm{B})\le \lambda_{i-k+1}(\bm{A})+\lambda_k(\bm{B}).$$
    	In particular,
    	$$\lambda_i(\bm{A})+\lambda_n(\bm{B})\le \lambda_i(\bm{A}+\bm{B})\le 
    	\lambda_i(\bm{A})+\lambda_1(\bm{B}).$$
    \end{lemma}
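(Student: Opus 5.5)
The plan is to derive both inequalities from the Courant--Fischer min--max characterization of the eigenvalues of a Hermitian matrix. For $\bm{H}$ Hermitian with eigenvalues $\lambda_1(\bm{H})\ge\cdots\ge\lambda_n(\bm{H})$, this gives
$$\lambda_i(\bm{H})=\max_{\dim W=i}\ \min_{x\in W,\,|x|=1}x^\ast\bm{H}x .$$
It also gives the standard consequence that if $x$ is a unit vector in the span of the eigenvectors of $\bm{H}$ with indices $\ge j$, then $x^\ast\bm{H}x\le\lambda_j(\bm{H})$, and symmetrically for the span of eigenvectors with indices $\le j$, $x^\ast\bm{H}x\ge\lambda_j(\bm{H})$.

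For the upper bound, fix $1\le k\le i\le n$ and define
\begin{align*}
V_A&=\mathrm{span}\{\text{eigenvectors of }\bm{A}\text{ with indices } i-k+1,\ldots,n\}, &\dim V_A&=n-i+k,\\
V_B&=\mathrm{span}\{\text{eigenvectors of }\bm{B}\text{ with indices } k,\ldots,n\}, &\dim V_B&=n-k+1 .
\end{align*}
Let $W$ be any $i$-dimensional subspace. The dimensions satisfy $(n-i+k)+(n-k+1)+i=2n+1>2n$, so by the dimension formula applied twice, $W\cap V_A\cap V_B$ contains a unit vector $x$. For this $x$,
$$x^\ast(\bm{A}+\bm{B})x\le\lambda_{i-k+1}(\bm{A})+\lambda_k(\bm{B}).$$
Hence the minimum of the Rayleigh quotient over $W$ is at most this bound. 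Since $W$ was arbitrary, taking the maximum over $W$ and invoking Courant--Fischer gives $\lambda_i(\bm{A}+\bm{B})\le\lambda_{i-k+1}(\bm{A})+\lambda_k(\bm{B})$.

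For the lower bound, I will not repeat the subspace argument but apply the upper bound to $-\bm{A},-\bm{B}$, using $\lambda_j(-\bm{H})=-\lambda_{n-j+1}(\bm{H})$. Set $i'=n-i+1$ and $k'=l$; the hypothesis $1\le l\le n-i+1$ is exactly $1\le k'\le i'$. The upper bound then reads
$$-\lambda_{i}(\bm{A}+\bm{B})\le-\lambda_{n-(i'-k'+1)+1}(\bm{A})-\lambda_{n-l+1}(\bm{B}).$$
Since $n-(i'-k'+1)+1=i+l-1$, this rearranges to $\lambda_{i+l-1}(\bm{A})+\lambda_{n-l+1}(\bm{B})\le\lambda_i(\bm{A}+\bm{B})$. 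The particular case follows by taking $k=l=1$.

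The only step requiring care is the index bookkeeping: checking that the dimensions of the three subspaces add to more than $2n$, and translating the ranges of $k,l$ under the reflection $j\mapsto n-j+1$. There is no genuine analytic obstacle.
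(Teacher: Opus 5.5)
Your proof is correct. The paper gives no proof of this lemma: it lists it among the auxiliary results as the classical Weyl inequality and quotes it, so there is no competing argument to compare against.

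Your route is the standard textbook one, Courant--Fischer plus a subspace-intersection count, and the index bookkeeping checks out:
\begin{itemize}
\item $\dim(W\cap V_A\cap V_B)\ge i+(n-i+k)+(n-k+1)-2n=1$.
\item Under $i'=n-i+1$, $k'=l$, the hypothesis $1\le l\le n-i+1$ is exactly $1\le k'\le i'$.
\item $n-(i'-k'+1)+1=i+l-1$.
\end{itemize}
Deriving the lower bound from the upper bound via $\lambda_j(-\bm{H})=-\lambda_{n-j+1}(\bm{H})$ is a clean economy, since only one intersection argument is needed.

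One small point of rigor: the bound $x^\ast\bm{H}x\le\lambda_j(\bm{H})$ on the span of eigenvectors with indices $\ge j$ is a direct computation in a fixed orthonormal eigenbasis (spectral theorem), not really a consequence of Courant--Fischer. You should state that $V_A$ and $V_B$ are spanned by vectors from such bases, so repeated eigenvalues cause no ambiguity.

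For context, the paper only ever uses the case $i=n$, $l=1$, namely $\lambda_{\min}(\bm{A}+\bm{B})\ge\lambda_{\min}(\bm{A})+\lambda_{\min}(\bm{B})$, when bounding $\lambda_{\min}(\bm{S}_n)$ and $\lambda_{\min}(\bm{\Xi}_M)$. That case follows in one line from $\min_{|x|=1}x^\ast(\bm{A}+\bm{B})x\ge\min_{|x|=1}x^\ast\bm{A}x+\min_{|x|=1}x^\ast\bm{B}x$. Your general argument proves more than the paper needs, but it proves the lemma exactly as stated.
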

    
	Lastly, we will mention the Bernstein-type inequality for independent random matrices.
	\begin{lemma}[Theorem 6.1.1 of \cite{Tropp15}]\label{lemma_berns}
		Let $\{\bm{\Xi}_i\}_{i=1}^n$ be a finite sequence of independent random matrices with dimensions $d_1\times d_2$. Assume $\EE(\bm{\Xi}_i)=\bm{0}$ for each $i$, $\max_{1\le i\le n}\Vert\bm{\Xi}_i\Vert\le R_n$ and define 
		$$\sigma_n^2=\max\left\{\left\Vert\sum_{i=1}^n\EE\left(
		\bm{\Xi}_i\bm{\Xi}_i^\top\right)\right\Vert,
		\left\Vert\sum_{i=1}^n\EE\left(\bm{\Xi}_i^\top\bm{\Xi}_i\right)
		\right\Vert\right\},$$ where the norm represents the largest singular value. Then for all $t>0$,
		$$\Pr\left(\left\Vert\sum_{i=1}^n\bm{\Xi}_i\right\Vert\ge t\right)\le (d_1+d_2)\exp\left(\frac{-t^2/2}{\sigma_n^2+R_nt/3}\right).$$
	\end{lemma}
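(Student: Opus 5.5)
We reduce to the Hermitian case by the Hermitian dilation and then run the matrix Laplace-transform (Chernoff) method with Lieb's concavity theorem. Since this is a classical result, the plan follows Tropp's argument, and the one genuinely deep step, Lieb's theorem, is invoked rather than reproved.

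\textbf{Step 1 (dilation).} For a $d_1\times d_2$ matrix $\bm{\Xi}$ put
$$\mathscr{H}(\bm{\Xi})=\begin{pmatrix}\bm{0}&\bm{\Xi}\\ \bm{\Xi}^\top&\bm{0}\end{pmatrix}\in\mathbb{R}^{(d_1+d_2)\times(d_1+d_2)}.$$
This map is linear and Hermitian-valued, so $\bm{Y}_i:=\mathscr{H}(\bm{\Xi}_i)$ are independent, centered, and symmetric. Its spectrum is symmetric about zero, so
$$\lambda_{\max}\big(\mathscr{H}(\bm{\Xi})\big)=\Vert\mathscr{H}(\bm{\Xi})\Vert=\Vert\bm{\Xi}\Vert.$$
Moreover $\mathscr{H}(\bm{\Xi})^2={\rm diag}(\bm{\Xi}\bm{\Xi}^\top,\bm{\Xi}^\top\bm{\Xi})$. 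Hence $\Vert\sum_i\EE\bm{Y}_i^2\Vert=\sigma_n^2$ exactly, $\lambda_{\max}(\bm{Y}_i)\le R_n$, and
$$\Vert\textstyle\sum_i\bm{\Xi}_i\Vert=\lambda_{\max}\big(\textstyle\sum_i\bm{Y}_i\big).$$
It therefore suffices to prove $\Pr(\lambda_{\max}(\sum_i\bm{Y}_i)\ge t)\le d\exp\{-(t^2/2)/(\sigma^2+Rt/3)\}$ for independent centered symmetric $d\times d$ matrices, with $d=d_1+d_2$.

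\textbf{Step 2 (Laplace transform and subadditivity).} By Markov's inequality applied to $e^{\theta\lambda_{\max}(\bm{S})}\le{\rm tr}\,e^{\theta\bm{S}}$, with $\bm{S}=\sum_i\bm{Y}_i$, we get for every $\theta>0$
$$\Pr(\lambda_{\max}(\bm{S})\ge t)\le e^{-\theta t}\,\EE\,{\rm tr}\,e^{\theta\bm{S}}.$$
Lieb's theorem says that $\bm{A}\mapsto{\rm tr}\exp(\bm{H}+\log\bm{A})$ is concave on positive definite matrices. Combining it with Jensen's inequality and conditioning on one summand at a time gives
$$\EE\,{\rm tr}\exp\Big(\sum_i\theta\bm{Y}_i\Big)\le{\rm tr}\exp\Big(\sum_i\log\EE e^{\theta\bm{Y}_i}\Big).$$
This is the main obstacle. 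It is the only non-elementary ingredient, and I would cite Lieb (1973) or Tropp's Chapter 3 rather than reprove it.

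\textbf{Step 3 (Bernstein mgf bound and optimization).} Set $g(\theta)=\frac{\theta^2/2}{1-R\theta/3}$ for $0<\theta<3/R$. The scalar inequality $e^{\theta x}\le 1+\theta x+g(\theta)x^2$ holds for $x\le R$; it follows from the Taylor series and $k!\ge 2\cdot3^{k-2}$. The transfer rule of functional calculus lifts it to matrices with $\lambda_{\max}\le R$, and then $\EE\bm{Y}_i=\bm{0}$ together with $\bm{I}+\bm{A}\preceq e^{\bm{A}}$ gives
$$\EE e^{\theta\bm{Y}_i}\preceq\exp\big(g(\theta)\EE\bm{Y}_i^2\big).$$
The logarithm is operator monotone and ${\rm tr}\exp$ is monotone in the semidefinite order. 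Therefore
$${\rm tr}\exp\Big(\sum_i\log\EE e^{\theta\bm{Y}_i}\Big)\le{\rm tr}\exp\Big(g(\theta)\sum_i\EE\bm{Y}_i^2\Big)\le d\,e^{g(\theta)\sigma^2}.$$
This yields $\Pr(\cdot)\le d\exp\{-\theta t+g(\theta)\sigma^2\}$. Choosing $\theta=t/(\sigma^2+Rt/3)$, which lies in $(0,3/R)$, makes the exponent equal to $-(t^2/2)/(\sigma^2+Rt/3)$. Undoing the dilation with $d=d_1+d_2$ gives the stated bound.
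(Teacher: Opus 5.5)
Your proposal is correct and is essentially Tropp's own proof: Hermitian dilation, then the Lieb-based subadditivity of matrix cumulant generating functions, then the Bernstein mgf bound $\EE e^{\theta\bm{Y}_i}\preceq\exp(g(\theta)\EE\bm{Y}_i^2)$, and finally the choice $\theta=t/(\sigma^2+Rt/3)$; this is exactly what the paper relies on, since it only cites Theorem 6.1.1 of \cite{Tropp15} without reproving it. The one step worth writing out is the scalar inequality for $x<0$: there $|x|$ may exceed $R$, so the termwise Taylor bound does not apply directly, and you should instead use that $x\mapsto(e^{\theta x}-\theta x-1)/x^2$ is increasing, or simply that $e^{y}\le 1+y+y^2/2$ for $y\le 0$.
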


\bibliographystyle{chicago}
\bibliography{test}

@Article{CZ2023,
  title={Optimal Short-Term Forecast for Locally Stationary Functional Time Series},
 author={Cui, Yan and Zhou, Zhou},
 journal={IEEE Transactions on Information Theory},
 volume  = {71},
 number  = {4},
 pages   = {2872-2887},
 year={2025}
}

@Book{horvath2012inference,
  title     = {Inference for Functional Data with Applications},
  publisher = {Springer Science \& Business Media},
  year      = {2012},
  author    = {Horv{\'a}th, Lajos and Kokoszka, Piotr},
}

@Article{Shang14,
  author  = {Han Lin Shang},
  title   = {A survey of functional principal component analysis},
  journal = {AStA Advances in Statistical Analysis},
  year    = {2014},
  volume  = {98},
  number  = {1},
  pages   = {121--142},
  issn    = {1863-8171},
  doi     = {10.1007/s10182-013-0213-1},
}

@Article{Chen07,
  author    = {Chen, Xiaohong},
  title     = {Large sample sieve estimation of semi-nonparametric models},
  journal   = {Handbook of Econometrics},
  year      = {2007},
  volume    = {6},
  pages     = {5549--5632},
  publisher = {Elsevier},
}

@Article{Trefethen2008,
  author  = {Lloyd N. Trefethen},
  title   = {Is {G}auss Quadrature Better than {C}lenshaw–{C}urtis?},
  journal = {SIAM review},
  year    = {2008},
  volume  = {50},
  number  = {1},
  pages   = {67--87},
  issn    = {0036-1445},
  doi     = {10.1137/060659831},
}

@Book{Meyer90,
  title        = {Ondelettes et o\'{p} erateurs. Actualit\'{e}s math\'{e}matiques.},
  publisher    = {Hermann, Paris},
  year         = {1990},
  author       = {Yves Meyer},
  howpublished = {I. Actualit´ es Math´ ematiques},
}

@Article{WX11,
  author  = {Haiyong Wang and Shuhuang Xiang},
  title   = {On the convergence rates of {L}egendre approximation},
  journal = {Mathematics of Computation},
  year    = {2011},
  volume  = {81},
  number  = {278},
  pages   = {861--877},
  issn    = {0025-5718},
  doi     = {10.1090/s0025-5718-2011-02549-4},
}

@Book{KR17,
  title     = {Introduction to Functional Data Analysis},
  publisher = {Chapman and Hall/CRC},
  year      = {2017},
  author    = {Piotr Kokoszka and Matthew Reimherr},
}

@Article{LiHsing07,
  author    = {Li, Yehua and Hsing, Tailen},
  title     = {On rates of convergence in functional linear regression},
  journal   = {Journal of Multivariate Analysis},
  year      = {2007},
  volume    = {98},
  number    = {9},
  pages     = {1782--1804},
  publisher = {Elsevier},
}

@Article{Hall06,
  author    = {Hall, Peter and Hosseini-Nasab, Mohammad},
  title     = {On properties of functional principal components analysis},
  journal   = {Journal of the Royal Statistical Society: Series B (Statistical Methodology)},
  year      = {2006},
  volume    = {68},
  number    = {1},
  pages     = {109--126},
  publisher = {Wiley Online Library},
}

@Article{Aue2015,
  author  = {Alexander Aue and Diogo Dubart Norinho and Siegfried H\"{o}rmann},
  title   = {On the Prediction of Stationary Functional Time Series},
  journal = {Journal of the American Statistical Association},
  year    = {2015},
  volume  = {110},
  number  = {509},
  pages   = {378--392},
  issn    = {0162-1459},
  doi     = {10.1080/01621459.2014.909317},
}

@article{CXW13,
	author = {Xiaohui Chen and Mengyu Xu and Wei Biao Wu},
	title = {{Covariance and precision matrix estimation for high-dimensional time series}},
	volume = {41},
	journal = {The Annals of Statistics},
	number = {6},
	publisher = {Institute of Mathematical Statistics},
	pages = {2994 -- 3021},
	year = {2013},
	doi = {10.1214/13-AOS1182},
	URL = {https://doi.org/10.1214/13-AOS1182}
}

@Book{BD91,
  title     = {Time series: Theory and Methods (Second Edition).},
  publisher = {Springer-Verlag},
  year      = {1991},
  author    = {P. Brockwell and R. Davis},
  volume    = {Second Edition},
}

@Article{CLZ16,
  author  = {T. Tony Cai and Weidong Liu and Harrison H. Zhou},
  title   = {Estimating sparse precision matrix: Optimal rates of convergence and adaptive estimation},
  journal = {The Annals of Statistics},
  year    = {2016},
  volume  = {44},
  number  = {2},
  pages   = {455--488},
  issn    = {0090-5364},
  doi     = {10.1214/13-aos1171},
}

@Article{Wu05,
  author    = {Wu, Wei Biao},
  title     = {Nonlinear system theory: Another look at dependence},
  journal   = {Proceedings of the National Academy of Sciences},
  year      = {2005},
  volume    = {102},
  number    = {40},
  pages     = {14150--14154},
  publisher = {National Acad Sciences},
}

@article{zhou2023,
    author = {Zhou, Zhou and Dette, Holger},
    title = {Statistical inference for high-dimensional panel functional time series},
    journal = {Journal of the Royal Statistical Society Series B: Statistical Methodology},
    volume = {85},
    number = {2},
    pages = {523--549},
    year = {2023},
    month = {04},
    issn = {1369-7412},
    doi = {10.1093/jrsssb/qkad015},
}

@article{hormann2010weakly,
  title={Weakly dependent functional data},
  author={H{\"o}rmann, Siegfried and Kokoszka, Piotr},
  journal={The Annals of Statistics},
  volume={38},
  number={3},
  pages={1845--1884},
  year={2010},
  publisher={Institute of Mathematical Statistics}
}

@Article{Bentkus03,
  author    = {Bentkus, Vidmantas},
  title     = {On the dependence of the {B}erry--{E}sseen bound on dimension.},
  journal   = {Journal of Statistical Planning and Inference},
  year      = {2003},
  volume    = {113},
  number    = {2},
  pages     = {385--402},
  publisher = {Elsevier}
}

@Article{Fang15,
  author    = {Fang, Xiao and R{\"o}llin, Adrian.},
  title     = {Rates of convergence for multivariate normal approximation with applications to dense graphs and doubly indexed permutation statistics},
  journal   = {Bernoulli},
  year      = {2015},
  volume    = {21},
  number    = {4},
  pages     = {2157--2189},
  publisher = {Bernoulli Society for Mathematical Statistics and Probability},
}

@Article{Fang16,
  author    = {Fang, Xiao},
  title     = {A multivariate {CLT} for bounded decomposable random vectors with the best known rate},
  journal   = {Journal of Theoretical Probability},
  year      = {2016},
  volume    = {29},
  number    = {4},
  pages     = {1510--1523},
  publisher = {Springer},
}

@Article{LiuLin09,
  author    = {Liu, Weidong and Lin, Zhengyan},
  title     = {Strong approximation for a class of stationary processes},
  journal   = {Stochastic Processes and their Applications},
  year      = {2009},
  volume    = {119},
  number    = {1},
  pages     = {249--280},
  publisher = {Elsevier},
}

@article{zhang2016white,
  title={White noise testing and model diagnostic checking for functional time series},
  author={Zhang, Xianyang},
  journal={Journal of Econometrics},
  volume={194},
  number={1},
  pages={76--95},
  year={2016},
  publisher={Elsevier}
}

@article{kokoszka2017inference,
  title={Inference for the autocovariance of a functional time series under conditional heteroscedasticity},
  author={Kokoszka, Piotr and Rice, Gregory and Shang, Han Lin},
  journal={Journal of Multivariate Analysis},
  volume={162},
  pages={32--50},
  year={2017},
  publisher={Elsevier}
}

@article{kim2023,
author = {Mihyun Kim and Piotr Kokoszka and Gregory Rice},
title = {{White noise testing for functional time series}},
volume = {17},
journal = {Statistics Surveys},
number = {none},
publisher = {Amer. Statist. Assoc., the Bernoulli Soc., the Inst. Math. Statist., and the Statist. Soc. Canada},
pages = {119--168},
year = {2023},
doi = {10.1214/23-SS143},
URL = {https://doi.org/10.1214/23-SS143}
}

@article{alvarez2025goodness,
  title={A goodness-of-fit test for functional time series with applications to Ornstein-Uhlenbeck processes},
  author={{\'A}lvarez-Li{\'e}bana, J and L{\'o}pez-P{\'e}rez, A and Gonz{\'a}lez-Manteiga, W and Febrero-Bande, M},
  journal={Computational Statistics \& Data Analysis},
  volume={203},
  pages={108092},
  year={2025},
  publisher={Elsevier}
}

@article{wu2024frequency,
  title={Frequency detection and change point estimation for time series of complex oscillation},
  author={Wu, Hau-Tieng and Zhou, Zhou},
  journal={Journal of the American Statistical Association},
  volume={119},
  number={547},
  pages={1945--1956},
  year={2024},
  publisher={Taylor \& Francis}
}

@Book{Bell04,
  title     = {Special Functions for Scientists and Engineers (Dover Books on Mathematics)},
  publisher = {Dover Publications},
  year      = {2004},
  author    = {W. W. Bell},
}

@Article{CZ2022,
  author  = {Cui, Y. and Zhou, Z.},
  title   = {Simultaneous inference for time series functional linear regression},
  journal = {arXiv:2207.11392},
  year    = {2026},
}

@article{liu2013probability,
  title={Probability and moment inequalities under dependence},
  author={Liu, Weidong and Xiao, Han and Wu, Wei Biao},
  journal={Statistica Sinica},
  pages={1257--1272},
  year={2013},
  publisher={JSTOR}
}

@article{johnson1985best,
  title={Best constants in moment inequalities for linear combinations of independent and exchangeable random variables},
  author={Johnson, William B and Schechtman, Gideon and Zinn, Joel},
  journal={The Annals of Probability},
  pages={234--253},
  year={1985},
  publisher={JSTOR}
}

@article{rio2009moment,
  title={Moment inequalities for sums of dependent random variables under projective conditions},
  author={Rio, Emmanuel},
  journal={Journal of Theoretical Probability},
  volume={22},
  number={1},
  pages={146--163},
  year={2009},
  publisher={Springer}
}

@Book{Tropp15,
  title     = {An Introduction to Matrix Concentration Inequalities},
  publisher = {Now Publishers, Inc.},
  year      = {2015},
  author    = {Tropp, Joel A.},
  journal   = {Foundations and Trends in Machine Learning arXiv preprint arXiv:1501.01571, 2015},
}

@Article{Zhang18,
  author    = {Zhang, Xianyang and Cheng, Guang},
  title     = {Gaussian approximation for high dimensional vector under physical dependence},
  journal   = {Bernoulli},
  year      = {2018},
  volume    = {24},
  number    = {4A},
  pages     = {2640--2675},
  publisher = {Bernoulli Society for Mathematical Statistics and Probability},
}

@article{politis2004automatic,
  title={Automatic block-length selection for the dependent bootstrap},
  author={Politis, Dimitris N and White, Halbert},
  journal={Econometric reviews},
  volume={23},
  number={1},
  pages={53--70},
  year={2004},
  publisher={Taylor \& Francis}
}

@article{wu2007strong,
  title={Strong invariance principles for dependent random variables},
  author={Wu, Wei Biao},
journal={The Annals of Probability},
  volume={35},
  number={6},
  pages={2294--2320},
year={2007}
}

@article{liu2025wasserstein,
  title={Wasserstein and Convex Gaussian Approximations for Non-stationary Time Series of Diverging Dimensionality},
  author={Liu, Miaoshiqi and Yang, Jun and Zhou, Zhou},
  journal={arXiv preprint arXiv:2506.08723},
  year={2025}
}

@article{zhang2024simultaneous,
  title={Simultaneous Inference for Non-Stationary Random Fields, with Application to Gridded Data Analysis},
  author={Zhang, Yunyi and Zhou, Zhou},
  journal={arXiv preprint arXiv:2409.01220},
  year={2024}
}

@book{hsing2015theoretical,
  title={Theoretical foundations of functional data analysis, with an introduction to linear operators},
  author={Hsing, Tailen and Eubank, Randall L},
  volume={997},
  year={2015},
  publisher={Wiley Online Library}
}

@Article{Hall07,
  author    = {Hall, Peter and Horowitz, Joel L.},
  title     = {Methodology and convergence rates for functional linear regression},
  journal   = {The Annals of Statistics},
  year      = {2007},
  volume    = {35},
  number    = {1},
  pages     = {70--91},
  publisher = {Institute of Mathematical Statistics}
}

@book{lutkepohl2013introduction,
  title={Introduction to multiple time series analysis},
  author={L{\"u}tkepohl, Helmut},
  year={2013},
  publisher={Springer Science \& Business Media}
}

@article{kim2024projection,
  title={Projection-based white noise and goodness-of-fit tests for functional time series},
  author={Kim, Mihyun and Kokoszka, Piotr and Rice, Gregory},
  journal={Statistical Inference for Stochastic Processes},
  volume={27},
  number={3},
  pages={693--724},
  year={2024},
  publisher={Springer}
}

@article{characiejus2020general,
  title={A general white noise test based on kernel lag-window estimates of the spectral density operator},
  author={Characiejus, Vaidotas and Rice, Gregory},
  journal={Econometrics and Statistics},
  volume={13},
  pages={175--196},
  year={2020},
  publisher={Elsevier}
}

@article{gonzalez2023testing,
  title={Testing the goodness-of-fit of a functional autoregressive model},
  author={Gonz{\'a}lez-Manteiga, Wenceslao and Ruiz-Medina, Mar{\'\i}a Dolores and L{\'o}pez-P{\'e}rez, AM and {\'A}lvarez-Li{\'e}bana, Javier},
  journal={arXiv:2303.09644},
  year={2026}
}

@article{MESTRE2021107108,
title = {Functional time series model identification and diagnosis by means of auto- and partial autocorrelation analysis},
journal = {Computational Statistics \& Data Analysis},
volume = {155},
pages = {107108},
year = {2021},
author = {Guillermo Mestre and José Portela and Gregory Rice and Antonio {Muñoz San Roque} and Estrella Alonso}
}

@article{chiou2016multivariate,
  title={Multivariate functional linear regression and prediction},
  author={Chiou, Jeng-Min and Yang, Ya-Fang and Chen, Yu-Ting},
  journal={Journal of Multivariate Analysis},
  volume={146},
  pages={301--312},
  year={2016},
  publisher={Elsevier}
}

@article{chang2024autocovariance,
  title={An autocovariance-based learning framework for high-dimensional functional time series},
  author={Chang, Jinyuan and Chen, Cheng and Qiao, Xinghao and Yao, Qiwei},
  journal={Journal of Econometrics},
  volume={239},
  number={2},
  pages={105385},
  year={2024},
  publisher={Elsevier}
}

@Article{horvath14,
  author    = {Horv{\'a}th, Lajos and Kokoszka, Piotr and Rice, Gregory},
  title     = {Testing stationarity of functional time series},
  journal   = {Journal of Econometrics},
  year      = {2014},
  volume    = {179},
  number    = {1},
  pages     = {66--82},
  publisher = {Elsevier},
}

@Book{Bosq2000,
  author    = {Bosq, Denis},
  title     = {Linear Processes in Function Spaces: Theory and Applications},
  series    = {Lecture Notes in Statistics},
  volume    = {149},
  publisher = {Springer},
  address   = {New York},
  year      = {2000},
  doi       = {10.1007/978-1-4612-1154-9},
}

@article{KlepschEtAl2017,
  author  = {Klepsch, Johannes and Kl{\"u}ppelberg, Claudia and Wei, Taoran},
  title   = {Prediction of Functional {ARMA} Processes with an Application
             to Traffic Data},
  journal = {Econometrics and Statistics},
  year    = {2017},
  volume  = {1},
  pages   = {128--149},
  doi     = {10.1016/j.ecosta.2016.10.009},
}

@article{Bosq2014,
  author  = {Bosq, Denis},
  title   = {Computing the best linear predictor in a {Hilbert} space.
             Applications to general {ARMAH} processes},
  journal = {Journal of Multivariate Analysis},
  volume  = {124},
  pages   = {436--450},
  year    = {2014},
  doi     = {10.1016/j.jmva.2013.11.013}
}

@article{Kuenzer2024,
  author  = {Kuenzer, Thomas},
  title   = {Estimation of functional {ARMA} models},
  journal = {Bernoulli},
  volume  = {30},
  number  = {1},
  pages   = {117--142},
  year    = {2024},
  doi     = {10.3150/23-BEJ1591}
}

@article{KuehnertRiceAue2026,
  author  = {K{\"u}hnert, Sebastian and Rice, Gregory and Aue, Alexander},
  title   = {Estimating invertible processes in {Hilbert} spaces,
             with applications to functional {ARMA} processes},
  journal = {Bernoulli},
  volume  = {32},
  number  = {2},
  pages   = {1523--1546},
  year    = {2026},
  doi     = {10.3150/25-BEJ1918}
}

@article{Mas2007,
  author  = {Mas, Andr{\'e}},
  title   = {Weak convergence in the functional autoregressive model},
  journal = {Journal of Multivariate Analysis},
  volume  = {98},
  number  = {6},
  pages   = {1231--1261},
  year    = {2007},
  doi     = {10.1016/j.jmva.2006.05.010}
}

@article{GabrysKokoszka2007,
  author  = {Gabrys, Robertas and Kokoszka, Piotr},
  title   = {Portmanteau Test of Independence for Functional Observations},
  journal = {Journal of the American Statistical Association},
  year    = {2007},
  volume  = {102},
  number  = {480},
  pages   = {1338--1348},
  doi     = {10.1198/016214507000001111},
}

@article{HlavkaEtAl2021,
  author  = {Hl{\'a}vka, Zden{\v e}k and Hu{\v s}kov{\'a}, Marie
             and Meintanis, Simos G.},
  title   = {Testing Serial Independence with Functional Data},
  journal = {TEST},
  year    = {2021},
  volume  = {30},
  number  = {3},
  pages   = {603--629},
  doi     = {10.1007/s11749-020-00732-0},
}

@article{KokoszkaReimherr2013,
  author  = {Kokoszka, Piotr and Reimherr, Matthew},
  title   = {Determining the order of the functional autoregressive model},
  journal = {Journal of Time Series Analysis},
  volume  = {34},
  number  = {1},
  pages   = {116--129},
  year    = {2013},
  doi     = {10.1111/j.1467-9892.2012.00816.x}
}
\end{document}